\def\aa{\mathfrak{a}}
\def\bb{\mathfrak{b}}
\def\cc{\mathfrak{c}}
\def\g{\mathfrak{g}}
\def\uu{\mathfrak{u}}
\def\s{\mathfrak{s}}
\def\su{\mathfrak{su}}
\def\so{\mathfrak{so}}
\def\sp{\mathfrak{sp}}
\def\sl{\mathfrak{sl}}
\def\gl{\mathfrak{gl}}
\def\mcA{\mathcal{A}}
\def\mcB{\mathcal{B}}
\def\mcL{\mathcal{L}}
\def\mcP{\mathcal{P}}
\def\Tr{\mathrm{Tr}}
\def\UU{\mathrm{U}}
\def\SU{\mathrm{SU}}
\def\SO{\mathrm{SO}}
\def\ad{\mathrm{ad}}
\def\Span{\mathrm{span}}
\def\Stab{\mathrm{St}}
\newcommand\Lie[1]{\langle #1 \rangle_{\mathrm{Lie}}}  
\newcommand{\btheta}{\boldsymbol{\theta}}
\newcommand{\bphi}{\boldsymbol{\phi}}
\newtheorem{definition}{Definition}[section]
\newtheorem{result}{Result}[section]
\newtheorem{theorem}{Theorem}[section]
\newtheorem{lemma}{Lemma}[section]
\newtheorem{proposition}{Proposition}[section]
\newtheorem{remark}{Remark}[section]
\newtheorem{corollary}{Corollary}[section]
\theoremstyle{definition}
\newtheorem{example}{Example}[section]
\newlist{abbrv}{itemize}{1}
\setlist[abbrv,1]{label=,labelwidth=1.1in,align=parleft,itemsep=0.1\baselineskip,leftmargin=!}
\begin{document}

\title{Classification of dynamical Lie algebras for translation-invariant 2-local spin systems in one dimension}

\author{Roeland Wiersema}
\affiliation{Vector Institute, MaRS  Centre,  Toronto,  Ontario,  M5G  1M1,  Canada}
\affiliation{Department of Physics and Astronomy, University of Waterloo, Ontario, N2L 3G1, Canada}
\affiliation{Xanadu, Toronto, ON, M5G 2C8, Canada}

\author{Efekan K\"okc\"u}
\affiliation{Department of Physics, North Carolina State University, Raleigh, North Carolina 27695, USA}

\author{Alexander F. Kemper}
\affiliation{Department of Physics, North Carolina State University, Raleigh, North Carolina 27695, USA}

\author{Bojko N. Bakalov}
\affiliation{Department of Mathematics, North Carolina State University, Raleigh, North Carolina 27695, USA}

\begin{abstract}

Much is understood about $1$-dimensional spin chains in terms of entanglement properties, physical phases, and integrability. However, the Lie algebraic properties of the Hamiltonians describing these systems remain largely unexplored. In this work, we provide a classification of all Lie algebras generated by translation-invariant $2$-local spin chain Hamiltonians, or so-called dynamical Lie algebras. We consider chains with open and periodic boundary conditions and find $17$ unique dynamical Lie algebras. Our classification covers some well-known models such as the transverse-field Ising model and the Heisenberg chain, and we also find more exotic classes of Hamiltonians that cannot be identified easily. In addition to the closed and open spin chains, we consider systems with a fully connected topology, which may be relevant for quantum machine learning approaches. We discuss the practical implications of our work in the context of quantum control, variational quantum computing, and the spin chain literature. 

\end{abstract}

\date{September 11, 2023}

\maketitle

\section{Introduction}
Mathematical classifications of the fundamental symmetries of physical systems date back to the work of Wigner, who proposed three symmetry classes of non-interacting fermionic Hamiltonians depending on their time-reversal and spin-rotation properties \cite{wigner1959grouptheory}. Three decades later Dyson would mathematically solidify this theory and connect the spectral properties of these different types of Hamiltonians with random matrix theory \cite{dyson1962threefoldway} (see \cite{edelman2022cartan} for a modern treatment).
It would take another thirty years before Altland and Zinbauer extended these results to ten symmetry classes \cite{altland1997nonstandard}, each of which correspond to a symmetric space in Cartan's original classification of these spaces \cite{cartan1926classe, helgason1979differential}. Further extensions of these results were made in recent years with regards to topological phases of matter \cite{schnyder2008classification, ryu2010topological, barkeshli2013classification}.

The above mathematical classifications of quantum physics rest on the 
powerful theory of Lie groups,
which provides a framework for describing the continuous symmetries and transformations that characterize the behavior of quantum systems. The study of Lie groups, and by extension physical symmetries, can often be simplified by considering the corresponding Lie algebra of the group. The commutation relations of operators in the Lie algebra capture the essential features of the underlying symmetries and can be used to analyze the spectrum, eigenstates, and dynamics of quantum systems.

A Hamiltonian of a finite-dimensional system can be understood as ($i$ times) an element of some Lie algebra $\mathfrak{g}\subseteq\su(N)$. Here, $\su(N)$ is the vector space of all skew-Hermitian $N\times N$ matrices equipped with the standard commutator. 
Typically, a Hamiltonian is described by a linear combination of terms that correspond to a certain physical interaction. These individual terms can be used to generate a Lie algebra, which is called the Hamiltonian algebra or \emph{dynamical Lie algebra} (DLA) \cite{Albertini2001dynlie, albertini2021subspace,chen2017preparing,wang2016subspace,dalessandro2021dynamical}, which is intricately linked to the dynamics of quantum systems.

Since each DLA is a subalgebra of $\su(N)$, a classification of DLAs can be phrased as a classification of all subalgebras of $\su(N)$. Such a classification is intractable, except when specific constraints are placed on the subalgebras one considers. For example, in the original works of Killing and Cartan, all {\emph{simple}} Lie algebras were classified \cite{cartan1914lesgroupes};  similarly, Dynkin provided a classification of the {\emph{maximal}} subalgebras of simple Lie algebras \cite{dynkin1952maximal,tits1959}. We follow a different approach: instead of adding algebraic constraints such as simplicity or maximality, we make use of the fact that any Lie algebra can be described by a set of generators, and we consider the Lie algebras that arise by using the terms of specific Hamiltonians as
the generators. In contrast with the previously-mentioned classifications of \cite{wigner1959grouptheory, dyson1962threefoldway, altland1997nonstandard}, this approach contains interacting quantum many-body systems.

Specifically, we consider the class of Hamiltonians that correspond to $1$-dimensional $2$-local spin chains, and provide a classification of the
Lie algebras that are generated under commutation by the terms of the Hamiltonian. Much about these systems is well-understood, from their entanglement properties \cite{hastings2007arealaw}, their phases \cite{chen2011classification} and their integrability \cite{de2019classifying,jones2022integrable}. However, to the best of our knowledge, the Lie algebraic properties of these Hamiltonians have not yet been explored in full. It is thus reasonable to ask, given our comprehensive knowledge of the physics governing these systems, what more can be learned from the Lie algebra?
In short, our classification has bearing on areas of quantum control, variational quantum computing, and quantum dynamics and thermodynamics.

In \emph{quantum control}, the DLA of a dynamical quantum system can be related to the set of reachable states of that system. In particular, DLAs can be used to define a notion of controllability of a quantum system~\cite{schirmer2001complete, schirmer2002identification,wang2012symmetry}, which is highly relevant when it comes to designing unitary operations for quantum simulators and quantum computers. One is typically interested in Hamiltonians that can generate an arbitrary unitary operator, while the existence of symmetries can inhibit the control of a physical system \cite{zeier2011symmetry}.


For \emph{variational quantum computing}, one is not interested in representing the whole unitary group, but in using a parameterized subgroup in order to generate a state that maximizes a given objective function. Understanding what subgroup a particular quantum circuit parameterizes can give insight into its representational power. For example, one can connect the dimension of the DLA to a phenomenon called overparameterization \cite{kiani2020learning, wiersema2020exploring, you2022convergence, larocca2023theory}. Additionally, DLAs can be used to understand barren plateaus \cite{Larocca2022diagnosing} --- flat areas in the cost landscape of a variational quantum algorithm that hinder optimization \cite{mcclean2018barren, cerezo2021cost}. Finally, a recent work uses knowledge of the DLA to perform efficient classical simulations of several quantum algorithms \cite{goh2023lie}.

Finally, one can use the knowledge of the DLA to 
provide insights into the \emph{dynamics and thermodynamics}.
For example, one can
construct constant-depth quantum circuits for the dynamical simulation of a specific quantum system \cite{Kokcu2021cartan,kokcu2022algebraic,camps2022algebraic, kokcu2023algebraic}, 
or state preparation via adiabatic state preparation \cite{kokcu2022algebraic,kokcu2023algebraic},
or 
implement Hartree--Fock \cite{google2020hartree}. The dimension of the DLA is directly related to the quantum circuit depth needed to capture the full dynamics \cite{Kokcu2021cartan}.
Thermodynamic properties are in part encoded in the commutant of the
DLA, i.e., in the conserved charges \cite{majidy2023noncommuting}. Non-Abelian commutants lead
to non-trivial quantum effects in thermodynamics, which affects thermalization properties among others.

The paper is structured as follows. We end the introduction with a summary of our main mathematical results. Then, in Sect.\ \ref{sec:background}, we establish our notation and introduce the necessary mathematical preliminaries.
We discuss the method of our classification in Sect.\ \ref{sec:method} and present the main results in Sect.\ \ref{sec:results}. Finally, we discuss the implications of our results in Sect.\ \ref{sec:discussion}. In the Supplemental Materials (abbreviated SM), we first review preliminaries on Pauli strings and Lie algebras and then present the full details of the proofs of the main results.

\subsection{Summary of the main results}
Here, we give a brief summary of our main results, which include the classification of all DLAs generated by 2-local spin Hamiltonians of length $n$ in one dimension. Recall that a Lie algebra can be constructed by a set of generators so that it is closed under linear combinations and under the Lie bracket. In our case, the Lie bracket is the standard commutator $[A,B] = AB - BA$. We now choose the generators of our Lie algebra to be ($i$ times) the terms of any 2-local spin chain Hamiltonian. Since a Hamiltonian is always a Hermitian operator, we can understand it as ($i$ times) an element of the Lie algebra $\uu(2^n)$. Therefore, we can limit ourselves to the study of DLAs that are subalgebras of $\uu(2^n)$, for which we have the following useful fact.
 Although this result is known (see e.g.\ \cite{zeier2011symmetry, dalessandro2021}), for completeness, we provide its proof and review the necessary definitions in SM \ref{secpauli2}.

\begin{proposition}\label{rem:direct}
    A DLA must be either Abelian, $\su(N')$, $\so(N')$, $\sp(N'')$ $($with $N'\le 2^n$, $N''\le 2^{n-1})$, an exceptional compact simple Lie algebra, or a direct sum of such Lie algebras. 
    Indeed, any subalgebra of $\uu(N)$ is either Abelian or a direct sum of compact simple Lie algebras and a center. 
\end{proposition}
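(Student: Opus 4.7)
The plan is to reduce the claim to the standard structure theory of compact Lie algebras, and then invoke Cartan's classification. First I would observe that $\uu(N)$ admits a positive-definite $\ad$-invariant real bilinear form, namely $\langle X, Y \rangle := -\Tr(XY)$: for $X \in \uu(N)$ one has $X^\dagger = -X$, so $-\Tr(X^2) = \Tr(XX^\dagger) = \sum_{i,j} |X_{ij}|^2 \ge 0$ with equality iff $X = 0$. The restriction of this form to any real Lie subalgebra $\g \subseteq \uu(N)$ remains $\ad$-invariant and positive definite, so $\g$ is itself a compact Lie algebra.

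Next I would apply the standard structure theorem for compact Lie algebras. Because the invariant form is positive definite, the orthogonal complement of any ideal of $\g$ is again an ideal; combining $\ad$-invariance with non-degeneracy forces $[\g,\g]^\perp = Z(\g)$, which yields the ideal decomposition $\g = Z(\g) \oplus [\g,\g]$. The commutator summand $[\g,\g]$ inherits the positive-definite invariant form, so it is itself a compact semisimple Lie algebra, and any compact semisimple Lie algebra decomposes as a direct sum of compact simple Lie algebras $\g_1 \oplus \cdots \oplus \g_k$. Combining, one obtains $\g = Z(\g) \oplus \g_1 \oplus \cdots \oplus \g_k$ with $Z(\g)$ Abelian and each $\g_j$ compact simple, which is precisely the ``Indeed'' statement.

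To sharpen this into the first sentence of the proposition, I would invoke Cartan's classification \cite{cartan1914lesgroupes}: the compact simple Lie algebras are exactly the classical families $\su(N')$, $\so(N')$, $\sp(N'')$, together with the five exceptional algebras $\mathfrak{g}_2, \mathfrak{f}_4, \mathfrak{e}_6, \mathfrak{e}_7, \mathfrak{e}_8$. The bounds $N' \le 2^n$ and $N'' \le 2^{n-1}$ follow because each simple summand $\g_j$ must act faithfully on $\mathbb{C}^{2^n}$ and therefore, by simplicity, admits a nontrivial (hence faithful) irreducible subrepresentation of dimension at most $2^n$; the defining representations of $\su(N')$ and $\so(N')$ have dimension $N'$, while that of $\sp(N'')$ has dimension $2N''$. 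The only nontrivial ingredient is the appeal to Cartan's classification, which I would cite as a black box; the remainder of the argument is a routine application of the positive-definite invariant-form machinery.
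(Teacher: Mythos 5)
Your argument is correct and follows essentially the same route as the paper's: the (negative-)definite invariant trace form on $\uu(N)$, orthogonal complements of ideals yielding the decomposition into a center plus compact simple ideals, and Cartan's classification of compact simple Lie algebras. The only addition is your justification of the bounds $N'\le 2^n$, $N''\le 2^{n-1}$, which the paper leaves implicit; it is fine, with the minor caveat that for $\so(N')$ with $N'\le 6$ the defining representation is not the minimal nontrivial irreducible one, but those cases are covered by the exceptional isomorphisms $\so(3)\cong\su(2)$, $\so(5)\cong\sp(2)$, $\so(6)\cong\su(4)$, so the stated conclusion still holds.
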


Note that all simple Lie algebras over the complex and real numbers have been classified by Killing and Cartan \cite{hall2015}. The above proposition forms the backbone of our classification, as we know that any DLA generated by our class of Hamiltonians must be of the described form. 

To obtain the classification, we first calculate all DLAs of the power set of Hamiltonians for a $2$-site system, and identify the unique sets of generators. Then we identify the orbits under the symmetries of the Pauli group and the swap of the two sites, thus reducing the number of unique Lie algebras to $27$. Next, we find several isomorphisms between some of the sets of generators, reducing the set of unique structures even further. Finally, we determine how these Lie algebras scale with system size as the number of spins grows beyond $2$ sites. In this final step, we take the topology of the spin chain into account, since the Lie algebra will behave differently for open or periodic boundary conditions of the chain. 
The following is our main result.

\begin{result}[Classification of spin chain DLAs]
We provide a classification of all dynamical Lie algebras of $2$-local spin Hamiltonians in one dimension. For both open and closed spin chains, there are $17$ unique Lie algebras that can be generated by a spin chain Hamiltonian.
\end{result}
The formal statement of this result is presented in the main text with Theorems \ref{the:classification} and \ref{the:classification-p} along with a sketch of the proof. The dimension of a DLA can be related to the trainability of variational quantum circuits, and may therefore be of high interest. Since we know the dimensions of all simple Lie algebras, a direct corollary of our result is the following.

\begin{result}[Dimension scaling of DLAs]\label{res:dim_dlas}
    The dimension of any dynamical Lie algebra of a $2$-local spin chain Hamiltonian of length $n$ will scale as either $O(4^n)$, $O(n^2)$ or $O(n)$.
\end{result}

To illustrate this, we plot the dimensions of the open DLAs in our classification in Figure \ref{fig:annotated}.
\begin{figure}[htpb]
    \centering
    \includegraphics[width=\columnwidth]{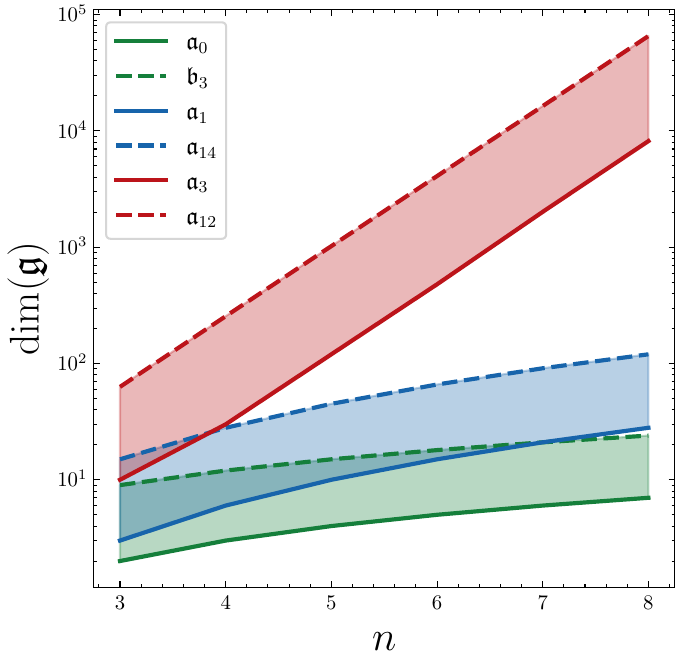}
    \caption{Scaling of the DLAs of spin chains with open boundary conditions. The exponentially scaling DLAs are denoted in red, the quadratically scaling ones in blue, and the linearly scaling algebras are denoted in green. The full and dashed line denote the smallest and largest scaling algebra in our classification, respectively.}
    \label{fig:annotated}
\end{figure}

Our proof technique also applies to the case of a permutation-invariant graph, where each site is interacting with every other site via at most 2-local interactions; in other words, all-to-all connected. We therefore also include this topology in our classification. 
\begin{result}[Classification of permutation invariant DLAs]
    We provide a classification of all dynamical Lie algebras of $2$-local permutation-invariant spin Hamiltonians. There are $8$ unique Lie algebras that can be generated by such a spin chain Hamiltonian.
\end{result}
We present the formal statement of this result in Theorem~\ref{the:classification-s}. Similarly to Result~\ref{res:dim_dlas}, we find DLAs with linear, quadratic and exponentially scaling dimensions.

In addition to the classification of unique Lie algebraic structures, we also provide an explicit list of Hamiltonians that generate them in Table \ref{tab:generator_list}. Some of the models in our classification correspond to well-known models such as the transverse-field Ising model or the Heisenberg model, whereas other Hamiltonians are perhaps not realizable in nature. However, some of these more exotic models may be of interest due to their properties. For instance, we find a class of Hamiltonians with globally non-commuting charges, which are known to be of interest in quantum thermodynamics \cite{majidy2023noncommuting, yunger2022build}.

\section{Background\label{sec:background}}

\subsection{Glossary}

We summarize the main choices of notation and nomenclature used throughout the paper in the following list.

\begin{abbrv}
\item[$\mathfrak{a}$, $\mathfrak{g}$] Lie algebras.
%
%
\item[$\mcA$, $\mathcal{G}$] Sets, usually generating sets for Lie algebras.
\item[$\Lie{\mathcal{G}}$] The Lie algebra generated by a set $\mathcal{G}$.
\item[$\mathfrak{a}\cong\mathfrak{b}$] $\mathfrak{a}$ and $\mathfrak{b}$ are isomorphic as Lie algebras.
\item[$\mcA\equiv\mcB$] $\mcA$ and $\mcB$ are equivalent under the symmetry group $S_3 \times \mathbb{Z}_2$ .
\item[$\mcA=\mcB$] $\mcA$ and $\mcB$ are equal as sets.
\item[$\mcP_n$] The set of $n$-qubit Pauli strings $\{I,X,Y,Z\}^{\otimes n}$.
\item[$X_j,Y_j,Z_j$] The action of the corresponding Pauli matrix on the $j$-th qubit in the spin chain.
\item[$\Span_{\mathbb R}(\mcA)$] The set of all linear combinations of elements of $\mcA$ with real coefficients.
\item[$\Span = i\,\Span_{\mathbb R}$] A shortcut for taking the real span and multiplying by the imaginary unit $i$. Takes Hermitian matrices to skew-Hermitian.
\end{abbrv}

\subsection{Preliminaries}
We assume knowledge of finite-dimensional Lie algebras (for a formal treatment, see e.g.\ Refs.~\cite{helgason1979differential, hall2015}), but will review some essential concepts here.
A Lie algebra $\mathfrak{g}$ is a vector space equipped with a Lie bracket $[\cdot,\cdot]\colon\mathfrak{g}\times\mathfrak{g}\to \mathfrak{g}$ satisfying certain axioms (which are reviewed in SM \ref{secpauli2}). The Lie bracket defines the adjoint endomorphism $\ad_a\colon\mathfrak{g}\to\mathfrak{g}$ where $\ad_a(b)= [a,b]$. For our purposes, the Lie bracket is the standard commutator of linear operators on a complex vector space: $[a,b] = ab - ba$. Due to the Lie-correspondence, we can associate a Lie group $G$ with a Lie algebra $\mathfrak{g}$ via the exponential map $G = e^{\mathfrak{g}}$.

Consider a set of generators $\mcA=\{ a_1, a_2,\ldots, a_{M}\}$ with $a_k\in \mathfrak{g}$. We first define the nested commutator, 
\begin{align}\label{eq:adj}
\ad_{a_{i_1}} \cdots \ad_{a_{i_r}} (a_j) = [a_{i_1},[a_{i_2},[\cdots[a_{i_r},a_j]\cdots]]],
\end{align}
which is just $a_j$ in the special case $r=0$.
The linear span of all nested commutators
\begin{align*}
    \Lie{\mcA} := \Span\bigl\{\ad_{a_{i_1}} \cdots \ad_{a_{i_r}} (a_j) \,\big|\, a_{i_1},\dots, a_{i_r},a_j\in \mcA\bigr\}
\end{align*}
is then called a \emph{dynamical Lie algebra} (DLA) \cite{dalessandro2021, Albertini2001dynlie}. This is the minimal (under inclusion) subalgebra of $\mathfrak{g}$ that contains the set $\mcA$.
The depth $r$ of the nested commutator is finite and will depend on the size of the DLA, which we typically do not know beforehand. In practice, the DLA of a given set of generators $\mcA$ can be obtained recursively with Algorithm \ref{alg:dyn_lie}.
\RestyleAlgo{ruled}
\begin{algorithm}[htb!]
\caption{Calculating the DLA}\label{alg:dyn_lie}
\KwIn{Set of generators $ \mcA$}
\For{$a_i \in \mcA$}{
    \For{$a_j \in \mcA$}{
        $a_k = [a_i,a_j]$\\
        \uIf{$a_k \notin \mcA$ \normalfont\textbf{and} $a_k\neq 0 $}{
            $\mcA \gets \mcA\cup \{a_k\}$\\
            }
    }
}
$\Lie{\mcA} \gets \Span\{\mcA\}$
\end{algorithm}

\subsection{Translation-invariant 2-local spin systems in one dimension}

Due to Proposition \ref{rem:direct}, we know what form the subalgebras of $\su(N)$ must take. Our goal is to find which of these direct sums of simple or Abelian Lie algebras can be generated by a physically inspired set of generators. 

In particular, we are interested in the subalgebras of $\su(2^n)$ that are generated by the terms of 
$1$-dimensional
$2$-local Hamiltonians 
with translation-invariant structure, i.e., the type of the interactions between qubits is the same, but interaction strength may vary.
We consider a spin system with a complex Hilbert space $(\mathbb{C}{^2})^{\otimes n}$ and a Hamiltonian $H$, which is a Hermitian operator on $(\mathbb{C}{^2})^{\otimes n}$ of the form
\begin{align}
    H = \sum_{k=1}^{n-1} \sum_{a\otimes a'\in \mcA} a_{k}\otimes a'_{k+1}, \label{eq:hamiltonian_1d}
\end{align}
where 
\begin{align}
    a_{k}\otimes a'_{k+1} &:= I^{\otimes (k-1)}\otimes a \otimes a' \otimes I^{\otimes (n-k-1)}, \label{eq:two_local}
\end{align}
with $a\otimes a'\in\mathcal{A}$ and $I$ is the $2\times 2$ identity matrix.
We consider $a, a'\in \mathcal{P}_1 :=\{I,X,Y,Z\}$ (Pauli matrices), and
one of $a,a'$ should be different from $I$. 

The generating set $\mcA\subseteq\mcP_2:=\mcP_1^{\otimes 2}$ defines a specific set of $2$-local operators that make up the Hamiltonian $H$; in the parlance of quantum computing and physics, this is a $2$-local Hamiltonian corresponding to a \emph{spin chain}. Note that physical models come with coefficients in front of each $a_{k}\otimes a'_{k+1}$ term. The values of these coefficients determine the resulting physics and the corresponding phases of matter. Here, we are only concerned with the algebraic properties of the Hamiltonian $H$ on the Lie algebra level, and we will not consider any spectral properties of~$H$.

Continuing, we note that Pauli matrices $i\mathcal{P}_1\setminus\{iI\}$ form a basis of $\su(2)$, and the tensor products $i\mathcal{P}_2\setminus\{iI^{\otimes2}\}$ form a basis of $\su(4)$. Hence,  $\Span(\mcA) \subseteq \su(4)$ (recall that $\Span=i\,\Span_{\mathbb{R}}$).
In the following, we will suppress the tensor product between Pauli operators and identities for clarity, and we denote $a\otimes a'= a a'$. We now give some examples to illustrate how several well-known spin chains can be written in this notation.

\begin{example}\textit{Transverse-field Ising model (TFIM).}
    The set of generators of the TFIM in one dimension with open boundary conditions is given by
        \begin{align*}
            \mcA_{\mathrm{TFIM}} := \{ZZ, XI\},
        \end{align*}
    which results in the Hamiltonian
    \begin{align*}
        H_{\mathrm{TFIM}} = \sum_{i=1}^{n-1} Z_i Z_{i+1} + X_i.
    \end{align*}
\end{example}
\begin{example}\textit{Heisenberg chain.}
    For the $1$-dimensional Heisenberg chain with open boundary conditions, the generators are given by
    \begin{align*}
        \mcA_{\mathrm{Heis}} := \{XX, YY, ZZ\},
    \end{align*}
    which results in the Hamiltonian
    \begin{align*}
        H_{\mathrm{Heis}} = \sum_{i=1}^{n-1} X_i X_{i+1} + Y_i Y_{i+1} + Z_i Z_{i+1}.
    \end{align*}
\end{example}

\begin{example}\textit{Spinless fermionic Gaussian state.}
    A free fermion Hamiltonian chain in one dimension with periodic boundary conditions can be built from the generators on two sites:
    \begin{align*}
        c_1^\dagger  c_2^\dagger,
        c_1^\dagger  c_1^{\phantom{\dagger}},
        c_2^\dagger  c_2^{\phantom{\dagger}},
        c_1^\dagger  c_2^{\phantom{\dagger}},
        c_2^\dagger  c_1^{\phantom{\dagger}},
        c_1^{\phantom{}}  c_2^{\phantom{\dagger}},
    \end{align*}
    where $c^\dagger$ and $c^{\phantom{\dagger}}$ are fermionic raising and lowering operators, respectively. The corresponding Hamiltonian is
    \begin{align*}
        H_{\mathrm{FF}} = \sum_{i=1}^{n-1} c^\dagger_i c^\dagger_{i+1} + c^\dagger_i c_{i+1} + c_i c^\dagger_{i+1}+c_i c_{i+1}.
    \end{align*}
    The fermionic raising and lowering operators may be translated to Pauli string form via a number of transformations. If we use the common Jordan--Wigner transformation, the resulting set of generators is
    \begin{align*}
        \mcA_{\mathrm{FF}} := \left\lbrace
            X  X, Z  I, I  Z, Y  Y, X Y, Y X\right\rbrace.
    \end{align*}
\end{example}

The terms in the Hamiltonian generate a Lie algebra $\Lie{\mcA} = \mathfrak{a}$ that is a subalgebra of $\su(4)$. We now investigate the structure of these algebras as we add terms that have been translated by one site. Starting from a subalgebra $\mathfrak{a} \subseteq\su(4)$, let $\mathfrak{a}(n)$ be the subalgebra of $\su(2^n)$ generated by the set
\begin{equation*}
    \bigcup_{1\le k\le n-1} I^{\otimes (k-1)} \otimes \mathfrak{a}\otimes I^{\otimes (n-k-1)}.
\end{equation*}

In particular, $\mathfrak{a}(2)=\mathfrak{a}$. By construction, we have a Lie algebra embedding $\mathfrak{a}(n) \hookrightarrow \aa(n+1)$, given by appending $I$ to the last  qubit (see Figure \ref{fig:embed}). 

\begin{example}
    Consider the generating set $\mcA = \{{XY}\}$ on $2$ qubits. The DLA is given by
    \begin{align*}
        \Lie{\mcA} = \Span\{XY\},
    \end{align*}
    which is an Abelian Lie algebra isomorphic to $\uu(1)$. Constructing the generators of $\aa(3)$ according to the procedure above gives
    \begin{align*}
        \aa(3) = \Lie{XYI, IXY} = \Span\{{XYI, IXY, XZY}\}.
    \end{align*}
    It is easy to confirm that $\aa(3) \cong \so(3)$. We see that in going from $n=2$ to $n=3$ we have $\uu(1) \to \so(3)$.
\end{example}
\begin{figure}[htb!]
    \centering
    \includegraphics[width=\columnwidth]{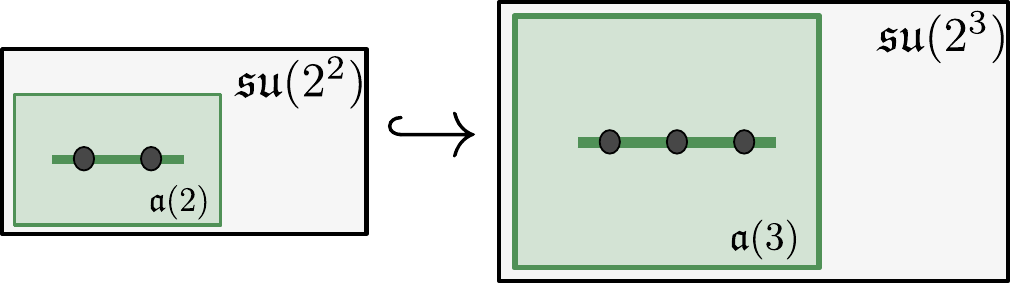}
    \caption{Growing a Lie algebra by adding a site to the chain. }
    \label{fig:embed}
\end{figure}
The above example illustrates that the algebraic structure of a DLA can change as we increase the system size. Additionally, when we extend the number of sites to $n>2$, we need to take into account what happens at the edge of the chain. For $1$-dimensional systems, this leads to two cases: open boundary conditions (operators on a line) and periodic boundary conditions (operators on a circle). We will denote the resulting DLAs of these two cases with $\mathfrak{g}$ and $\mathfrak{g}^\circ$, respectively. Additionally, we distinguish between Hamiltonians generated by Pauli strings that consist of two Pauli operators both different from $I$ (e.g.\ $\{XY, XZ\}$) and Hamiltonians generated by Pauli strings that include the identity (e.g.\ $\{IX, ZI, XX\}$). We denote the Lie algebras generated by the former as $\aa$ and the Lie algebras generated by the latter as $\bb$.

\section{Method \label{sec:method}}
We can now state the central question of our work. Given a Hamiltonian of the form \eqref{eq:hamiltonian_1d}, we seek a classification of all DLAs generated by the terms of the Hamiltonian, for $n\geq 3$ with both open and periodic boundary conditions.

\subsection{The power sets}
First, for the generators of $\aa$-type Lie algebras, we note that there are 9 Pauli strings that consist of two Pauli operators. Hence, the power set of the possible generators $\mcA$ contains $2^9-1=511$ elements. Similarly, for the $\bb$-type Lie algebras, there are 15 Pauli strings, which results in a power set of $2^{15}-1=32767$ possible sets of generators. Clearly, the sets of generators of the $\aa$-type are included in the $\bb$-type power set. We thus only report the $\bb$-type Lie algebras that are not also $\aa$-types. There is a third class of Lie algebras, the $\cc$-type Lie algebras, which are an edge case where the generators contain only Pauli strings of the form $a\otimes I$, but not the corresponding term $I\otimes a$. The structure of these Lie algebras is captured by the $\bb$-type Lie algebras, except for a small boundary effect at the last site in the chain. We therefore exclude the $\cc$-type Lie algebras from our classification.

We proceed by going through all sets of generators $\mcA$ (of either $\aa$ or $\bb$-type) and use Algorithm \ref{alg:dyn_lie} to perform the nested commutators in \eqref{eq:adj}. We then store only the unique subalgebras generated by this procedure, and we obtain only $127$ and $19$ generating sets for $\aa$-type and $\bb$-type, respectively. 
Since the largest power set we consider has only $32767$ elements, this procedure can be performed numerically with ease. We provide the code to reproduce this at \cite{our_code}.

\begin{example}  
    Consider the generating sets:
    \begin{align*}
        \mathcal{A}_1 &= \{XY, XZ\},\\
        \mathcal{A}_2 &= \{IX, XY\}.
    \end{align*}
    Note that $\mcA_1$ is of $\aa$-type and $\mcA_2$ is of $\cc$-type.
    After running Algorithm \ref{alg:dyn_lie}, we find: 
    \begin{align*}
        \Lie{\mathcal{A}_1} &= \Span\{XY,XZ, IX\},\\
        \Lie{\mathcal{A}_2} &= \Span\{IX,XY, XY\}.
    \end{align*}
    We see that $\mathcal{A}_1$ and $\mathcal{A}_2$ generate the same Lie algebra; hence, this Lie algebra is counted among the $\aa$-type Lie algebras.
\end{example}

\begin{example}
    Consider the generating sets of $\aa$-type:
    \begin{align*}
        \mathcal{A}_1 &= \{XX, YY\},\\
        \mathcal{A}_2 &= \{XX, YX\}.
    \end{align*}
    After running Algorithm \ref{alg:dyn_lie}, we find:
    \begin{align*}
        \Lie{\mathcal{A}_1} &= \Span\{XX,YY\},\\
        \Lie{\mathcal{A}_2} &= \Span\{XX,YX,ZX\}.
    \end{align*}
    Hence, $\mathcal{A}_1$ and $\mathcal{A}_2$ generate distinct Lie algebras, both of $\aa$-type.
\end{example}

\begin{example}
The set of $\bb$-type
\begin{align*}
     \mcA = \{XY, XI, IX\}
\end{align*}
generates the Lie algebra 
\begin{align*}
    \Lie{\mcA} = \Span\{XY, XZ, XI, IX\}. 
\end{align*}
If we try to generate it from the $\aa$-type subset $\{XY, XZ\}$, we find the strictly smaller subalgebra
\begin{align*}
    \Lie{XY, XZ} = \Span\{XY, XZ, IX\}. 
\end{align*}
\end{example}

\subsection{Symmetries of the power sets}
There are certain symmetries that can be exploited to reduce the number of subalgebras of $\su(4)$ in the above power sets. To start, we note that the Pauli matrices satisfy the following algebraic relations:
\begin{align*}
    [\sigma^\alpha, \sigma^\beta] =2i \sum_{\gamma=1}^3 \epsilon_{\alpha\beta\gamma} \sigma^\gamma, \label{eq:cyclic}
\end{align*}
where $\epsilon_{\alpha\beta\gamma}$ is the Levi-Civita tensor and $\alpha,\beta,\gamma \in\{1,2,3\}$, respectively 
(see SM \ref{secpauli1} for more details).
We will ignore the factor $2i$, since we only care about the linear span of nested commutators. Note that the above relation is independent of how we assign $X,Y,Z$ to $\sigma^\alpha$. In other words, we can relabel the Paulis and retain the algebraic structure of the subalgebras, which together with ignoring the prefactors formally corresponds to an $S_3$ permutation symmetry. 

In addition to relabelling, we consider the exchange of location of the two Pauli terms, since the order of such terms is an arbitrary choice that does not impact the structure of the resulting Lie algebras. This location exchange corresponds to a $\mathbb{Z}_2$ symmetry. Hence, the symmetry group of the Pauli algebra for $n=2$ is $S_3 \times \mathbb{Z}_2$. Subalgebras of $\su(4)$ that are in the same orbit of this symmetry group are considered equivalent, which allows us to reduce the number of subalgebras significantly. 


\begin{example}
    We have that $\{XX, YZ\} \equiv \{YY, ZX\}$ under relabeling $X\to Y\to Z \to X$. On the other hand, $\{XX, YZ\}$ and $\{XX, YY\}$ are not equivalent.
\end{example}


In order to determine the orbits of the symmetry group $S_3 \times \mathbb{Z}_2$ on the set of subalgebras of $\su(4)$, we introduce their invariants $s,p,e,d$, defined as follows.
These enumerate the number of single Paulis (such as $XI$) in the basis of the Lie algebra, the number of single Pauli pairs (such as $XI,IX$), the number of double equal Paulis (such as $XX$) and the number of double different Paulis (such as $XY$), respectively. Since all these quantities are invariant under the action of the symmetry group, two subalgebras are not equivalent if they have different invariant.

\begin{example}
    Consider the following bases of subalgebras and their invariants:
    \begin{align*}
        \mathcal{A}_1 &= \{ZZ, YX, XY\} \to (0, 0, 1, 2), \\
        \mathcal{A}_2 &= \{XX, YZ, ZY\} \to (0, 0, 1, 2), \\
        \mathcal{A}_3 &= \{YY, ZX, XZ\} \to (0, 0, 1, 2).
    \end{align*} 
    We see that $\mathcal{A}_1\equiv \mathcal{A}_2$ under $Z \rightleftharpoons X$. Similarly, $\mathcal{A}_2\equiv\mathcal{A}_3$ under $X\rightleftharpoons Y$ and $\mathcal{A}_3\equiv \mathcal{A}_1$ under $Y\rightleftharpoons Z$. 
\end{example}
\begin{example}
    Consider the following bases of subalgebras and their invariants:
    \begin{align*}
        \mathcal{A}_1 &=\{XX, XZ, IY\} \to (1,0,1,1), \\
        \mathcal{A}_2 &=\{XY, XZ, IX\} \to (1,0,0,2).
    \end{align*}
    We see that $\mathcal{A}_1\not\equiv \mathcal{A}_2$, since they have different invariants. 
\end{example}

\begin{example}\label{exsameinv}
    Even though the two bases
    \begin{align*}
        \mathcal{A}_1 &=\{XY, YX\} \to (0,0,0,2), \\
        \mathcal{A}_2 &=\{XY, YZ\} \to (0,0,0,2)
    \end{align*}
    have the same invariants, they are not equivalent under the symmetry group $S_3 \times \mathbb{Z}_2$. 
\end{example}

Carrying out this procedure exhaustively for the $127$ and $19$ subalgebras of $\aa$-type and $\bb$-type gives us $23$ and $5$ inequivalent Lie algebras, respectively. We denote these subalgebras by $\aa_k$ $(0\le k\le 22)$ and $\bb_l$ $(0\le l\le 4)$. For the full list of invariants, see Table \ref{tab:generator_list} in the Supplemental Materials. In particular, it turns out that the only case in which the invariants $(s,p,e,d)$ cannot distinguish inequivalent subalgebras is that presented in Example \ref{exsameinv}.

By Proposition \ref{rem:direct}, we can identify these subalgebras by inspection with direct sums of simple Lie algebras plus a center.

\begin{example}
The set
\begin{align*}
     \mcA = \{XX, YY, ZZ, ZY\}
\end{align*}
generates the Lie algebra
\begin{align*}
    &\aa_{20} := \Lie{\mcA} = \Span\{XX, YY, ZZ, YZ, ZY, XI, IX\} \\
    &= \Span\{YY+ZZ, YZ+ZY, XI+IX\} \\
    &\oplus \Span\{YY-ZZ, YZ-ZY, XI-IX\} \oplus \Span\{XX\} \\
    &\cong \su(2) \oplus \su(2)\oplus \uu(1).
\end{align*}
\end{example}

At this point, we have reduced the number of possible Hamiltonians in our class significantly by taking into account the symmetries of the Pauli group. As a final step, we now generate all $\aa_k(n)$, $\bb_k(n)$, $\aa^\circ_k(n)$ and $\bb_k^\circ(n)$ up to $n=8$. With knowledge of the structure of the Lie algebras for $3\le n\le 8$, we can construct formal proofs to determine them for all $n\geq 3$, which is discussed in the next section.

\section{Results\label{sec:results}}
\subsection{Main theorem}

We state the main theorem of our work below, and tabulate the generators of the Lie algebras of our classification in Table \ref{tab:algebras}.
\begin{theorem}[Classification of DLAs]\label{the:classification}
The complete list of dynamical Lie algebras of translation-invariant $2$-local Hamiltonians in one dimension of length $n\ge3$ is:
\allowdisplaybreaks
\begin{align*}
\aa_0(n) &\cong \uu(1)^{\oplus (n-1)}, \\
\aa_1(n) &\cong \so(n), \\
\aa_2(n) &\cong \aa_4(n) \cong \so(n) \oplus \so(n), \\
\aa_3(n) &\cong \begin{cases} 
        \so(2^{n-2})^{\oplus 4}, & n\equiv 0 \;\mathrm{mod}\; 8, \\
        \so(2^{n-1}), & n\equiv \pm1 \;\mathrm{mod}\; 8, \\
        \su(2^{n-2})^{\oplus2}, & n\equiv \pm2 \;\mathrm{mod}\; 8, \\
        \sp(2^{n-2}), & n\equiv \pm3 \;\mathrm{mod}\; 8, \\
        \sp(2^{n-3})^{\oplus4}, & n\equiv 4 \;\mathrm{mod}\; 8,      
        \end{cases} \\
\aa_5(n) & \cong \begin{cases} 
        \so(2^{n-2})^{\oplus 4}, & n\equiv 0 \;\mathrm{mod}\; 6, \\
        \so(2^{n-1}), & n\equiv \pm1 \;\mathrm{mod}\; 6, \\
        \su(2^{n-2})^{\oplus2}, & n\equiv \pm2 \;\mathrm{mod}\; 6, \\
        \sp(2^{n-2}), & n\equiv 3 \;\mathrm{mod}\; 6,
        \end{cases} \\
\aa_6(n) &\cong \aa_7(n) \cong \aa_{10}(n) \\
&\cong \begin{cases} \su(2^{n-1}), & n \;\;\mathrm{odd}, \\
        \su(2^{n-2})^{\oplus 4}, & n \ge 4 \;\;\mathrm{even},
        \end{cases} \\
\aa_8(n) &\cong \so(2n-1), \\
\aa_9(n) &\cong \sp(2^{n-2}), \\
\aa_{11}(n) &= \aa_{16}(n) = \so(2^n), \quad n\ge 4, \\
\aa_k(n) &= \su(2^n), \;\; k=12,17,18,19,21,22, \; n\ge 4, \\
\aa_{13}(n) &= \aa_{20}(n) \cong \aa_{15}(n) \cong \su(2^{n-1})^{\oplus2}, \\
\aa_{14}(n) &\cong \so(2n), \\
\bb_0(n) &\cong \uu(1)^{\oplus n}, \\
\bb_1(n) &\cong \uu(1)^{\oplus (2n-1)}, \\
\bb_2(n) &\cong \sp(2^{n-2}) \oplus \uu(1), \\
\bb_3(n) &\cong \su(2)^{\oplus n}, \\
\bb_4(n) &\cong \su(2^{n-1}) \oplus \su(2^{n-1}) \oplus \uu(1).
\end{align*}
\end{theorem}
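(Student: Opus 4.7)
The plan is to treat each of the 28 inequivalent generating sets $\mcA_k$ identified in Section \ref{sec:method} individually, and for each establish the claimed Lie algebra structure by induction on $n$, starting from the base cases $3 \le n \le 8$ already verified computationally. For a given $\mcA_k$, the induction step requires two ingredients: an explicit Pauli-string basis of $\aa_k(n)$ whose cardinality matches the dimension of the claimed algebra, and proofs of the two containments. The lower bound is constructive --- compute a finite sequence of nested commutators beginning from the translated generators, showing that every basis element is produced. The upper bound is structural --- exhibit an invariant of the generators (a preserved bilinear form, a global symmetry, or a $\mathbb{Z}_2$-grading) that confines the DLA to the specified subalgebra of $\su(2^n)$, after which Proposition~\ref{rem:direct} identifies it as the claimed direct sum.

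For the polynomially scaling families $\aa_0, \aa_1, \aa_2, \aa_4, \aa_8, \aa_{14}$ and their $\bb$-counterparts, the claimed algebras $\uu(1)^{\oplus m}$, $\so(n)$, $\so(n)^{\oplus 2}$, $\so(2n-1)$, $\so(2n)$ are exactly the DLAs of free-fermion models, so I would proceed by exhibiting an explicit Jordan--Wigner correspondence between a basis of nested-commutator Pauli strings and the standard basis of a fermionic bilinear algebra; both containments then follow from the image being closed under the bracket inherited from the CAR relations. For the exponentially scaling cases with $\aa_k(n)$ isomorphic to a full $\su(2^n)$, $\so(2^n)$, $\su(2^{n-1})^{\oplus 2}$, or similar, the upper bound is obtained from either a parity symmetry such as a global $Z^{\otimes n}$, or a complex/quaternionic structure preserved by every generator, while the lower bound follows from an induction showing that $\aa_k(n)$ contains two overlapping copies of $\aa_k(n-1)$, which forces maximality inside the constrained subalgebra.

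The main obstacle is the modular periodicity of $\aa_3(n)$ modulo $8$ and $\aa_5(n)$ modulo $6$. The pattern $\so^{\oplus 4}, \so, \su^{\oplus 2}, \sp, \sp^{\oplus 4}, \sp, \su^{\oplus 2}, \so$ for $\aa_3(n)$ is precisely the Bott periodicity of real Clifford algebras, which strongly suggests that the generators of $\aa_3(n)$ assemble, after a suitable Pauli transformation, into a chain of mutually anticommuting skew-Hermitian operators whose real Clifford algebra $\mathrm{Cl}_{p,q}(\mathbb{R})$ cycles through the Morita types $\mathbb{R}(2^k)$, $\mathbb{C}(2^k)$, $\mathbb{H}(2^{k-1})$ and their doublings. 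I would identify explicit anticommuting generators on $n$ sites, compute the signature $(p,q)$ induced by their squares, and invoke the standard tabulation of real Clifford algebras to read off the claimed direct-sum decomposition; the mod $6$ case for $\aa_5$ should yield to a variant with one structural constraint removed.

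Finally, several of the stated isomorphisms ($\aa_2 \cong \aa_4$, $\aa_6 \cong \aa_7 \cong \aa_{10}$, $\aa_{13} = \aa_{20} \cong \aa_{15}$, $\aa_{11} = \aa_{16}$) do not follow from the $S_3 \times \mathbb{Z}_2$ symmetry alone and must be established separately by constructing explicit intertwiners: conjugations by tensor-product Clifford unitaries, or site-dependent Pauli relabelings that translate one generating set into another while preserving the Lie bracket. The $\bb$-type cases reduce quickly once the single-site generators are identified with central $\uu(1)$ summands and the remaining structure is matched to a previously handled $\aa$-type computation. Non-isomorphisms between similar-looking candidates (notably the pair in Example~\ref{exsameinv}) are confirmed by comparing dimensions, centers, or commutants inside $\su(2^n)$, which finalizes the list at $17$ distinct isomorphism types.
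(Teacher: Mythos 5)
Your overall architecture (upper bounds from preserved structures plus Proposition~\ref{rem:direct}, lower bounds by induction from computationally verified base cases, isomorphisms via site-dependent relabeling unitaries) is in the same family as the paper's, but the decisive step is missing: the lower bound. Saying that $\aa_k(n)$ ``contains two overlapping copies of $\aa_k(n-1)$, which forces maximality inside the constrained subalgebra'' is not a valid inference --- proving that the two embedded copies generate the \emph{entire} invariant-constrained algebra is exactly the hard content of the theorem. In the paper this is done by showing that any Pauli string in the candidate upper-bound algebra can be commuted against generators until it acquires an $I$ at some site, which is then erased and handled by induction, with explicit period-length decompositions at $(k,n)=(3,4),(5,6),(7,4)$ verified by hand; moreover the induction cannot be uniform in $n$, since the isomorphism type of $\aa_3(n)$, $\aa_5(n)$ changes with $n$ mod $8$ or $6$. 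Without an argument of this kind, nothing rules out $\aa_k(n)$ being a proper subalgebra of your structurally constrained algebra; note also that the naive commutant bound is genuinely strict in several cases (von Neumann's double commutant theorem fails on the Lie side), so one must both find the extra involution and then prove saturation. Relatedly, your upper bounds are phrased as ``either a parity symmetry or a complex/quaternionic structure,'' but several cases need both constraints simultaneously: e.g.\ for $\aa_9(n)$ the centralizer of the stabilizer is only $\su(2^{n-1})$, and the symplectic form is needed on top of it to reach $\sp(2^{n-2})$.

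The Bott-periodicity plan for $\aa_3(n)$ and $\aa_5(n)$ does not work as stated. The generators $X_iX_{i+1},\,Y_iZ_{i+1}$ are not mutually anticommuting (that pair commutes), so they are not Clifford generators; and even if one selects an anticommuting chain inside the DLA, the Lie algebra generated by $m$ anticommuting elements is spanned by the degree-$\le 2$ part of the Clifford algebra and hence has dimension $O(m^2)$, whereas $\dim\aa_3(n)$ grows like $4^n$. So any identification with the skew part of a real Clifford (matrix) algebra presupposes exactly the missing fact that the DLA exhausts the skew-Hermitian part, with respect to a suitable involution, of the associative algebra it generates. The actual source of the mod-$8$ pattern is the interplay between the stabilizer $\{P_I,P_X,P_{YZ},P_{ZY}\}$ (whose internal commutativity depends on the parity of $n$) and the symmetry type $Q^T=\pm Q$ of the Pauli string defining the involution; for $\aa_5(n)$ the period $6$ arises from the period-$3$ stabilizer strings $P_{XYZ},P_{YZX},P_{ZXY}$ combined with a period-$2$ transpose symmetry, and a mod-$3$ phenomenon cannot come from real Clifford periodicity (which is mod $8$), so the proposed ``variant with one structural constraint removed'' has no evident content. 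Your non-isomorphism checks and the reduction of the $\bb$-type algebras to central extensions of $\aa$-type ones are fine, and the Jordan--Wigner treatment of the quadratic families is essentially equivalent to the paper's frustration-graph lemma.
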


\begin{table}[htb!]
\begin{center}
\begin{tabular}{|c|l|l|c|} 
\hline
\textbf{Label}   &   \textbf{Generating set}  &   \textbf{Model}
\\	\hline
$\aa_0$       &	$XX$	& Ising model 
\\	\hline
$\aa_1$       &	$XY$	& Kitaev chain 
\\	\hline
$\aa_2$       &	$XY, YX$	& Massless free fermion + \\
&& magnetic field
\\	\hline
$\aa_3$       &	$XX, YZ$	&  Kitaev chain + Coulomb
\\	\hline
$\aa_4$       &	$XX, YY$	& XY-model 
\\	 \hline
$\aa_5$       &	$XY, YZ$	& 
\\	\hline
$\aa_6$   	&	$XX, YZ, ZY$	& Massless free fermion + \\&& magnetic field +  Coulomb 
\\	\hline
$\aa_7$   	&	$XX, YY, ZZ$	& Heisenberg chain 
\\	\hline
$\aa_8$   	&	$XX, XZ$	& Ising model + transverse field 
\\	\hline
$\aa_9$   	&	$XY, XZ$	& Kitaev chain + longitudinal field 
\\	\hline
$\aa_{10}$    &	$XY, YZ, ZX$	& Heisenberg 
\\	\hline
$\aa_{11}$	&	$XY, YX, YZ$	& XY-model + longitudinal field
\\	\hline
$\aa_{12}$	&	$XX, XY, YZ$	& 
\\	\hline
$\aa_{13}$	&	$XX, YY, YZ$	& XY-model + longitudinal field
\\	\hline
$\aa_{14}$	&	$XX, YY, XY$	& XY-model + transverse field
\\	\hline
$\aa_{15}$	&	$XX, XY, XZ$	& Ising model + arbitrary field
\\	\hline
$\aa_{16}$	&	$XY, YX, YZ, ZY$	& Kitaev chain + longitudinal field 
\\	\hline
$\aa_{17}$	&	$XX, XY, ZX$ & Ising model + arbitrary field
\\	\hline
$\aa_{18}$	&	$XX, XZ, YY, ZY$	& XY-model + arbitrary field
\\	\hline
$\aa_{19}$	&	$XX, XY, ZX, YZ$ & 
\\	\hline
$\aa_{20}$	&	$XX, YY, ZZ, ZY$	& Heisenberg chain + magnetic field 
 \\	\hline
$\aa_{21}$	&	$XX, YY, XY, ZX$	& XY-model + arbitrary field
\\	\hline
$\aa_{22}$	&	$XX, XY, XZ, YX$& Ising model + arbitrary field
 \\	\hline
$\bb_0$	    &	$XI, IX$	& Uncoupled spins 
\\	\hline
$\bb_1$   	&	$XX, XI, IX$	& Ising model 
\\	\hline
$\bb_2$   	&	$XY, XI, IX$	&  Kitaev chain + longitudinal field
\\	\hline
$\bb_3$   	&	$XI, YI, IX, IY$	& Uncoupled spins
\\	\hline
$\bb_4$   	&	\parbox[t]{2cm}{$XX, XY, XZ, XI,$\\$IX, IY, IZ$}&	Ising model +  arbitrary field
\\	\hline
\end{tabular}
\end{center}
\caption{List of generators of the DLAs in Theorem \ref{the:classification}
and corresponding conventional spin models.
}
\label{tab:algebras}
\end{table}

The following corollary immediately follows from Theorem \ref{the:classification} and knowledge of the dimensions of $\su$, $\so$ and $\sp$ (see \eqref{eqdims}).

\begin{corollary}[Dimension scaling of DLAs\label{cor:scaling}]
    The dimension of all non-trivial DLAs of translation-invariant $2$-local Hamiltonians of length $n$ will scale as either $O(4^n)$, $O(n^2)$ or $O(n)$.
\end{corollary}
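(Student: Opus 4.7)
The plan is to derive the corollary directly from the explicit list in Theorem \ref{the:classification}, using only the standard dimension formulas for compact simple Lie algebras:
\begin{equation*}
\dim \su(N) = N^2 - 1, \quad \dim \so(N) = \tfrac{N(N-1)}{2}, \quad \dim \sp(N) = N(2N+1),
\end{equation*}
together with $\dim\uu(1) = 1$ and additivity of dimension under direct sums. The proof amounts to sweeping down the classification and reading off the scaling of each entry, so the main task is organizational rather than technical.

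First I would group the entries of Theorem \ref{the:classification} by the shape of their summands. The purely Abelian entries and the one entry built from a bounded-rank simple factor with linearly growing multiplicity, namely $\aa_0(n) \cong \uu(1)^{\oplus(n-1)}$, $\bb_0(n) \cong \uu(1)^{\oplus n}$, $\bb_1(n) \cong \uu(1)^{\oplus(2n-1)}$, and $\bb_3(n) \cong \su(2)^{\oplus n}$, all have dimension exactly linear in $n$. The entries built from an orthogonal factor of rank linear in $n$, namely $\aa_1(n) \cong \so(n)$, $\aa_2(n) \cong \aa_4(n) \cong \so(n) \oplus \so(n)$, $\aa_8(n) \cong \so(2n-1)$, and $\aa_{14}(n) \cong \so(2n)$, have dimension $\Theta(n^2)$ by the $\so$ formula.

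Next I would handle every remaining entry, each of which is a direct sum of a bounded number of simple factors of type $\su(2^{n-c})$, $\so(2^{n-c})$, or $\sp(2^{n-c})$ for some constant $c$, possibly plus a bounded-dimension $\uu(1)$. The dimension formulas above give $\Theta(4^{n})$ in every one of these cases, and adding a constant number of such summands leaves the scaling unchanged. For the piecewise-defined algebras $\aa_3(n)$ and $\aa_5(n)$, I would simply check the five (resp.\ four) residue classes mod $8$ (resp.\ mod $6$) separately; in each class the stated summand is again of the form above, so one still obtains $\Theta(4^n)$, uniformly in $n$. This covers $\aa_3, \aa_5, \aa_6, \aa_7, \aa_9, \aa_{10}, \aa_{11}, \aa_{12}, \aa_{13}, \aa_{15}, \aa_{16}, \aa_{17}, \aa_{18}, \aa_{19}, \aa_{20}, \aa_{21}, \aa_{22}, \bb_2, \bb_4$.

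I do not expect any genuine obstacle: the classification has already done all the hard work, and since every simple summand appearing is either a classical algebra acting on a space of dimension polynomial in $n$ (contributing $O(n^2)$) or on a space of dimension a constant multiple of $2^n$ (contributing $O(4^n)$), with all multiplicities bounded by a constant, no intermediate rate such as $n^3$ or $2^n$ can appear. The only point meriting care is bookkeeping for $\aa_3$ and $\aa_5$, to ensure that the worst residue class does not change the exponential rate; the formulas above show that it does not. This exhausts the entries of Theorem \ref{the:classification} and establishes the trichotomy.
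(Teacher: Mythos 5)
Your proposal is correct and follows essentially the same route as the paper: the corollary is obtained by reading the scaling directly off the list in Theorem \ref{the:classification} using the dimension formulas \eqref{eqdims}, exactly as the paper does (with the explicit dimensions, including the residue-class cases for $\aa_3(n)$ and $\aa_5(n)$, tabulated in SM \ref{secsubsu}). Your three-way grouping into Abelian/bounded-rank, linear-rank orthogonal, and exponential-size classical factors with bounded multiplicity matches the paper's bookkeeping.
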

We thus see that the DLAs can be separated in three classes based on the scaling of their dimensions.

\subsection{Sketch of the proof}\label{secsketch}

The complete proof of Theorem \ref{the:classification} is presented in the Supplemental Materials. Here is a brief sketch of the proof; we refer to SM\ \ref{secout} for a more detailed outline. We divide the set of Lie algebras $\aa_k(n)$, $\bb_l(n)$ into three classes: linear, quadratic, and exponential, according to the anticipated growth of their dimension. The \emph{linear} class consists of $\aa_0(n)$ and $\bb_l(n)$ with $l=0,1,3$, and their treatment is obvious.
The \emph{quadratic} class contains $\aa_k(n)$ with $k=1,2,4,8,14$. These Lie algebras are determined by using the frustration graphs of their generators in SM\ \ref{secfrus}. For the \emph{exponential} class, we first observe that $\bb_2(n) = \aa_9(n) \oplus \Span\{X_1\}$ and $\bb_4(n) = \aa_{15}(n) \oplus \Span\{X_1\}$. Next, we identify the cases when $\aa_k(n) = \su(2^n)$; see SM\ \ref{secext} for details. We also find isomorphisms that are obtained by relabeling of the Pauli matrices among some of the algebras (SM\ \ref{seciso}).

The \textbf{strategy} in the remaining exponential cases is as follows.
\begin{enumerate}
\item For each of our Lie subalgebras $\s=\aa_k(n)\subseteq \su(2^n)$, we find its \emph{stabilizer} $\Stab(\s)$,
which is defined as the set of all Pauli strings $\in \mathcal{P}_n$ that commute with every element of $\s$. This can be done explicitly, because
the stabilizer is determined only from the generators of $\s$ (see Proposition \ref{pstab1}).
\item By definition, $\s$ commutes with all elements of its stabilizer $\Stab(\s)$; hence, it is contained in the 
\emph{centralizer} of $\Stab(\s)$ in $\su(2^n)$, which we denote $\su(2^n)^{\Stab(\s)}$. We can reduce the Lie subalgebra $\su(2^n)^{\Stab(\s)}$
further by factoring all elements of the center of $\Stab(\s)$, which will become central in it, because we have shown
that $\s$ has a trivial center (Lemma \ref{lemp3}). This results in a Lie algebra denoted $\g_k(n)$ when $\s=\aa_k(n)$.
\item By the above construction, we have $\aa_k(n) \subseteq \g_k(n)$. In the case of associative algebras, we would get equality due to (a finite-dimensional version of) von Neumann's Double Commutant Theorem (see e.g.\ \cite{procesi2007lie}, Theorem 6.2.5). However, in the Lie case, we might have a strict inclusion. We improve the upper bounds for $\aa_k(n)$
by finding \emph{involutions} $\theta_k$ of $\g_k(n)$ such that all elements of $\aa_k(n)$ are fixed under $\theta_k$.
The last condition can be checked only on the generators of $\aa_k(n)$ (see SM\ \ref{secup}). 
\item We prove by induction on $n$ that the upper bound is exact, that is $\aa_k(n) = \g_k(n)^{\theta_k}$ (see SM\ \ref{seclow}).
First we note that both $\aa_k(n)$ and $\g_k(n)^{\theta_k}$ are linearly spanned by the Pauli strings contained in them.
We start with an arbitrary Pauli string $a\in i\mcP_n\cap\g_k(n)^{\theta_k}$ and want to show that it is in $\aa_k(n)$.
The main idea is to use suitable commutators of $a$ with elements of $\aa_k(n)$ to produce a Pauli string $b\in i\mcP_n\cap\g_k(n)^{\theta_k}$ with $I$ in one of its positions. 
Erasing the $I$ in $b$ gives an element of $\g_k(n-1)^{\theta_k}$, which by induction is in $\aa_k(n-1)$.
\item 
Finally, we identify the Lie algebras $\g_k(n)^{\theta_k}$ with those from Theorem \ref{the:classification} (see SM\ \ref{secgkn}).
This is accomplished by applying in each case a suitable unitary transformation that brings the stabilizer $\Stab(\s)$ to a more convenient form
(cf.\ SM\ \ref{secpauli3}).
\end{enumerate}

\subsection{Example: \texorpdfstring{$\aa_9(n)$}{a\_9(n)}}\label{secexa9n}

Consider the example of $\aa_9=\Lie{XY, XZ}$, which produces the subalgebra $\aa_9(n) \subseteq\su(2^n)$ generated by:
\begin{equation}\label{a9ngen}
X_1Y_2, X_1Z_2, X_2Y_3, X_2Z_3, \dots, X_{n-1}Y_n, X_{n-1}Z_n. 
\end{equation}
Let us sketch the above steps in the strategy of the proof of Theorem \ref{the:classification} in the case $\s=\aa_9(n)$.
\begin{enumerate}
\item 
The stabilizer $\Stab(\s)$ is the set of all Pauli strings $P\in \mathcal{P}_n$ such that $[a,P]=0$ for every $a\in\s$.
It is enough to check this for all $a$ in the list of generators \eqref{a9ngen}, which means that the substring of $P$
in positions $j,j+1$ commutes with $XY$ and $XZ$ for all $1\le j\le n-1$.
By inspection, we find $\Stab(XY, XZ) = \{II,XI,YX,ZX\}$, so these are the only possible such substrings of $P$. 
This gives $\Stab(\s) = \{I^{\otimes n},X_1,Y_1X_2,Z_1X_2\}$.
\item 
The centralizer $\su(2^n)^{\Stab(\s)}$ is the set of all $a\in\su(2^n)$ such that $[a,P]=0$ for every $P\in\Stab(\s)$;
hence it contains $\s$. As the center of $\Stab(\s)$ is trivial, we have $\g_9(n) = \su(2^n)^{\Stab(\s)}$.
To illustrate this last step, we mention that $\Stab(\aa_{15}(n)) = \{I^{\otimes n},X_1\}$. 
In this case, $X_1 \in \su(2^n)^{X_1}$ is central and we have to quotient by it to obtain $\g_{15}(n) = \su(2^n)^{X_1} / \Span\{X_1\}$.
\item 
We saw above that $\s\subseteq \g_9(n)$. Now we find an involution $\theta_9$ of $\g_9(n)$ such that 
$\s\subseteq\g_9(n)^{\theta_9}$, the set of fixed points under $\theta_9$.
Since $\theta_9$ respects the Lie brackets, it is enough to check $\theta_9(a)=a$ only for the generators \eqref{a9ngen}.
Explicitly, we let $\theta_9(a) = -Q_9 a^T Q_9$ where $Q_9 = IYZZ \cdots Z$. 
\item 
We prove by induction on $n$ that $\aa_9(n) = \g_9(n)^{\theta_9}$. To show that any $a\in i\mcP_n\cap\g_9(n)^{\theta_9}$ with $n\ge4$ is in $\aa_9(n)$,
we first take suitable commutators of $a$ with the generators \eqref{a9ngen} to produce $b\in i\mcP_n\cap\g_k(n)^{\theta_k}$ that has $I$ in some position $j\ge3$.
Erasing the $I$ gives an element $c\in\g_9(n-1)^{\theta_9}$, which by induction is in $\aa_9(n-1)$. Inserting $I$ back in $j$-th place in $c$ gives that $b\in \aa_9(n)$.

\item 
As $\Stab(\s) \cong \{I^{\otimes n},X_1,Y_1,Z_1\}$, we can simplify $\g_9(n)^{\theta_9}$ by applying a unitary transformation $a\mapsto UaU^\dagger$ that takes $\Stab(\s)$ to
$\{I^{\otimes n},X_1,Y_1,Z_1\}$. Explicitly, we take $U = e^{i\frac{\pi}{4} X_1 X_2}$. 
Then $\g_9(n) \cong \su(2^n)^{\{X_1,Y_1,Z_1\}} = I \otimes \su(2^{n-1}) \cong \su(2^{n-1})$.
The involution $\theta_9(a) = -Q_9 a^T Q_9$ gets transformed to $-\tilde Q_9 a^T \tilde Q_9$, where $\tilde Q_9 = U Q_9 U^T$ in this case happens to be $=Q_9$.
Restricted to $\su(2^{n-1})$, this gives the involution $-Q a^T Q$ with $Q = YZZ \cdots Z$, whose fixed points are $\cong\sp(2^{n-2})$ because $Q^T=-Q$.
\end{enumerate}

We conclude that $\aa_9(n)\cong\sp(2^{n-2})$.

\subsection{Periodic boundary conditions}

For the periodic case, we have the same sets of generators as in Theorem \ref{the:classification}, but we need to adapt our proof strategy,
because, unlike the open case, the periodic Lie algebras $\aa_k^\circ(n)$ are not generated inductively from $\aa_k^\circ(n-1)$.
Instead, we use that $\aa_k^\circ(n)$ is generated from $\aa_k(n)$ and its cyclic shifts, and we utilize the explicit description $\aa_k(n) = \g_k(n)^{\theta_k}$ (see Part 4. in Sect.\ \ref{secsketch}, and for more details SM\ \ref{seclow}).

\begin{theorem}[Classification of Periodic DLAs]\label{the:classification-p}
The complete list of dynamical Lie algebras of translation-invariant periodic $2$-local Hamiltonians in one dimension of length $n\ge3$ is:
\allowdisplaybreaks
\begin{align*}
\aa_0^\circ(n) &\cong \uu(1)^{\oplus n}, \\
\aa_1^\circ(n) &\cong \so(n)^{\oplus 2}, \\
\aa_2^\circ(n) &\cong \so(n)^{\oplus 4}, \\
\aa_3^\circ(n) &= \begin{cases} 
\aa_{13}(n), & n \;\;\mathrm{odd}, \\
\aa_3(n), & n\equiv 0 \mod 4, \\
\aa_6(n), & n\equiv 2 \mod 4,
\end{cases} \\
&\cong \begin{cases} 
\su(2^{n-1})^{\oplus2}, & n \;\;\mathrm{odd}, \\
\so(2^{n-2})^{\oplus 4}, & n\equiv 0 \mod 8, \\
\sp(2^{n-3})^{\oplus4}, & n\equiv 4 \mod 8, \\
\su(2^{n-2})^{\oplus 4}, & n\equiv 2 \mod 4,
\end{cases} \\
\aa_4^\circ(n) &\cong \begin{cases}
\so(2n), \quad\;\; n \;\;\mathrm{odd}, \\
\so(n)^{\oplus 4}, \quad n \;\mathrm{even},
\end{cases} \\
\aa_{5}^\circ(n) &= \begin{cases}
\aa_{16}(n), & n\equiv \pm1 \mod 3, \\ 
\aa_{5}(n), & n\equiv 0 \mod 3, 
\end{cases} \\
&\cong \begin{cases}
\so(2^n), & n\equiv \pm1 \mod 3, \\ 
\so(2^{n-2})^{\oplus 4}, & n\equiv 0 \mod 6, \\
\sp(2^{n-2}), & n\equiv 3 \mod 6,
\end{cases} \\
\aa_6^\circ(n) &= \begin{cases}
\aa_{13}(n) \cong \su(2^{n-1})^{\oplus2}, & n \;\;\mathrm{odd}, \\
\aa_6(n) \cong \su(2^{n-2})^{\oplus 4}, & n \;\;\mathrm{even},
\end{cases} \\
\aa_k^\circ(n) &= \aa_k(n), \quad k=7,13,16,20, \\
\aa_8^\circ(n) &\cong \so(2n)^{\oplus 2}, \\
\aa_{9}^\circ(n) &\cong \so(2^n), \quad n \ge 4, \\
\aa_{10}^\circ(n) &= \begin{cases}
\su(2^n), & n\equiv \pm1 \mod 3, \\ 
\aa_{10}(n), & n\equiv 0 \mod 3,
\end{cases} \\
&\cong \begin{cases}
\su(2^n), & n\equiv \pm1 \mod 3, \\ 
\su(2^{n-2})^{\oplus 4}, & n\equiv 0 \mod 6, \\
\su(2^{n-1}), & n\equiv 3 \mod 6,
\end{cases} \\
\aa_{11}^\circ(n) &= \so(2^n), \quad n\ge 4, \\
\aa_k^\circ(n) &= \su(2^n), \;\; k=12,15,17,18,19,21,22, \\
\aa_{14}^\circ(n) &\cong \so(2n)^{\oplus 2}, \\
\bb_0^\circ(n) &= \bb_0(n) \cong \uu(1)^{\oplus n}, \\
\bb_1^\circ(n) &\cong \uu(1)^{\oplus 2n}, \\
\bb_2^\circ(n) &\cong \so(2^n), \quad n \ge 4, \\
\bb_3^\circ(n) &= \bb_3(n) \cong \su(2)^{\oplus n}, \\
\bb_4^\circ(n) &= \su(2^n).
\end{align*}
\end{theorem}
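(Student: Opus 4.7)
The strategy is to leverage the explicit open-case description $\aa_k(n) = \g_k(n)^{\theta_k}$ already established in Theorem~\ref{the:classification}, and then track how adjoining a single wrap-around generator $a_n \otimes a'_1$ (for each $a\otimes a' \in \aa_k$) modifies the stabilizer and the relevant involution. Writing $T$ for the one-site cyclic shift, we have
\[
\aa_k^\circ(n) = \Lie{\aa_k(n) \cup T^{n-1}\bigl(\aa_k\bigr)},
\]
since the intermediate shifts $T^{j}(\aa_k)$ for $1\le j\le n-2$ already sit inside $\aa_k(n)$.

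First I would dispose of the easy cases. If $\aa_k(n) = \su(2^n)$ (i.e.\ $k\in\{12,17,18,19,21,22\}$, and $k=15$ for $n\ge 4$), then $\aa_k^\circ(n)=\su(2^n)$ automatically. For the linear families $\aa_0,\bb_0,\bb_3$, no new independent generator appears at all. For $\bb_1,\bb_2,\bb_4$, the single-site Paulis $X_1,\dots,X_n$ already let one synthesize the wrap-around term from an open-chain commutator applied to a shifted pair, so the analysis reduces to a simple variant of the open case.

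For the genuinely new cases I would re-run the five-step strategy of Sect.~\ref{secsketch}. Step~1: the stabilizer $\Stab(\aa_k^\circ(n))$ shrinks relative to $\Stab(\aa_k(n))$ because the wrap-around couples sites $n$ and $1$, and the resulting stabilizer is forced to be $T$-invariant. Step~2: form the centralizer $\su(2^n)^{\Stab(\aa_k^\circ(n))}$ and quotient by its center to obtain $\g_k^\circ(n)$. Step~3: construct an involution $\theta_k^\circ$ of $\g_k^\circ(n)$ fixing both the open generators and the wrap-around term, typically a cyclically-symmetric analog of $\theta_k$ obtained by replacing its boundary $Q_k$ with a translationally invariant Pauli string. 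Step~4: prove the upper bound $\aa_k^\circ(n) = \g_k^\circ(n)^{\theta_k^\circ}$ directly rather than by induction: given any Pauli string $a\in i\mcP_n \cap \g_k^\circ(n)^{\theta_k^\circ}$, commutators with open generators reduce $a$ to one carrying an identity at some site $j$; cyclically rotating to place $j$ at the boundary and invoking Theorem~\ref{the:classification} on the resulting open chain places $a$ in $\aa_k^\circ(n)$. Step~5: identify $\g_k^\circ(n)^{\theta_k^\circ}$ with the target algebra via a unitary similarity transformation, as in the open case.

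The main obstacle is the modular case analysis governing $\aa_3^\circ,\aa_4^\circ,\aa_5^\circ,\aa_6^\circ,\aa_{10}^\circ$. For each, the cyclic product of the Pauli generators along the length-$n$ loop evaluates to $\pm I^{\otimes n}$ (or to a central element like $Z^{\otimes n}$) depending on $n \bmod 3, 4, 6,$ or $8$, and this parity controls whether the wrap-around term is already redundant modulo $\aa_k(n)$ or genuinely enlarges it. For $\aa_{10}$ with $\{XY,YZ,ZX\}$, for instance, the residue of $n$ mod $3$ decides whether the length-$n$ cyclic product lies in the stabilizer of the open algebra, driving the jump between $\aa_{10}(n)$ and $\su(2^n)$. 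The subtlest case is $\aa_3^\circ$, where the mod-$8$ decomposition of $\aa_3(n)$ into $\so,\sp,\su$ summands interacts with the wrap-around to reshuffle or merge those summands; resolving this will require working with the explicit matrix realization of $\aa_3(n)$ from the open-case proof to see how the new generator acts on the direct-sum decomposition.
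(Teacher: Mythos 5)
Your overall framework (recompute the stabilizer, centralizer, and an involution for the periodic algebra, then prove an exact fixed-point characterization) is a reasonable-sounding transplant of the open-case machinery, but it has a genuine gap at its central step, and it is not how the paper proceeds. The paper does not redo the five-step program for $\aa_k^\circ(n)$: it uses $\aa_k^\circ(n)=\Lie{\aa_k(n)\cup\tau_n\aa_k(n)}$ together with the explicit open descriptions $\aa_k(n)=\g_k(n)^{\theta_k}$, and then either (i) checks that the (conjugated) shift preserves the stabilizer and commutes with $\theta_k$, so $\tau_n\aa_k(n)\subseteq\aa_k(n)$ and nothing grows, or (ii) exhibits explicit new elements (e.g.\ single-qubit $Y_i$ or $X_i$, or $Z_1Y_2$) produced by the wrap-around, shows the periodic algebra contains the generators of a known larger algebra ($\aa_{16}(n)=\so(2^n)$, $\aa_{13}(n)$, $\aa_6(n)$, $\su(2^n)$), and pairs this with a manifest upper bound (all generators have an odd number of $Y$'s, or the stabilizer contains $P_X$) to sandwich; the quadratic family $\aa_1,\aa_2,\aa_4,\aa_8,\aa_{14}$ is handled separately by circular frustration graphs.

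The concrete gap in your Step 4: producing an $I$ at some site, rotating it to the boundary, and "invoking Theorem \ref{the:classification} on the resulting open chain" does not work in precisely the cases where the periodic DLA is strictly larger than the open one. Membership in $\g_k^\circ(n)^{\theta_k^\circ}$ only imposes the \emph{periodic} constraints, which are weaker (smaller stabilizer, different or absent involution), so after deleting the $I$ the length-$(n-1)$ string need not lie in the open fixed-point algebra $\g_k(n-1)^{\theta_k}$, and the open theorem cannot certify it. For example, for $\aa_5^\circ(n)$ with $n\equiv\pm1\bmod 3$ the target is $\so(2^n)$, while $\aa_5(n-1)$ is of type $\so(2^{n-3})^{\oplus4}$ or $\sp(2^{n-3})$ — far too small to absorb generic skew-symmetric strings with an $I$; the actual lower bound must come from new wrap-around-generated elements (the paper extracts single-qubit $Y$'s and then generates $\aa_{16}(n)$). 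Relatedly, your Step 3 presumes a "cyclically-symmetric analog of $\theta_k$" exists; but whether $\theta_k$ (or any involution) is compatible with the wrap-around generators is exactly what drives the $n\bmod 3,4,6,8$ case splitting, and in several periodic outcomes no involution appears at all (the answer is the full centralizer of the periodic stabilizer modulo its center, e.g.\ $\aa_3^\circ(n)=\aa_{13}(n)$ for odd $n$). Finally, your dismissal of $\bb_2$ as "a simple variant of the open case" is wrong: $\bb_2(n)\cong\sp(2^{n-2})\oplus\uu(1)$ but $\bb_2^\circ(n)=\aa_9^\circ(n)\cong\so(2^n)$, so it belongs with the nontrivial growth cases, not the easy ones.
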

The proof of this theorem is given in SM\ \ref{secper}.

\subsection{Permutation-invariant subalgebras}\label{sec:sym}
The strategies employed for periodic boundary conditions can also be used to classify the DLAs in the case when the Hamiltonian is defined on a permutation-invariant graph. Now each spin is connected to each other spin via $2$-local interactions given by Pauli strings. We denote the resulting DLAs by $\mathfrak{g}^\pi$. As this is not the main focus of our work, we simply state the result here and provide the details in the Supplemental Materials.
\begin{theorem}[Classification of Permutation Invariant DLAs]\label{the:classification-s}
The complete list of dynamical Lie algebras of permutation-invariant $2$-site Hamiltonians in one dimension of length $n\ge3$ is:
\begin{align*}
\aa_k^\pi(n) &= \aa_k(n), \qquad k=7,16,20,22, \\
\aa_0^\pi(n) &\cong \uu(1)^{\oplus n(n-1)/2}, \\
\aa_2^\pi(n) &= \so(2^n)^{Z \cdots Z} \cong \so(2^{n-1})^{\oplus2}, \\
\aa_4^\pi(n) &= \aa_7(n) \cong \begin{cases} 
\su(2^{n-1}), & n \;\;\mathrm{odd}, \\
\su(2^{n-2})^{\oplus 4}, & n \ge 4 \;\;\mathrm{even},
\end{cases} \\
\aa_6^\pi(n) &= \aa_{20}(n) \cong \aa_{14}^\pi(n) \cong \su(2^{n-1})^{\oplus2}, \\
\bb_0^\pi(n) &= \bb_0(n) \cong \uu(1)^{\oplus n}, \\
\bb_1^\pi(n) &\cong \uu(1)^{\oplus n(n+1)/2}, \\
\bb_3^\pi(n) &= \bb_3(n) \cong \su(2)^{\oplus n}.
\end{align*}
\end{theorem}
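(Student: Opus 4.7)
The plan is to leverage two facts simultaneously: (i) the inclusion chain $\aa_k(n) \subseteq \aa_k^\circ(n) \subseteq \aa_k^\pi(n)$, so that Theorems \ref{the:classification} and \ref{the:classification-p} already furnish a large known subalgebra inside $\aa_k^\pi(n)$; and (ii) the manifest $S_n$-symmetry of $\aa_k^\pi(n)$, which forces it to be invariant under arbitrary site-permutations. I would follow the same five-step strategy of Sect.\ \ref{secsketch}, adapted to the richer all-to-all generating set.

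The statement partitions into three groups. For $k \in \{7, 16, 20, 22\}$, I claim $\aa_k^\pi(n) = \aa_k(n)$. This is because $\aa_k(n)$, as identified in Theorem \ref{the:classification}, is either already the maximal $\su(2^n)$ or $\so(2^n)$, or else contains each non-adjacent term $a_i a'_j$ as an explicit commutator within $\aa_k(n)$; I would verify the latter by direct inspection of the generators. For the Abelian families $\aa_0^\pi$ and $\bb_1^\pi$, the generators mutually commute, so the DLA is just the real span of all $X_i X_j$ with $i < j$ (together with $X_i$ in the $\bb_1^\pi$ case), giving the stated dimensions $n(n-1)/2$ and $n(n+1)/2$. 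For $\bb_0^\pi$ and $\bb_3^\pi$, all single-site generators are already present in the open case, so nothing new arises under permutation-invariance.

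For the remaining cases $k \in \{2, 4, 6, 14\}$, I would apply the stabilizer/centralizer/involution method. First, I compute $\Stab(\aa_k^\pi(n))$, the set of Pauli strings commuting with every generator $a_i a'_j$; by $S_n$-symmetry this set is determined by the condition that every two-site substring lies in $\Stab(\aa_k) \subseteq \mcP_2$. Next, I bound $\aa_k^\pi(n)$ from above by $\su(2^n)^{\Stab(\aa_k^\pi(n))}$ modulo its center, and where needed I refine this by an $S_n$-invariant involution $\theta_k^\pi$ that fixes all the generators. For $\aa_4^\pi$, I would exhibit the identification $\aa_4^\pi(n) = \aa_7(n)$ directly: $[X_i X_j, Y_j Y_k] \propto X_i Z_j Y_k$, and iterating these cross-site commutators reproduces the full Heisenberg DLA.

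The main obstacle will be attaining the upper bounds, i.e., proving the reverse inclusion $\g_k^\pi(n)^{\theta_k^\pi} \subseteq \aa_k^\pi(n)$. As in Sect.\ \ref{secsketch}, this can be shown by induction on $n$: given a Pauli string $a$ in the upper bound, I would use commutators with generators $a_i a'_j$ to produce an element with $I$ at some position, thereby reducing to the $(n-1)$-case. The all-to-all connectivity makes this inductive step considerably easier than in the open or periodic setting, since any pair of sites is available for the reduction rather than only neighboring pairs. The final identification of the resulting Lie algebra with $\su(2^{n-1})^{\oplus 2}$, $\so(2^{n-1})^{\oplus 2}$, or $\so(2^n)^{Z \cdots Z}$ is then accomplished by applying a suitable unitary transformation that diagonalizes the stabilizer, as in step 5 of Sect.\ \ref{secsketch}.
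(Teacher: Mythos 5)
Your overall strategy is sound and every claim you state is correct, but your route differs from the paper's in how the non-trivial cases are closed, and in one place it is heavier than necessary. The paper does not re-run the stabilizer/centralizer/involution machinery for $k=4,6,14$: it exploits the fact that the open-chain answers from Theorem \ref{the:classification} already have manifestly permutation-invariant descriptions ($\aa_7(n)$ as the centralizer of $\{P_X,P_Y,P_Z\}$ modulo its center, $\aa_{16}(n)=\so(2^n)$, $\aa_{20}(n)=\su(2^n)^{P_X}/\Span\{P_X\}$, $\aa_{22}(n)=\su(2^n)$), which immediately gives $\aa_k^\pi(n)=\aa_k(n)$ for $k=7,16,20,22$; then for $k=4,6,14$ it uses the inclusions $\aa_4\subset\aa_7$, $\aa_6\subset\aa_{20}$, and (after relabeling $X\rightleftharpoons Z$) $\aa_{14}\subset\aa_{20}$ to get the upper bound for free, and closes the reverse inclusion with a short explicit computation or dimension count at $n=3$ together with the fact that $\aa_7(n)$ and $\aa_{20}(n)$ are generated by their $3$-site blocks (Lemmas \ref{lem12}, \ref{lem13}). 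Your plan for $k=2$ --- upper bound $\so(2^n)^{P_Z}$ from the stabilizer $\{P_I,P_Z\}$ plus the transpose involution, then a reduction producing an $I$ by commutating with permuted generators --- is exactly the paper's Lemma \ref{lem11}, and your proposal to run the same induction for $k=4,6,14$ (which all-to-all connectivity indeed makes easier) would also work, just with more bookkeeping than the paper's shortcut. Two soft spots to fix: for $k=7,20$ you cannot literally check ``each non-adjacent term $a_ia'_j$'' by direct inspection, since $|i-j|$ is unbounded --- you need either the explicit permutation-invariant description of $\aa_k(n)$ (as the paper uses) or an induction on the distance $|i-j|$; and the word ``complete'' in the theorem rests on the earlier reduction (Sect.\ \ref{secext}) that only flip-invariant $\aa$'s need be considered, so that, e.g., $\aa_5^\pi$, $\aa_9^\pi$, $\aa_{10}^\pi$ collapse onto the listed cases --- your write-up treats only the listed identities and should invoke or reprove that reduction.
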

The proof of this theorem is given in SM\ \ref{sec:sym}.

\section{Discussion \label{sec:discussion}}

In the previous section, we classified all DLAs of $2$-local spin chain Hamiltonians in one dimension. In this section, we discuss the importance of this classification for various fields in physics.

\subsection{Relevance for quantum control}
In quantum control, one is interested in performing a specific unitary evolution. This can be achieved by controlling some physical system described by a Hamiltonian $H$, parameterized by controls $\lambda(t)\in\mathbb{R}$ at each time step,
\begin{align*}
    H = \sum_{k=1}^{|\mathcal{A}|} \lambda_k(t) a_k,
\end{align*}
where $\mathcal{A}$ is now any set of operators $a_k$ that can be physically implemented. The set of unitaries that can be realized after a certain time $T\geq0$ is then determined by the Schr\"odinger equation in the Heisenberg picture (after setting $\hbar=1$):
\begin{align*}
    \frac{dU}{dt} = -iH(\lambda(t))\,U(t), \qquad U(0)=I.
\end{align*}
If the set of unitaries that can be reached after time $T$ is the full unitary group, the system is said to be (operator) \emph{controllable} \cite{dalessandro2021}. This is a desirable property when one is interested in building a quantum computer with a universal gate set. 

To test whether a system is controllable, one determines the DLA $\Lie{\mathcal{A}}$ and checks if it equals $\su(2^n)$. The conditions for complete controllability are known \cite{schirmer2001complete}, and it is in principle easy to create a controllable system since any real simple Lie algebra can be generated from $2$ elements \cite{ionescu1976generators}. If the DLA is a proper subalgebra of $\su(2^n)$, we say that the system is \emph{uncontrollable}. Note that this includes simple Lie algebras like $\so$ and $\sp$ \cite{schirmer2002identification}. Uncontrollable systems can arise when there are conserved quantities or symmetries in the physical system one is trying to control. Note that, due to Proposition \ref{rem:direct}, the DLA must split into a direct sum of simple Lie algebras and a center. If all simple summands in this decomposition are of the form $\su$, then we say that the system is \emph{subspace controllable} \cite{dalessandro2021}.

We can contextualize our classification in terms of these definitions. For example, we know that $\aa_{12}(n)$ will produce a controllable quantum system for $n\geq 4$ since this DLA is equal to $\su(2^n)$. Similarly, since $\aa_1(n)\cong \so(n)$, we know that it is uncontrollable. Finally, there are many examples of uncontrollable systems that consist of direct sums of $\su$ blocks; hence are subspace controllable. For instance, $\aa_3^\circ(n)$ for odd $n$ produces a DLA of the form $\su(2^{n-1})^{\oplus 2}$.

In addition to the notion of controllability of spin systems, we can ask what other types of systems we can simulate with our spin chains, e.g., fermionic or bosonic systems. This question was originally explored for Hamiltonians on cubic lattices with translation symmetry \cite{schuch2006harmonic,kraus2007quantum}. In particular, the dynamics of quadratic fermionic Hamiltonians is described by DLAs of the $\so(2n)$ or $\so(2n+1)$ type, which show up in our classification as $\aa_{14}(n)$ and $\aa_8(n)$. Similarly, the dynamics of a bosonic quadratic Hamiltonian with $n$ modes is related to a symplectic DLA \cite{zeier2011symmetry}, which we can identify with $\aa_5(n)$ for $n\equiv3$ mod $6$. Finally, one can consider composite systems and explore which subgroups of $\SU(N)$ can generated in this manner \cite{liu2021qudits}.

\subsection{Relevance for variational quantum computing}

A quantum circuit can be described as a product of unitaries $U = \prod_k U_k$. Typically, the quantum circuit $U$ acts on a multi-qubit state, whereas the gates $U_k$ only act on single or two qubit subsystems, i.e., we can write $U_k = e^{a_k}$ where $a_k$ is a $1$- or $2$-local operator. 
For a set of generators $\mcA=\{a_k\}$ with a corresponding DLA $\aa=\Lie{\mcA}$, we have that
\begin{align}
    e^{\aa} = \bigl\{ e^{a_{k_1} t_1} e^{a_{k_2} t_2} \cdots e^{a_{k_r} t_r} \,\big|\, t_i\in\mathbb{R}, \; a_{k_i} \in \mcA \bigr\}. \label{eq:subgroup}
\end{align}
In other words, any element in the Lie group $e^{\aa}$ generated by the DLA can be reached by a finite product of unitaries in that group (see \cite{dalessandro2021}, Corollary 3.2.6). In quantum computing, if $e^\aa=\SU(2^n)$, then the gate set $\{e^{a_k}\}$ is called \emph{universal} \cite{lloyd1996universal}. It is known that almost any combination of unitaries is universal \cite{lloyd1995almost,deutsch1995}. However, we can make specific choices for the generators $\{a_k\}$ that correspond to a non-universal gate set, which instead will generate a proper subgroup of $\SU(2^n)$. This is especially relevant for a class of quantum algorithms called variational quantum algorithms \cite{cerezo2021variational, tilly2022variational}.

If limited to $1$-dimensional topology, the generators in our classification will produce a circuit that is an element of the Lie group $e^\aa$. This notion can be used to construct specific quantum algorithms that always act within a subgroup of $\SU(2^n)$. Here, one considers a circuit that consists of parameterized gates,
\begin{align*}
    U(\btheta) =  U_1(\theta_1) U_2(\theta_2) \cdots U_K(\theta_K).
\end{align*}
The gate parameters $\btheta=(\theta_1,\dots,\theta_K)$ are real parameters that are optimized with a classical optimization routine to minimize a scalar cost function. A widely used example of such an optimization is the Variational Quantum Eigensolver algorithm (VQE) \cite{peruzzo2014variational}, which has a cost function given by
\begin{align}
    C(\btheta) = \Tr\bigl[ U(\btheta)\rho_0U^\dag(\btheta) H_c  \bigr],\label{eq:cost_fn}
\end{align}
where $H_c$ is a Hermitian operator and $\rho_0 = |\psi_0\rangle\langle\psi_0|$ is the initial state of the system. Crucial to the success of this algorithms is the choice of a circuit ansatz $U(\btheta)$ and the properties of the cost function \eqref{eq:cost_fn}.

\subsubsection{VQE ans\"atze}
A large class of variational circuits consist of $L$ repeating layers of unitary blocks \cite{kandala2017hardware,farhi2014quantum, wecker2015progress,ho2019efficient, choquette2021quantum, Kokcu2021cartan, matos2023characterization,dallaire2019low,anand2022quantum}, each with its own set of parameters:
\begin{align*}
    U(\btheta) = \prod_{l=1}^L \left( \prod_{k=1}^K U_k\bigl(\theta_{k}^{(l)}\bigr) \right) .
\end{align*}
In this section, we will give some examples of these circuits and how our classification relates to them. 

\begin{example}\textit{Hamiltonian Variational Ansatz.}
    The Hamiltonian Variational Ansatz circuit is obtained by Trotterizing the exponential of a Hamiltonian \cite{wecker2015progress,ho2019efficient}. Consider the Hamiltonian $H_{XY} = \sum_{i=1}^{n-1} X_i Y_{i+1}$, which has $\aa_1(n)$ as its DLA. Exponentiation of $H$ and the application of the Trotter--Suziki formula then gives:
    \begin{align*}
        U(\btheta) = \prod_{l=1}^L \left( \prod_{\mathrm{even}\:k} e^{i\theta_{k}^{(l)}X_k Y_{k+1}} \prod_{\mathrm{odd}\:k} e^{i\theta_{k}^{(l)}X_k Y_{k+1}} \right),
    \end{align*}
    where we grouped the odd and even terms together due to the structure imposed by the 1- and 2-qubit gates
    available on the quantum computer. Due to \eqref{eq:subgroup} and the knowledge that $\aa_1(n)\cong\so(n)$, we know that the above circuit must be a parameterization of a unitary operator $U(\btheta) \in\SO(n)$. 
    
    Similarly, we can take the DLA $\aa_9(n)$ with generators $\{XY, XZ\}$, which gives a circuit within $\mathrm{Sp}(2^{n-2})$:
    \begin{align*}
        U(\btheta ,\bphi) = &\prod_{l=1}^L 
        \Biggl(\prod_{\mathrm{even}\:k} e^{i\theta_{k}^{(l)}X_k Y_{k+1}} \prod_{\mathrm{odd}\:k} e^{i\theta_{k}^{(l)}X_k Y_{k+1}} \\
        &\times \prod_{\mathrm{even}\:k} e^{i\phi_{k}^{(l)}X_k Z_{k+1}} \prod_{\mathrm{odd}\:k} e^{i\phi_{k}^{(l)}X_k Z_{k+1}} \Biggr).
    \end{align*}
    We illustrate these circuits schematically in Figure \ref{fig:var_circuits}.
    \begin{figure}[htb!]
        \centering
        \subfloat[$\aa_1(n)$]{\includegraphics[width=0.30\columnwidth]{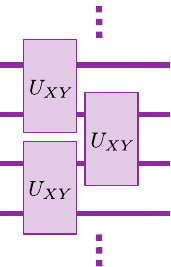}\label{fig:xy}}\hspace{5mm}
        \subfloat[$\aa_9(n)$]{\includegraphics[width=0.5\columnwidth]{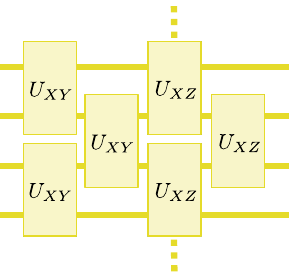}\label{fig:sp}}
        \caption{(a) Hamiltonian Variational Ansatz circuit for the Hamiltonian $H_{XY} = \sum_{i=1}^{n-1} X_i Y_{i+1}$, which parameterizes an element of the group $\SO(n)$. (b) Variational ansatz that parameterizes a unitary in $\mathrm{Sp}(2^{n-2})$ via products of unitaries generated by terms in $\aa_9(n)$.}
        \label{fig:var_circuits}
    \end{figure}
    We note that these types of brick-layer circuits also show up in the condensed matter physics literature on measurement-induced entanglement phase transitions \cite{yaodong2018mipt,li2019mipt,koh2023measurement}, hence our classification may be of use in that context as well.
\end{example}
\begin{example}\textit{Adapt-VQE.}
In ADAPT-VQE, one dynamically grows the circuit using a predetermined operator pool, so that each gate lowers the cost function by the largest amount \cite{grimsley2019adaptive}. This class of dynamical circuit ans\"atze can be understood as a Riemannian gradient flow over a specific subgroup \cite{wiersema2023riemannian}. This heuristic is popular in quantum chemistry for circuit design, where specific operator pools are considered that are tailored to fermionic Hamiltonians \cite{van2022scaling, yordanov2021qubit, tang2021qubit}. The operator pool can be seen as a set of generators, with a corresponding DLA. In the context of our classification, we can thus determine the resulting subgroup of the dynamically grown circuit ansatz based on the generators in the operator pool.

\begin{figure}[htb!]
    \centering{\includegraphics[width=0.8\columnwidth]{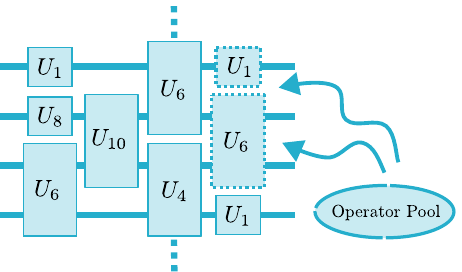}}\hspace{5mm}
    \caption{ADAPT-VQE circuit growing heuristic. We consider a generator pool in our classification and dynamically grow the circuit. }
    \label{fig:adapt_circuits}
\end{figure}
\end{example}

\begin{example}\textit{Permutation-invariant circuits.}
    Instead of a $1$-dimensional topology, one can consider a Hamiltonian with a fully connected topology (see Figure \ref{fig:rbm}):
    \begin{align*}
        H = \sum_{1\le i\neq j\le n} A_i B_j.
    \end{align*}
    This topology is common in ion trap quantum computers \cite{wright2019ionq} and also shows up in the context of quantum Boltzmann machines \cite{amin2018quantum, kappen2020learning}, which are the quantum equivalent of the Sherrington--Kirkpatrick model with tunable parameters \cite{sherrington1975solvable}. Closely related are the so-called permutation-equivariant circuits, which consist of parameterized blocks of unitaries that are permutation invariant \cite{schatzki2022theoretical}. These circuit ans\"atze were shown to be powerful quantum machine learning models for permutation-invariant data sets. Our classification of permutation-invariant $2$-site Hamiltonians in one dimension thus provides a classification of DLAs for these types of ans\"atze.
    \begin{figure}[htb!]
        \centering
        \includegraphics[width=\columnwidth]{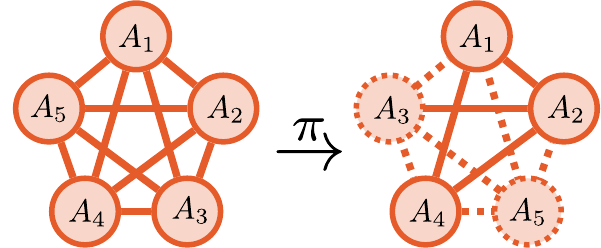}
        \caption{Permutation-invariant topology. Permuting sites leaves the Hamiltonian invariant.}
        \label{fig:rbm}
    \end{figure}
\end{example}

\subsubsection{Barren plateaus}

A hurdle in minimizing a cost function of the form \eqref{eq:cost_fn} are so-called \emph{barren plateaus} \cite{mcclean2018barren}, which are flat areas in the cost landscape of a variational quantum algorithm. When barren plateaus are present, the variance of gradients with respect to the gate parameters will decay, on average, exponentially as a function of system size. Hence, obtaining accurate estimates quickly becomes intractable due to the large number of shots required. There is a variety of different setups in which barren plateaus occur \cite{larocca2023theory, mcclean2018barren, cerezo2021cost, marrero2020entanglement, wang2021noiseinduced, holmes2022express,dankert2009exact}. 
To mitigate this problem, several recent works are aimed at finding ways to avoid the regions where optimization is hard \cite{taylor2020avoid, volkoff2021avoidbarren, grant2019initialization, zhou2020qaoa, skolik2020layerwise, pesah2020absence, wiersema2023measurement}. 

\begin{figure}[htb!]
    \centering
    \includegraphics[width=\columnwidth]{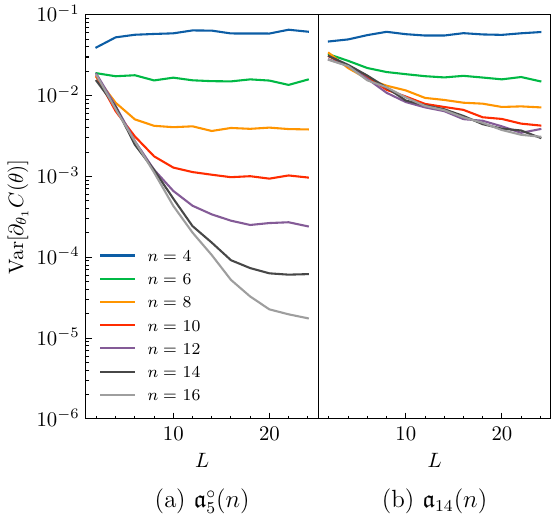}
    \caption{
    Barren plateaus in variational quantum circuits. We calculate the variance of $1000$ randomly initialized circuits with $\btheta$ sampled uniformly in $[0,\pi]$. 
     (a) The Lie algebra $\aa_5^\circ(n)$ is isomorphic to $\so(2^n)$, $\so(2^{n-2})^{\oplus 4}$ or $\sp(2^{n-2})$ depending on $n$; hence we expect exponentially decaying gradients for all $n$. This is confirmed in the figure above, since for a linear increase in $n$, we see an order of magnitude decrease in the gradient variances. (b) Since $\aa_{14}(n)\cong \so(2n)$, we find polynomially decaying gradients as a function of system size.
    }
    \label{fig:barren}
\end{figure}

The relevance of our classification for barren plateaus stems from the conjecture of \cite{Larocca2022diagnosing}, 
which states that the variance of the gradients of gate parameters is inversely proportional to the dimension of the DLA $\mathfrak{g}$ of the circuit:
\begin{align*}
    \mathrm{Var}[\partial_k C(\btheta)] \in O\left(\frac{1}{\mathrm{poly}(\dim\mathfrak{g})}\right).
\end{align*}
There are some subtleties involved in this conjecture, such as the locality of the cost function and the choice of initial state, which are discussed in \cite{Larocca2022diagnosing}. 
In the common case where $H_c\in i\g$, an exact 
formula for the variance was obtained independently in Refs.~\cite{fontana2023theadjoint,ragone2023},
which in particular refines and proves the above conjecture.
This formula was interpreted in Ref.~\cite{ragone2023} in terms of the \emph{$\g$-purity} \cite{somma2004nature,somma2005quantum} of the initial state $\rho_0$ and the observable $H_c$, underscoring again the crucial role of the DLA.

\begin{figure*}[htbp]
    \centering

    \subfloat{\includegraphics[width=0.67\textwidth]{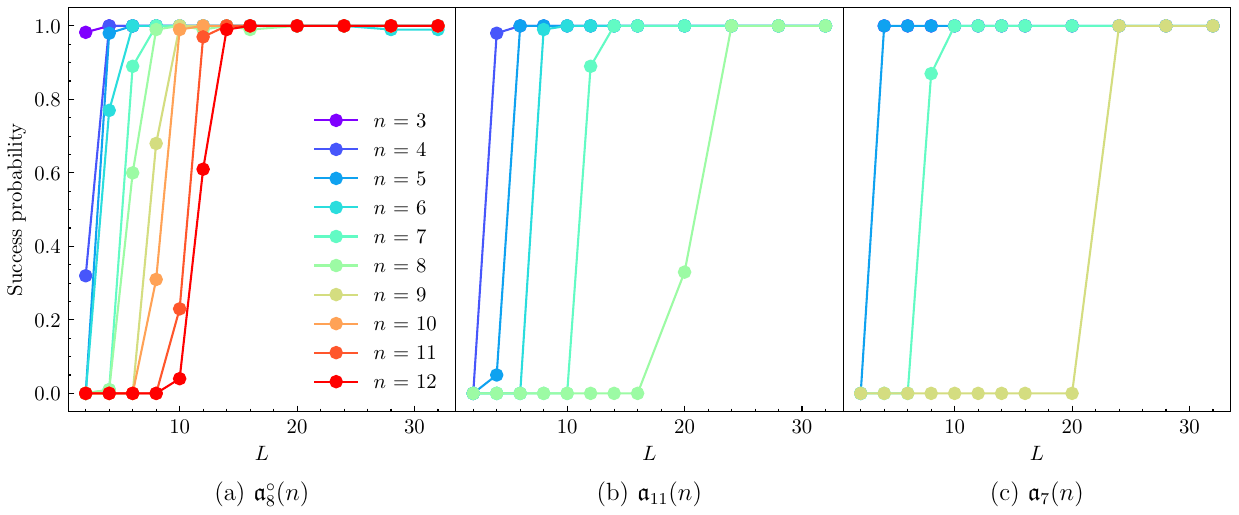}}
    \subfloat{\includegraphics[width=0.305\textwidth]{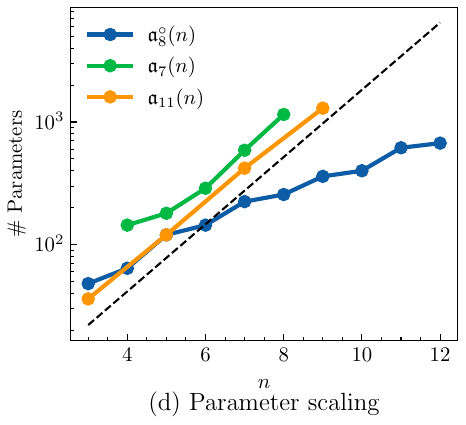}}
    \caption{
    Overparameterization of variational quantum circuits. We plot the success probability of reaching a state with $|E_0 - C(\btheta)| < 5\times 10^{-4}$ as a function of the circuit depth $L$, where $E_0$ is the lowest energy of the cost Hamiltonian $H_c$. These results were obtained by averaging $100$ random instances for each $L$ and $N$. The error bars indicate one standard deviation $\sigma$. A single random instance consists of $3000$ optimization steps of the Adam optimizer \cite{kingma2015adam} with learning rate $\eta = 10^{-2}$. A handful of instances converge to solutions that are further than $5\sigma$ from the mean and these outliers are therefore not included in the final plot. (a) The TFIM on a ring has DLA $\aa_8^\circ(n) \cong \so(2n)^{\oplus 2}$, whose dimension scales quadratically. We see that for a moderate circuit depth, the probability of success goes to 1. (b) Since the DLA $\aa_{11}(n)\cong \so(2^n)$, we expect that overparameterization occurs at depths that are exponential in the system size. Although this is not immediately clear in here, we see in (d) that the number of parameters indeed scales exponentially in $n$. (c) For the Heisenberg chain, which has an exponentially-scaling DLA $\aa_7(n)$, the choice of initial state $\rho_0=|0\rangle\langle0|^{\otimes n}$ prevents overparameterization from occurring for odd $n$. (d) If we set the threshold for overparameterization to be a success probability of $0.99$, we can plot the required number of parameters to reach this threshold. We see that $\aa_{11}(n)$ and $\aa_7(n)$ require an exponentially scaling number of parameters, whereas $\aa_8^\circ(n)$ only requires a polynomial number. The dashed line is a guiding line that indicates $O(4^n)$ scaling. 
    }
    \label{fig:overparam}
\end{figure*}

As an illustration, we compare the barren plateau
behavior for two of our Lie algebras, $\aa_5^\circ(n)$ and $\aa_{14}(n)$, whose dimensions scale exponentially and polynomially with $n$, respectively. We consider the cost function \eqref{eq:cost_fn} with $H_c=Z_1 Z_2$ and $\rho_0=|0\rangle\langle0|^{\otimes n}$. The circuit ansatz $U(\btheta)$ consists of unitaries generated by generators in our classification. To observe the barren plateau effect, we take the derivative of the cost function with respect to the first parameter in the first layer of the circuit, $\theta_1^{(1)}$. In Figure \ref{fig:barren}, we then observe the expected gradient decay as a function of the system size for an exponentially scaling DLA and a polynomially scaling DLA. In particular, in Figure \ref{fig:barren}(a) we consider the circuit generated by $H = \sum_{i=1}^{n-1} X_i Y_{i+1} + Y_i Z_{i+1}$ with periodic boundary conditions, whose DLA $\aa_5^\circ(n)$ is isomorphic to $\so(2^n)$, $\so(2^{n-2})^{\oplus 4}$ or $\sp(2^{n-2})$ depending on $n$ (see Theorem~\ref{the:classification-p}). Since $\dim\aa_5^\circ(n) = O(4^n)$, we expect the gradients to decay exponentially. Similarly, in Figure \ref{fig:barren}(b) we consider the circuit generated by $\aa_{14}(n)\cong \so(2n)$, which is described by the Hamiltonian $H = \sum_{i=1}^{n-1} X_i X_{i+1} + Y_i Y_{i+1}+ X_i Y_{i+1}$. Here, we have $\dim\aa_{14}(n) = O(n^2)$; hence, we expect the decay of gradients to be polynomial with respect to the system size.

According to Corollary~\ref{cor:scaling}, the only circuits free from barren plateaus generated by Hamiltonians in our classification, which are not composed of only $1$-qubit gates, have to be composed of $\so(n)$-type, since these are the only polynomially scaling DLAs in our classification. 

\subsubsection{Overparameterization}

Modern neural networks used in deep learning tend to have many more parameters than available data points, but are both easy to optimize and generalize well to unseen data in practice \cite{allen2019convergence}. This phenomenon is known as \emph{overparameterization}. In variational quantum computing, a similar effect has been observed \cite{kiani2020learning, wiersema2020exploring, kim2021overparam}, where deep variational quantum circuits tend to have favorable optimization properties. Recent works that have made progress in theoretically understanding this effect in quantum circuits can be connected to the DLA generated by the circuit ans\"atze used \cite{larocca2023theory,you2022convergence}. In particular, in \cite{larocca2023theory}, the dimension of the DLA can be used to analyze the Hessian around the global minimum of a typical variational quantum eigensolver cost function \cite{peruzzo2014variational}. Additionally, the authors find that the critical number of parameters needed to overparameterize a variational quantum circuit can be directly linked to the dimension of the associated DLA. In \cite{you2022convergence}, the authors study the optimization dynamics of overparameterized quantum circuit as perturbations of Riemannian gradient flows \cite{SchulteHerbruggen2010gradflow}. The size of the DLA (defined as the effective dimension in \cite{you2022convergence}) allows one to bound the number of parameters required to reach the overparameterization regime.

Corollary~\ref{cor:scaling} tells us that for quantum circuits constructed from the generators of $1$-dimensional spin chains, there are only DLAs whose dimension scales as $O(n)$, $O(n^2)$ and $O(4^n)$. Consequently, the linearly and quadratically scaling DLAs are expected to overparameterize with a non-exponential number of parameters. Additionally, the quadratically scaling DLAs in our classification correspond to free fermion models, whose dynamics can be simulated efficiently if $\rho_0$ is an eigenstate of $H_c$. However, choosing $\rho_0$ to be an arbitrary quantum state will still be intractable to simulate classically. 

As discussed in \cite{you2022convergence}, a requirement for overparameterzation is that the initial state has non-vanishing overlap with the ground state. Similarly, 
in \cite{mele2022avoiding}, it is shown that choosing the initial state in the right symmetry sector is crucial for the quality of the optimization. We highlight this importance in one of the numerical examples, where we choose an initial state that prevents overparameterization from occurring for an odd number of sites. 

In Figure~\ref{fig:overparam}, we illustrate the overparametrization phenomenon for three examples in our classification. In particular, in Figure~\ref{fig:overparam}(a), we consider the TFIM on a ring, which is given by the Hamiltonian $H_c = \sum_{i=1}^{n-1} Z_i Z_{i+1} + X_{i+1}$. The corresponding DLA is given by $\aa_8^\circ(n) \cong \so(2n)^{\oplus 2}$, whose dimension scales quadratically in $n$. We take the Hamiltonian Variational Ansatz of $H_c$ on even and odd qubits as a circuit ansatz, and take the initial state to be $\rho_0=|+\rangle\langle+|^{\otimes n}$. We observe that the cost landscape quickly becomes favorable, resulting in almost guaranteed convergence to the lowest energy state. 

In Figure~\ref{fig:overparam}(b), we take the DLA $\aa_{11}(n)\cong \so(2^n)$, and a Hamiltonian $H_c\in\so(2^n)$ given by a random orthogonal $2^n \times 2^n$ matrix. The circuit consists of unitaries generated by the generators of $\aa_{11}$ on even and odd qubits, and we take $\rho_0=|0\rangle\langle0|^{\otimes n}$. It now takes much deeper circuits to reach the same success probabilities as in Figure~\ref{fig:overparam}(a), which is due to the exponential scaling of the DLA. 

Finally, in Figure~\ref{fig:overparam}(c), we consider $\aa_7(n)$, which corresponds to the Heisenberg chain with $H_c = \sum_{i=1}^{n-1} X_i X_{i+1} + Y_i Y_{i+1} + Z_i Z_{i+1}$. The circuit is again the Hamiltonian Variational Ansatz of $H_c$, and $\rho_0=|0\rangle\langle0|^{\otimes n}$. This choice of an initial state only works for an odd number $n$ of qubits, while it fails to produce the overparameterization phenomenon for even $n$, leading to a success probability of $0$ (not plotted). Instead, for even $n$, the optimization of deep circuits gets stuck in a local minimum. We still observe the exponential scaling of the number of parameters, in accordance with the scaling of the dimension of $\aa_7(n)$, which is $O(4^n)$.

\subsection{Relevance for spin systems}

Our classification of Lie algebras arising in one dimension has significant 
bearing on a number of areas of physics and quantum simulation. The most
direct connection is that we have established a set of models, some
of which are traditional spin models \cite{rios2014,parkinson2010} studied in physics, while others are
new (cf.\ Table~\ref{tab:algebras_with_names}).
The integrability \cite{chen2011classification,franchini2017introduction,de2019classifying}, 
dynamical Lie algebra,
and symmetry of $1$-dimensional spin systems remains an active area of
research, and our result provides a database of models
where desired properties can be selected or different hypotheses tested.

For example, models such as the XXZ chain and the TFIM model arise quite naturally from nearest-neighbor weight-$2$ Pauli strings, and are represented in our Lie algebras as $\mathfrak{a}_{7}(n)$ and $\mathfrak{a}_{8}(n)$, respectively. A key
difference between these two models is that the TFIM is trivially integrable \cite{franchini2017introduction},
while the XXZ model is more complex.  This is reflected in the corresponding
Lie algebras as well, as $\mathfrak{a}_{8}(n) \cong \mathfrak{so}(2n-1)$ (which scales
polynomially in the number of qubits), and $\mathfrak{a}_{7}(n) \cong \su(2^{n-1})$ (which scales exponentially). 
Interestingly, the number of polynomially-scaling algebras is relatively small
($\mathfrak{a}_1,\mathfrak{a}_2,\mathfrak{a}_4,\mathfrak{a}_8,\mathfrak{a}_{14},
\aa^\circ_1, \aa^\circ_2, \aa^\circ_4, \aa^\circ_8,\aa^\circ_{14}$),
and they are all of the $\mathfrak{so}$ type. The limited size of these
algebras has been used to construct short- and/or fixed-depth circuits
for state preparation \cite{google2020hartree} and time evolution
\cite{csahinouglu2021hamiltonian,gu2021fast,Kokcu2021cartan,kokcu2022algebraic,camps2022algebraic} circuits.

The polynomially scaling algebras in principle come with a ``maximal set of independent commuting quantum operators'' \cite{caux2011remarks}, which
enables the integration in the first place. Unfortunately, our method does
not capture these because the conserved quantities are not single Pauli
strings. However, global symmetries are preserved for some of the models; these include $\mathbb{Z}_2$ (spin flip), $\SU(2)$ (global spin rotation) and $\UU(1)$ (global phase rotation). 

One particular property of note is the presence of \emph{non-commuting charges}---that 
is, elements of the stabilizer that do not commute. These are found in
$\aa_8(n), \aa_9(n)$ for all $n$, and in $\aa_2(n){-}\aa_7(n), \aa_{10}(n)$ for odd $n$ only.
Non-commuting charges give rise to a wide range of quantum effects in thermodynamics (see Ref.~\cite{majidy2023noncommuting} for a review). 
Notably,
the presence of non-commuting charges complicates questions regarding
thermalization.  Depending on the context, they either help thermalization
(e.g.\ by increasing entanglement entropy \cite{majidy2023non}) or hinder it (e.g.\ by 
invalidating the Eigenstate Thermalization Hypothesis \cite{murthy2023non}).
Although these effects have been primarily observed when the non-commuting charges are
extensive, and the ones discussed here are intensive, our framework could be extended to the
former as well.
Perhaps more interestingly within the context of quantum computing,
non-commuting charges couple the dynamics between different irreducible
representations of the charges, severely limiting the unitaries
that can be implemented \cite{marvian2023non}. 

A final point is the appearance of symplectic Lie algebras, which are not as common
as the orthogonal or unitary types. Here they appear from an AIII Cartan decomposition
of a larger Lie algebra; in applications, they come up
in the preparation of bosonic quantum states \cite{kraus2007quantum}, photonics \cite{yao2022recursive} and Clifford circuits/error correction \cite{rengaswamy2018synthesis}.

\section{Conclusion}

We have provided a classification of the dynamical Lie algebras (DLA) of $2$-local spin systems with open, periodic or permutation invariant topology in one dimension, and have discussed the relevance of this result in a variety of contexts. We have discovered several new examples beyond the standard Ising and Heisenberg models; thus increasing dramatically the number of explicit Hamiltonians available for theoretical investigations. 
It would be interesting to study in more detail the thermodynamic properties of these new Hamiltonians, and in particular to determine all of their symmetries, including the extensive non-commuting charges. 
We hope that our classification can be used to inspire new quantum algorithms and allow researchers to identify the circuits that they use in practice with the Lie algebras in our classification. Moreover, the methods that we have developed can be used to identify the DLA even in cases that fall outside of our classification.

One possible extension of our results would be to consider other topologies, such as $2$- and $3$-dimensional graphs. The challenge is that the number of possible Hamiltonians might become intractable. We would have to add more constraints to the Hamiltonians we consider, in order to reduce the size of the power set, or come up with alternative approaches to enumerate all unique DLAs.

Another future direction would be to consider other types of systems. For example, instead of spin systems, we could consider fermionic or bosonic Hamiltonians. Such a classification already exists for nearest-neighbor interactions on cubic lattices  \cite{kraus2007quantum, zimboras2014dynliefermion}, so this question would have to be explored in the context of non-cubic graphs.

\section*{Acknowledgements}
We acknowledge helpful discussions with Marco Cerezo, Ray Laflamme, Mart{\'\i}n Larocca, Carlos Ortiz Marrero, and Michael Ragone.
BNB was supported in part by a Simons Foundation grant No. 584741. 
RW acknowledges the resources provided by the Vector Institute  through its company sponsors \url{www.vectorinstitute.ai/#partners} and discussions with Juan Carrasquilla, Shayan Majidy, Roger Melko and Schuyler Moss.
EK and AFK acknowledge financial support from the National Science Foundation under award No. 1818914: PFCQC: STAQ: Software-Tailored Architecture for Quantum co-design.

\section*{Author Contributions}
The project was conceived by RW and BNB. Computer calculations and numerical simulations were performed by RW, with some assistance from AFK. Mathematical theorems were derived by BNB with the assistance of EK. The main part of the manuscript was written mostly by RW, while the Supplemental Materials were written by BNB. The figures were made by RW, and the tables by AFK and EK (except Table~\ref{tab:generator_list} by BNB). All authors contributed to reviewing and editing the manuscript. 

\section*{Data Availability}
Data generated and analyzed during the current study are available at \cite{our_code}.

\section*{Competing Interests}
The authors declare no competing interests.

\bibliographystyle{apsrev4-2}
\bibliography{literature}

\clearpage
\renewcommand{\appendixtocname}{Supplemental Material}
\onecolumngrid

\renewcommand\thefigure{S\arabic{figure}}  
\renewcommand\thetable{S.\Roman{table}}  
\setcounter{figure}{0}
\setcounter{table}{0}
\renewcommand\theHfigure{S\arabic{figure}}
\renewcommand\theHtable{S.\Roman{table}}

\renewcommand{\thesection}{\Alph{section}}
\renewcommand{\thesubsection}{\Roman{subsection}}

\renewcommand{\theequation}{\Alph{section}\arabic{equation}}
\counterwithin*{equation}{section}
\setcounter{section}{0}
\pagebreak
\begin{center}
\textbf{\large Supplemental Material}
\end{center}
\setcounter{equation}{0}
\setcounter{page}{1}
\makeatletter

\section{Preliminaries on Pauli strings and \texorpdfstring{$\su(2^n)$}{su(2 n)}}

Length-$n$ Pauli strings, when multiplied with the imaginary unit $i$, form a natural basis for the Lie algebra $\uu(2^n)$ of skew-Hermitian matrices. Because of this and their other remarkable properties, Pauli strings have been excessively used in this and many other works. 
In this section, we review the notation and basic properties of Pauli strings, the Lie algebra $\su(2^n)$, and its subalgebras $\so(2^n)$ and $\sp(2^{n-1})$.
We also discuss involutions of Lie algebras, and particularly of $\su(2^n)$.

\subsection{Pauli strings}\label{secpauli1}

Throughout the paper, we work with the \emph{Pauli matrices}
\begin{equation*}
\sigma_0=I=\begin{pmatrix} 1 & 0 \\ 0 & 1 \end{pmatrix}, \qquad
\sigma_1=X=\begin{pmatrix} 0 & 1 \\ 1 & 0 \end{pmatrix}, \qquad
\sigma_2=Y=\begin{pmatrix} 0 & -i \\ i & 0 \end{pmatrix}, \qquad
\sigma_3=Z=\begin{pmatrix} 1 & 0 \\ 0 & -1 \end{pmatrix},
\end{equation*}
including the identity matrix $I$,
which form a basis for the real vector space of $2\times 2$ Hermitian matrices.
We will denote by $A^T$ the transpose of a matrix, and by $A^\dagger$ its Hermitian conjugate
(which is obtained from $A^T$ by taking complex conjugates of all entries).
Thus, $A^\dagger = A$ for all $A\in\mcP_1 := \{I,X,Y,Z\}$. On the other hand, we have
\begin{equation*}
Y^T=-Y, \qquad A^T=A \quad\text{for}\quad A=I,X,Z.
\end{equation*}

Fix a positive number $n$. Length-$n$ \emph{Pauli strings} are tensor products of $n$ Pauli matrices of the form
\begin{equation}\label{pauli1}
a = A^1 \otimes A^2 \otimes \dots \otimes A^n, \qquad A^j \in\mcP_1
\end{equation}
(where the superscripts are indices not powers).
We denote the set of all such Pauli strings by $\mcP_n := \{I,X,Y,Z\}^{\otimes n}$.
Every $a\in\mcP_n$ is a linear operator on the Hilbert space $(\mathbb{C}^2)^{\otimes n}$ of $n$ qubits,
so $a$ can be represented as a matrix of size $2^n \times 2^n$ (by the Kronecker product).
In particular, $I^{\otimes n}$ is the $2^n \times 2^n$ identity matrix.
The Hermitian conjugate and transpose of a Pauli string are done componentwise:
\begin{align*}
a^\dagger &= (A^1)^\dagger \otimes (A^2)^\dagger \otimes \dots \otimes (A^n)^\dagger = a, \\
a^T &= (A^1)^T \otimes (A^2)^T \otimes \dots \otimes (A^n)^T = (-1)^{\# \{ A^j=Y \}} a.
\end{align*}
All Pauli strings are Hermitian, and $\mcP_n$ is a basis (over $\mathbb R$) of the vector space
of $2^n \times 2^n$ Hermitian matrices.

To shorten the notation, we will often omit the tensor product signs in Pauli strings, so \eqref{pauli1}
will be written as $a = A^1 A^2 \cdots A^n$. For example, we will write
\begin{equation*}
XX = X \otimes X, \qquad XY = X \otimes Y, \qquad Z \cdots Z = Z^{\otimes n}, \quad\text{etc.}
\end{equation*}
For $A\in\mcP_1$ and $1\le j\le n$, we will denote by
\begin{equation}\label{pauli2}
A_j := I^{\otimes (j-1)} \otimes A \otimes I^{\otimes (n-j)}
\end{equation}
the linear operator $A$ acting on the $j$-th qubit. For example, for $n=3$,
\begin{equation*}
X_1 = XII = X \otimes I \otimes I, \qquad Z_2 = IZI = I \otimes Z \otimes I, \qquad X_1Z_2Y_3 = XZY = X \otimes Z \otimes Y, \quad\text{etc.}
\end{equation*}
With this notation, we distinguish
\begin{equation*}
A_1 A_2 \cdots A_n = AA \cdots A = A \otimes A \otimes\cdots\otimes A = A^{\otimes n}
\end{equation*}
from \eqref{pauli1}, where in the latter the tensor factors $A^1,\dots,A^n$ are allowed to be different.

When there is a danger to confuse the tensor product and the matrix product, we will use $\cdot$ for the product of matrices. We have:
\begin{equation*}
X \cdot Y = i Z = -Y \cdot X, \qquad
Y \cdot Z = i X = -Z \cdot Y, \qquad
Z \cdot X = i Y = -X \cdot Z,
\end{equation*}
and each Pauli matrix squares to the identity:
\begin{equation*}
X \cdot X = Y \cdot Y = Z \cdot Z = I.
\end{equation*}
The matrix product of Pauli strings is done componentwise:
\begin{equation*}
(A^1 \otimes \dots \otimes A^n) \cdot (B^1 \otimes \dots \otimes B^n)
= (A^1 \cdot B^1) \otimes \dots \otimes (A^n \cdot B^n).
\end{equation*}
From here, it is easy to deduce the following important properties of Pauli strings.

\begin{lemma}\label{lemp0}
For any $a,b\in \mcP_n$, we have $a \cdot a = I^{\otimes n}$ and $a \cdot b = \pm b \cdot a$.
Hence, any two Pauli strings either commute or anticommute.
\end{lemma}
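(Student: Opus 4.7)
The plan is to reduce both claims to the single-qubit case by exploiting the componentwise nature of matrix multiplication for tensor products. Writing $a = A^1 \otimes \cdots \otimes A^n$ and $b = B^1 \otimes \cdots \otimes B^n$ with $A^j, B^j \in \mcP_1$, the identity
\begin{equation*}
(A^1 \otimes \cdots \otimes A^n)\cdot(B^1 \otimes \cdots \otimes B^n) = (A^1 \cdot B^1)\otimes \cdots \otimes (A^n \cdot B^n)
\end{equation*}
is already recorded in the excerpt, so I can apply it freely.

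First, for $a\cdot a = I^{\otimes n}$, I would simply observe that $A^j \cdot A^j = I$ for every $A^j \in \mcP_1$ (trivial for $I$, and given by $X\cdot X = Y\cdot Y = Z\cdot Z = I$ otherwise). Componentwise multiplication then gives $a\cdot a = I\otimes \cdots \otimes I = I^{\otimes n}$.

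Next, for the commutation claim, I would check it componentwise. For each index $j$, inspection of the single-qubit case shows that any pair $A^j, B^j \in \mcP_1$ satisfies either $A^j \cdot B^j = B^j \cdot A^j$ (if $A^j = I$, $B^j = I$, or $A^j = B^j$) or $A^j \cdot B^j = -B^j \cdot A^j$ (if $A^j, B^j$ are distinct elements of $\{X,Y,Z\}$, using the anticommutation relations $X\cdot Y = -Y\cdot X$, etc.). Define $\varepsilon_j \in \{+1,-1\}$ so that $A^j \cdot B^j = \varepsilon_j\, B^j \cdot A^j$. Then
\begin{equation*}
a\cdot b = \bigotimes_{j=1}^n (A^j \cdot B^j) = \bigotimes_{j=1}^n \varepsilon_j (B^j \cdot A^j) = \Bigl(\prod_{j=1}^n \varepsilon_j\Bigr)\, b\cdot a,
\end{equation*}
and $\prod_j \varepsilon_j \in \{+1,-1\}$, which gives $a\cdot b = \pm b\cdot a$. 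In particular, $[a,b] = 0$ when the product of signs is $+1$ and $\{a,b\} = 0$ otherwise, so any two Pauli strings either commute or anticommute.

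There is no real obstacle here; the lemma is essentially a bookkeeping exercise, and the only thing worth being careful about is the single-qubit case table that justifies $\varepsilon_j = \pm 1$. I would present that case analysis briefly (three commuting cases involving $I$ or equality, three anticommuting cases from the nontrivial products of distinct elements of $\{X,Y,Z\}$) and then let the tensor-product formula do the rest.
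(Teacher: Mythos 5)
Your proof is correct and follows exactly the route the paper intends: the lemma is stated as an immediate consequence of the componentwise product formula and the single-qubit relations listed just before it, which is precisely your sign-bookkeeping argument with $\varepsilon_j=\pm1$. Nothing is missing.
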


Notice that the product of two Pauli strings is again a Pauli string, up to a multiple of $\pm1,\pm i$. Thus, the set
$\{\pm a, \pm i a \,|\, a\in\mcP_n\}$ is a group under the matrix product, called the \emph{Pauli group}.
The following corollary of Lemma \ref{lemp0} will be useful in the future.

\begin{corollary}\label{lemp1}
For any $a,b\in \mcP_n$, if $[a,b] := a\cdot b-b\cdot a\ne 0$, then $[a,[a,b]]=4b$.
\end{corollary}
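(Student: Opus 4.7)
The plan is to unpack both hypotheses of Lemma \ref{lemp0} and chase the definitions. By Lemma \ref{lemp0}, any two Pauli strings either commute or anticommute. Since we are assuming $[a,b]\ne 0$, the case $a\cdot b = b\cdot a$ is ruled out, so $a$ and $b$ must \emph{anticommute}, i.e.\ $a\cdot b = -b\cdot a$. This is the key structural fact; once we have it, the rest is a short algebraic manipulation using $a\cdot a = I^{\otimes n}$, which is the other half of Lemma \ref{lemp0}.

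The computation then proceeds as follows. From $a\cdot b = -b\cdot a$, we rewrite the first commutator as
\begin{equation*}
[a,b] = a\cdot b - b\cdot a = a\cdot b - (-a\cdot b) = 2\,a\cdot b.
\end{equation*}
Substituting into the double commutator and using bilinearity of the bracket,
\begin{equation*}
[a,[a,b]] = [a,\,2\,a\cdot b] = 2\bigl( a\cdot a\cdot b - a\cdot b\cdot a \bigr).
\end{equation*}
Now apply $a\cdot a = I^{\otimes n}$ to the first term, which gives $b$, and apply the anticommutation relation to the second term: $a\cdot b\cdot a = (-b\cdot a)\cdot a = -b\cdot(a\cdot a) = -b$. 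Hence
\begin{equation*}
[a,[a,b]] = 2\bigl( b - (-b) \bigr) = 4b,
\end{equation*}
which is the desired identity.

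There is essentially no obstacle here: the statement is a direct two-line consequence of the dichotomy (commute vs.\ anticommute) and the involution property $a^2 = I^{\otimes n}$ supplied by Lemma \ref{lemp0}. The only thing worth flagging for the reader is the logical step of converting ``$[a,b]\ne 0$'' into ``$a$ and $b$ anticommute,'' which relies crucially on the fact that there is no intermediate behavior available for Pauli strings. Everything else is manipulation of associative products of matrices.
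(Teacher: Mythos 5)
Your proof is correct and follows essentially the same route as the paper's: both use Lemma \ref{lemp0} to deduce that $[a,b]\ne 0$ forces anticommutation, giving $[a,b]=2\,a\cdot b$, and then apply $a\cdot a=I^{\otimes n}$ to conclude $[a,[a,b]]=4b$. Your version just spells out the intermediate term $a\cdot b\cdot a=-b$ that the paper leaves implicit.
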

\begin{proof}
When $[a,b] \ne 0$, we have $[a,b]=2a\cdot b$ and $[a,[a,b]]=4a \cdot a \cdot b = 4b$.
\end{proof}

Another important consequence of Lemma \ref{lemp0} is \emph{Euler's formula}
\begin{align}\label{eqeuler}
e^{i \theta a} = (\cos \theta) I^{\otimes n} + i (\sin \theta) a, \qquad
a\in\mcP_n, \;\; \theta\in\mathbb R.
\end{align}
A useful special case is $\theta=\pi/2$; then
\begin{align}\label{eqeuler2}
e^{i\frac{\pi}{2} a} = i a, \qquad a\in\mcP_n.
\end{align}
Note that any $a\in\mcP_n$ is Hermitian (i.e., $a^\dagger=a$), $ia$ is skew-Hermitian (i.e., $(ia)^\dagger=-ia$),
and $U=e^{i \theta a}$ is unitary (i.e., $U U^\dagger=I^{\otimes n}$).
In the following, we will use the following corollary of Euler's formulas \eqref{eqeuler}, \eqref{eqeuler2}.

\begin{corollary}\label{corp1}
For any anticommuting $a,b\in \mcP_n$ and a real number $\theta$, we have
\begin{equation}\label{equab1}
e^{i\theta a} \cdot b = b \cdot e^{-i\theta a}.
\end{equation}
In particular,
\begin{equation}\label{equab}
e^{i\frac{\pi}{4} a} \, b \, e^{-i \frac{\pi}{4} a} = i a \cdot b. 
\end{equation}
\end{corollary}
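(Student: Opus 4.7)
The plan is to apply Euler's formula \eqref{eqeuler} directly, using only the hypothesis that $a$ and $b$ anticommute (i.e., $a \cdot b = - b \cdot a$), together with the fact that each Pauli string squares to the identity (Lemma \ref{lemp0}).

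First, I would expand $e^{i\theta a}$ via \eqref{eqeuler} and multiply on the right by $b$:
\begin{equation*}
e^{i\theta a} \cdot b = (\cos\theta)\, b + i (\sin\theta)\, a \cdot b.
\end{equation*}
Using $a \cdot b = - b \cdot a$ on the second term and then factoring $b$ out on the left gives
\begin{equation*}
e^{i\theta a} \cdot b = (\cos\theta)\, b - i (\sin\theta)\, b \cdot a = b \cdot \bigl( (\cos\theta) I^{\otimes n} - i (\sin\theta) a \bigr) = b \cdot e^{-i\theta a},
\end{equation*}
where in the last step I apply \eqref{eqeuler} again with $\theta$ replaced by $-\theta$. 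This establishes \eqref{equab1}.

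For the second assertion, I would specialize to $\theta = \pi/4$. Using \eqref{equab1}, we can move the left exponential past $b$ at the cost of flipping its sign, which then combines with the right exponential:
\begin{equation*}
e^{i \frac{\pi}{4} a} \, b \, e^{-i \frac{\pi}{4} a} = b \cdot e^{-i \frac{\pi}{4} a} \cdot e^{-i \frac{\pi}{4} a} = b \cdot e^{-i \frac{\pi}{2} a}.
\end{equation*}
Here the middle step uses $e^{-i \frac{\pi}{4} a} \cdot e^{-i \frac{\pi}{4} a} = e^{-i \frac{\pi}{2} a}$, which follows because $a$ commutes with itself so the usual exponential addition rule applies (alternatively, expand both exponentials and use $a \cdot a = I^{\otimes n}$). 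Applying \eqref{eqeuler2} with $a$ replaced by $-a$ gives $e^{-i\frac{\pi}{2} a} = -i a$, and then using $b \cdot a = - a \cdot b$ once more yields $b \cdot (-i a) = i a \cdot b$, which is exactly \eqref{equab}.

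There is no substantial obstacle here: the entire statement is a two-line consequence of Euler's formula together with the anticommutation relation. The only thing to be mindful of is keeping track of the signs when swapping $a$ past $b$, and making sure one applies \eqref{equab1} in the correct direction (with the appropriate sign of $\theta$) when deriving the $\pi/4$ version.
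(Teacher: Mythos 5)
Your proof is correct and follows exactly the route the paper intends: the corollary is stated as an immediate consequence of Euler's formulas \eqref{eqeuler}, \eqref{eqeuler2} together with anticommutation, which is precisely your two-step computation (the signs are handled correctly throughout, including $e^{-i\frac{\pi}{2}a}=-ia$ and the final swap $b\cdot a = -a\cdot b$).
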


\subsection{The Lie algebras \texorpdfstring{$\su(N)$}{su(N)}, \texorpdfstring{$\so(N)$}{so(N)}, and \texorpdfstring{$\sp(N)$}{sp(N)}}\label{secpauli2}

The purpose of this subsection is to review some standard terminology concerning Lie algebras, especially $\su(N)$, $\so(N)$, and $\sp(N)$.
We fix a positive integer $N$; later we will have $N=2^n$ where $n$ is the number of qubits.

The set $\mathbb{C}^{N \times N}$ of $N \times N$ matrices with complex entries is a vector space over $\mathbb C$,
as matrices can be added or multiplied by complex scalars. Then $\mathbb{C}^{N \times N}$ is an \emph{associative algebra} under the matrix product $ab$,
i.e., $ab$ is bilinear (depends linearly on both $a$ or $b$) and associative: $a(bc) = (ab)c$. 
Under the \emph{commutator bracket} $[a,b] = ab-ba$, we get a complex Lie algebra denoted as $\gl(N,\mathbb{C})$.

In general, a \emph{Lie algebra} is defined as a vector space $\g$ (over $\mathbb C$), equipped with a bilinear operation $[a,b]\in\g$ 
for $a,b\in\g$, which satisfies the following skew-symmetry and Jacobi identity:
\begin{equation*}
[a,b]=-[b,a], \qquad [a,[b,c]] = [[a,b],c]+[b,[a,c]].
\end{equation*}
It is convenient to use the notation $\ad_a(b) := [a,b]$ for $a,b\in\g$. Then $\ad_a$ is a linear operator on $\g$ for every $a\in\g$. 
A trivial example of a Lie algebra is any vector space $\g$ with the zero bracket $[a,b]=0$ for all $a,b\in\g$;
such Lie algebras are called \emph{Abelian}.

A \emph{subalgebra} $\s$ of a Lie algebra $\g$ is a subspace (i.e., closed under vector addition and scalar multiplication),
which is also closed under the bracket: $a,b\in\s$ $\Rightarrow$ $[a,b]\in\s$.
For example, the set $\sl(N,\mathbb{C})$ of all $N \times N$ complex matrices with trace $0$ is a subalgebra of $\gl(N,\mathbb{C})$;
hence, it is itself a Lie algebra. 
Recall that an \emph{ideal} in a Lie algebra $\g$ is a subspace $\s$ such that $a\in\g$, $b\in\s$ $\Rightarrow$ $[a,b]\in\s$.
For example, both $\sl(N,\mathbb{C})$ and $\mathbb{C} I_N$ (where $I_N$ is the $N \times N$ identity matrix) are ideals of $\gl(N,\mathbb{C})$, and
\begin{equation*}
\gl(N,\mathbb{C}) = \sl(N,\mathbb{C}) \oplus \mathbb{C} I_N
\end{equation*}
is a direct sum of not just subspaces but of commuting subalgebras and ideals.
When we write a direct sum of Lie algebras, we will always mean that the summands are subalgebras that commute with each other. 
The space $\mathbb{C} I_N$ is the \emph{center} $Z(\g)$ of $\g=\gl(N,\mathbb{C})$, i.e., the set of all $c\in\g$ such that $[c,a]=0$ for all $a\in\g$.

The Lie algebra $\sl(N,\mathbb{C})$ is \emph{simple}, which means that it is not Abelian and has no ideals other than the trivial $\{0\}$ and the whole algebra. 
Other examples of simple Lie algebras over $\mathbb C$ 
are provided by the \emph{orthogonal} Lie algebras $\so(N,\mathbb{C})$ 
and the \emph{symplectic} Lie algebras $\sp(2N,\mathbb{C})$. 
Let us recall that $\so(N,\mathbb{C})$ is defined as the set of all complex skew-symmetric matrices (i.e., such that $a^T=-a$), and it is a subalgebra
of $\sl(N,\mathbb{C})$. Consider the $2N \times 2N$ matrix
\begin{equation*}
J_{2N} := \begin{pmatrix} 0 & I_N \\ -I_N & 0 \end{pmatrix}.
\end{equation*}
Then $\sp(2N,\mathbb{C})$ is defined as the set of all $a\in\gl(2N,\mathbb{C})$ such that $a^T J_{2N} = -J_{2N} a$; this is a subalgebra
of $\sl(2N,\mathbb{C})$.

In this paper, we will work with Lie algebras over $\mathbb R$.
The set of all skew-Hermitian matrices (i.e., satisfying $a^\dagger=-a$) is a real vector space, and is closed under the commutator;
hence, it is a real Lie algebra, denoted $\uu(N)$.
Imposing that the trace of the matrix is $0$, we get the Lie algebra $\su(N)$.
If a matrix has real entries, then it is skew-Hermitian if and only if it is skew-symmetric. Thus, we have the subalgebra
\begin{equation}\label{soN}
\so(N) := \so(N,\mathbb{R}) = \bigl\{ a\in\su(N) \,\big|\, a^T = -a \bigr\} \subset\su(N).
\end{equation}
On the other hand, the Lie algebra of real symplectic matrices $\sp(2N,\mathbb{R})$ is \emph{not} a subalgebra of $\su(2N)$.
Instead of it, the relevant subalgebra is
\begin{equation}\label{spN}
\sp(N) := \sp(2N,\mathbb{C}) \cap \su(2N) = \bigl\{ a\in\su(2N) \,\big|\, a^T J_{2N} = -J_{2N} a \bigr\} \subset\su(2N).
\end{equation}
The Lie algebras $\su(N)$, $\so(N)$, $\sp(N)$ are simple 
and \emph{compact} (they are Lie algebras of compact Lie groups).
Their dimensions over $\mathbb{R}$ are given by:
\begin{equation}\label{eqdims}
\dim\su(N) = N^2-1, \qquad
\dim\so(N) = \frac{1}2 N(N-1), \qquad
\dim\sp(N) = N(2N+1).
\end{equation}
It is known (see e.g.\ \cite{zeier2011symmetry, dalessandro2021}) that any subalgebra of $\uu(N)$ is either Abelian or a direct sum of a center (which could be $\{0\}$) and Lie algebras isomorphic to one of $\su$, $\so$, $\sp$ or to one of five exceptional compact simple Lie algebras (cf.\ Proposition \ref{rem:direct}). 
For completeness, we provide the proof of this important fact here.

\begin{proposition}\label{rem:direct-SM}
Any subalgebra of $\uu(N)$ is either Abelian or a direct sum of compact simple Lie algebras and a center. 
\end{proposition}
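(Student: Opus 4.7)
The plan is to leverage the fact that $\uu(N)$ carries a canonical positive definite, symmetric, $\ad$-invariant bilinear form, namely $\langle a, b\rangle := -\Tr(ab)$: positivity follows from $-\Tr(a^2) = \Tr(a^\dagger a)\ge 0$ for skew-Hermitian $a$, while symmetry and $\ad$-invariance come from the cyclicity of trace. Any subalgebra $\g \subseteq \uu(N)$ inherits this form, so the task reduces to a purely algebraic statement: any finite-dimensional real Lie algebra with a positive definite $\ad$-invariant symmetric bilinear form decomposes as a direct sum of its center and compact simple ideals.

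The first structural step I would carry out is the observation that, given any ideal $\mathfrak{i}\subseteq\g$, its orthogonal complement $\mathfrak{i}^\perp$ is again an ideal (from $\ad$-invariance, since $\langle[x,y],z\rangle=-\langle y,[x,z]\rangle$), and positive definiteness forces $\g=\mathfrak{i}\oplus\mathfrak{i}^\perp$ as a direct sum of commuting ideals. Applying this with $\mathfrak{i}=Z(\g)$ yields $\g=Z(\g)\oplus\g'$, where $\g'$ has trivial center: any $x\in Z(\g')$ is centralized by $\g'$ by assumption and by $Z(\g)$ because the summands commute, so $x\in Z(\g)\cap\g'=0$.

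Next, I would iterate the orthogonal-complement trick inside $\g'$ to produce a finite decomposition $\g'=\mathfrak{i}_1\oplus\cdots\oplus\mathfrak{i}_k$ into ideals that are minimal in $\g'$. The key upgrade from \emph{minimal ideal in} $\g'$ to \emph{simple Lie algebra} exploits the pairwise commutation of the summands: for any Lie ideal $\mathfrak{j}$ of $\mathfrak{i}_j$, decomposing $x\in\g'$ along the summands shows $[x,\mathfrak{j}]\subseteq[\mathfrak{i}_j,\mathfrak{j}]\subseteq\mathfrak{j}$, so $\mathfrak{j}$ is $\g'$-stable, and minimality forces it to be trivial or all of $\mathfrak{i}_j$. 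Moreover $Z(\mathfrak{i}_j)\subseteq Z(\g')=0$ by the same decomposition argument, ruling out the Abelian case. Each $\mathfrak{i}_j$ is thus a simple Lie algebra, and is compact because it inherits the positive definite $\ad$-invariant form; by Cartan's classification it must be isomorphic to one of $\su(N')$, $\so(N')$, $\sp(N'')$, or one of the five exceptional compact simple Lie algebras. The Abelian case of the proposition corresponds to $k=0$, i.e.\ $\g=Z(\g)$.

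The main obstacle I foresee is precisely the bridging step from \emph{ideal of $\g'$ that has no proper nontrivial $\g'$-subideal} to \emph{simple Lie algebra in the classical sense}: one must rule out the possibility of a purely internal Lie ideal of $\mathfrak{i}_j$ that fails to be stable under the whole $\g'$, and also the possibility of $\mathfrak{i}_j$ being Abelian. Both issues are resolved simultaneously by the orthogonality/commutation of the summands combined with $Z(\g')=0$; getting this bookkeeping right is the only non-routine piece of the argument, while the rest is a clean invocation of the orthogonal-complement principle and a citation of Cartan's classification.
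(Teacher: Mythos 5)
Your proposal is correct and follows essentially the same route as the paper: the negative definite invariant trace form on $\uu(N)$, orthogonal complements of ideals giving commuting direct summands, and promotion of sub-ideals to ideals of the whole algebra via that commutation, followed by Cartan's classification. The only cosmetic differences are that you peel off the center first rather than interleaving Abelian summands in the induction, and you justify compactness of each simple summand intrinsically from the invariant inner product (skew-symmetry of $\ad$ forcing a negative definite Killing form) instead of the paper's appeal to $e^{\g}$ being a closed, hence compact, subgroup of $\UU(N)$.
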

\begin{proof}
First, recall that the trace form $(a,b) = \Tr[ab]$ is negative definite on $\uu(N)$.
Indeed, one can see that
\begin{equation*}
\Tr[H^2] = \sum_{j=1}^N \lambda_j^2 > 0
\end{equation*}
for any nonzero Hermitian matrix $H\in i\uu(N)$ with eigenvalues $\lambda_1,\dots,\lambda_N$, because all $\lambda_j$ are real.
Second, the trace form $(\cdot,\cdot)$ is bilinear, symmetric and invariant; the latter meaning that
\begin{equation*}
([a,b],c) = -(b,[a,c]), \qquad a,b,c \in\uu(N).
\end{equation*}
All of these follow easily from the properties of the trace.

Now let $\g$ be a subalgebra of $\uu(N)$. Then the same proof as in \cite{knapp2013lie}, Corollary 4.25, shows that $\g$ is \emph{reductive}, i.e., it is a direct sum of simple or Abelian ideals. Indeed, for any ideal $\mathfrak s$ of $\g$, we have an orthogonal direct sum
\begin{equation*}
\g = \mathfrak{s} \oplus \mathfrak{s}^\perp,
\end{equation*}
due to the definiteness of the trace form. Furthermore, the invariance of the trace form implies that $\mathfrak{s}^\perp$ is itself an ideal of $\g$. As both $\mathfrak s$ and $\mathfrak{s}^\perp$ are ideals, they must commute: 
$[\mathfrak{s},\mathfrak{s}^\perp] \subseteq \mathfrak{s} \cap \mathfrak{s}^\perp = \{0\}$. Thus, $\g$ is a direct sum of commuting ideals.
Suppose that $\dim\mathfrak s > 1$ and $\mathfrak s$ is not simple as a Lie algebra. Then $\mathfrak s$ has a nonzero proper ideal $\mathfrak t$. From $[\mathfrak{t},\mathfrak{s}] \subseteq \mathfrak{t}$ and $[\mathfrak{t},\mathfrak{s}^\perp] = \{0\}$, we get that $[\mathfrak{t},\mathfrak{g}] \subseteq \mathfrak{t}$, so $\mathfrak t$ is an ideal of $\g$. Hence, we can proceed by induction on $\dim\g$ and break $\g$ into a direct sum of commuting ideals, each of which is either simple or $1$-dimensional (Abelian).

Finally, we note that any simple subalgebra $\g$ of $\uu(N)$ is compact, i.e., the Lie group $e^\g$ is compact as a closed subgroup of the compact Lie group $\UU(N)$ (see e.g.\ \cite{knapp2013lie}, Proposition 4.27, whose proof still applies). 
The classification of all compact simple Lie algebras is due to Cartan, and can be found in standard textbooks; for example in \cite{knapp2013lie}, Chapter~VI.
\end{proof}

Let now $N=2^n$ where $n$ is the number of qubits. Then a basis for $\uu(2^n)$ is given by $i\mcP_n$ (see Sect.\ \ref{secpauli1}).
To get a basis for $\su(2^n)$, we just have to remove $i I^{\otimes n}$, since $I^{\otimes n} = I_N$ is the identity matrix.
A basis for $\so(2^n)$ consists of all $ia$ where $a\in\mcP_n$ is a Pauli string containing an odd number of $Y$'s.
Finally, to describe the subalgebra $\sp(2^{n-1}) \subset \su(2^n)$, we observe that
\begin{equation*}
Y_1 = Y \otimes I^{\otimes (n-1)} 
= -i \begin{pmatrix} 0 & I_{\frac{N}2} \\ -I_{\frac{N}2} & 0 \end{pmatrix} = -i J_{N}.
\end{equation*}
Therefore,
\begin{equation}\label{sp2n}
\sp(2^{n-1}) = \bigl\{ a\in\su(2^n) \,\big|\, a^T \cdot Y_1 = -Y_1 \cdot a \bigr\}.
\end{equation}

The following definition plays a crucial role throughout the paper.

\begin{definition}\label{defgen}
For a Lie algebra $\g$ and a subset $\mcA\subset\g$, we define $\Lie{\mcA}$ to be the minimal (under inclusion) subalgebra of $\g$ that contains $\mcA$. We say that $\Lie{\mcA}$ is the subalgebra \emph{generated} by $\mcA$, and that $\mcA$ is a set of \emph{generators} of $\Lie{\mcA}$. In the case when $\mcA\subset\mcP_n$ is a set of Pauli strings, we will slightly abuse the notation and write $\Lie{\mcA}$ for the subalgebra of $\su(2^n)$ generated by the subset $i\mcA\subset\su(2^n)$.
\end{definition}

More explicitly, it follows from the Jacobi identity that $\Lie{\mcA}$ is the set of all linear combinations of all nested commutators of the form
\begin{equation}\label{eqnested}
\ad_{a_1} \ad_{a_2} \cdots \ad_{a_r} (a_{r+1}) = [a_1,[a_2,[ \cdots [a_r,a_{r+1}] \cdots ]]], \qquad a_j \in\mcA, \;\; r\ge0,
\end{equation}
where $r=0$ corresponds to an empty commutator $=a_1$. The following simple observation will be useful.

\begin{lemma}\label{lemgen}
For any subset $\mcA\subset\mcP_n$, the Lie algebra $\Lie{\mcA} \subseteq\su(2^n)$ generated by $i\mcA$ has a basis consisting of Pauli strings times $i$. In other words,
\begin{equation*}
\Lie{\mcA} = \Span_{\mathbb{R}} \bigl( i\mcP_n \cap \Lie{\mcA} \bigr).
\end{equation*}
\end{lemma}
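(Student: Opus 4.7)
The plan is to use the following key observation about Pauli strings: for any $a,b\in\mcP_n$, the commutator $[ia,ib]$ is either $0$ or a real scalar multiple of some $ic$ with $c\in\mcP_n$. Indeed, by Lemma \ref{lemp0}, either $a$ and $b$ commute (whence $[ia,ib]=0$) or they anticommute, in which case $[ia,ib]=-2ab$. When $a,b$ anticommute, we have $(a\cdot b)^\dagger = b\cdot a = -a\cdot b$, so $a\cdot b$ is skew-Hermitian; but $a\cdot b$ equals $i^k c$ for some $c\in\mcP_n$ and some integer $k$ (products are computed componentwise and each $A\cdot B$ with $A,B\in\mcP_1$ is $\pm1$ or $\pm i$ times a Pauli matrix), so $k$ must be odd, giving $a\cdot b = \pm i\,c$ and hence $[ia,ib] = \mp 2 i c$.

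Next I would define
\[
V := \Span_{\mathbb{R}}\bigl( i\mcP_n \cap \Lie{\mcA} \bigr)
\]
and verify that $V$ is a Lie subalgebra of $\su(2^n)$ containing $i\mcA$. Containment $i\mcA\subseteq V$ is immediate since $\mcA\subseteq\mcP_n$ and $i\mcA\subseteq\Lie{\mcA}$ by definition. For closure under the bracket, take basis elements $ia,ib\in i\mcP_n\cap\Lie{\mcA}$. By the observation above, $[ia,ib]$ is either $0$ or equals $\lambda\, ic$ for some nonzero $\lambda\in\mathbb{R}$ and some $c\in\mcP_n$. In the latter case, since $[ia,ib]\in\Lie{\mcA}$ (which is a Lie subalgebra), we deduce $ic = \lambda^{-1}[ia,ib] \in i\mcP_n\cap\Lie{\mcA}$, so $[ia,ib]\in V$. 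Extending bilinearly over $\mathbb{R}$ shows $[V,V]\subseteq V$.

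Finally, by minimality of $\Lie{\mcA}$ as a subalgebra containing $i\mcA$, we get $\Lie{\mcA}\subseteq V$; the reverse inclusion $V\subseteq \Lie{\mcA}$ is obvious from the definition of $V$. Hence $V=\Lie{\mcA}$, which is the claim.

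I do not expect any serious obstacle here; the only mildly subtle point is the parity argument showing that when two Pauli strings anticommute their matrix product is a purely imaginary multiple of a Pauli string, which is what ensures that the scalar $\lambda$ in $[ia,ib]=\lambda\, ic$ is real rather than complex. Everything else is bookkeeping.
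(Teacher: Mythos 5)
Your proof is correct and rests on the same key fact as the paper's: commutators of (i times) Pauli strings are real scalar multiples of (i times) Pauli strings, because anticommuting Pauli strings multiply to $\pm i$ times a Pauli string. The paper packages this slightly differently—it notes that $\Lie{\mcA}$ is spanned by nested commutators of the generators, all of which therefore lie in $i\mathbb{R}\mcP_n$, and then extracts a basis—whereas you verify that $\Span_{\mathbb{R}}(i\mcP_n\cap\Lie{\mcA})$ is itself a subalgebra containing $i\mcA$ and invoke minimality; both routes are sound and essentially equivalent.
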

\begin{proof}
By definition, $\Lie{\mcA}$ is linearly spanned over $\mathbb{R}$ by all elements of the form \eqref{eqnested} with $a_j\in i\mcA \subset i\mcP_n$.
All such elements are scalar multiples of Pauli strings, i.e., lie in $i\mathbb{R}\mcP_n$. From any spanning set, one can choose a subset that forms a basis.
\end{proof}

\subsection{Involutions of\texorpdfstring{ $\su(2^n)$}{su(2 n)}}\label{secpauli3}

In this subsection, we explain how we can describe subalgebras of $\su(2^n)$ as fixed points of involutions.
We start more generally by recalling that an \emph{isomorphism} from a Lie algebra $\g$ to another Lie algebra $\g_1$ is an invertible linear transformation
$\varphi\colon\g\to\g_1$ that is compatible with the bracket, i.e., $\varphi([a,b])=[\varphi(a),\varphi(b)]$ for all $a,b\in\g$. We write $\g\cong\g_1$ to indicate that $\g$ is isomorphic to $\g_1$. An \emph{automorphism} of $\g$ is an isomorphism $\varphi\colon\g\to\g$. The set of fixed points of $\varphi$ is defined as:
\begin{equation}\label{gphi1}
\g^\varphi = \bigl\{ a\in\g \,\big|\, \varphi(a)=a \bigr\}.
\end{equation}
It is easy to check that $\g^\varphi$ is a subalgebra of $\g$, called the \emph{fixed-point subalgebra}.
An \emph{involution} on $\g$ is an automorphism $\theta\colon\g\to\g$ with the property that $\theta(\theta(a))=a$ for all $a\in\g$, 
i.e., $\theta^{-1}=\theta$. 

Later, we will need to understand how, for an involution $\theta$, the fixed-point subalgebra $\g^\theta$ transforms under another automorphism $\varphi$ of $\g$. The answer is given in the following lemma.

\begin{lemma}\label{leminv1}
Let $\varphi$ be an automorphism of a Lie algebra $\g$, and $\theta$ be an involution of $\g$. Then $\varphi\theta\varphi^{-1}$ is an involution of $\g$, and we have
$\varphi(\g^\theta) = \g^{\varphi\theta\varphi^{-1}}$.
\end{lemma}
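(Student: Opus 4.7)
The plan is to verify the two claims by direct computation, as the lemma is a routine assertion about how fixed-point subalgebras transform under conjugation by an automorphism. There is no real obstacle here; the main task is simply to unwind the definitions.

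First I would check that $\psi := \varphi\theta\varphi^{-1}$ is an involution. Since $\varphi$ and $\varphi^{-1}$ are automorphisms of $\g$ (invertible linear maps preserving the bracket), and compositions of automorphisms are automorphisms, $\psi$ is an automorphism. To see that $\psi^2=\mathrm{id}$, compute
\begin{equation*}
\psi^2 = \varphi\theta\varphi^{-1}\varphi\theta\varphi^{-1} = \varphi\theta^2\varphi^{-1} = \varphi\varphi^{-1} = \mathrm{id},
\end{equation*}
using $\theta^2=\mathrm{id}$.

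Next I would establish the equality $\varphi(\g^\theta) = \g^\psi$ by showing both inclusions. For $\subseteq$, take $a\in\g^\theta$, so $\theta(a)=a$; then
\begin{equation*}
\psi(\varphi(a)) = \varphi\theta\varphi^{-1}\varphi(a) = \varphi\theta(a) = \varphi(a),
\end{equation*}
which shows $\varphi(a)\in\g^\psi$. For $\supseteq$, take $b\in\g^\psi$, so $\varphi\theta\varphi^{-1}(b)=b$; applying $\varphi^{-1}$ to both sides gives $\theta(\varphi^{-1}(b)) = \varphi^{-1}(b)$, hence $\varphi^{-1}(b)\in\g^\theta$ and $b = \varphi(\varphi^{-1}(b)) \in \varphi(\g^\theta)$. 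Combining the two inclusions yields the claim. This completes the proof; no step requires anything beyond applying the definitions of automorphism, involution, and fixed-point subalgebra from \eqref{gphi1}.
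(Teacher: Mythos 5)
Your proof is correct and follows essentially the same route as the paper: both verify that $\varphi\theta\varphi^{-1}$ is an automorphism squaring to the identity (the paper phrases this as $(\varphi\theta\varphi^{-1})^{-1}=\varphi\theta\varphi^{-1}$, which is equivalent) and then check the two inclusions for the fixed-point sets by the same direct computation. No gaps.
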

\begin{proof}
Note that $\varphi\theta\varphi^{-1}$ is an automorphism of $\g$, since the composition of isomorphisms is again an isomorphism. It is an involution of $\g$, because
$(\varphi\theta\varphi^{-1})^{-1} = \varphi\theta^{-1}\varphi^{-1} = \varphi\theta\varphi^{-1}$. To check the claim about the fixed points, suppose that $a\in\g^\theta$, i.e., $\theta(a)=a$. Then
\begin{equation*}
(\varphi\theta\varphi^{-1}) \varphi(a) = \varphi(\theta(a)) = \varphi(a) \;\;\Rightarrow\;\; \varphi(a) \in \g^{\varphi\theta\varphi^{-1}}.
\end{equation*}
Conversely, if $b\in\g^{\varphi\theta\varphi^{-1}}$, then the above calculation shows that $a=\varphi^{-1}(b) \in\g^\theta$.
\end{proof}

Now we will discuss how to construct automorphisms and involutions of $\su(N)$.

\begin{lemma}\label{leminv2}
Suppose that $U$ and $Q$ are unitary $N\times N$ matrices, and $Q^T = \pm Q$.
Then the formulas
\begin{equation*}
\varphi(a) = U a U^\dagger, \qquad \theta(a) = -Q a^T Q^\dagger, \qquad a\in\su(N),
\end{equation*}
define an automorphism $\varphi$ and an involution $\theta$ of $\su(N)$. Moreover, we have
\begin{equation*}
(\varphi\theta\varphi^{-1})(a) = -(U Q U^T) a^T (U Q U^T)^\dagger, \qquad a\in\su(N).
\end{equation*}
\end{lemma}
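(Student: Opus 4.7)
The plan is to verify in order the four claims: $\varphi$ is an automorphism, $\theta$ preserves $\su(N)$, $\theta$ is a Lie algebra homomorphism squaring to the identity, and the final conjugation formula. Each is a direct matrix computation; the role of the hypothesis $Q^T = \pm Q$ will appear only in the involution check.

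First I would handle $\varphi$. To see that it maps $\su(N)$ into itself, I compute $(UaU^\dagger)^\dagger = U a^\dagger U^\dagger = -UaU^\dagger$ for skew-Hermitian $a$, and $\Tr(UaU^\dagger) = \Tr(a) = 0$ by cyclicity of the trace. Bracket preservation is the one-line identity
\begin{equation*}
[UaU^\dagger, UbU^\dagger] = U(ab-ba)U^\dagger = U[a,b]U^\dagger,
\end{equation*}
using $U^\dagger U = I$, and invertibility follows from $\varphi^{-1}(a) = U^\dagger a U$.

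Next I would check $\theta$. For the skew-Hermitian property I use $(a^T)^\dagger = \overline{a}$, giving $\theta(a)^\dagger = -Q\overline{a}Q^\dagger$; since $a$ is skew-Hermitian, $\overline{a} = -a^T$, so $\theta(a)^\dagger = Qa^TQ^\dagger = -\theta(a)$. Tracelessness is immediate. For the bracket, I expand
\begin{equation*}
\theta([a,b]) = -Q(ab-ba)^TQ^\dagger = Qa^Tb^TQ^\dagger - Qb^Ta^TQ^\dagger = [\theta(a),\theta(b)],
\end{equation*}
where the middle equality uses $(ab)^T = b^Ta^T$ and the last inserts $Q^\dagger Q = I$. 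This is the key step where the minus sign in the definition of $\theta$ pays off.

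Now the involution property, which is where $Q^T = \pm Q$ enters. Taking the conjugate of $Q^T = \pm Q$ gives $\overline{Q}^T = \pm\overline{Q}$, and together with $Q^\dagger = \overline{Q}^T$ this yields $Q^\dagger = \pm\overline{Q}$, i.e.\ $\overline{Q} = \pm Q^\dagger$. Then
\begin{equation*}
\theta^2(a) = -Q\bigl(-Qa^TQ^\dagger\bigr)^TQ^\dagger = Q\bigl(Q^\dagger\bigr)^T a Q^T Q^\dagger = Q\overline{Q}\,a\,Q^TQ^\dagger = (\pm I)\,a\,(\pm I) = a,
\end{equation*}
the two $\pm$ signs being equal and hence squaring to $+1$.

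Finally, for the conjugation identity, I would substitute $\varphi^{-1}(a) = U^\dagger a U$ and unwind:
\begin{equation*}
(\varphi\theta\varphi^{-1})(a) = -UQ(U^\dagger a U)^TQ^\dagger U^\dagger = -UQU^T a^T \overline{U} Q^\dagger U^\dagger,
\end{equation*}
using $(U^\dagger)^T = \overline{U}$. The claim then follows by recognizing $(UQU^T)^\dagger = (U^T)^\dagger Q^\dagger U^\dagger = \overline{U}Q^\dagger U^\dagger$, which matches the right-hand side of the stated formula. The only genuinely delicate point in the whole argument is the bookkeeping with transposes versus conjugates in step three, so I would be careful to distinguish $a^\dagger$, $a^T$, and $\overline{a}$ throughout.
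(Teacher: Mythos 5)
Your proof is correct and follows essentially the same route as the paper: direct matrix computations for bracket preservation, the sign bookkeeping $(Q^\dagger)^T = \overline{Q} = \pm Q^\dagger$ for the involution property, and unwinding $\varphi\theta\varphi^{-1}$ on $\varphi^{-1}(a)=U^\dagger aU$. The only cosmetic differences are that you verify explicitly that both maps preserve $\su(N)$ (which the paper leaves implicit) and you expand the commutator for $\theta$ directly rather than first noting $\theta(ab)=-\theta(b)\theta(a)$.
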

\begin{proof}
First, note that $\varphi$ is invertible with $\varphi^{-1}(a) = U^\dagger a U$. It is clear that $\varphi([a,b])=[\varphi(a),\varphi(b)]$, because
\begin{equation*}
\varphi(a) \varphi(b) = U a U^\dagger U b U^\dagger = U a b U^\dagger = \varphi(a b) 
\end{equation*}
for any two matrices $a,b$. (This means that $\varphi$ is an automorphism of the associative algebra $\mathbb{C}^{N\times N}$.)

To check that $\theta$ is an involution, we calculate
\begin{equation*}
\theta(\theta(a)) = \theta(-Q a^T Q^\dagger) = Q (Q^\dagger)^T (a^T)^T Q^T Q^\dagger
= Q (\pm Q^\dagger) a (\pm Q) Q^\dagger = a,
\end{equation*}
where we used that $(Q^\dagger)^T = (Q^T)^\dagger$.
Next, we have
\begin{equation*}
\theta(a b) = -Q (a b)^T Q^\dagger 
= -Q b^T a^T Q^\dagger = -Q b^T Q^\dagger Q a^T Q^\dagger = -\theta(b) \theta(a).
\end{equation*}
This implies that $\theta([a,b]) = [\theta(a),\theta(b)]$ and proves that $\theta$ is an involution of $\su(N)$.

Finally, we find
\begin{align*}
(\varphi\theta\varphi^{-1})(a) &= \varphi\bigl(\theta(U^\dagger a U)\bigr) \\
&= -\varphi\bigl(Q U^T a^T (U^\dagger)^T Q^\dagger\bigr) \\
&= -U Q U^T a^T (U^\dagger)^T Q^\dagger U^\dagger \\
&= -(U Q U^T) a^T (U Q U^T)^\dagger,
\end{align*}
as claimed.
\end{proof}

\begin{example}
The subalgebra $\so(N)\subset\su(N)$ is the fixed-point subalgebra of the involution $a\mapsto -a^T$ (see \eqref{soN}).
Similarly, we see from \eqref{spN} that $\sp(N) = \su(2N)^\theta$, where $\theta(a) = -Q a^T Q^\dagger$
as in Lemma \ref{leminv2}, with $Q=-i J_{2N}$. 
\end{example}

It is well known, in general, that for an involution $\theta(a) = -Q a^T Q^\dagger$
as in Lemma \ref{leminv2}, the fixed-point subalgebra $\su(N)^\theta \cong \so(N)$ when $Q^T=Q$, and
$\su(N)^\theta \cong \sp(N/2)$ when $Q^T=-Q$ (in which case $N$ must be even).
For completeness, we will present the proof of these facts in the special case of interest to us: when $N=2^n$ and $Q\in\mcP_n$ is a length-$n$ Pauli string.
Note that all Pauli strings $Q$ satisfy $Q=Q^\dagger=Q^{-1}$ and $Q^T = \pm Q$.

\begin{lemma}\label{leminv3}
For any Pauli string $Q\in\mcP_n$, there exists a unitary $2^n\times 2^n$ matrix $U$ such that
\begin{equation*}
U Q U^T = \begin{cases} 
I^{\otimes n}, & \text{if} \;\; Q^T=Q, \\
Y_1, & \text{if} \;\; Q^T=-Q.
\end{cases}
\end{equation*}
\end{lemma}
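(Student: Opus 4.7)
The plan is to handle the two cases $Q^T = \pm Q$ separately, exploiting that a Pauli string $Q$ satisfies both $Q^\dagger = Q$ and $Q^2 = I^{\otimes n}$ by Lemma \ref{lemp0}. These identities, combined with $Q^T = \pm Q$, pin down the complex conjugate of $Q$: when $Q^T = Q$ we get $\bar Q = Q$, so $Q$ is a \emph{real} symmetric orthogonal matrix; when $Q^T = -Q$ we get $\bar Q = -Q$, so $Q = iR$ for a real antisymmetric $R$ satisfying $R^2 = -I_{2^n}$.

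In the symmetric case, I would invoke the spectral theorem to obtain a real orthogonal $O$ with $O^T Q O = D$, where $D$ is diagonal with entries $\epsilon_j \in \{+1,-1\}$. Then I define a diagonal unitary $V$ by $V_{jj} = 1$ if $\epsilon_j = +1$ and $V_{jj} = i$ if $\epsilon_j = -1$, so that $V D V^T = I^{\otimes n}$ entry-wise (since $V_{jj}^2 \epsilon_j = 1$ in both subcases). Setting $U := V O^T$ yields a unitary with $U Q U^T = V O^T Q O V^T = V D V^T = I^{\otimes n}$.

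For the antisymmetric case, I would use the canonical form of real antisymmetric matrices under orthogonal congruence: there is a real orthogonal $O$ bringing $R$ to block-diagonal form with $2\times 2$ blocks $\begin{pmatrix} 0 & \lambda_k \\ -\lambda_k & 0 \end{pmatrix}$. The condition $R^2 = -I_{2^n}$ forces each $\lambda_k = \pm 1$, and after absorbing signs into $O$ (sign flips on basis vectors are orthogonal) and then permuting basis vectors to collect the $2\times 2$ blocks into the standard position, we may assume $O^T R O = J_2 \otimes I^{\otimes (n-1)}$, where $J_2 = \bigl(\begin{smallmatrix} 0 & 1 \\ -1 & 0 \end{smallmatrix}\bigr)$. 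Since $Y_1 = -i\,(J_2 \otimes I^{\otimes (n-1)})$, we have $O^T Q O = i\,(J_2 \otimes I^{\otimes (n-1)}) = -Y_1$. Taking the scalar phase $U := i O^T$, which is still unitary, I obtain $U Q U^T = i^2 O^T Q O = -(-Y_1) = Y_1$.

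The main obstacle, though more bookkeeping than conceptual, is ensuring that the spectral and canonical forms cooperate with the \emph{congruence} action $Q \mapsto U Q U^T$ instead of the usual conjugation $Q \mapsto U Q U^\dagger$. The crucial flexibility is that $U$ is allowed to be any complex unitary, not merely a real orthogonal matrix; this freedom is precisely what lets the diagonal phase matrix $V$ erase the $-1$ eigenvalues of $Q$ in Case 1, and the overall scalar $i$ absorb the residual sign in Case 2.
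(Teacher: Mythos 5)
Your proof is correct, but it takes a genuinely different route from the paper's. The paper stays inside the Pauli machinery and writes $U$ in closed form: for $Q^T=Q$ it takes $U=e^{i\frac{\pi}{4}Q}$, so that Euler's formula gives $UQU^T=U^2Q=iQ\cdot Q=iI^{\otimes n}$ (the phase removed by $e^{-i\frac{\pi}{4}I^{\otimes n}}$); for $Q^T=-Q$ it multiplies by a $Y_j$ occurring in $Q$ to get a symmetric commuting string $P=Y_j\cdot Q$, uses $U=e^{i\frac{\pi}{4}P}$ to produce $iY_j$, and then moves $Y_j$ to the first qubit with $e^{-i\frac{\pi}{4}Y_1Y_j}$. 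You instead observe that $Q^\dagger=Q$ and $Q^T=\pm Q$ force $Q$ to be real symmetric orthogonal (respectively $i$ times a real antisymmetric $R$ with $R^2=-I$), and then invoke the real spectral theorem, respectively the orthogonal canonical form of antisymmetric matrices, fixing the residual $\pm1$'s with a diagonal phase matrix or an overall factor of $i$; the bookkeeping steps (sign flips, the shuffle permutation taking $\bigoplus_k J_2$ to $J_2\otimes I^{\otimes(n-1)}$, and the identity $Y_1=-iJ_2\otimes I^{\otimes(n-1)}$) all check out, and congruence versus conjugation is handled correctly since orthogonal and permutation matrices satisfy $O^T=O^{-1}$. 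What your approach buys is generality — it works for any Hermitian unitary with definite transpose symmetry, not just Pauli strings, and needs no Pauli-specific identities; what the paper's approach buys is an explicit $U$ expressed as a short product of Pauli exponentials, in the same computational style it uses later when it actually conjugates stabilizers and involutions (e.g.\ in SM\ \ref{secgkn}), whereas your $O$ exists only abstractly. Either argument suffices for Corollary \ref{corinv4}.
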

\begin{proof}
In the case $Q^T = Q$, we let $U = e^{i\frac{\pi}{4} Q}$. Then
\begin{equation*}
U Q U^T = U Q U = U^2 Q = e^{i\frac{\pi}{2} Q} Q = i Q \cdot Q = i I^{\otimes n},
\end{equation*}
where we used Euler's formula \eqref{eqeuler2}. The superfluous phase $i$ can be eliminated by applying the unitary transformation
$V=e^{-i\frac{\pi}{4} I^{\otimes n}}$, which satisfies $V^T=V$ and $V^2 = -i I^{\otimes n}$.

Suppose now that $Q^T = -Q$, which means that $Q$ contains an odd number of $Y$'s.
If $Q$ has a $Y$ in $j$-th position, let $P = Y_j \cdot Q$ and $U = e^{i\frac{\pi}{4} P}$. Note that 
\begin{equation*}
P = Y_j \cdot Q = Q \cdot Y_j \;\;\Rightarrow\;\; P \cdot Q = Q \cdot P = Y_j
\end{equation*}
and
\begin{equation*}
P^T = Q^T \cdot Y_j^T = (-Q) \cdot (-Y_j) = Q \cdot Y_j = P.
\end{equation*}
Hence, as above, we find:
\begin{equation*}
U Q U^T = U Q U = U^2 Q = e^{i\frac{\pi}{2} P} Q = i P \cdot Q = i Y_j.
\end{equation*}
If $j=1$, we are done (after eliminating the phase $i$).
If $j\ne1$, we apply the unitary transformation $e^{-i\frac{\pi}{4} Y_1 Y_j}$ and obtain:
\begin{equation*}
e^{-i\frac{\pi}{4} Y_1 Y_j} (i Y_j) (e^{-i\frac{\pi}{4} Y_1 Y_j})^T 
= e^{-i\frac{\pi}{2} Y_1 Y_j} (i Y_j) = -i Y_1 Y_j \cdot (i Y_j) = Y_1,
\end{equation*}
completing the proof.
\end{proof}

\begin{corollary}\label{corinv4}
Any Pauli string $Q\in\mcP_n$ defines an involution $\theta$ of $\su(2^n)$ given by $\theta(a) = -Q a^T Q$. 
The fixed-point subalgebra of this involution is:
\begin{equation*}
\su(2^n)^\theta \cong \begin{cases} 
\so(2^n), & \text{if} \;\; Q^T=Q, \\
\sp(2^{n-1}), & \text{if} \;\; Q^T=-Q.
\end{cases}
\end{equation*}
\end{corollary}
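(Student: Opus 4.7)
The plan is to reduce the computation of $\su(2^n)^\theta$ to the two canonical cases $Q = I^{\otimes n}$ and $Q = Y_1$ by a unitary change of basis, and then identify each resulting fixed-point subalgebra directly from the definitions \eqref{soN} and \eqref{spN}. All the machinery needed is already in place in Lemmas \ref{leminv1}, \ref{leminv2}, and \ref{leminv3}.

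First I would check that $\theta$ is a bona fide involution: any Pauli string $Q$ is unitary with $Q^\dagger = Q$ (so $-Q a^T Q^\dagger = -Q a^T Q$), and $Q^T = \pm Q$ according to the parity of the number of $Y$'s in $Q$. Hence the hypotheses of Lemma \ref{leminv2} are met and $\theta$ is an involution of $\su(2^n)$.

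Next I would invoke Lemma \ref{leminv3} to produce a unitary $U$ with $U Q U^T = I^{\otimes n}$ when $Q^T = Q$, and $U Q U^T = Y_1$ when $Q^T = -Q$. Setting $\varphi(a) := U a U^\dagger$, the last formula of Lemma \ref{leminv2} gives
\begin{equation*}
(\varphi\theta\varphi^{-1})(a) \;=\; -(U Q U^T)\, a^T\, (U Q U^T)^\dagger,
\end{equation*}
which simplifies to $-a^T$ in the symmetric case and to $-Y_1 a^T Y_1$ in the skew case (using $Y_1^\dagger = Y_1$). Lemma \ref{leminv1} then yields
\begin{equation*}
\su(2^n)^\theta \;\cong\; \varphi\bigl(\su(2^n)^\theta\bigr) \;=\; \su(2^n)^{\varphi\theta\varphi^{-1}},
\end{equation*}
so it only remains to identify the right-hand side in each case.

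Finally, for $Q^T = Q$, the fixed points of $a \mapsto -a^T$ inside $\su(2^n)$ are precisely the trace-zero real skew-symmetric matrices, which is the definition \eqref{soN} of $\so(2^n)$. For $Q^T = -Q$, the equation $a = -Y_1 a^T Y_1$, multiplied on the left by $Y_1$ and using $Y_1^2 = I^{\otimes n}$, rearranges to $a^T Y_1 = -Y_1 a$; substituting $Y_1 = -i J_{2^n}$ (immediate from writing $Y \otimes I^{\otimes(n-1)}$ in $2^{n-1} \times 2^{n-1}$ blocks) converts this to $a^T J_{2^n} = -J_{2^n} a$, which is exactly the defining condition \eqref{spN} of $\sp(2^{n-1})$.

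There is no serious obstacle, since the three preceding lemmas have been tailored for this application. The only point requiring attention is the block-matrix identity $Y_1 = -i J_{2^n}$, which is what makes the skew-symmetric case land on $\sp(2^{n-1})$ in its standard form rather than merely on an isomorphic copy.
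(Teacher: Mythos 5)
Your proposal is correct and follows exactly the route the paper intends for this corollary: normalize $Q$ to $I^{\otimes n}$ or $Y_1$ via Lemma \ref{leminv3}, transport the involution with Lemmas \ref{leminv1} and \ref{leminv2}, and identify the fixed points using \eqref{soN} and the identity $Y_1=-iJ_{2^n}$ leading to \eqref{spN} (equivalently \eqref{sp2n}), which the paper has already recorded. No gaps.
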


\section{Statement of Results}\label{secres}

\subsection{Subalgebras of \texorpdfstring{$\su(4)$}{su(4)} up to symmetry}\label{secres1}

Recall that $\su(4)$ has a basis over $\mathbb R$ consisting of all possible tensor products
$iAB$, where $A,B\in\{I,X,Y,Z\}$, $AB\ne II$. As a first step, we found that there are $202$ subalgebras of $\su(4)$, which are generated
by subsets of this basis. The symmetry group $S_3\times\mathbb{Z}_2$ acts on $\su(4)$ as follows: the symmetric group $S_3$ permutes
simultaneously all $\{X,Y,Z\}$, while the non-identity element of $\mathbb{Z}_2$ acts as the flip $AB \rightleftharpoons BA$.
We examined the orbits of the action of $S_3\times\mathbb{Z}_2$ on the $202$ subalgebras, and found that there are $36$ orbits, which are
listed in Table \ref{tab:generator_list} below. The full list of all $202$ subalgebras is presented in the next subsection.

The numbers $s,p,e,d$ are equal to the numbers of: single Paulis (such as $XI$), single Pauli pairs (such as $XI,IX$), 
double equal Paulis (such as $XX$), and double different Paulis (such as $XY$), respectively, in the basis of the subalgebra.
These are invariant under the action of the symmetry group; hence, subalgebras of $\su(4)$ with different invariants are not equivalent to each other.
It turns out that the only two non-equivalent subalgebras with the same invariants are $\aa_2$ and $\aa_5$.

We distinguish between three types of subalgebras. The $\aa$-type Lie algebras are generated by products of two Paulis both different from the identity, whereas the $\bb$-type can be generated by Pauli strings that contain the identity. The $\cc$-type Lie algebras are an edge case where the generators contain some Pauli strings of the form $a\otimes I$ without the corresponding term $I\otimes a$. These Lie algebras will behave like the $\bb$-type Lie algebras but include a boundary effect at the last site in the chain. In particular, note that all single Pauli generators come in pairs such as $XI,IX$ or $YI,IY$ or $ZI,IZ$ due to the translation invariance; hence we exclude the Lie algebras $\cc_0,\dots,\cc_7$ from our classification. However, we include them in the following tables for completeness. 


\begin{table}[htb!]
\centering
\begin{tabular}{|c|l|l|l|l|l|} 
\hline
\textbf{Label}   &   \textbf{Basis}   &   \textbf{dim} &   \textbf{Stabilizer}  & \textbf{Orbit} &   $\boldsymbol{(s,p,e,d)}$ \\	\hline
$\aa_0$       &	$XX$	&	1	&	4	&	3	&	(0,0,1,0)	\\	\hline
$\aa_1$       &	$XY$	&	1	&	2	&	6	&	(0,0,0,1)	\\	\hline
$\aa_2$       &	$XY, YX$	&	2	&	4	&	3	&	(0,0,0,2)	\\	\hline
$\aa_3$       &	$XX, YZ$	&	2	&	2	&	6	&	(0,0,1,1)	\\	\hline
$\aa_4$       &	$XX, YY$	&	2	&	4	&	3	&	(0,0,2,0)	\\	\hline
$\aa_5$       &	$XY, YZ$	&	2	&	2	&	6	&	(0,0,0,2)	\\	\hline
$\aa_6$   	&	$XX, YZ, ZY$	&	3	&	4	&	3	&	(0,0,1,2)	\\	\hline
$\aa_7$   	&	$XX, YY, ZZ$	&	3	&	12	&	1	&	(0,0,3,0)	\\	\hline
$\aa_8$   	&	$XX, XZ, IY$	&	3	&	1	&	12	&	(1,0,1,1)	\\	\hline
$\aa_9$   	&	$XY, XZ, IX$	&	3	&	2	&	6	&	(1,0,0,2)	\\	\hline
$\aa_{10}$    &	$XY, YZ, ZX$	&	3	&	6	&	2	&	(0,0,0,3)	\\	\hline
$\aa_{11}$	&	$XY, YX, YZ, IY$	&	4	&	1	&	12	&	(1,0,0,3)	\\	\hline
$\aa_{12}$	&	$XX, XY, YZ, IZ$	&	4	&	1	&	12	&	(1,0,1,2)	\\	\hline
$\aa_{13}$	&	$XX, YY, YZ, IX$	&	4	&	1	&	12	&	(1,0,2,1)	\\	\hline
$\aa_{14}$	&	$XX, YY, XY, YX, ZI, IZ$	&	6	&	4	&	3	&	(2,1,2,2)	\\	\hline
$\aa_{15}$	&	$XX, XY, XZ, IX, IY, IZ$	&	6	&	2	&	6	&	(3,0,1,2)	\\	\hline
$\aa_{16}$	&	$XY, YX, YZ, ZY, YI, IY$	&	6	&	4	&	3	&	(2,1,0,4)	\\	\hline
$\aa_{17}$	&	$XX, XY, ZX, ZY, YI, IZ$	&	6	&	2	&	6	&	(2,0,1,3)	\\	\hline
$\aa_{18}$	&	$XX, YY, XZ, ZY, XI, IY$	&	6	&	2	&	6	&	(2,0,2,2)	\\	\hline
$\aa_{19}$	&	$XX, XY, ZX, ZY, YZ, YI, IZ$	&	7	&	2	&	6	&	(2,0,1,4)	\\	\hline
$\aa_{20}$	&	$XX, YY, ZZ, YZ, ZY, XI, IX$	&	7	&	4	&	3	&	(2,1,3,2)	\\	\hline
$\aa_{21}$	&	$XX, YY, XY, YX, ZX, ZY, XI, YI, ZI, IZ$	&	10	&	2	&	6	&	(4,1,2,4)	\\	\hline
$\aa_{22}$	&	all Paulis except $II$
                                        &	15	&	12	&	1	&	(6,3,3,6)	\\	\hline
$\bb_0$	    &	$XI, IX$	&	2	&	4	&	3	&	(2,1,0,0)	\\	\hline
$\bb_1$   	&	$XX, XI, IX$	&	3	&	4	&	3	&	(2,1,1,0)	\\	\hline
$\bb_2$   	&	$XY, XZ, XI, IX$	&	4	&	2	&	6	&	(2,1,0,2)	\\	\hline
$\bb_3$   	&	$XI, YI, ZI, IX, IY, IZ$	&	6	&	12	&	1	&	(6,3,0,0)	\\	\hline
$\bb_4$   	&	$XX, XY, XZ, XI, IX, IY, IZ$	&	7	&	2	&	6	&	(4,1,1,2)	\\	\hline
$\cc_0$	    &	$XI$	&	1	&	2	&	6	&	(1,0,0,0)	\\	\hline
$\cc_1$	    &	$XY, XI$	&	2	&	1	&	12	&	(1,0,0,1)	\\	\hline
$\cc_2$	    &	$XX, XI$	&	2	&	2	&	6	&	(1,0,1,0)	\\	\hline
$\cc_3$	    &	$XI, IY$	&	2	&	2	&	6	&	(2,0,0,0)	\\	\hline
$\cc_4$   	&	$XY, XI, IY$	&	3	&	2	&	6	&	(2,0,0,1)	\\	\hline
$\cc_5$   	&	$XI, YI, ZI$	&	3	&	6	&	2	&	(3,0,0,0)	\\	\hline
$\cc_6$   	&	$XX, XY, XI, IZ$	&	4	&	1	&	12	&	(2,0,1,1)	\\	\hline
$\cc_7$   	&	$XI, IX, YI, ZI$	&	4	&	2	&	6	&	(4,1,0,0)	\\	\hline
\end{tabular}
\caption{List of all subalgebras of $\su(4)$ up to symmetry $S_3 \times \mathbb{Z}_2$. 
For each subalgebra, we have listed: a basis (over $\mathbb R$, after multiplication by $i$), its dimension, the order of the stabilizer, the order of the orbit under the action of $S_3 \times \mathbb{Z}_2$, and the invariants $s,p,e,d$. Note that the orders of all orbits add up to $202$.}
\label{tab:generator_list}
\centering
\end{table}

\clearpage{}

For each subalgebra of $\su(4)$, we list below a minimal set of generators and the isomorphism class of the Lie algebra.
Commuting direct summands are denoted with $\oplus$. Note also that $\so(4) \cong \su(2) \oplus \su(2)$.

\allowdisplaybreaks
\begin{align*}
\aa_0 &= \Lie{XX} \cong \uu(1), \\
\aa_1 &= \Lie{XY} \cong \uu(1), \\
\aa_2 &= \Lie{XY, YX} \cong \uu(1) \oplus \uu(1), \\
\aa_3 &= \Lie{XX, YZ} \cong \uu(1) \oplus \uu(1), \\
\aa_4 &= \Lie{XX, YY} \cong \uu(1) \oplus \uu(1), \\
\aa_5 &= \Lie{XY, YZ} \cong \uu(1) \oplus \uu(1), \\
\aa_6 &= \Lie{XX, YZ, ZY} \cong \uu(1) \oplus \uu(1) \oplus \uu(1), \\
\aa_7 &= \Lie{XX, YY, ZZ} \cong \uu(1) \oplus \uu(1) \oplus \uu(1), \\
\aa_8 &= \Lie{XX, XZ} \cong \su(2), \\
\aa_9 &= \Lie{XY, XZ} \cong \su(2), \\
\aa_{10} &= \Lie{XY, YZ, ZX} \cong \uu(1) \oplus \uu(1) \oplus \uu(1), \\
\aa_{11} &= \Lie{XY, YX, YZ} \cong \su(2) \oplus \uu(1), \\
\aa_{12} &= \Lie{XX, XY, YZ} \cong \su(2) \oplus \uu(1), \\
\aa_{13} &= \Lie{YY, YZ, XX} \cong \su(2) \oplus \uu(1), \\
\aa_{14} &= \Lie{XX, YY, XY} = \so(4), \\
\aa_{15} &= \Lie{XX, XY, XZ} \cong \su(2) \oplus \su(2), \\
\aa_{16} &= \Lie{XY, YX, YZ, ZY} = \so(4), \\
\aa_{17} &= \Lie{XX, XY, ZX} \cong \su(2) \oplus \su(2), \\
\aa_{18} &= \Lie{XX, XZ, YY, ZY} \cong \su(2) \oplus \su(2), \\
\aa_{19} &= \Lie{XX, XY, ZX, YZ} \cong \su(2) \oplus \su(2) \oplus \uu(1), \\
\aa_{20} &= \Lie{XX, YY, ZZ, ZY} \cong \su(2) \oplus \su(2) \oplus \uu(1), \\
\aa_{21} &= \Lie{XX, YY, XY, ZX} \cong \sp(2), \\
\aa_{22} &= \Lie{XX, XY, XZ, YX} = \su(4), \\
\bb_0 &= \Lie{XI, IX} \cong \uu(1) \oplus \uu(1), \\
\bb_1 &= \Lie{XX, XI, IX} \cong \uu(1) \oplus \uu(1) \oplus \uu(1), \\
\bb_2 &= \Lie{XY, XI, IX} \cong \su(2) \oplus \uu(1), \\
\bb_3 &= \Lie{XI, YI, IX, IY} \cong \su(2) \oplus \su(2), \\
\bb_4 &= \Lie{XX, XY, XI, IX} \cong \su(2) \oplus \su(2) \oplus \uu(1), \\
\cc_0 &= \Lie{XI} \cong \uu(1), \\
\cc_1 &= \Lie{XY, XI} \cong \uu(1) \oplus \uu(1), \\
\cc_2 &= \Lie{XX, XI} \cong \uu(1) \oplus \uu(1), \\
\cc_3 &= \Lie{XI, IY} \cong \uu(1) \oplus \uu(1), \\
\cc_4 &= \Lie{XY, XI, IY} \cong \uu(1) \oplus \uu(1) \oplus \uu(1), \\
\cc_5 &= \Lie{XI, YI, ZI} \cong \uu(1) \oplus \uu(1) \oplus \uu(1), \\
\cc_6 &= \Lie{XX, XY, XI} \cong \su(2) \oplus \uu(1), \\
\cc_7 &= \Lie{XI, YI, IX} \cong \su(2) \oplus \uu(1).
\end{align*}
Finally, note that
\begin{align*}
\bb_2 &= \aa_9 \oplus \Span\{XI\}, \\
\bb_4 &= \aa_{15} \oplus \Span\{XI\},
\end{align*}
are central extensions of $\aa$-type subalgebras. 
\clearpage{}

\subsection{List of all \texorpdfstring{$202$}{} subalgebras of \texorpdfstring{$\su(4)$}{su(4)}}\label{secres2}

Here, we group the $202$ subalgebras of $\su(4)$ into orbits of the symmetry group $S_3\times\mathbb{Z}_2$.
For each orbit, we provide its label and its size.

\allowdisplaybreaks
\begin{align*}
\aa_0 &: 3 , \; \{XX\}, \{YY\}, \{ZZ\}; \\ 
\aa_1 &: 6 , \; \{XY\}, \{XZ\}, \{YX\}, \{YZ\}, \{ZX\}, \{ZY\}; \\ 
\aa_2 &: 3 , \; \{XY, YX\}, \{XZ, ZX\}, \{YZ, ZY\}; \\ 
\aa_3 &: 6 , \; \{XX, YZ\}, \{XX, ZY\}, \{YY, XZ\}, \{YY, ZX\}, \{ZZ, XY\}, \{ZZ, YX\}; \\ 	
\aa_4 &: 3 , \; \{XX, YY\}, \{XX, ZZ\}, \{YY, ZZ\}; \\
\aa_5 &: 6 , \; \{XY, YZ\}, \{XZ, ZY\}, \{YX, XZ\}, \{YZ, ZX\}, \{ZX, XY\}, \{ZY, YX\}; \\ 
\aa_6 &:	3	, \; \{XY, YX, ZZ\}, \{XZ, ZX, YY\}, \{YZ, ZY, XX\};	\\	
\aa_7 &:	1	, \; \{XX, YY, ZZ\};	\\	
\aa_8 &:	12	, \; \{XX, XY, IZ\}, \{XX, XZ, IY\}, \{XX, YX, ZI\}, \{XX, ZX, YI\}, \\
            &   \qquad\;\; \{YY, YX, IZ\}, \{YY, YZ, IX\}, \{YY, XY, ZI\}, \{YY, ZY, XI\}, \\
            &   \qquad\;\; \{ZZ, ZX, IY\}, \{ZZ, ZY, IX\}, \{ZZ, XZ, YI\}, \{ZZ, YZ, XI\}; \\ 	
\aa_9 &:	6	, \; \{XY, XZ, IX\}, \{YX, YZ, IY\}, \{ZX, ZY, IZ\}, \\
            &   \qquad\, \{YX, ZX, XI\}, \{XY, ZY, YI\}, \{XZ, YZ, ZI\}; \\	
\aa_{10} &: 2 , \;	\{XY, YZ, ZX\}, \{XZ, ZY, YX\}; \\ 	
\aa_{11} &:	12	, \; \{XY, YX, XZ, IX\}, \{XZ, ZX, XY, IX\}, \{XY, YX, ZX, XI\}, \{XZ, ZX, YX, XI\}, \\
            &   \qquad\;\; \{YX, XY, YZ, IY\}, \{YZ, ZY, YX, IY\}, \{YX, XY, ZY, YI\}, \{YZ, ZY, XY, YI\}, \\
            &   \qquad\;\; \{ZX, XZ, ZY, IZ\}, \{ZY, YZ, ZX, IZ\}, \{ZX, XZ, YZ, ZI\}, \{ZY, YZ, XZ, ZI\}; \\ 	
\aa_{12} &:	12	, \; \{XX, XY, YZ, IZ\}, \{XX, XZ, ZY, IY\}, \{XX, YX, ZY, ZI\}, \{XX, ZX, YZ, YI\}, \\
            &   \qquad\;\; \{YY, YX, XZ, IZ\}, \{YY, YZ, ZX, IX\}, \{YY, XY, ZX, ZI\}, \{YY, ZY, XZ, XI\}, \\
            &   \qquad\;\; \{ZZ, ZX, XY, IY\}, \{ZZ, ZY, YX, IX\}, \{ZZ, XZ, YX, YI\}, \{ZZ, YZ, XY, XI\}; \\ 	
\aa_{13} &:	12	, \; \{XX, YY, YZ, IX\}, \{XX, YY, XZ, IY\}, \{XX, YY, ZY, XI\}, \{XX, YY, ZX, YI\},	\\
            &   \qquad\;\; \{XX, ZZ, ZY, IX\}, \{XX, ZZ, XY, IZ\}, \{XX, ZZ, YZ, XI\}, \{XX, ZZ, YX, ZI\},	\\
            &   \qquad\;\; \{YY, ZZ, ZX, IY\}, \{YY, ZZ, YX, IZ\}, \{YY, ZZ, XZ, YI\}, \{YY, ZZ, XY, ZI\};	\\
\aa_{14} &:	3	, \; \{XX, YY, XY, YX, ZI, IZ\}, \{XX, ZZ, XZ, ZX, YI, IY\}, \{YY, ZZ, YZ, ZY, XI, IX\};	\\	
\aa_{15} &:	6	, \; \{XX, XY, XZ, IX, IY, IZ\}, \{YY, YX, YZ, IX, IY, IZ\}, \{ZZ, ZX, ZY, IX, IY, IZ\},	\\
            &   \qquad\, \{XX, YX, ZX, XI, YI, ZI\}, \{YY, XY, ZY, XI, YI, ZI\}, \{ZZ, XZ, YZ, XI, YI, ZI\};	\\	
\aa_{16} &:	3	, \; \{XY, YX, XZ, ZX, XI, IX\}, \{YX, XY, YZ, ZY, YI, IY\}, \{ZX, XZ, ZY, YZ, ZI, IZ\};	\\	
\aa_{17} &:	6	, \; \{XX, XY, ZX, ZY, YI, IZ\}, \{YY, YX, ZY, ZX, XI, IZ\}, \{ZZ, ZX, YZ, YX, XI, IY\},	\\
            &   \qquad\, \{XX, YX, XZ, YZ, IY, ZI\}, \{YY, XY, YZ, XZ, IX, ZI\}, \{ZZ, XZ, ZY, XY, IX, YI\};	\\	
\aa_{18} &:	6	, \; \{XX, YY, XZ, ZY, XI, IY\}, \{XX, ZZ, XY, YZ, XI, IZ\}, \{YY, ZZ, YX, XZ, YI, IZ\},	\\
            &   \qquad\, \{XX, YY, ZX, YZ, IX, YI\}, \{XX, ZZ, YX, ZY, IX, ZI\}, \{YY, ZZ, XY, ZX, IY, ZI\};	\\	
\aa_{19} &:	6	, \; \{XX, XY, ZX, ZY, YZ, YI, IZ\}, \{XX, YX, XZ, YZ, ZY, IY, ZI\},	\\
            &   \qquad\, \{YY, YX, ZY, ZX, XZ, XI, IZ\}, \{YY, XY, YZ, XZ, ZX, IX, ZI\},	\\
            &   \qquad\, \{ZZ, ZX, YZ, YX, XY, XI, IY\}, \{ZZ, XZ, ZY, XY, YX, IX, YI\};	\\	
\aa_{20} &:	3	, \; \{XX, YY, ZZ, XY, YX, ZI, IZ\}, \{XX, YY, ZZ, XZ, ZX, YI, IY\}, \\
            &  \qquad\, \{XX, YY, ZZ, YZ, ZY, XI, IX\}; \\ 	
\aa_{21} &:	6	, \; \{XX, YY, XY, YX, ZX, ZY, XI, YI, ZI, IZ\}, \{XX, YY, XY, YX, XZ, YZ, IX, IY, ZI, IZ\}, \\
            &   \qquad\, \{XX, ZZ, XZ, ZX, YX, YZ, XI, ZI, YI, IY\}, \{XX, ZZ, XZ, ZX, XY, ZY, IX, IZ, YI, IY\}, \\
            &   \qquad\, \{YY, ZZ, YZ, ZY, XY, XZ, YI, ZI, XI, IX\}, \{YY, ZZ, YZ, ZY, YX, ZX, IY, IZ, XI, IX\}; \\  
\aa_{22} &:	1 , \; \{XX, YY, ZZ, XY, YX, XZ, ZX, YZ, ZY, XI, IX, YI, IY, ZI, IZ\}; \\
\bb_0 &: 3 , \; \{XI, IX\}, \{YI, IY\}, \{ZI, IZ\}; \\ 
\bb_1 &:	3	, \; \{XX, XI, IX\}, \{YY, YI, IY\}, \{ZZ, ZI, IZ\};	\\	
\bb_2 &:	6	, \; \{XY, XZ, XI, IX\}, \{YX, YZ, YI, IY\}, \{ZX, ZY, ZI, IZ\},	\\
            &   \qquad\, \{YX, ZX, XI, IX\}, \{XY, ZY, YI, IY\}, \{XZ, YZ, ZI, IZ\}; \\ 	
\bb_3 &:	1	, \; \{XI, YI, ZI, IX, IY, IZ\}; \\ 	
\bb_4 &:	6	, \; \{XX, XY, XZ, XI, IX, IY, IZ\}, \{XX, YX, ZX, IX, XI, YI, ZI\},  \\
            &   \qquad\, \{YY, YX, YZ, YI, IY, IX, IZ\}, \{YY, XY, ZY, IY, YI, XI, ZI\},  \\
            &   \qquad\, \{ZZ, ZX, ZY, ZI, IZ, IX, IY\}, \{ZZ, XZ, YZ, IZ, ZI, XI, YI\}; \\  
\cc_0 &: 6 , \; \{XI\}, \{YI\}, \{ZI\}, \{IX\}, \{IY\}, \{IZ\}; \\  
\cc_1 &: 12 , \; \{XY, XI\}, \{XZ, XI\}, \{YX, YI\}, \{YZ, YI\}, \{ZX, ZI\}, \{ZY, ZI\}, \\
            &  \qquad\;\; \{YX, IX\}, \{ZX, IX\}, \{XY, IY\}, \{ZY, IY\}, \{XZ, IZ\}, \{YZ, IZ\}; \\ 	
\cc_2 &: 6 , \; \{XX, XI\}, \{XX, IX\}, \{YY, YI\}, \{YY, IY\}, \{ZZ, ZI\}, \{ZZ, IZ\}; \\ 	
\cc_3 &: 6 , \; \{XI, IY\}, \{XI, IZ\}, \{YI, IX\}, \{YI, IZ\}, \{ZI, IX\}, \{ZI, IY\}; \\ 	
\cc_4 &:	6	, \; \{XY, XI, IY\}, \{YZ, YI, IZ\}, \{ZX, ZI, IX\}, \\	
            &   \qquad\, \{YX, YI, IX\}, \{ZY, ZI, IY\}, \{XZ, XI, IZ\}; \\
\cc_5 &:	2	, \; \{XI, YI, ZI\}, \{IX, IY, IZ\};	\\	
\cc_6 &:	12	, \; \{XX, XY, XI, IZ\}, \{XX, XZ, XI, IY\}, \{XX, YX, IX, ZI\}, \{XX, ZX, IX, YI\}, \\
            &   \qquad\;\; \{YY, YX, YI, IZ\}, \{YY, YZ, YI, IX\}, \{YY, XY, IY, ZI\}, \{YY, ZY, IY, XI\}, \\
            &   \qquad\;\; \{ZZ, ZX, ZI, IY\}, \{ZZ, ZY, ZI, IX\}, \{ZZ, XZ, IZ, YI\}, \{ZZ, YZ, IZ, XI\}; \\ 	
\cc_7 &:	6	, \; \{XI, IX, YI, ZI\}, \{YI, IY, XI, ZI\}, \{ZI, IZ, XI, YI\},	\\
            &   \qquad\, \{XI, IX, IY, IZ\}, \{YI, IY, IX, IZ\}, \{ZI, IZ, IX, IY\}.
\end{align*}

Adding up the orders of the orbits, we obtain a total of $127$ Lie algebras of type $\aa$, $19$ of type $\bb$, and $56$ of type $\cc$.

\clearpage

\subsection{Identifying the subalgebras with known spin systems}\label{secres3}

In Table \ref{tab:algebras_with_names} below, we identify some of our Lie algebras with the DLAs of known spin models. The listed generating sets are from Sect.\ \ref{secres1}. However, these generating sets are not unique, and they can be replaced with alternative generators that are used to generate the terms of the Hamiltonian; we call those conventional generators.

\begin{table}[ht]
\begin{center}
\begin{tabular}{|c|l|l|l|c|} 
\hline
\textbf{Label}   &   \textbf{Generating set}  & \textbf{Conventional generators} & \textbf{Model } 
\\	\hline
$\aa_0$       &	$XX$	& $ZZ$ & Ising model \cite{baxter1982exactly} 
\\	\hline
$\aa_1$       &	$XY$	& $XY$ &  Kitaev chain 
\\	\hline
$\aa_2$       &	$XY, YX$	&& Massless free fermion in a magnetic field 
\\	\hline
$\aa_3$       &	$XX, YZ$	& $ZZ, XY$ & Kitaev chain with \\&&& nearest neighbor Coulomb interaction
\\	\hline
$\aa_4$       &	$XX, YY$	&&  XY-model \cite{franchini2017introduction} / Massless free fermion  
\\	 \hline
$\aa_5$       &	$XY, YZ$	&& 
\\	\hline
$\aa_6$   	&	$XX, YZ, ZY$	& $ZZ, XY, YX$& Massless free fermion in a magnetic field with \\
&&& nearest neighbor Coulomb interaction
\\	\hline
$\aa_7$   	&	$XX, YY, ZZ$	&& Heisenberg model / XXZ Chain \cite{franchini2017introduction} 
\\	\hline
$\aa_8$   	&	$XX, XZ$	& $ZZ, IX$	& Transverse-field Ising model \cite{franchini2017introduction} 
\\	\hline
$\aa_9$   	&	$XY, XZ$	& $XY, IX$&   Kitaev chain in an X field 
\\	\hline
$\aa_{10}$    &	$XY, YZ, ZX$	&& Heisenberg model
\\	\hline
$\aa_{11}$	&	$XY, YX, YZ$	&$XX, YY, IY$& XY model in a Y field
\\	\hline
$\aa_{12}$	&	$XX, XY, YZ$	&& 
\\	\hline
$\aa_{13}$	&	$XX, YY, YZ$	&$XX, YY, IX$&XY-model in a longitudinal field \cite{franchini2017introduction} 
\\	\hline
$\aa_{14}$	&	$XX, YY, XY$	&& Transverse-field XY / Ising model \cite{franchini2017introduction} 
\\	\hline
$\aa_{15}$	&	$XX, XY, XZ$	& $ZZ, IX, IY, (IZ)$& Ising model in an arbitrary magnetic field
\\	\hline
$\aa_{16}$	&	$XY, YX, YZ, ZY$	& $XY, YX, IY, YI $& Kitaev chain in a Y field
\\	\hline
$\aa_{17}$	&	$XX, XY, ZX$	&$ZZ, IX, IY, (IZ)$& Ising model in an arbitrary magnetic field
\\	\hline
$\aa_{18}$	&	$XX, XZ, YY, ZY$	&$XX, YY, IY, XI, (ZI)$& XY model in an arbitrary field
\\	\hline
$\aa_{19}$	&	$XX, XY, ZX, YZ$ && 
\\	\hline
$\aa_{20}$	&	$XX, YY, ZZ, ZY$	&$XX, YY, ZZ, IX, XI$& XXZ chain in an X field \cite{franchini2017introduction} 
 \\	\hline
$\aa_{21}$	&	$XX, YY, XY, ZX$	&$XX, YY, IZ, YI, (IX)$& XY model in an arbitrary field
\\	\hline
$\aa_{22}$	&	$XX, XY, XZ, YX$ &$ZZ, XI, IY, IZ$& Ising model in an arbitrary field
\\	\hline
$\bb_0$	    &	$XI, IX$	& $ZI, IZ$ & Uncoupled spins
\\	\hline
$\bb_1$   	&	$XX, XI, IX$	& $ZZ, ZI, IZ$ & Ising model \cite{baxter1982exactly} 
\\	\hline
$\bb_2$   	&	$XY, XI, IX$	&& Kitaev chain in an X field 
\\	\hline
$\bb_3$   	&	$XI, YI, IX, IY$	&& Uncoupled spins 
\\	\hline
$\bb_4$   	&	$XX, XY, XZ, XI, IX, IY, IZ$	& $ZZ, IX, IY, IZ, ZI$& Ising model in an arbitrary field 
\\ 	\hline
\end{tabular}
\end{center}
\caption{Conventional spin models corresponding to the dynamical Lie algebras discussed in the main text. Terms in parentheses do not appear explicitly in the set of generators, but are generated from them.}
\label{tab:algebras_with_names}
\end{table}

\subsection{Extending subalgebras of \texorpdfstring{$\su(4)$}{su(4)} to \texorpdfstring{$\su(2^n)$}{su(2 n)}}\label{secext}

Starting from a subalgebra $\aa\subseteq\su(4)$, let $\aa(n)$ be the subalgebra of $\su(2^n)$ generated by the set
\begin{equation}\label{ext0}
\bigcup_{0\le k\le n-2} I^{\otimes k} \otimes \aa \otimes I^{\otimes (n-2-k)}.
\end{equation}
In particular, $\aa(2)=\aa$. By construction, we have two Lie algebra embeddings $\aa(n) \to \aa(n+1)$, given by appending $I$ in the last or first qubit,
and $\aa(n+1)$ is generated as a Lie algebra by the union of the two images:
\begin{equation}\label{ext01}
\aa(n+1) = \Lie{ (\aa(n) \otimes I) \cup (I \otimes \aa(n)) }.
\end{equation}
This allows us to determine the sequence of Lie algebras $\aa(2),\aa(3),\dots$ inductively.
In particular, if $\aa(n) = \su(2^n)$ for some $n=n_0$, then this is true for all $n\ge n_0$.
For instance, since
\begin{equation}\label{ext1}
\aa_{18}(3) = \aa_{19}(3) = \aa_{21}(3) = \aa_{22}(3) = \su(8), \quad
\aa_{12}(4) = \aa_{17}(4) = \su(16),
\end{equation}
we obtain that
\begin{equation}\label{ext2}
\aa_k(n) = \su(2^n), \qquad k=12,17,18,19,21,22, \quad n\ge 4.
\end{equation}

We also consider \emph{periodic boundary conditions}. For each $n\ge 2$, define the subalgebra of $\su(2^n)$:
\begin{equation}\label{ext5}
\aa^\circ(n) = \Lie{ \{A_i B_{i+1}, B_1 A_n \,|\, AB \in \aa, \; 1\le i\le n-1\} }.
\end{equation}
Since
\begin{equation}\label{ext6}
\aa(n) = \Lie{ \{A_i B_{i+1} \,|\, AB \in \aa, \; 1\le i\le n-1\} },
\end{equation}
it is obvious that $\aa(n) \subseteq \aa^\circ(n)$.
Introduce the cyclic shift operator $\tau_n\colon\su(2^n)\to\su(2^n)$, which acts on Pauli strings as
\begin{equation}\label{taun}
\tau_n(P^1 \otimes P^2 \otimes\cdots\otimes P^n) = P^2 \otimes\cdots\otimes P^n \otimes P^1, \qquad P^j\in\{I,X,Y,Z\}
\end{equation}
(where the superscipts are indices not powers).
Then $\tau_n$ is a Lie algebra automorphism, and $\tau_n \aa^\circ(n) = \aa^\circ(n)$. In particular, this implies that
$\tau_n \aa(n) \subseteq \aa^\circ(n)$.
Note that, by definition, $\aa^\circ(n)$ is generated as a Lie algebra by the union of $\aa(n)$ and $\tau_n \aa(n)$:
\begin{equation}\label{taun1}
\aa^\circ(n) = \Lie{ \aa(n) \cup \tau_n\aa(n) }.
\end{equation}

Another case we will investigate is when our subalgebras of $\su(2^n)$ are \emph{permutation invariant}, i.e., invariant under the action of the symmetric group $S_n$ that permutes the $n$ qubits.
We define
\begin{equation}\label{ext7}
\aa^\pi(n) = \Lie{ \{A_i B_j \,|\, AB \in \aa, \; 1\le i\ne j \le n\} }.
\end{equation}
Note that, in particular, $\aa^\circ(n) \subseteq \aa^\pi(n)$. Moreover, without loss of generality, we can assume that the generating subalgebra $\aa\subseteq\su(4)$ is itself invariant under $S_2$, i.e.,
under the flip of the two qubits. In other words, we have:
\allowdisplaybreaks
\begin{align*}
\aa_1^\pi(n) &= \aa_2^\pi(n), \\
\aa_3^\pi(n) &= \aa_6^\pi(n), \\
\aa_5^\pi(n) &=\aa_{11}^\pi(n)=\aa_{16}^\pi(n), \\
\aa_8^\pi(n) &\cong \aa_{14}^\pi(n), \\
\aa_9^\pi(n) &=\bb_2^\pi(n) \cong \aa_{16}^\pi(n), \\
\aa_{13}^\pi(n) &= \aa_{20}^\pi(n), \\
\aa_k^\pi(n) &= \bb_4^\pi(n) =\su(2^n), \qquad k=10,12,15,17,18,19,21,22.
\end{align*}
Thus, we only need to determine $\aa_k^\pi(n)$ for $k=0,2,4,6,7,14,16,20$ and $\bb_l^\pi(n)$ for $l=0,1,3$.

\subsection{Subalgebras of \texorpdfstring{$\su(8)$}{su(8)}}\label{secsu8}
For completeness and for later use, we list a linear basis (over $\mathbb R$, after multiplication by $i$) for each of the following subalgebras of $\su(8)$.

\medskip
\textbf{Open case:}
\allowdisplaybreaks
\begin{align*}
\aa_0(3) &: \{ IXX,XXI \}, \\
\aa_1(3) &: \{ IXY,XYI,XZY \}, \\
\aa_2(3) &: \{ IXY,IYX,XYI,XZY,YXI,YZX \}, \\
\aa_3(3) &: \{ IXX,IYZ,XXI,XZZ,YIY,YXZ,YYX,YZI,ZIZ,ZXY \}, \\
\aa_4(3) &: \{ IXX,IYY,XXI,XZY,YYI,YZX \}, \\
\aa_5(3) &: \{ IXY,IYZ,XYI,XZY,YIX,YXZ,YYY,YZI,ZIY,ZYX \}, \\
\aa_6(3) &: \{ IXY,IYX,IZZ,XIX,XXZ,XYI,XZY,YIY,YXI,YYZ,YZX,ZIZ,ZXX,ZYY,ZZI \}, \\
\aa_7(3) &: \{ IXX,IYY,IZZ,XIX,XXI,XYZ,XZY,YIY,YXZ,YYI,YZX,ZIZ,ZXY,ZYX,ZZI \}, \\
\aa_8(3) &: \{ IIY,IXX,IXZ,IYI,IZX,IZZ,XXI,XYX,XYZ,XZI \}, \\
\aa_9(3) &: \{ IIX,IXI,IXY,IXZ,XYI,XYY,XYZ,XZI,XZY,XZZ \}, \\
\aa_{10}(3) &: \{ IXY,IYZ,IZX,XIZ,XXX,XYI,XZY,YIX,YXZ,YYY,YZI,ZIY,ZXI,ZYX,ZZZ \}, \\
\aa_{11}(3) &: \{ IIX,IXI,IXY,IXZ,IYX,IZX,XIY,XIZ,XXX,XYI,XYY,XYZ,XZI,XZY, \\
&\quad XZZ,YXI,YYX,YZX,ZIX,ZXY,ZXZ \}, \\
\aa_{12}(3) &: \{ IIX,IIY,IIZ,IXX,IXY,IXZ,IYX,IYY,IYZ,IZI,XII,XXI,XYI,XZX,XZY, \\
&\quad XZZ,YIX,YIY,YIZ,YXX,YXY,YXZ,YYX,YYY,YYZ,YZI,ZIX,ZIY,ZIZ,ZXX, \\
&\quad ZXY,ZXZ,ZYX,ZYY,ZYZ,ZZI \}, \\
\aa_{13}(3) &: \{ IIX,IXI,IXX,IYY,IYZ,IZY,IZZ,XII,XIX,XXI,XYY,XYZ,XZY,XZZ,YIY, \\
&\quad YIZ,YXY,YXZ,YYI,YYX,YZI,YZX,ZIY,ZIZ,ZXY,ZXZ,ZYI,ZYX,ZZI,ZZX \}, \\
\aa_{14}(3) &: \{ IIZ,IXX,IXY,IYX,IYY,IZI,XXI,XYI,XZX,XZY,YXI,YYI,YZX,YZY,ZII \}, \\
\aa_{15}(3) &: \{ IIX,IIY,IIZ,IXI,IXX,IXY,IXZ,IYI,IYX,IYY,IYZ,IZI,IZX,IZY,IZZ,XIX, \\
&\quad XIY,XIZ,XXI,XXX,XXY,XXZ,XYI,XYX,XYY,XYZ,XZI,XZX,XZY,XZZ \}, \\
\aa_{16}(3) &: \{ IIX,IXI,IXY,IXZ,IYX,IZX,XII,XIY,XIZ,XXX,XYI,XYY,XYZ,XZI, \\
&\quad XZY,XZZ,YIX,YXI,YXY,YXZ,YYX,YZX,ZIX,ZXI,ZXY,ZXZ,ZYX,ZZX \}, \\
\aa_{17}(3) &: \{ IIZ,IXI,IXX,IXY,IYI,IYX,IYY,IZI,IZX,IZY,XIZ,XXI,XXX,XXY, \\
&\quad XYI,XYX,XYY,XZI,XZX,XZY,YII,YIX,YIY,YXZ,YYZ,YZZ,ZIZ,ZXI, \\
&\quad ZXX,ZXY,ZYI,ZYX,ZYY,ZZI,ZZX,ZZY \}, \\
\aa_{20}(3) &: \{ IIZ,IXX,IXY,IYX,IYY,IZI,IZZ,XIX,XIY,XXI,XXZ,XYI,XYZ,XZX,XZY, \\
&\quad YIX,YIY,YXI,YXZ,YYI,YYZ,YZX,YZY,ZII,ZIZ,ZXX,ZXY,ZYX,ZYY,ZZI \}.
\end{align*}

\textbf{Periodic case:}
\begin{align*}
\aa_0^\circ(3) &: \{ IXX,XIX,XXI \}, \\
\aa_1^\circ(3) &: \{ IXY,XYI,XZY,YIX,YXZ,ZYX \}, \\
\aa_2^\circ(3) &: \{ IXY,IYX,XIY,XYI,XYZ,XZY,YIX,YXI,YXZ,YZX,ZXY,ZYX \}, \\
\aa_3^\circ(3) &: \{ IIX,IXI,IXX,IYY,IYZ,IZY,IZZ,XII,XIX,XXI,XYY,XYZ,XZY,XZZ,YIY, \\
&\quad YIZ,YXY,YXZ,YYI,YYX,YZI,YZX,ZIY,ZIZ,ZXY,ZXZ,ZYI,ZYX,ZZI,ZZX \}, \\
\aa_4^\circ(3) &: \{ IXX,IYY,IZZ,XIX,XXI,XYZ,XZY,YIY,YXZ,YYI,YZX,ZIZ,ZXY,ZYX,ZZI \}, \\
\aa_6^\circ(3) &: \{ IIZ,IXX,IXY,IYX,IYY,IZI,IZZ,XIX,XIY,XXI,XXZ,XYI,XYZ,XZX,XZY, \\
&\quad YIX,YIY,YXI,YXZ,YYI,YYZ,YZX,YZY,ZII,ZIZ,ZXX,ZXY,ZYX,ZYY,ZZI \}, \\
\aa_8^\circ(3) &: \{ IIY,IXX,IXZ,IYI,IYY,IZX,IZZ,XIX,XIZ,XXI,XXY,XYX,XYZ,XZI,XZY, \\
&\quad YII,YIY,YXX,YXZ,YYI,YZX,YZZ,ZIX,ZIZ,ZXI,ZXY,ZYX,ZYZ,ZZI,ZZY \}, \\
\aa_9^\circ(3) &: \{ IIX,IXI,IXY,IXZ,XII,XYI,XYY,XYZ,XZI,XZY,XZZ,YIX,YXY,YXZ,YYX, \\
&\quad YZX,ZIX,ZXY,ZXZ,ZYX,ZZX \}, \\
\aa_{11}^\circ(3) &: \{ IIX,IXI,IXY,IXZ,IYX,IZX,XII,XIY,XIZ,XXX,XYI,XYY,XYZ,XZI,XZY, \\
&\quad XZZ,YIX,YXI,YXY,YXZ,YYX,YZX,ZIX,ZXI,ZXY,ZXZ,ZYX,ZZX \}, \\
\aa_{14}^\circ(3) &: \{ IIZ,IXX,IXY,IYX,IYY,IZI,IZZ,XIX,XIY,XXI,XXZ,XYI,XYZ,XZX,XZY, \\
&\quad YIX,YIY,YXI,YXZ,YYI,YYZ,YZX,YZY,ZII,ZIZ,ZXX,ZXY,ZYX,ZYY,ZZI \}.
\end{align*}
Moreover, 
\begin{equation*}
\aa_{k}^\circ(3) = \aa_{k}(3), \qquad k=5,7,10,13,16,20,
\end{equation*}
and
\begin{equation*}
\aa_{12}^\circ(3) = \aa_{15}^\circ(3) = \aa_{17}^\circ(3) = \su(8).
\end{equation*}


\subsection{Subalgebras of \texorpdfstring{$\su(2^n)$}{su(2 n)}}\label{secsubsu}

For convenience, here we repeat the statement of Theorem \ref{the:classification}, with the additional information of the dimensions of the Lie algebras (cf.\ \eqref{eqdims}).
The proof of the theorem is given in Sect.\ \ref{secproof} below.

\begin{align*}
\aa_0(n) &= \Span\{X_j X_{j+1}\}_{1\le j\le n-1} \cong \uu(1)^{\oplus (n-1)}, \quad \dim=n-1, \\
\aa_1(n) &= \Span\{X_i Z_{i+1} \cdots Z_{j-1} Y_j\}_{1\le i < j\le n} \cong \so(n), \quad \dim=\frac{n(n-1)}2, \\
\aa_2(n) &= \Span\{X_i Z_{i+1} \cdots Z_{j-1} Y_j\}_{1\le i < j\le n} \oplus \Span\{Y_i Z_{i+1} \cdots Z_{j-1} X_j\}_{1\le i < j\le n} \\
        &\cong \so(n) \oplus \so(n), \quad \dim=n(n-1), \\
\aa_3(n) &\cong \begin{cases} 
        \so(2^{n-2})^{\oplus 4}, \quad \dim = 2^{n-1}(2^{n-2}-1), & n\equiv 0 \mod 8, \\
        \so(2^{n-1}), \qquad\hspace{1pt} \dim=2^{n-2}(2^{n-1}-1), & n\equiv \pm1 \mod 8, \\
        \su(2^{n-2})^{\oplus2}, \quad \dim=2^{2n - 3} - 2, & n\equiv \pm2 \mod 8, \\
        \sp(2^{n-2}), \qquad\hspace{1pt} \dim=2^{n - 2}(2^{n - 1} + 1), & n\equiv \pm3 \mod 8, \\
        \sp(2^{n-3})^{\oplus4}, \quad \dim=2^{n - 1}(2^{n - 2} + 1), & n\equiv 4 \mod 8,     
        \end{cases} \\
\aa_4(n) &\cong \aa_2(n), \\
\aa_5(n) & \cong \begin{cases} 
        \so(2^{n-2})^{\oplus 4}, \quad \dim = 2^{n-1}(2^{n-2}-1), & n\equiv 0 \mod 6, \\         
        \so(2^{n-1}), \qquad\hspace{1pt} \dim = 2^{n-2}(2^{n-1}-1), & n\equiv \pm1 \mod 6, \\
        \su(2^{n-2})^{\oplus2}, \quad \dim=2^{2n - 3} - 2, & n\equiv \pm2 \mod 6, \\
        \sp(2^{n-2}), \qquad\hspace{1pt} \dim=2^{n - 2}(2^{n - 1} + 1), & n\equiv 3 \mod 6,
        \end{cases} \\
\aa_6(n) &\cong \aa_7(n) \cong \aa_{10}(n) \cong \begin{cases} 
        \su(2^{n-1}), \qquad \dim = 2^{2n-2} - 1, & n \;\; \mathrm{odd}, \\
        \su(2^{n-2})^{\oplus 4}, \quad \dim = 2^{2n-2} - 4, & n \ge 4\;\; \mathrm{even},
        \end{cases} \\
\aa_8(n) &\cong \so(2n-1), \quad \dim=(n-1)(2n-1), \\
\aa_9(n) &\cong \sp(2^{n-2}), \quad \dim=2^{n-2}(2^{n-1}+1), \\
\aa_{11}(n) &= \aa_{16}(n) = \so(2^n), \quad \dim = 2^{n-1}(2^n-1), \quad n\ge 4, \\
\aa_k(n) &= \su(2^n), \quad \dim=2^{2n}-1, \quad k=12,17,18,19,21,22, \;\; n\ge 4, \\
\aa_{13}(n) &= \aa_{20}(n) \cong \aa_{15}(n) \cong \su(2^{n-1}) \oplus \su(2^{n-1}), \quad \dim=2^{2n-1}-2, \\
\aa_{14}(n) &\cong \so(2n), \quad \dim=n(2n-1), \\
\bb_0(n) &= \Span\{X_i\}_{1\le i\le n} \cong \uu(1)^{\oplus n}, \quad \dim=n, \\
\bb_1(n) &= \Span\{X_i, X_j X_{j+1}\}_{1\le i\le n, \; 1\le j\le n-1} \cong \uu(1)^{\oplus (2n-1)}, \quad \dim=2n-1, \\
\bb_2(n) &= \aa_9(n) \oplus \Span\{X_1\} \cong \sp(2^{n-2}) \oplus \uu(1), \quad \dim=2^{n-2}(2^{n-1}+1)+1, \\
\bb_3(n) &= \Span\{X_i, Y_i, Z_i\}_{1\le i\le n} \cong \su(2)^{\oplus n}, \quad \dim=3n, \\
\bb_4(n) &= \aa_{15}(n) \oplus \Span\{X_1\} \cong \su(2^{n-1}) \oplus \su(2^{n-1}) \oplus \uu(1), \quad \dim=2^{2n-1}-1.
\end{align*}

\clearpage
\section{Proofs}\label{secproof}

This section contains detailed proofs of Theorems \ref{the:classification}, \ref{the:classification-p}, and \ref{the:classification-s}.
The proof of Theorem \ref{the:classification} occupies Sect.\ \ref{seciso}--\ref{secgkn}; its plan is outlined in Sect.\ \ref{secout} below.
The proofs of Theorems \ref{the:classification-p} and \ref{the:classification-s} utilize the results of Theorem \ref{the:classification}, 
and are given in Sect.\ \ref{secper} and \ref{app:sym}, respectively. For an index of each proof for each algebra, see Table~\ref{tab:proof_index}.

\begin{table}[htb!]
\begin{center}
\begin{tabular}{|c|l|l|c|c|} 
\hline
\textbf{Label}   &   \textbf{Generators}  &  \textbf{Scaling} &  \textbf{Isomorphism}  & \textbf{Reason}
\\	\hline
$\aa_0$       &	$XX$	& $O(n)$ & $\mathfrak{u}(1)^{\oplus(n-1)}$ & Sect.\ \ref{secout}
\\	\hline
$\aa_1$       &	$XY$	& $O(n^2)$ & $\so(n)$ & Sect.\ \ref{secfrus}, frustration graph
\\	\hline
$\aa_2$       &	$XY, YX$	& $O(n^2)$ & $\so(n) \oplus \so(n)$ & Sect.\ \ref{secfrus}, frustration graph
\\	\hline
$\aa_3$       &	$XX, YZ$	& $O(n)$ & $n$-dependent & Lemmas \ref{lemgkn4o}, \ref{lemgkn4e}
\\	\hline
$\aa_4$       &	$XX, YY$	& $O(n^2)$ & $\aa_2$ & Sect.\ \ref{seciso}, inclusion 
\\	 \hline
$\aa_5$       &	$XY, YZ$	& $O(4^n)$ & $n$-dependent & Lemmas \ref{lemgkn5o}, \ref{lemgkn5e}
\\	\hline
$\aa_6$   	&	$XX, YZ, ZY$ & $O(4^n)$ & $\aa_7$ & Sect.\ \ref{seciso}, inclusion
\\	\hline
$\aa_7$   	&	$XX, YY, ZZ$ & $O(4^n)$ & $\aa_6$ & Lemma \ref{lemgkn3}
\\	\hline
$\aa_8$   	&	$XX, XZ$	& $O(n^2)$ & $\so(2n-1)$ & Sect.\ \ref{secfrus}, frustration graph 
\\	\hline
$\aa_9$   	&	$XY, XZ$	& $O(4^n)$ & $\sp(2^{n-2})$ & Lemma \ref{lemgkn2}, fixed point under involution
\\	\hline
$\aa_{10}$    &	$XY, YZ, ZX$	& $O(4^n)$ & $\aa_6$ & Sect.\ \ref{seciso}, inclusion
\\	\hline
$\aa_{11}$	&	$XY, YX, YZ$	& $O(4^n)$ & $\aa_{16}$ & Sect.\ \ref{seciso}, inclusion
\\	\hline
$\aa_{12}$	&	$XX, XY, YZ$	& $O(4^n)$ & $\su(2^n)$ & Sect.\ \ref{secext}, explicit for $n=4$
\\	\hline
$\aa_{13}$	&	$XX, YY, YZ$	& $O(4^n)$ & $\aa_{20}$ & Lemma \ref{lemgkn1}, fixed point under involution
\\	\hline
$\aa_{14}$	&	$XX, YY, XY$	& $O(n^2)$ & $\so(2n)$ & Sect.\ \ref{secfrus}, frustration graph
\\	\hline
$\aa_{15}$	&	$XX, XY, XZ$	& $O(4^n)$ & $\aa_{13}$ & Lemma \ref{lemgkn1}, fixed point under involution
\\	\hline
$\aa_{16}$	&	$XY, YX, YZ, ZY$	& $O(4^n)$ & $\so(2^n)$ & Lemma \ref{lem7}, fixed point under involution
\\	\hline
$\aa_{17}$	&	$XX, XY, ZX$ & $O(4^n)$ & $\su(2^n)$ & Sect.\ \ref{secext}, explicit for $n=4$
\\	\hline
$\aa_{18}$	&	$XX, XZ, YY, ZY$	& $O(4^n)$ & $\su(2^n)$ & Sect.\ \ref{secext}, explicit for $n=3$
\\	\hline
$\aa_{19}$	&	$XX, XY, ZX, YZ$ & $O(4^n)$ & $\su(2^n)$ & Sect.\ \ref{secext}, explicit for $n=3$
\\	\hline
$\aa_{20}$	&	$XX, YY, ZZ, ZY$	& $O(4^n)$ & $\aa_{13}$ & Sect.\ \ref{seciso}, inclusion
 \\	\hline
$\aa_{21}$	&	$XX, YY, XY, ZX$	& $O(4^n)$ & $\su(2^n)$ & Sect.\ \ref{secext}, explicit for $n=3$
\\	\hline
$\aa_{22}$	&	$XX, XY, XZ, YX$ & $O(4^n)$ & $\su(2^n)$ & Sect.\ \ref{secext}, explicit for $n=3$
 \\	\hline
$\bb_{0}$	&	$XI,IX$ & $O(n) $& $\uu(1)^{\oplus n}$ & Sect.\ \ref{secout}
 \\	\hline
$\bb_{1}$	&	$XX, XI, IX$ & $O(n) $ & $\uu(1)^{\oplus (2n-1)}$ & Sect.\ \ref{secout}
 \\	\hline
$\bb_{2}$	&	$XY, XI, IX$ & $O(4^n)$  & $\sp(2^{n-2})\oplus \uu(1)$  & Sect.\ \ref{secout}, central extension
 \\	\hline
$\bb_{3}$	&	$XI, YI, IX, IY$ & $O(n)$ & $\su(2)^{\oplus n}$ & Sect.\ \ref{secout}
 \\	\hline
$\bb_{4}$	&	$XX, XY, XZ, XI, IX, IY, IZ$ & $O(4^n)$ & $\su(2^{n-1})\oplus \su(2^{n-1}) \oplus \uu(1) $ & Sect.\ \ref{secout}, central extension
 \\	\hline
\end{tabular}
\end{center}
\caption{Proofs and where to find them.}
\label{tab:proof_index}
\end{table}

\subsection{Plan of the proof of Theorem \ref{the:classification}}\label{secout}

Our starting point is the list of subalgebras of $\su(4)$ from Sect.\ \ref{secres1}: 
\begin{equation}\label{eqout1}
\aa_k, \bb_l \subseteq\su(4), \qquad 0\le k\le 22, \;\; 0\le l\le 4,
\end{equation}
and the goal is to determine (up to isomorphism) their extensions as subalgebras of $\su(2^n)$:
\begin{equation}\label{eqout2}
\aa_k(n), \bb_l(n) \subseteq\su(2^n), \qquad 3\le n, \;\; 0\le k\le 22, \;\; 0\le l\le 4,
\end{equation}
which are defined in Sect.\ \ref{secext}. The answer is presented in Theorem \ref{the:classification}, and in more detail, in Sect.\ \ref{secsubsu} above.
Here we outline the proof, which consists of multiple parts.
We divide the set of Lie algebras \eqref{eqout2} into three classes: linear, quadratic, and exponential, according to the anticipated growth of their dimension. 

\medskip
\textbf{Linear case: $\aa_0(n)$, $\bb_l(n)$ $(l=0,1,3)$.}
The linear case is obvious.
Indeed, the Lie algebras $\aa_0(n)$, $\bb_0(n)$, and $\bb_1(n)$ are Abelian (i.e., have identically zero Lie brackets), because all of their generators commute with each other.
The Lie algebra $\bb_3(n) \cong \su(2)^{\oplus n}$ is a direct sum of $n$ commuting copies of $\su(2)$, since its generators split into $n$ groups commuting with each other (acting independently on each qubit).

\medskip
\textbf{Quadratic case: $\aa_k(n)$ $(k=1,2,4,8,14)$.}
These Lie algebras are determined by using the \emph{frustration graphs} of their generators; see Sect.\ \ref{secfrus} and especially Lemma \ref{lemfrus1}. 
Note that $\aa_2(n) \cong \aa_4(n)$ by Lemma \ref{leminc3}.

\medskip
\textbf{Exponential case: $\aa_k(n)$, $\bb_l(n)$ $(k=3,5,6,7,9{-}13,15{-}22, \; l=2,4)$.}
First, recall that in Sect.\ \ref{secext}, we have already found that 
\begin{equation}\label{eqout3}
\aa_k(n) = \su(2^n), \qquad k=12,17,18,19,21,22, \;\; n\ge 4.
\end{equation}
Second, we observe that $\bb_2(n) = \aa_9(n) \oplus \Span\{X_1\}$ and $\bb_4(n) = \aa_{15}(n) \oplus \Span\{X_1\}$, because their generators consist of a central element $X_1$ (commuting with all other generators) and the generators of $\aa_9(n)$ or $\aa_{15}(n)$, respectively.
Third, in Sect.\ \ref{seciso}, we find equalities and isomorphisms among some of the Lie algebras $\aa_k(n)$.
Namely, 
\begin{equation}\label{eqout4}
\aa_6(n) \cong \aa_7(n) \cong \aa_{10}(n), \qquad
\aa_{11}(n) = \aa_{16}(n), \qquad
\aa_{13}(n) = \aa_{20}(n), \qquad n\ge4.
\end{equation}
Thus, we are left to investigate the Lie algebras $\aa_k(n)$ for $k=3,5,7,9,13,15,16$.

\medskip
\textbf{Strategy for $\aa_k(n)$ $(k=3,5,7,9,13,15,16)$.}
The strategy in the remaining exponential cases is as follows.
\begin{enumerate}
\item 
For each of our Lie algebras $\s=\aa_k(n)$, we find its \emph{stabilizer} $\Stab(\s)$,
which is defined as the set of all Pauli strings $\in \mathcal{P}_n$ that commute with every element of $\s$. This can be done explicitly, because
the stabilizer is determined only from the generators of $\s$; see Proposition \ref{pstab1} in Sect.\ \ref{secstab}.
\item 
By definition, $\s$ commutes with all elements of its stabilizer $\Stab(\s)$; hence, it is contained in the 
\emph{centralizer} of $\Stab(\s)$ in $\su(2^n)$, which we denote $\su(2^n)^{\Stab(\s)}$. We can reduce the Lie subalgebra $\su(2^n)^{\Stab(\s)}$
further by factoring all elements of the center of $\Stab(\s)$, which will become central in it, because $\s$ has a trivial center by Lemma \ref{lemp3}. 
This results in a Lie algebra denoted $\g_k(n)$ when $\s=\aa_k(n)$;
these are listed explicitly in \eqref{listgkn1}--\eqref{listgkn7}.
\item 
By the above construction, we have $\s \subseteq \g_k(n)$. However, equality does not hold in all cases. We improve the upper bound for $\s$
by finding an \emph{involution} $\theta_k$ of $\g_k(n)$ such that all elements of $\s$ are fixed under $\theta_k$, i.e., $\theta_k(a)=a$ for all $a\in\s$ 
(see Sect.\ \ref{secpauli3} for a refresher on involutions). The last condition can be checked only on the generators of $\s$, and the details are carried out in 
Sect.\ \ref{secup} (Theorem \ref{thmtheta} and Lemmas \ref{lemtheta1}, \ref{lemtheta2}, \ref{lemtheta3}).
Thus, we have the upper bound $\s \subseteq \g_k(n)^{\theta_k}$, where the superscript $\theta_k$ indicates fixed points under $\theta_k$.
\item 
Then, in Sect.\ \ref{seclow}, we establish a lower bound for $\s$, i.e., we prove that the upper bound is exact: $\aa_k(n) = \g_k(n)^{\theta_k}$. 
The main idea is to start with an arbitrary Pauli string $\in i\mcP_n\cap\g_k(n)^{\theta_k}$ and use suitable commutators with elements of $\aa_k(n)$
to produce a Pauli string $\in i\mcP_n\cap\g_k(n)^{\theta_k}$ with $I$ in one of its positions. Erasing the $I$ gives an element of $\g_k(n-1)^{\theta_k}$,
which by induction is in $\aa_k(n-1)$. The specific details are
broken into a sequence of lemmas. The cases $k=3,5,7$, $k=9$, $k=13$, $k=15$, and $k=16$
are treated in Lemmas \ref{lem4}, \ref{lem5}, \ref{lem6}, \ref{lem15}, and \ref{lem7}, respectively.
As a consequence, since $\g_{16}(n) = \su(2^n)$ by \eqref{listgkn6}, and $\theta_{16}(g) = -g^T$ by \eqref{thetaq11}, we obtain that $\aa_{16}(n) = \so(2^n)$.
%
\item 
Finally, in Sect.\ \ref{secgkn}, we identify the Lie algebras $\g_k(n)^{\theta_k}$ with those from Theorem \ref{the:classification}.
The idea is to apply a suitable unitary transformation that brings the stabilizer $\Stab(\s)$ to a more convenient form (such transformations and their effect on the
fixed points of involutions are reviewed in Sect.\ \ref{secpauli3}). For example, $\Stab(\aa_{13}(n)) = \{I^{\otimes n}, X^{\otimes n}\}$ and we can bring $X^{\otimes n}$
to $X_1$ with a unitary transformation, after which it is easy to determine the centralizer $\su(2^n)^{X_1}$.
This is carried out in Lemmas \ref{lemgkn1}, \ref{lemgkn2}, and \ref{lemgkn3} for $k=13,15$, $k=9$, and $k=7$, respectively.
The more complicated cases $k=3$ and $k=5$ are further broken down to $n$ odd and $n$ even; see Lemmas \ref{lemgkn4o}, \ref{lemgkn4e}, \ref{lemgkn5o}, \ref{lemgkn5e}.
Taken all together, this completes the proof of Theorem \ref{the:classification}.
\end{enumerate}

\subsection{Inclusions and isomorphisms}\label{seciso}

There are several obvious inclusions among the Lie algebras $\aa_k$ ($k=0{-}11,13{-}16,20$),
which extend to the corresponding subalgebras of $\su(2^n)$ due to the following trivial observation.

\begin{lemma}\label{leminc1}
If $\aa\subset\bb\subseteq\su(4)$, then $\aa(n) \subseteq \bb(n)$, $\aa^\circ(n) \subseteq \bb^\circ(n)$ and $\aa^\pi(n) \subseteq \bb^\pi(n)$ for all $n\ge2$.
Furthermore, if $\aa(n_0) = \bb(n_0)$ for some $n_0$, then $\aa(n) = \bb(n)$ for all $n\ge n_0$.
\end{lemma}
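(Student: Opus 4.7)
The proof will split naturally into two parts, corresponding to the two sentences of the lemma.

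For the first part, the plan is to use the \emph{monotonicity} of the operation $\mcA\mapsto\Lie{\mcA}$: enlarging the generating set can only enlarge the Lie algebra it generates. Concretely, from the hypothesis $\aa\subseteq\bb$ it follows immediately that
\begin{equation*}
I^{\otimes k} \otimes \aa \otimes I^{\otimes (n-2-k)} \;\subseteq\; I^{\otimes k} \otimes \bb \otimes I^{\otimes (n-2-k)}
\end{equation*}
for each $0\le k\le n-2$, so the generating set in \eqref{ext0} for $\aa(n)$ is contained in that for $\bb(n)$, giving $\aa(n)\subseteq\bb(n)$. The same trivial inclusion between generating sets handles $\aa^\circ(n)\subseteq\bb^\circ(n)$ and $\aa^\pi(n)\subseteq\bb^\pi(n)$ via \eqref{ext5} and \eqref{ext7}, since every element of $\aa$ is an element of $\bb$.

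For the second part, the plan is induction on $n\ge n_0$. The base case $n=n_0$ is the hypothesis. For the inductive step, I will invoke the recursion \eqref{ext01},
\begin{equation*}
\aa(n+1) = \Lie{(\aa(n)\otimes I) \cup (I\otimes \aa(n))},
\end{equation*}
together with its analogue for $\bb$. Assuming $\aa(n)=\bb(n)$, the two generating sets on the right-hand side coincide with those obtained by replacing $\aa(n)$ by $\bb(n)$, and so $\aa(n+1)=\bb(n+1)$. Closing the induction gives equality for all $n\ge n_0$.

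There is no real obstacle here; this lemma is essentially a bookkeeping statement about the definition of $\aa(n)$. Its role in the paper is to let us transfer inclusions and equalities of small DLAs (computed once, at $n=n_0$) to all larger $n$, which is why the proof is simply a monotonicity observation followed by induction using the recursion \eqref{ext01}. The only mild care needed is to ensure that the recursion \eqref{ext01} is equally available for both $\aa$ and $\bb$, which it is, since it was stated in Sect.\ \ref{secext} for any subalgebra of $\su(4)$.
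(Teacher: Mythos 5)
Your proof is correct and follows essentially the same route as the paper: the inclusions come from comparing the generating sets in the definitions of $\aa(n)$, $\aa^\circ(n)$, $\aa^\pi(n)$, and the stability of an equality $\aa(n_0)=\bb(n_0)$ under increasing $n$ comes from the inductive construction \eqref{ext01}.
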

\begin{proof}
The first claim follows from the definitions of the Lie algebras $\aa(n)$, $\aa^\circ(n)$, and $\aa^\pi(n)$; see \eqref{ext6}, \eqref{ext5}, \eqref{ext7}, respectively.
The second claim follows from the inductive construction of the Lie algebras $\aa(n)$; see \eqref{ext01}.
\end{proof}

For the rest of this subsection, we focus on the open case; we will treat the periodic (closed) case and the permutation-invariant case later.
By comparing the bases of the subalgebras $\aa_k\subset\su(4)$ listed in Sect.\ \ref{secres1}, we note the following inclusions:
\allowdisplaybreaks
\begin{align*}
\aa_0 &\subset \aa_3 \subset \aa_6 \subset \aa_{20}, & 
\aa_1 &\subset \aa_2 \subset \aa_{11} \subset \aa_{16}, \\
\aa_0 &\subset \aa_4 \subset \aa_7, &
\aa_1 &\subset \aa_2 \subset \aa_{14}, \\
\aa_0 &\subset \aa_4 \subset \aa_{13} \subset \aa_{20}, &
\aa_1 &\subset \aa_5 \subset \aa_{10}, \quad \text{and} \quad\aa_1 \subset \aa_5 \subset \aa_{16},\\
\aa_0 &\subset \aa_8 \subset \aa_{15}, &
\aa_1 &\subset \aa_9 \subset \aa_{15}.
\end{align*}
The above chains of inclusions are maximal, i.e., cannot be extended further. For any pair $\aa_k\subset\aa_l$, we get an inclusion $\aa_k(n)\subseteq\aa_l(n)$ for all $n\ge3$.
However, even though $\aa_5 = \aa_{10} \cap \aa_{16}$, we only have $\aa_5(n) \subseteq  \aa_{10}(n) \cap \aa_{16}(n)$ and not necessarily an equality.
In fact, one checks that $IZX \in \aa_{10}(3) \cap \aa_{16}(3)$ but $IZX \notin \aa_5(3)$.
In the next lemma, we present two consequences of the above inclusions.

\begin{lemma}\label{leminc2}
We have:
\begin{align}
\aa_{11}(n) = \aa_{16}(n) 
, \qquad n\ge 4, \label{ext3} \\
\aa_{13}(n) = \aa_{20}(n), \qquad n\ge 3. \label{ext4}
\end{align}
\end{lemma}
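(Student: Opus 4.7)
The plan is to deduce both equalities from the second assertion of Lemma \ref{leminc1}: once the inclusion $\aa(n) \subseteq \bb(n)$ is known for every $n$, equality at a single $n_0$ propagates to every $n \ge n_0$. First I would verify from the explicit bases in Sect.\ \ref{secres1} that $\aa_{11} \subset \aa_{16}$ and $\aa_{13} \subset \aa_{20}$ as subalgebras of $\su(4)$ — in each case the smaller basis is a subset of the larger. The first assertion of Lemma \ref{leminc1} then yields $\aa_{11}(n) \subseteq \aa_{16}(n)$ and $\aa_{13}(n) \subseteq \aa_{20}(n)$ for all $n \ge 2$, and it only remains to check the two base-case equalities $\aa_{13}(3) = \aa_{20}(3)$ and $\aa_{11}(4) = \aa_{16}(4)$.

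For the first base case both subalgebras have dimension $30$ (consistent with Theorem \ref{the:classification}, which predicts $\aa_{13}(3) \cong \aa_{20}(3) \cong \su(4)^{\oplus 2}$), and a direct comparison of the bases listed in Sect.\ \ref{secsu8} shows — after matching the orbit representatives of $\aa_{20}$ by an $S_3$-relabeling of the Pauli letters — that the two lists consist of exactly the same $30$ Pauli strings. An equivalent algebraic argument inside $\aa_{13}(3)$ would produce the elements of $\aa_{20}$ that are not in $\aa_{13}$ as nested commutators of the three $\aa_{13}$-generators; for example, $[X_1X_2, Y_2Z_3]$ is a nonzero multiple of $X_1Z_2Z_3$, and iterating such brackets generates the missing weight-two strings and the required boundary Pauli.

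For the second base case the situation is more delicate, because $\aa_{11}(3) \subsetneq \aa_{16}(3)$ is strict: Sect.\ \ref{secsu8} gives $\dim \aa_{11}(3) = 21$ versus $\dim \aa_{16}(3) = 28$, and the seven missing basis elements include the left-boundary single Pauli together with various weight-two and -three strings that cannot be assembled on only three qubits from $\aa_{11}$-generators. At $n = 4$ the extra site provides exactly the buffer needed, and I would verify $\aa_{11}(4) = \aa_{16}(4)$ either by running Algorithm \ref{alg:dyn_lie} on the nine generators of $\aa_{11}(4)$ and comparing the output basis with that of $\aa_{16}(4)$, or by exhibiting explicit four-site commutators of $\aa_{11}$-generators that produce one copy of each missing basis-element type, after which translation invariance and the already-present $\aa_{11}$-generators fill in the remaining cases.

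The main obstacle is precisely this $n = 4$ verification: Lemma \ref{leminc1} collapses the whole induction to one finite dimension check, but the boundary effect distinguishing $n = 3$ from $n \ge 4$ forces the required commutators to genuinely involve four sites, so the step cannot be replaced by a short algebraic identity on two or three qubits. Once performed — either via the direct enumeration described in Sect.\ \ref{sec:method} or by an explicit construction of the missing generators inside $\aa_{11}(4)$ — the stated equalities follow uniformly for all $n$ in the claimed ranges.
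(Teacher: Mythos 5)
Your proposal is correct and follows essentially the same route as the paper: inclusion $\aa_{11}\subset\aa_{16}$, $\aa_{13}\subset\aa_{20}$ at the two-qubit level, propagation via Lemma \ref{leminc1}, and a single finite base-case check ($n=3$ for $\aa_{13},\aa_{20}$ and $n=4$ for $\aa_{11},\aa_{16}$). The paper performs those base cases exactly as you describe, by comparing dimensions ($\dim\aa_{13}(3)=\dim\aa_{20}(3)=30$ and $\dim\aa_{11}(4)=\dim\aa_{16}(4)=120$), so your algorithmic verification at $n=4$ is the same computation in different clothing.
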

\begin{proof}
Since $\aa_{11} \subset \aa_{16}$, we have $\aa_{11}(n) \subseteq \aa_{16}(n)$ for all $n$. But because $\dim \aa_{11}(4) = \dim \aa_{16}(4) = 120$, we obtain
that $\aa_{11}(4) = \aa_{16}(4)$, which implies \eqref{ext3}. Similarly, from $\aa_{13} \subset \aa_{20}$, we get $\aa_{13}(n) \subseteq \aa_{20}(n)$ for all $n$. 
But because $\dim \aa_{13}(3) = \dim \aa_{20}(3) = 30$, we have $\aa_{13}(3) = \aa_{20}(3)$.
\end{proof}

There are other inclusions among the Lie algebras $\aa_k$ after we relabel some of the Paulis. Such relabelings act as automorphisms of the Lie algebra $\su(2^n)$, i.e., they are invertible linear operator that respects the Lie bracket (see Sect. \ref{secpauli3}). We will express them in the form $\varphi(a) = U a U^\dagger$ ($a\in\su(2^n)$) for some fixed unitary matrix $U$, as in Lemma \ref{leminv2}.

As a first example, consider the linear operator $\psi$ on $\mathbb{C}^{2\times2}$, defined by
$\psi(A) = V A V^\dagger$ where $V=e^{i\frac{\pi}{4} Z}$. Using \eqref{equab}, we find:
\begin{equation*}
\psi(I) = I, \qquad \psi(X) = iZ\cdot X = -Y, \qquad \psi(Y) = iZ \cdot Y = X, \qquad \psi(Z) = Z.
\end{equation*}
We extend it to an automorphism $\psi_n$ of $\su(2^n)$ by:
\begin{equation*}
\psi_n := \psi^0 \otimes \psi^1 \otimes\dots\otimes \psi^{n-1},
\end{equation*}
where $\psi^j$ denotes the $j$-th power of $\psi$. Note that, up to an overall sign, $\psi_n$ swaps $X \rightleftharpoons Y$ on all even qubits.
It can be represented as a unitary transformation:
\begin{equation}\label{psi3}
\psi_n(a) = U a U^\dagger, \qquad\text{with}\quad
U = V^1 \otimes V^2 \otimes\dots\otimes V^n = \exp\Bigl( i\frac{\pi}{4} \sum_{j=1}^n (j-1)Z_j \Bigr).
\end{equation}

\begin{lemma}\label{leminc3}
The map $\psi_n$, defined by \eqref{psi3}, restricts to an isomorphism $\aa_2(n) \cong \aa_4(n)$.
\end{lemma}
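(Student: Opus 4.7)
The approach exploits that $\psi_n$ is conjugation by a unitary, hence is a Lie algebra automorphism of $\su(2^n)$. It will then suffice to verify that $\psi_n$ carries the generating set of $\aa_2(n)$ into the generating set of $\aa_4(n)$ (up to signs), because an automorphism satisfies $\psi_n(\Lie{S}) = \Lie{\psi_n(S)}$ and signs can be absorbed into the linear span.

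First I would tabulate the powers of the local map $\psi$. From $\psi(A) = VAV^\dagger$ with $V = e^{i\pi Z/4}$, one has $\psi(X) = -Y$, $\psi(Y) = X$, $\psi(Z) = Z$, $\psi(I) = I$, so $\psi$ has order four, with $\psi^2(X) = -X$, $\psi^2(Y) = -Y$, $\psi^3(X) = Y$, $\psi^3(Y) = -X$. Crucially, odd powers of $\psi$ swap $X \leftrightarrow Y$ up to sign, whereas even powers preserve each of $X, Y$ up to sign. Next I would apply $\psi_n = \psi^0 \otimes \psi^1 \otimes \cdots \otimes \psi^{n-1}$ to each generator $X_j Y_{j+1}$ and $Y_j X_{j+1}$ of $\aa_2(n)$: only positions $j$ and $j+1$ are affected, via $\psi^{j-1}$ and $\psi^j$ respectively, and since $j-1$ and $j$ have opposite parities, exactly one of these two factors is swapped while the other is sign-preserved. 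A short case split on the parity of $j$ then yields
\begin{equation*}
\psi_n(X_j Y_{j+1}) = \begin{cases} X_j X_{j+1}, & j \text{ odd}, \\ Y_j Y_{j+1}, & j \text{ even}, \end{cases}
\qquad
\psi_n(Y_j X_{j+1}) = \begin{cases} -Y_j Y_{j+1}, & j \text{ odd}, \\ -X_j X_{j+1}, & j \text{ even}. \end{cases}
\end{equation*}

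Ranging over $1 \le j \le n-1$, the image therefore contains every $\pm X_j X_{j+1}$ and every $\pm Y_j Y_{j+1}$, i.e.\ a signed copy of the generating set of $\aa_4(n)$, and the reverse inclusion holds by invertibility of $\psi_n$. Consequently $\psi_n(\aa_2(n)) = \aa_4(n)$, yielding the claimed isomorphism. The only nontrivial bookkeeping is the sign accounting in the four parity cases; there is no deeper obstacle, since the automorphism property of $\psi_n$ on the ambient $\su(2^n)$ comes for free from its definition as a unitary conjugation.
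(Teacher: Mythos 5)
Your proof is correct and follows essentially the same route as the paper: use that $\psi_n$ is an automorphism of $\su(2^n)$ (as a unitary conjugation), compute its action on the generators $X_jY_{j+1}$, $Y_jX_{j+1}$, and observe they land, up to sign, exactly on the generators $X_jX_{j+1}$, $Y_jY_{j+1}$ of $\aa_4(n)$. The paper packages the sign bookkeeping via the identity $\psi^{j}(Y)=\psi^{j-1}(X)$ and $\psi^{j}(X)=-\psi^{j-1}(Y)$ instead of your explicit parity case split, but the argument is the same.
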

\begin{proof}
Since $\psi_n$ is an automorphisms of $\su(2^n)$, it is in particular injective and respects the Lie bracket. The same is true for the restriction of $\psi_n$ to $\aa_2(n)$. In order to prove that $\psi_n$ is an isomorphism from $\aa_2(n)$ to $\aa_4(n)$, it remains to show that it is surjective, i.e., $\psi_n\aa_2(n) = \aa_4(n)$.
Note that $\psi_n\aa_2(n)$ is a subalgebra of $\aa_4(n)$. We will show that $\psi_n\aa_2(n)$ contains all generators of $\aa_4(n)$, which would imply that it is equal to it.

Indeed, $\psi_n$ acts as follows on the generators of $\aa_2(n)$:
\begin{align*}
\psi_n(X_i Y_{i+1}) &= (\psi^{i-1}(X))_i (\psi^{i}(Y))_{i+1} = (\psi^{i-1}(X))_i (\psi^{i-1}(X))_{i+1}, \\
\psi_n(Y_i X_{i+1}) &= (\psi^{i-1}(Y))_i (\psi^{i}(X))_{i+1} = -(\psi^{i-1}(Y))_i (\psi^{i-1}(Y))_{i+1}.
\end{align*}
Hence, up to a sign, $\psi_n$ sends the generators of $\aa_2(n)$ to the generators $X_i X_{i+1}$, $Y_i Y_{i+1}$ of $\aa_4(n)$.
Therefore, $\psi_n\aa_2(n) = \aa_4(n)$, which completes the proof of the lemma.
\end{proof}

As another similar example, consider the linear operator $\varphi$ on $\mathbb{C}^{2\times2}$, defined by
\begin{equation}\label{varphi1}
\varphi(A) := e^{i\frac{\pi}{4} X} A e^{-i\frac{\pi}{4} X} \;\;\Rightarrow\;\;
\varphi(I) = I, \quad \varphi(X) = X, \quad \varphi(Y) = -Z, \quad \varphi(Z) = Y.
\end{equation}
We extend it to an automorphism of $\su(2^n)$ by
\begin{equation}\label{varphi2}
\varphi_n := \varphi^0 \otimes \varphi^1 \otimes\dots\otimes \varphi^{n-1},
\end{equation}
which, up to a sign, swaps $Y \rightleftharpoons Z$ on all even qubits. As in \eqref{psi3}, we have
\begin{equation}\label{varphi3}
\varphi_n(a) = U a U^\dagger, \qquad\text{with}\quad
U = \exp\Bigl( i\frac{\pi}{4} \sum_{j=1}^n (j-1)X_j \Bigr).
\end{equation}

\begin{lemma}\label{leminc4}
The map $\varphi_n$, defined by \eqref{varphi1}, \eqref{varphi2}, restricts to an isomorphism $\aa_6(n) \cong \aa_7(n)$.
\end{lemma}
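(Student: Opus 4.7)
The proof will follow exactly the same template as Lemma \ref{leminc3}. The map $\varphi_n$ is an automorphism of $\su(2^n)$ by construction (a conjugation by the unitary $U$ in \eqref{varphi3}), so its restriction to the subalgebra $\aa_6(n)$ is injective and bracket-preserving. The plan is therefore to show that $\varphi_n\bigl(\aa_6(n)\bigr)\subseteq\aa_7(n)$ and that $\varphi_n\bigl(\aa_6(n)\bigr)$ contains all generators of $\aa_7(n)$; this implies equality and hence that $\varphi_n$ restricts to an isomorphism $\aa_6(n)\cong\aa_7(n)$.

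The main computation is to evaluate $\varphi_n$ on the generators $X_iX_{i+1}$, $Y_iZ_{i+1}$, $Z_iY_{i+1}$ of $\aa_6(n)$. Because $\varphi$ fixes $X$, every power $\varphi^k$ also fixes $X$, so
\begin{equation*}
\varphi_n(X_iX_{i+1})=(\varphi^{i-1}(X))_i(\varphi^i(X))_{i+1}=X_iX_{i+1}.
\end{equation*}
For the remaining two generators I would first record the action of $\varphi^k$ on $Y$ and $Z$, which is periodic of period four: from $\varphi(Y)=-Z$, $\varphi(Z)=Y$ we obtain $\varphi^2(Y)=-Y$, $\varphi^2(Z)=-Z$, and then $\varphi^3(Y)=Z$, $\varphi^3(Z)=-Y$, after which $\varphi^4$ is the identity on the $\{Y,Z\}$ block. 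Since on successive qubits the exponent increases by one, the factor $\varphi^i$ is always the image of $\varphi^{i-1}$ under $\varphi$, and the key observation is that the signs on the two tensor factors always conspire so that each product reduces to an unsigned $Y_iY_{i+1}$ or $Z_iZ_{i+1}$, depending only on the parity of $i-1$.

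The mildly delicate step is therefore this sign bookkeeping, which I would organize in a small table. For $i-1$ even one checks
\begin{equation*}
\varphi_n(Y_iZ_{i+1})=Y_iY_{i+1},\qquad \varphi_n(Z_iY_{i+1})=-Z_iZ_{i+1},
\end{equation*}
while for $i-1$ odd one obtains instead
\begin{equation*}
\varphi_n(Y_iZ_{i+1})=Z_iZ_{i+1},\qquad \varphi_n(Z_iY_{i+1})=-Y_iY_{i+1};
\end{equation*}
in each case one verifies both subcases of $(i-1)\bmod 4$ using the period-four values above and checks that the two $\pm1$ factors multiply to the sign displayed. The only real obstacle is making sure no sign is lost; once this is done, the image $\varphi_n\bigl(\aa_6(n)\bigr)$ contains, up to signs, every generator $X_iX_{i+1}$, $Y_iY_{i+1}$, $Z_iZ_{i+1}$ of $\aa_7(n)$, and conversely lies inside $\aa_7(n)$ because $\varphi_n$ sends nested commutators of $\aa_6(n)$ to nested commutators of these same elements. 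Hence $\varphi_n\bigl(\aa_6(n)\bigr)=\aa_7(n)$, completing the proof.
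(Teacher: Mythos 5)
Your proposal is correct and follows essentially the same route as the paper: observe that $\varphi_n$ is an automorphism of $\su(2^n)$, compute its action on the generators $X_iX_{i+1}$, $Y_iZ_{i+1}$, $Z_iY_{i+1}$ of $\aa_6(n)$, and note that the images are, up to sign, exactly the generators of $\aa_7(n)$, forcing $\varphi_n\bigl(\aa_6(n)\bigr)=\aa_7(n)$. The only difference is presentational: the paper avoids your period-four case analysis by writing $\varphi^{i}(Z)=\varphi^{i-1}(Y)$ and $\varphi^{i}(Y)=-\varphi^{i-1}(Z)$, so that the same Pauli $\varphi^{i-1}(Y)$ (resp.\ $\varphi^{i-1}(Z)$) appears on both tensor factors and the signs cancel automatically; your sign table is nonetheless correct.
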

\begin{proof}
As in the proof of Lemma \ref{leminc3}, we find that $\varphi_n$ acts on the generators of $\aa_6(n)$ as follows:
\begin{align*}
\varphi_n(X_i X_{i+1}) &= X_i X_{i+1}, \\
\varphi_n(Y_i Z_{i+1}) &= (\varphi^{i-1}(Y))_i (\varphi^{i-1}(Y))_{i+1}, \\
\varphi_n(Z_i Y_{i+1}) &= -(\varphi^{i-1}(Z))_i (\varphi^{i-1}(Z))_{i+1}.
\end{align*}
Up to a sign, the images are exactly the generators $X_i X_{i+1}$, $Y_i Y_{i+1}$, $Z_i Z_{i+1}$ of $\aa_7(n)$. 
Hence, $\varphi_n\aa_6(n) = \aa_7(n)$.
\end{proof}

Now consider the composition $\gamma := \varphi\psi$, which acts as a cyclic rotation $X \mapsto Z \mapsto Y \mapsto X$:
\begin{equation}\label{rho1}
\gamma(I) = I, \qquad \gamma(X) = Z, \qquad \gamma(Y) = X, \qquad \gamma(Z) = Y.
\end{equation}
We extend it to automorphism of $\su(2^n)$ as follows:
\begin{equation}\label{rho2}
\gamma_n := \gamma^{1} \otimes \gamma^{2} \otimes \gamma^{3} \otimes\dots\otimes \gamma^{n}.
\end{equation}
Since $S:=(X+Y+Z)/\sqrt{3}$ satisfies $S\cdot S=I$, we can apply Euler's formula \eqref{eqeuler} to show that
\begin{equation*}
\gamma(A) = e^{i\frac{\pi}{4} X} e^{i\frac{\pi}{4} Z} A e^{-i\frac{\pi}{4} Z} e^{-i\frac{\pi}{4} X}
= e^{i\frac{\pi}{3} S} A e^{-i\frac{\pi}{3} S}.
\end{equation*}
Hence, similarly to \eqref{psi3}, \eqref{varphi3}, we can express $\gamma_n$ as 
\begin{equation}\label{rho3}
\gamma_n(a) = U a U^\dagger, \qquad\text{with}\quad
U = \exp\Bigl( i\frac{\pi}{3 \sqrt{3}} \sum_{j=1}^n j ( X_j + Y_j + Z_j ) \Bigr).
\end{equation}


\begin{lemma}\label{leminc5}
The map $\gamma_n$, defined by \eqref{rho1}, \eqref{rho2}, restricts to an isomorphism $\aa_{10}(n) \cong \aa_7(n)$.
\end{lemma}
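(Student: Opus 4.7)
The plan is to mimic the arguments used in Lemmas \ref{leminc3} and \ref{leminc4}. Because $\gamma_n$ is already known to be an automorphism of $\su(2^n)$, its restriction to the subalgebra $\aa_{10}(n)$ is automatically an injective Lie algebra homomorphism, so the goal reduces to establishing the equality $\gamma_n(\aa_{10}(n)) = \aa_7(n)$. Since $\gamma_n$ respects Lie brackets, this image is the Lie algebra generated by $\gamma_n$ applied to the generators $\{X_i Y_{i+1}, Y_i Z_{i+1}, Z_i X_{i+1}\}_{1\le i\le n-1}$ of $\aa_{10}(n)$. It therefore suffices to check that, for each $i$, these three images span the same subspace as the generators $\{X_i X_{i+1}, Y_i Y_{i+1}, Z_i Z_{i+1}\}$ of $\aa_7(n)$ at sites $i,i+1$.

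The key calculation uses the fact that $\gamma$ has order $3$, together with $\gamma(X)=Z$, $\gamma(Y)=X$, $\gamma(Z)=Y$, which imply $\gamma^{i+1}(Y) = \gamma^i(X)$, $\gamma^{i+1}(Z) = \gamma^i(Y)$, and $\gamma^{i+1}(X) = \gamma^i(Z)$ for every $i\ge 1$. Applying $\gamma_n$ to each generator of $\aa_{10}(n)$ then yields
\begin{align*}
\gamma_n(X_i Y_{i+1}) &= \gamma^i(X)_i\, \gamma^{i+1}(Y)_{i+1} = \gamma^i(X)_i\, \gamma^i(X)_{i+1}, \\
\gamma_n(Y_i Z_{i+1}) &= \gamma^i(Y)_i\, \gamma^{i+1}(Z)_{i+1} = \gamma^i(Y)_i\, \gamma^i(Y)_{i+1}, \\
\gamma_n(Z_i X_{i+1}) &= \gamma^i(Z)_i\, \gamma^{i+1}(X)_{i+1} = \gamma^i(Z)_i\, \gamma^i(Z)_{i+1}.
\end{align*}
Since $\{\gamma^i(X),\gamma^i(Y),\gamma^i(Z)\}$ is a permutation of $\{X,Y,Z\}$ for every $i$, the three images above span exactly $\Span\{X_i X_{i+1}, Y_i Y_{i+1}, Z_i Z_{i+1}\}$. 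Hence $\gamma_n$ carries a generating set of $\aa_{10}(n)$ bijectively onto a generating set of $\aa_7(n)$, which forces $\gamma_n(\aa_{10}(n)) = \aa_7(n)$.

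I do not anticipate any real obstacle. The one subtlety is that the map on Pauli matrices must be a genuine, sign-free cyclic permutation; this is precisely what the choice $\gamma=\varphi\psi$ and the identities in \eqref{rho1} secure, so no stray factors of $\pm 1$ arise as they did in the proofs of Lemmas \ref{leminc3} and \ref{leminc4}. Everything else is bookkeeping on the index $i\bmod 3$, strictly parallel to the argument already carried out for $\aa_6(n)\cong\aa_7(n)$.
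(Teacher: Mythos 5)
Your proposal is correct and follows essentially the same route as the paper: since $\gamma_n$ is an automorphism of $\su(2^n)$, one only needs to check that it sends the generators $X_iY_{i+1}, Y_iZ_{i+1}, Z_iX_{i+1}$ of $\aa_{10}(n)$ onto the generators $X_iX_{i+1}, Y_iY_{i+1}, Z_iZ_{i+1}$ of $\aa_7(n)$, and your computation using $\gamma^{i+1}(Y)=\gamma^i(X)$, $\gamma^{i+1}(Z)=\gamma^i(Y)$, $\gamma^{i+1}(X)=\gamma^i(Z)$ is exactly the one in the paper. Your remark that $\gamma$, unlike $\psi$ and $\varphi$, introduces no sign factors is also consistent with the paper's treatment.
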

\begin{proof}
We find that $\gamma_n$ acts on the generators of $\aa_{10}(n)$ as follows:
\begin{align*}
\gamma_n(X_i Y_{i+1}) &= (\gamma^{i}(X))_i (\gamma^{i+1}(Y))_{i+1} = (\gamma^{i}(X))_i (\gamma^{i}(X))_{i+1}, \\
\gamma_n(Y_i Z_{i+1}) &= (\gamma^{i}(Y))_i (\gamma^{i+1}(Z))_{i+1} = (\gamma^{i}(Y))_i (\gamma^{i}(Y))_{i+1}, \\
\gamma_n(Z_i X_{i+1}) &= (\gamma^{i}(Z))_i (\gamma^{i+1}(X))_{i+1} = (\gamma^{i}(Z))_i (\gamma^{i}(Z))_{i+1}.
\end{align*}
The images are exactly the generators $X_i X_{i+1}$, $Y_i Y_{i+1}$, $Z_i Z_{i+1}$ of $\aa_7(n)$;
hence, $\gamma_n\aa_{10}(n) = \aa_7(n)$.
\end{proof}

\subsection{Frustration graphs}\label{secfrus}


\begin{figure}[htb!]
    \centering
    \subfloat[$\aa_1(4)$]{%
    \begin{minipage}[c][1\width]{0.2\textwidth}%
    \includegraphics[clip,width=1\textwidth]{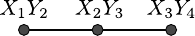}%
    \end{minipage}}\hspace{5mm}
    \subfloat[$\aa_2(4)$]{%
    \begin{minipage}[c][1\width]{0.2\textwidth}%
    \includegraphics[clip,width=1\textwidth]{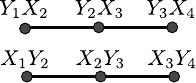}%
    \end{minipage}}\hspace{5mm}
    \subfloat[$\aa_2^\circ(4)$]{%
    \begin{minipage}[c][1\width]{0.2\textwidth}%
    \includegraphics[clip,width=1\textwidth]{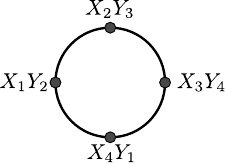}%
    \end{minipage}}
    \caption{Frustration graphs for several examples in our classification for $n=4$. (a) A frustration graph given by a line. (b) A frustration graph consisting of two disjoint lines. (c) A circular frustration graph.}
    \label{fig:frustration_graphs}
\end{figure}

In this subsection, we review the notion of frustration graph, which is a useful visualization tool; see e.g.\
\cite{chapman2020characterization,atia2017fast,gu2021fast}.
We determine the DLA in the cases when the frustration graph is a line or a circle, and apply these results to identify several of our Lie algebras,
namely, $\aa_k(n)$ and $\aa_k^\circ(n)$ for $k=1,2,4,8,14$.

\begin{definition}\label{deffrus}
Given a set of Pauli strings $\mcA\subset\mcP_n$, its \emph{frustration graph} is the graph with a set of vertices $\mcA$ and edges connecting all pairs of vertices $a,b\in\mcA$ such that $[a,b] \neq 0$.
\end{definition}

The frustration graph makes it especially easy to determine when two subsets of the generating set $\mcA$ commute with each other: it means that they are disconnected from each other in the frustration graph. Suppose that $\mcA = \mcA_1 \sqcup \mcA_2$ is a disjoint union of subsets with disconnected frustration graphs. Then
\begin{equation}\label{eqlieopl}
\Lie{\mcA_1 \sqcup \mcA_2} = \Lie{\mcA_1} \oplus \Lie{\mcA_2}
\end{equation}
is a direct sum of commuting subalgebras. This is illustrated in Figure \ref{fig:frustration_graphs}.

In the next proposition, we determine the DLA of a line frustration graph.

\begin{proposition}\label{lem:line_graph}
Suppose that the frustration graph of $\mcA = \{a_1, \dots, a_N\} \subset\mcP_n$ is a line, so that $[a_j, a_k] \neq 0$ for $1\le j < k \le N$
if and only if $k=j+1$. Then 
\begin{equation*}
\Lie{\mcA} \cong \so(N+1), 
\end{equation*}
and a basis for it is given by $\{ i^{k-j} L_{j,k} \}_{1\le j < k \le N+1}$, 
where $i$ is the imaginary unit and
\begin{equation}\label{eqLjk}
L_{j,k} := a_j \cdot a_{j+1} \cdots a_{k-1} \qquad (1\le j < k \le N+1)
\end{equation}
are products over line segments.
\end{proposition}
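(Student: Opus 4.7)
The plan is to exhibit the matrices $M_{j,k} := i^{k-j} L_{j,k}$ as a basis of $\Lie{\mcA}$ satisfying the bracket relations of $\so(N+1)$. I would proceed in three stages. First, I would verify that each $M_{j,k}$ is skew-Hermitian and lies in $\Lie{\mcA}$. Reversing the product $a_j a_{j+1} \cdots a_{k-1}$ requires transpositions only of anticommuting adjacent pairs, yielding $L_{j,k}^\dagger = (-1)^{k-j-1} L_{j,k}$, which combines with $(i^{k-j})^\ast = (-i)^{k-j}$ to give $M_{j,k}^\dagger = -M_{j,k}$. Membership in $\Lie{\mcA}$ follows by induction on $k-j$: the base case $M_{j,j+1} = i a_j$ is a generator, and the inductive step rests on
\begin{equation*}
[L_{j,k-1}, L_{k-1,k}] = 2\, L_{j,k},
\end{equation*}
which holds because $a_{k-1}$ anticommutes only with $a_{k-2}$ among $a_j, \ldots, a_{k-2}$. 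Rescaling gives $[M_{j,k-1}, M_{k-1,k}] = 2 M_{j,k}$.

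Second, I would compute $[L_{j,k}, L_{p,q}]$ by case analysis on the relative position of the intervals $[j, k-1]$ and $[p, q-1]$, using the scalar identity $L_{j,k}^2 = (-1)^{k-j-1} I$ (from the recursion $L_{j,k}^2 = -L_{j+1,k}^2$ driven by a single anticommuting pair) and the adjacent-interval anticommutation $L_{p,q} L_{q,m} = -L_{q,m} L_{p,q}$. The cases split as: (a) $k < p$ gives trivially zero; (b) $k = p$ yields $[L_{j,k}, L_{k,q}] = 2 L_{j,q}$; (c) $j = p$, $k < q$ yields $[L_{j,k}, L_{j,q}] = 2(-1)^{k-j-1} L_{k,q}$ via $L_{j,k} L_{j,q} = L_{j,k}^2 L_{k,q}$; (d) $j < p$, $k = q$ is analogous; (e) $j < p < q < k$ (strictly nested) and $j < p < k < q$ (crossing) both vanish by cancellation. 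Setting $N_{j,k} := \tfrac{1}{2} M_{j,k}$ and extending by $N_{ba} := -N_{ab}$, the prefactors combine so that
\begin{equation*}
[N_{ab}, N_{cd}] = \delta_{bc} N_{ad} - \delta_{ac} N_{bd} - \delta_{bd} N_{ac} + \delta_{ad} N_{bc},
\end{equation*}
which are exactly the structure constants of $\so(N+1)$ in the basis $E_{ab} - E_{ba}$.

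Third, I would conclude abstractly. The map $\phi\colon \so(N+1) \to \su(2^n)$ defined on basis elements by $N_{j,k} \mapsto \tfrac{1}{2} M_{j,k}$ is a Lie algebra homomorphism whose image contains every generator $i a_l = M_{l,l+1}$ and hence equals $\Lie{\mcA}$. Since $\phi(N_{1,2}) = \tfrac{i}{2} a_1 \neq 0$, the kernel is a proper ideal. For $N+1 \ge 5$ simplicity of $\so(N+1)$ forces $\ker \phi = 0$, and the cases $N+1 = 2, 3$ are immediate by dimension count. The only delicate case is $N+1 = 4$, where $\so(4) \cong \su(2) \oplus \su(2)$ has two nontrivial ideals; each contains an element of the form $N_{ab} \pm N_{cd}$ with $\{a,b\} \cap \{c,d\} = \emptyset$, and such an element cannot lie in $\ker \phi$ since its image $\tfrac{i}{2}(a_l \pm a_m)$ would force $a_l = \mp a_m$, impossible for distinct elements of $\mcP_n$. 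Thus $\phi$ is an isomorphism in every case.

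The main obstacle will be the bookkeeping in the second step: the sign $(-1)^{k-j-1}$ from $L_{j,k}^2$, the single sign flip at each adjacent-interval boundary, and the prefactors $i^{k-j}$ chosen to normalize $M_{j,k}$ must conspire to produce the clean $\pm 2$ structure constants of $\so(N+1)$ in every case. Once this combinatorics is set up correctly, both the linear independence of the $M_{j,k}$ and the identification of the isomorphism type follow automatically from the abstract Lie-theoretic argument in the third step.
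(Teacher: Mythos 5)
Your proposal is correct and follows essentially the same route as the paper: prove by induction that the elements $i^{k-j}L_{j,k}$ lie in $\Lie{\mcA}$, check that their commutators reproduce the structure constants of the standard basis $F_{j,k}=E_{j,k}-E_{k,j}$ of $\so(N+1)$ (so the span is closed and the map $2F_{j,k}\mapsto i^{k-j}L_{j,k}$ is a homomorphism onto $\Lie{\mcA}$), and conclude injectivity from the absence of suitable ideals. The only differences are bookkeeping details and your explicit handling of the small cases $N+1\le 4$ --- in particular $\so(4)$, which is not simple --- a point the paper's blanket appeal to simplicity glosses over, so your extra care there is a refinement of, not a departure from, the paper's argument.
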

\begin{proof}
Recall that $\Lie{\mcA}$ is the subalgebra of $\su(2^n)$ generated by the subset $i\mcA \subset \su(2^n)$ (see Definition \ref{defgen}).
First, let us us prove that all $i^{k-j} L_{j,k}$ are in $\Lie{\mcA}$. For $k=j+1$, we have $i L_{j,j+1} := i a_j \in i\mcA\subseteq\Lie{\mcA}$.
Suppose, by induction on $k-j$, that $i^{k-j} L_{j,k} \in\Lie{\mcA}$ for some $1<j<k\le N+1$; then we will show that $i^{k-j+1} L_{j-1,k} \in\Lie{\mcA}$.
By definition,
\begin{equation*}
L_{j-1,k} = a_{j-1} \cdot a_j \cdot a_{j+1} \cdots a_{k-1} = a_{j-1} \cdot L_{j,k},
\end{equation*}
and by assumption, $a_{j-1}$ anticommutes with $a_j$ and commutes with $a_{j+1}, \dots, a_{k-1}$.
Hence, $a_{j-1}$ anticommutes with $L_{j,k}$, which implies that
\begin{equation*}
2i^{k-j+1} L_{j-1,k} = 2i^{k-j+1} a_{j-1} \cdot L_{j,k} = i^{k-j+1} [a_{j-1}, L_{j,k}] = [ia_{j-1}, i^{k-j} L_{j,k}] \in\Lie{\mcA}.
\end{equation*}
This proves the claim that $i^{k-j} L_{j,k} \in\Lie{\mcA}$ for all $1\le j < k \le N+1$.

Similarly to above, one can check that ($1\le j < k < l \le N+1$):
\begin{equation}\label{eq:P_commutation_relations}
[i^{k-j} L_{j,k}, i^{l-k} L_{k,l}] = 2 i^{l-j} L_{j,l}, \quad
[i^{l-k} L_{k,l}, i^{l-j} L_{j,l}] = 2 i^{k-j} L_{j,k}, \quad
[i^{l-j} L_{j,l}, i^{k-j} L_{j,k}] = 2 i^{l-k} L_{k,l}, 
\end{equation}
and all other commutators (not following from skewsymmetry) are zero.
In particular, the real linear span of all $i^{k-j} L_{j,k}$ is closed under the bracket, i.e., is a subalgebra of $\su(2^n)$.
Since $\Lie{\mcA}$ is the minimal (under inclusion) subalgebra of $\su(2^n)$ containing $i\mcA$, it follows that
\begin{equation*}
\Lie{\mcA} = \Span \{ i^{k-j} L_{j,k} \}_{1\le j < k\le N+1}.
\end{equation*}

Recall that $\so(N+1) = \so(N+1,\mathbb{R})$ is the Lie algebra of all skewsymmetric $(N+1) \times (N+1)$ real matrices; see \eqref{soN}.
Consider the standard basis $\{ E_{j,k} \}_{1\le j,k\le N+1}$ of $\gl(N+1,\mathbb{R})$, where $E_{j,k}$ is the matrix with $(j,k)$-entry $=1$ and all other entries $=0$.
Then a basis for $\so(N+1)$ is $\{ F_{j,k} := E_{j,k}-E_{k,j} \}_{1\le j<k\le N+1}$. Using that
\begin{equation*}
[E_{j,k}, E_{l,m}] = \delta_{k,l} E_{j,m} - \delta_{j,m} E_{l,k},
\end{equation*}
it is easy to see that
\begin{equation*}
[F_{j,k}, F_{k,l}] = F_{j,l}, \quad
[F_{k,l}, F_{j,l}] = F_{j,k}, \quad
[F_{j,l}, F_{j,k}] = F_{k,l}, \quad\text{for}\quad 1\le j < k < l \le N+1.
\end{equation*}
Hence, the matrices $2F_{j,k}$ satisfy the same commutation relations as $i^{k-j} L_{j,k}$ given in \eqref{eq:P_commutation_relations}.
This means that the map $\so(N+1) \to \Lie{\mcA}$ that sends $2F_{j,k}$ to $i^{k-j} L_{j,k}$ is a Lie algebra homomorphism. Its kernel is an ideal in $\so(N+1)$, but since $\so(N+1)$ is simple, it has no non-zero proper ideals. Therefore, this map is an isomorphism.
\end{proof}

\begin{remark}
One can see from the above proposition that, for linear frustration graphs, the dimension of the DLA scales quadratically with the number of generators. 
This was observed for free fermionic models in \cite{bassman2022constant, Kokcu2021cartan, kokcu2022algebraic, camps2022algebraic, kokcu2023algebraic},
where the number of generators is proportional to the system size and the circuit gate complexity is quadratic with respect to the system size. 
These models are fast forwardable along with the other Hamiltonians given in \cite{chapman2020characterization,atia2017fast,gu2021fast}, and the fundamental reason for this is the polynomial scaling of the DLA.

After a Jordan--Wigner transformation, it can be shown that the algebra of free fermions on $n$ sites can be generated by 
$\{Z_1, X_1X_2, Z_2, X_2X_3, Z_3, \dots, X_{n-1}X_n, Z_n\}$, which will be shown to be equivalent to $\aa_{14}(n)$ in Lemma \ref{lemfrus1}. These generators have a linear frustration graph with $2n-1$ vertices; hence, its DLA is $\so(2n)$.
\end{remark}

Next, we consider the case when the frustration graph is a circle.

\begin{proposition}\label{lem:circular_graph}
Suppose that the frustration graph of $\mcA = \{a_1, \dots, a_N\} \subset\mcP_n$ is a circle with $N\ge 3$, so that $[a_j, a_k] \neq 0$ for $1\le j < k \le N$
if and only if $k=j+1$ or $j=1$, $k=N$.
Then 
\begin{equation*}
\Lie{\mcA} \cong \so(N) \oplus \so(N), 
\end{equation*}
and it has a basis $\{ i^{k-j} L_{j,k},  i^{N+k-j} C \cdot L_{j,k}\}_{1\le j < k \le N}$, 
where $i$ is the imaginary unit, $L_{j,k}$ are defined in \eqref{eqLjk}, and
\begin{equation}\label{eqC}
C := 
a_1 \cdot a_2 \cdots a_{N-1} \cdot a_N.
\end{equation}
\end{proposition}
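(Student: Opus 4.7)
The plan is to first identify an obvious central element and then decompose the DLA along it. Specifically, I would begin by showing that $C := a_1 a_2 \cdots a_N$ commutes with every generator $a_j$: each $a_j$ anticommutes with exactly two of its neighbors $a_{j\pm 1}$ on the circle (indices mod $N$), so moving $a_j$ past $C$ incurs two sign flips that cancel, giving $[a_j, C] = 0$. Hence $C$ is central in $\Lie{\mcA}$. A short calculation---moving one copy of $a_1$ through $a_2 \cdots a_N$ and then reducing the resulting square $(a_2 \cdots a_N)^2$, which is now a product along a path and so evaluates to $(-1)^{N-2} I^{\otimes n}$---gives $C^2 = (-1)^N I^{\otimes n}$.

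Next, I would invoke Proposition \ref{lem:line_graph} for the subset $\mcA' = \{a_1, \ldots, a_{N-1}\}$, whose frustration graph is the line obtained by deleting the edge $\{a_1, a_N\}$. This produces a subalgebra $\mathfrak{g} := \Lie{\mcA'} \cong \so(N)$ of $\Lie{\mcA}$ with basis $\{i^{k-j} L_{j,k}\}_{1 \le j < k \le N}$. Since $C = L_{1,N} \cdot a_N$ and $L_{1,N}^2 = \pm I^{\otimes n}$, the generator $a_N$ is a scalar multiple of $C L_{1,N}$, so $i^{N} C L_{1,N} \in \Lie{\mcA}$. Bracketing with $X \in \mathfrak{g}$ yields $[X, C L_{1,N}] = C [X, L_{1,N}]$ by centrality of $C$, and iterating produces $CZ$ for every $Z$ in the ideal of $\mathfrak{g}$ generated by $L_{1,N}$. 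This ideal is all of $\mathfrak{g}$: for $N \ne 4$ the Lie algebra $\so(N)$ is simple, so any nonzero element generates it as an ideal, while for $N = 4$, $\so(4) \cong \su(2) \oplus \su(2)$ is only semisimple, but $L_{1,4}$ can be checked to have nonzero component in each $\su(2)$ summand. In every case one obtains $C \mathfrak{g} \subseteq \Lie{\mcA}$.

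Since the Pauli strings appearing in $C\mathfrak{g}$ all involve the tensor factor of $a_N$ while those in $\mathfrak{g}$ do not, the vector-space sum $\mathfrak{g} \oplus C\mathfrak{g}$ is direct, and it is closed under the Lie bracket because $C$ is central with $C^2 = \pm I^{\otimes n}$; therefore $\Lie{\mcA} = \mathfrak{g} \oplus C\mathfrak{g}$. To exhibit the two commuting $\so(N)$ summands, I would choose $\alpha \in \{1, i\}$ with $\alpha^2 = (-1)^N$ and set
\begin{equation*}
X_\pm := \tfrac{1}{2}(X \pm \alpha C X), \qquad X \in \mathfrak{g}.
\end{equation*}
A short bracket computation using $C^2 = (-1)^N I^{\otimes n}$ gives $[X_\pm, Y_\pm] = [X, Y]_\pm$ and $[X_+, Y_-] = 0$, so $\mathfrak{g}_\pm := \{X_\pm : X \in \mathfrak{g}\}$ are two commuting copies of $\so(N)$ whose sum has dimension $2 \binom{N}{2} = \dim \Lie{\mcA}$. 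This yields the desired isomorphism $\Lie{\mcA} \cong \so(N) \oplus \so(N)$, and the stated basis $\{i^{k-j} L_{j,k},\, i^{N+k-j} C L_{j,k}\}$ follows from the basis of $\mathfrak{g}$ together with multiplication by $C$, with the powers of $i$ chosen so that every basis element is skew-Hermitian.

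The hard part will be the step showing that $C L_{1,N}$, together with brackets against $\mathfrak{g}$, actually generates all of $C \mathfrak{g}$: this reduces to the assertion that $L_{1,N}$ generates $\mathfrak{g}$ as an ideal. This is delicate precisely when $\mathfrak{g}$ is not simple (the case $N = 4$), and is handled by verifying directly that under the Lie algebra isomorphism $\so(4) \cong \su(2) \oplus \su(2)$ the element $L_{1,4}$ has nonzero projection onto both summands.
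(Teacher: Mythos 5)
Your overall outline matches the paper's: centrality of $C$, the identity $C\cdot C=(-1)^N I^{\otimes n}$, the line subalgebra $\mathfrak{g}=\Lie{a_1,\dots,a_{N-1}}\cong\so(N)$ from Proposition \ref{lem:line_graph}, and the final splitting by the central element $\alpha C$ with $(\alpha C)^2=I^{\otimes n}$ (the paper does the same with $i^N C$ and the combinations $(I^{\otimes n}\pm i^N C)\cdot i^{k-j}L_{j,k}$). Where you genuinely diverge is the step that puts $C\cdot\mathfrak{g}$ inside $\Lie{\mcA}$: you only extract $a_N=\pm C\cdot L_{1,N}$ and then argue that $L_{1,N}$ generates $\mathfrak{g}$ as an ideal, invoking simplicity of $\so(N)$ with a separate check at $N=4$ (your claim that $F_{1,4}$ has nonzero components in both $\su(2)$ summands of $\so(4)$ is correct, and the ideal argument does go through). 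The paper avoids simplicity altogether: for each pair $j<k$, the element $C\cdot L_{j,k}$ is, up to a phase, the product $a_k\cdot a_{k+1}\cdots a_N\cdot a_1\cdots a_{j-1}$ over the complementary arc of the circle, which is a line segment inside the line frustration graph of $\mcA\setminus\{a_{k-1}\}$; Proposition \ref{lem:line_graph} then yields $i^{N+k-j}C\cdot L_{j,k}\in\Lie{\mcA}$ directly, uniformly in $N$, and simultaneously produces exactly the claimed basis elements. Your route is legitimate, but the complementary-arc observation is more elementary and needs no case analysis.

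The one step that does not hold up as written is your justification of directness: ``the Pauli strings appearing in $C\mathfrak{g}$ all involve the tensor factor of $a_N$ while those in $\mathfrak{g}$ do not'' is not a valid argument. The generator $a_N$ is a Pauli string whose support necessarily overlaps those of $a_1$ and $a_{N-1}$ (it anticommutes with them), and elements of $\mathfrak{g}$ act nontrivially on those same qubits, so no tensor factor separates $C\mathfrak{g}$ from $\mathfrak{g}$. In fact $\mathfrak{g}\cap C\mathfrak{g}=\{0\}$ can genuinely fail: for $\mcA=\{X,Y,Z\}$ on a single qubit the frustration graph is a $3$-cycle, $C=iI$, so $C\mathfrak{g}=\mathfrak{g}$ and $\Lie{\mcA}=\su(2)\cong\so(3)$ rather than $\so(3)\oplus\so(3)$. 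What directness (and injectivity of $X\mapsto X_\pm$) really requires is that $C$ not be proportional to $I^{\otimes n}$; this is implicitly assumed in the proposition and holds in all of the paper's applications, where $C$ is a nontrivial Pauli string such as $Z\cdots Z$, and to be fair the paper's own proof is equally silent on the point. So your argument is fine once you either state that hypothesis explicitly or check in each application that no $C\cdot L_{j,k}$ is proportional to an $L_{j',k'}$, but the ``extra tensor factor'' reasoning should be replaced.
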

\begin{proof}
First, notice that $[C,a_j]=0$ for all $1\le j\le N$, because $a_j$ does not commute only with its two neighboring vertices in the circle frustration graph.
Moreover, using that $a_j \cdot a_j = I^{\otimes n}$ (cf.\ Lemma \ref{lemp0}), we get $C \cdot C = (-1)^N I^{\otimes n}$. From here, we deduce that
\begin{equation}\label{eqCC}
[i^N C, i^{k-j} L_{j,k}] = 0, \qquad (i^N C) \cdot (i^N C) = I^{\otimes n}.
\end{equation}

If we remove any vertex from the frustration graph of $\mcA$, we obtain a line frustration graph. 
By Proposition \ref{lem:line_graph}, we know that
\begin{equation*}
i^{k-j} L_{j,k} \in \Lie{a_1, \dots, a_{N-1}} \subseteq \Lie{\mcA}, \qquad 1\le j < k \le N,
\end{equation*}
and these elements form a basis for the subalgebra $\Lie{a_1, \dots, a_{N-1}} \cong \so(N)$.
In particular, we have
\begin{equation*}
\{ i a_1, \dots, i a_{N-1} \} \subset \Lie{a_1, \dots, a_{N-1}} = \Span_{\mathbb{R}} \{ i^{k-j} L_{j,k} \}_{1\le j < k \le N}.
\end{equation*}

Similarly, the set $\mcA\setminus\{a_{k-1}\} = \{a_k,a_{k+1},\dots,a_N,a_1,\dots,a_{k-2}\}$ has a line frustration graph and
its subset $\{a_k,a_{k+1},\dots,a_N,a_1,\dots,a_{j-1}\}$ is a line segment for $1\le j < k \le N$. Hence, again by Proposition \ref{lem:line_graph},
\begin{equation*}
i^{N+k-j} C \cdot L_{j,k} = \pm i^{N-k+j} a_k \cdot a_{k+1} \cdots a_N \cdot a_1 \cdots a_{j-1}
\in \Lie{\mcA\setminus\{a_{k-1}\}} \subseteq \Lie{\mcA}, \qquad 1\le j < k \le N.
\end{equation*}
In particular, the choice $j=1$, $k=N$ gives
\begin{equation*}
i^{N+N-1} C \cdot L_{1,N} = \pm i a_N.
\end{equation*}

The above discussion implies that
\begin{equation*}
i\mcA \subset \mcL := \Span_{\mathbb{R}} \{ i^{k-j} L_{j,k},  i^{N+k-j} C \cdot L_{j,k}\}_{1\le j < k \le N} \subseteq \Lie{\mcA}.
\end{equation*}
We claim that the vector space $\mcL$ is closed under the Lie bracket. Indeed, we already know that $\Span_{\mathbb{R}} \{ i^{k-j} L_{j,k} \}$ is closed.
For the other brackets, we use that from \eqref{eqCC}, we have:
\begin{align*}
[i^{k-j} L_{j,k}, i^{N+m-l} C \cdot L_{l,m}] &= i^N C \cdot [i^{k-j} L_{j,k}, i^{m-l} L_{l,m}], \\
[i^{N+k-j} C \cdot L_{j,k}, i^{N+m-l} C \cdot L_{l,m}] &= [i^{k-j} L_{j,k}, i^{m-l} L_{l,m}].
\end{align*}
As the Lie algebra $\mcL$ contains $i\mcA$, it must contain $\Lie{\mcA}$. Therefore, $\mcL=\Lie{\mcA}$.

Using \eqref{eqCC} again (or from the above brackets), we see that
\begin{equation*}
\Lie{\mcA} = \Span_{\mathbb{R}} \{ (I^{\otimes n} + i^N C) \cdot i^{k-j} L_{j,k} \}_{1\le j < k \le N} 
\oplus \Span_{\mathbb{R}} \{ (I^{\otimes n} - i^N C) \cdot i^{k-j} L_{j,k} \}_{1\le j < k \le N}
\end{equation*}
is isomorphic as a Lie algebra to a direct sum of two copies of $\Span_{\mathbb{R}} \{ i^{k-j} L_{j,k} \}_{1\le j < k \le N} \cong \so(N)$.
Therefore, $\Lie{\mcA} \cong \so(N) \oplus \so(N)$.
\end{proof}

\begin{remark}
A circular frustration graph corresponds to free fermionic evolution controlled with one ancilla, where the ancilla degree of freedom can be readily found as the operator $C$ defined in \eqref{eqC}. As expected, this is not the only example. Some periodic $1$-dimensional spin systems such as TFXY, XY and Kitaev models also have DLAs that are generated from Pauli strings with a circular frustration graph. For those spin models, $C = Z Z \cdots Z$. 
\end{remark}


Applying the results of Propositions \ref{lem:line_graph}, \ref{lem:circular_graph}, 
in the following lemmas we determine the Lie algebras $\aa_k(n)$ and $\aa_k^\circ(n)$ for $k=1,2,4,8,14$.
Examples are presented in Figures \ref{fig:frust_proofs_2} and \ref{fig:frust_proofs}.

\begin{figure}[htb!]
    \centering
    \subfloat[$\aa_2(n)$]{\includegraphics[width=0.25\textwidth]{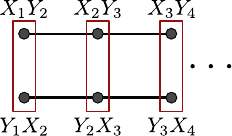}}\hspace{5mm}
    \subfloat[$\aa_4(n)$]{\includegraphics[width=0.25\textwidth]{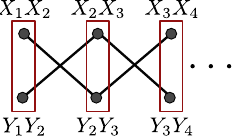}}
    \caption{Frustration graphs for $\aa_2(n)$ and $\aa_4(n)$. The red boxes denote a generator acting on the site ($i$, $i+1$). Both frustrations graphs are given by two disjoint lines for any $n$; hence we can conclude that $\aa_2(n)\cong\aa_4(n)$ (cf.\ Lemma \ref{leminc3}).
    }
    \label{fig:frust_proofs_2}
\end{figure}

\begin{lemma}\label{lemfrus1}
    We have:
    \begin{align*}
    \begin{split}
    \aa_1(n) & \cong \so(n), \\
    \aa_2(n) & \cong \aa_4(n) \cong \so(n) \oplus \so(n), \\
    \aa_8(n) & \cong \so(2n-1), \\
    \aa_{14}(n) & \cong \so(2n).
    \end{split}
    \end{align*}
\end{lemma}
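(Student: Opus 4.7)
The plan is to apply Proposition \ref{lem:line_graph}, together with the direct-sum decomposition \eqref{eqlieopl} for frustration graphs with several connected components, to each of the five subalgebras. In every case the strategy is the same: exhibit a set of Pauli-string generators for $\aa_k(n)$ whose frustration graph is either a single line or a disjoint union of lines of the appropriate length, and then read off the isomorphism type from the proposition.

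For $\aa_1(n)=\Lie{\{X_i Y_{i+1}\}_{i=1}^{n-1}}$ the natural generators already form a line on $n-1$ vertices, because $X_i Y_{i+1}$ and $X_j Y_{j+1}$ share support in exactly one site only when $|i-j|=1$, and in that position the relevant Paulis ($Y$ and $X$) anticommute. Proposition \ref{lem:line_graph} immediately gives $\aa_1(n)\cong\so(n)$. For $\aa_2(n)=\Lie{\{X_i Y_{i+1},\, Y_i X_{i+1}\}}$, a short case check on commutators shows that each of the two families $\{X_i Y_{i+1}\}$ and $\{Y_i X_{i+1}\}$ is a line on $n-1$ vertices, while any generator of the first family commutes with any generator of the second (either by disjoint support, or because two anticommuting pairs of Paulis cancel). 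Hence the frustration graph is a disjoint union of two such lines, yielding $\aa_2(n)\cong\so(n)\oplus\so(n)$. The isomorphism $\aa_4(n)\cong\so(n)\oplus\so(n)$ then follows at once from Lemma \ref{leminc3}, which exhibits an explicit isomorphism $\aa_2(n)\cong\aa_4(n)$ via the unitary conjugation $\psi_n$.

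The cases $\aa_8(n)$ and $\aa_{14}(n)$ require replacing the original two-qubit generators by a mix of one- and two-qubit operators in order to obtain a line-shaped frustration graph. For $\aa_8(n)$, observe that $[X_i X_{i+1},\, X_i Z_{i+1}]$ is a scalar multiple of $Y_{i+1}$, so $Y_{i+1}\in\aa_8(n)$ for $1\le i\le n-1$; conversely $X_i Z_{i+1}$ is recovered as (a multiple of) $[X_i X_{i+1},\, Y_{i+1}]$. Hence $\aa_8(n)$ is generated by the alternating list $X_1 X_2,\,Y_2,\,X_2 X_3,\,Y_3,\dots,X_{n-1} X_n,\,Y_n$ of length $2n-2$. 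Direct inspection shows that consecutive entries anticommute while all other pairs commute, so the frustration graph is a line on $2n-2$ vertices and $\aa_8(n)\cong\so(2n-1)$. For $\aa_{14}(n)$, the single-site operators $Z_i$ for $1\le i\le n$ already belong to the canonical generating set (since $\aa_{14}$ contains $ZI$ and $IZ$), and the remaining two-qubit generators $Y_i Y_{i+1}$, $X_i Y_{i+1}$, $Y_i X_{i+1}$ are recovered as brackets of $X_i X_{i+1}$ with $Z_i$ or $Z_{i+1}$. Thus the alternating list $Z_1,\,X_1 X_2,\,Z_2,\,X_2 X_3,\,Z_3,\dots,X_{n-1} X_n,\,Z_n$ of length $2n-1$ also generates $\aa_{14}(n)$, and its frustration graph is a single line, giving $\aa_{14}(n)\cong\so(2n)$.

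The main obstacle in the last two cases is verifying (i) that the reduced generating set really generates the same DLA as the original, and (ii) that its frustration graph has no stray edges or missing adjacencies. The first point reduces to a handful of explicit commutator identities of the form $[XX,XZ]\propto IY$ and $[XX,Z]\propto YX$ etc. The second is a bookkeeping exercise, but a single mistake there would shift the vertex count or change the graph topology (for instance, producing a circle instead of a line, which by Proposition \ref{lem:circular_graph} would give a different isomorphism type). Once both points are checked, everything else is a direct application of Proposition \ref{lem:line_graph}.
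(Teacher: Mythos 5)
Your proposal is correct and follows essentially the same route as the paper: the paper's proof of this lemma also replaces the generators of $\aa_8(n)$ and $\aa_{14}(n)$ by the alternating lists $X_1X_2,Y_2,\dots,X_{n-1}X_n,Y_n$ and $Z_1,X_1X_2,Z_2,\dots,X_{n-1}X_n,Z_n$ (using the same commutator identities), reads off line frustration graphs with $n-1$, $2(n-1)$, and $2n-1$ vertices via Proposition \ref{lem:line_graph}, and handles $\aa_4(n)$ through the isomorphism $\psi_n$ of Lemma \ref{leminc3}.
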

\begin{proof}
The proof is based on the frustration graphs of the generating sets of these Lie algebras (see Sect.\ \ref{secres1}, \ref{secext}).

Generators of $\aa_1(n)$ are $XY$ on each adjacent pair of qubits:
\begin{equation*}
X_1Y_2, X_2Y_3, X_3Y_4, \dots, X_{n-1}Y_n. 
\end{equation*}
These form a linear frustration graph with $n-1$ vertices, leading to $\aa_1(n) \cong \so(n)$ (see Figure \ref{fig:frust_proofs}(a)).

The Lie algebra $\aa_2(n)$ is generated by $XY$ and $YX$ on adjacent pairs of qubits:
\begin{equation*}
X_1Y_2, X_2Y_3, X_3Y_4, \dots, X_{n-1}Y_n \qquad\text{and}\qquad 
Y_1X_2, Y_2X_3, Y_3X_4, \dots, Y_{n-1}X_n.
\end{equation*}
Both of these form linear frustration graphs with $n-1$ vertices, and commute with each other (see Figure \ref{fig:frust_proofs_2}). Thus, $\aa_2(n) \cong \mathfrak{so}(n) \oplus \mathfrak{so}(n)$. Note that $\aa_4(n) \cong \aa_2(n)$ due to Lemma \ref{leminc3} (see also Figure \ref{fig:frust_proofs_2}).

Since $\aa_8 = \Span\{XX, XZ, IY\} = \Lie{XX, IY}$, we can generate $\aa_8(n)$ by:
\begin{equation}\label{a8gen}
X_1X_2, Y_2, X_2X_3, Y_3, X_3X_4, Y_4, \dots, X_{n-1}X_n, Y_n. 
\end{equation}
These form a linear frustration graph with $2(n-1)$ vertices; hence $\aa_8(n) \cong \so(2n-1)$ (see Figure \ref{fig:frust_proofs}(b)).

Similarly, note that
\begin{equation*}
\aa_{14} = \Span\{XX, YY, XY, YX, ZI, IZ\} = \Lie{XX, ZI, IZ},
\end{equation*}
because $[XX,ZI] = 2iYX$, $[XX,IZ] = 2iXY$, and $[XY,ZI] = -2iYY$.
Thus, $\aa_{14}(n)$ is generated by:
\begin{equation}\label{a14gen}
Z_1, X_1X_2, Z_2, X_2X_3, Z_3, X_3X_4, Z_4, \dots, X_{n-1}X_n, Z_n,
\end{equation}
which gives a linear frustration graph with $2n-1$ vertices. Hence $\aa_{14}(n) \cong \so(2n)$ (see Figure \ref{fig:frust_proofs}(c)).
\end{proof}
\begin{figure}[htb!]
    \centering
    \subfloat[$\aa_1(n)$]{\includegraphics[width=0.25\textwidth]{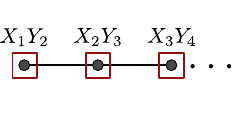}}\hspace{5mm}
    \subfloat[$\aa_8(n)$]{\includegraphics[width=0.25\textwidth]{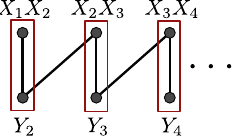}}\hspace{5mm}
    \subfloat[$\aa_{14}(n)$]{\includegraphics[width=0.285\textwidth]{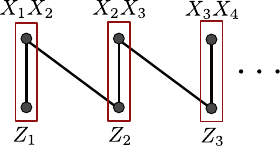}}
    \caption{ Visualization of the frustration graphs of generators of certain Lie algebras. The red boxes denote a generator acting on the site ($i$, $i+1$). For the three cases $\aa_1(n)$, $\aa_8(n)$ and $\aa_{14}(n)$, we see that the frustration graph is a line for any $n$. For (a), the 2-site DLA $\aa_1(2)\cong \uu(1)$, but as $n$ grows we find $\aa_1(n)\cong \so(n)$. 
    (b) The frustration graph is a line with $2n-2$ vertices; hence $\aa_8(n)\cong \so(2n-1)$.
    (c) The frustration graph is a line with $2n-1$ vertices, giving $\aa_{14}(n)\cong \so(2n)$.
    }
    \label{fig:frust_proofs}
\end{figure}
\begin{lemma}\label{lemfrus2}
We have\/
$\aa_k^\circ(n) \cong \aa_k(n)^{\oplus 2}$ for $k=1,2,14$ and $n\ge 3$.
\end{lemma}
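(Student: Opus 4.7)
The plan is to apply Proposition \ref{lem:circular_graph} in each of the three cases $k=1,2,14$, after identifying a convenient generating set for $\aa_k^\circ(n)$ whose frustration graph is a single circle (or, for $k=2$, two disjoint circles). Each circle of length $N\ge 3$ contributes an $\so(N)\oplus\so(N)$ summand, and in each case this doubles the open-case Lie algebra already determined in Lemma \ref{lemfrus1}. The hypothesis $n\ge 3$ is what is needed to guarantee that the resulting graphs are honest circles (rather than degenerate multigraphs).

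For $\aa_1^\circ(n)$, I would take the $n$ generators $X_1 Y_2, X_2 Y_3, \dots, X_{n-1} Y_n, Y_1 X_n$. A direct check of the one or two overlapping positions shows that each generator anticommutes with exactly its two cyclic neighbors, so the frustration graph is a circle on $n$ vertices. Proposition \ref{lem:circular_graph} then yields $\aa_1^\circ(n)\cong\so(n)\oplus\so(n)\cong\aa_1(n)^{\oplus 2}$.

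For $\aa_2^\circ(n)$, the generators split into two families, $\mcA_1=\{X_i Y_{i+1}\}_{1\le i\le n-1}\cup\{Y_1 X_n\}$ and $\mcA_2=\{Y_i X_{i+1}\}_{1\le i\le n-1}\cup\{X_1 Y_n\}$. Each is a circle on $n$ vertices by the argument used for $\aa_1^\circ(n)$; pairwise commutation between $\mcA_1$ and $\mcA_2$ follows because wherever such a pair overlaps, it does so in two positions of the form $(X,Y)$ and $(Y,X)$, producing two anticommutations and hence overall commutation. Then by \eqref{eqlieopl}, $\aa_2^\circ(n)\cong\so(n)^{\oplus 4}\cong\aa_2(n)^{\oplus 2}$. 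For $\aa_{14}^\circ(n)$, I would first reduce the generating set using the identity $\aa_{14}=\Lie{XX,ZI,IZ}$ from Lemma \ref{lemfrus1}, so that the single-site terms $Z_i$ together with the edge terms $X_i X_{i+1}$ and the single wrap-around $X_1 X_n$ suffice. These $2n$ generators, arranged cyclically as $Z_1, X_1 X_2, Z_2, X_2 X_3,\dots, Z_n, X_1 X_n$, give a circular frustration graph of length $2n$, and Proposition \ref{lem:circular_graph} yields $\aa_{14}^\circ(n)\cong\so(2n)\oplus\so(2n)\cong\aa_{14}(n)^{\oplus 2}$.

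The main obstacle is verifying that these reduced generating sets actually produce the full periodic DLAs rather than proper subalgebras. For $k=14$, this amounts to recovering the wrap-around terms $Y_1 Y_n$, $X_1 Y_n$, and $Y_1 X_n$ from repeated commutators of $X_1 X_n$ with $Z_1$ and $Z_n$, which works exactly as in the open case treated in Lemma \ref{lemfrus1}. The frustration-graph verifications themselves are routine position-by-position checks.
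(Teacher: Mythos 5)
Your proposal is correct and follows essentially the same route as the paper: exhibit cyclic generating sets whose frustration graphs are one circle ($k=1$), two disjoint circles ($k=2$), or a single circle of length $2n$ after reducing $\aa_{14}$ to $\Lie{XX,ZI,IZ}$ ($k=14$), then invoke Proposition \ref{lem:circular_graph} and compare with the open-case result of Lemma \ref{lemfrus1}. One small imprecision: in the $k=2$ case, neighbouring pairs such as $X_1Y_2$ and $Y_2X_3$ overlap in only one position, with equal Paulis there (so they commute for that reason), rather than in two positions of type $(X,Y)$ and $(Y,X)$; the conclusion that the two $n$-cycles are mutually disconnected is nevertheless correct.
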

\begin{proof}
The Lie algebra $\aa_1^\circ(n)$ is generated by $XY$ applied on adjacent qubits, including periodic boundary conditions:
\begin{equation*}
X_1Y_2, X_2Y_3, X_3Y_4, \dots, X_{n-1}Y_n, X_{n}Y_1.
\end{equation*}
The frustration graph is a circle with $n$ vertices. Therefore, $\aa_1^\circ(n) \cong \so(n)^{\oplus 2} \cong \aa_1(n)^{\oplus 2}$.

The Lie algebra $\aa_2^\circ(n)$ is generated by $XY$ and $YX$ applied on adjacent qubits with periodic boundary conditions:
\begin{equation*}
X_1Y_2, X_2Y_3, X_3Y_4, \dots, X_{n-1}Y_n, X_{n}Y_1 \qquad\text{and}\qquad 
Y_1X_2, Y_2X_3, Y_3X_4, \dots, Y_{n-1}X_n, Y_{n}X_1.
\end{equation*}
These form two circular frustration graphs with $n$ vertices that are disconnected from each other. Thus, $\aa_2^\circ(n) \cong \so(n)^{\oplus 2} \oplus \so(n)^{\oplus 2} \cong \aa_2(n)^{\oplus 2}$.

Using the generating set \eqref{a14gen} of $\aa_{14}(n)$, we see that $\aa_{14}^\circ(n)$ can be generated by:
\begin{equation*}
Z_1, X_1X_2, Z_2, X_2X_3, Z_3, X_3X_4, Z_4, \dots, X_{n-1}X_n, Z_n, X_nX_1.
\end{equation*}
This leads to a circular frustration graph with $2n$ vertices, so $\aa_{14}^\circ(n) \cong \so(2n)^{\oplus 2} \cong \aa_{14}(n)^{\oplus 2}$.
\end{proof}

\begin{lemma}\label{lemfrus3}
We have\/
$\aa_8^\circ(n) \cong \aa_{14}^\circ(n) \cong \so(2n)^{\oplus 2}$ for all $n\ge 3$.
\end{lemma}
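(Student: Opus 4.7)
The plan is to use the frustration-graph machinery from Propositions~\ref{lem:line_graph} and~\ref{lem:circular_graph}, in direct parallel with the proof of Lemma~\ref{lemfrus2}. First I would recall from the proof of Lemma~\ref{lemfrus1} that $\aa_8 = \Lie{XX, IY}$, so by the definition \eqref{ext5} of the periodic extension, including the wrap-around generators shows that $\aa_8^\circ(n)$ is generated by the $2n$ Pauli strings
\begin{equation*}
Y_1,\, X_1X_2,\, Y_2,\, X_2X_3,\, Y_3,\, \dots,\, X_{n-1}X_n,\, Y_n,\, X_nX_1.
\end{equation*}
The essential new feature compared with \eqref{a8gen} is that the wrap-around term $B_1A_n$ with $AB=IY$ produces $Y_1$, which was absent in the open case.

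Next I would check that the frustration graph of this generating set is a cycle on $2n$ vertices: each $Y_j$ anticommutes with exactly the two neighboring $XX$-generators $X_{j-1}X_j$ and $X_jX_{j+1}$ (indices taken modulo $n$), any two $Y$'s commute, and any two $XX$-generators commute. Once this is verified, Proposition~\ref{lem:circular_graph} applies directly and yields $\aa_8^\circ(n) \cong \so(2n)^{\oplus 2}$.

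For the second isomorphism there is nothing new to do: Lemma~\ref{lemfrus2} already establishes $\aa_{14}^\circ(n) \cong \so(2n)^{\oplus 2}$, so combining the two statements gives the claimed chain $\aa_8^\circ(n) \cong \aa_{14}^\circ(n) \cong \so(2n)^{\oplus 2}$. I do not anticipate any substantive obstacle here; the only delicate point is the careful bookkeeping that the boundary generator $Y_1$ is indeed produced by placing $IY$ at the wrap-around position $(n,1)$, which is precisely what closes the linear frustration graph of length $2(n-1)$ used for $\aa_8(n)$ into a cycle of length $2n$ and upgrades the orthogonal algebra by one step, from $\so(2n-1)$ in the open case to $\so(2n)^{\oplus 2}$ in the periodic case.
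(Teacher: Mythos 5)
Your proof is correct and stays within the same frustration-graph machinery the paper uses: the key point in both arguments is that the wrap-around placement of $IY$ supplies the generator $Y_1$ that is missing in the open chain, closing the length-$2(n-1)$ line into a $2n$-cycle. The only difference is cosmetic: the paper simply observes that, once periodic boundary conditions are imposed, the generating sets of $\aa_8^\circ(n)$ and $\aa_{14}^\circ(n)$ coincide after swapping $Y\rightleftharpoons Z$ and then quotes Lemma \ref{lemfrus2}, whereas you verify the $2n$-vertex circular frustration graph for $\aa_8^\circ(n)$ directly and apply Proposition \ref{lem:circular_graph}, citing Lemma \ref{lemfrus2} only for $\aa_{14}^\circ(n)$; both routes are equally valid.
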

\begin{proof}
The generating sets \eqref{a8gen} and \eqref{a14gen} of $\aa_8(n)$ and $\aa_{14}(n)$ are the same after swapping $Y \rightleftharpoons Z$, 
except that $\aa_8(n)$ does not have $Y_1$. When the periodic boundary condition is applied, this difference disappears and we obtain that
$\aa_8^\circ(n) \cong \aa_{14}^\circ(n)$. 
\end{proof}

\begin{lemma}\label{lemfrus4}
We have\/
$\aa_4^\circ(n) \cong \begin{cases}
\so(2n)^{\oplus 2}, & n \;\mathrm{odd}, \\
\so(n)^{\oplus 4}, & n \;\mathrm{even}.
\end{cases}$
\end{lemma}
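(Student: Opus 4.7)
My plan is to apply Proposition~\ref{lem:circular_graph} to two different cyclic reorderings of the generators $A_i := X_iX_{i+1}$ and $B_i := Y_iY_{i+1}$ (indices modulo $n$) of $\aa_4^\circ(n)$, choosing the reordering according to the parity of $n$. First I would record the elementary commutation relations: any two $A_i, A_j$ commute (any overlapping site carries the same Pauli $X$), and likewise any two $B_i, B_j$, while $A_i$ and $B_j$ anticommute if and only if they share exactly one position, i.e.\ iff $j \equiv i \pm 1 \pmod{n}$ (sharing zero or two positions yields commutation).

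For odd $n$, I would set $G_k := A_{k \bmod n}$ when $k$ is odd and $G_k := B_{k \bmod n}$ when $k$ is even, for $k = 1, 2, \ldots, 2n$, with indices taken in $\{1, \ldots, n\}$. Because $n$ is odd, each $A_i$ and each $B_i$ occurs exactly once in this list, so $\{G_k\}$ is merely a reordering of the full generating set. The verification that the frustration graph is a $2n$-cycle reduces to checking that the two conditions ``$k - l$ odd'' (types differ) and ``$k - l \equiv \pm 1 \pmod{n}$'' (indices adjacent mod $n$) hold simultaneously only when $|k-l| \in \{1, 2n-1\}$: the remaining candidates $|k-l| \in \{n-1, n+1\}$ are even when $n$ is odd and are therefore excluded. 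Proposition~\ref{lem:circular_graph} then yields $\aa_4^\circ(n) = \Lie{G_1, \ldots, G_{2n}} \cong \so(2n)^{\oplus 2}$.

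For even $n$ the same single-cycle ordering fails, since $n \pm 1$ becomes odd and creates additional frustration edges. Instead I would split the generators into two families of size $n$, namely $\sigma = \{A_1, B_2, A_3, B_4, \ldots, A_{n-1}, B_n\}$ and $\sigma' = \{B_1, A_2, B_3, A_4, \ldots, B_{n-1}, A_n\}$, each of which inherits an $n$-cycle frustration graph by the same local analysis. The key observation is that any cross-pair between $\sigma$ and $\sigma'$ has matching index parities, so the index difference is even and hence never $\equiv \pm 1 \pmod{n}$ for $n$ even; combined with the trivial commutation of same-type pairs, this shows that every element of $\sigma$ commutes with every element of $\sigma'$. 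Proposition~\ref{lem:circular_graph} gives $\Lie{\sigma} \cong \Lie{\sigma'} \cong \so(n)^{\oplus 2}$; both being semisimple (hence centerless) and mutually commuting, their intersection vanishes and $\aa_4^\circ(n) = \Lie{\sigma} \oplus \Lie{\sigma'} \cong \so(n)^{\oplus 4}$.

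The main obstacle will be the careful bookkeeping of which pairs $(G_k, G_l)$, or which cross-pairs between $\sigma$ and $\sigma'$, actually anticommute; everything hinges on whether the parity of $n$ aligns the ``types-differ'' constraint with the ``indices-adjacent-mod-$n$'' constraint. Once that combinatorial check is in place, both cases follow immediately from Proposition~\ref{lem:circular_graph}.
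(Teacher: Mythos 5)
Your proposal is correct and follows essentially the same route as the paper: for odd $n$ you arrange all $2n$ generators into the same single $2n$-cycle frustration graph and invoke Proposition~\ref{lem:circular_graph}, and for even $n$ you split them into the same two mutually commuting $n$-cycles, yielding $\so(2n)^{\oplus 2}$ and $\so(n)^{\oplus 4}$ respectively. The only difference is that you spell out the parity bookkeeping (and the centerless/direct-sum remark) more explicitly than the paper does, which is harmless.
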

\begin{proof}
As $\aa_4 = \Lie{XX,YY}$, the generators of $\aa_4^\circ(n)$ are:
\begin{equation*}
X_1X_2, Y_1Y_2, X_2X_3, Y_2Y_3, 
\dots, X_{n-1}X_n, Y_{n-1}Y_n, X_nX_1, Y_nY_1.
\end{equation*}
For odd $n$, these generators form a circular frustration graph with $2n$ vertices:
\begin{equation*}
X_1X_2, Y_2Y_3, X_3X_4, Y_4Y_5, \dots, Y_{n-1}Y_n, X_nX_1, Y_1Y_2, X_2X_3, Y_3Y_4, X_4X_5, \dots, X_{n-1}X_n, Y_nY_1.
\end{equation*}
Hence, in this case, $\aa_4^\circ(n) \cong \so(2n)^{\oplus 2}$.

When $n$ is even, the generators form two disjoint circles with $n$ vertices each:
\begin{equation*}
X_1X_2, Y_2Y_3, X_3X_4, \dots, X_{n-1}X_n, Y_nY_1 \qquad\text{and}\qquad 
Y_1Y_2, X_2X_3, Y_3Y_4, \dots, Y_{n-1}Y_n, X_nX_1.
\end{equation*}
In this case, we get $\aa_4^\circ(n) \cong \so(n)^{\oplus 2} \oplus \so(n)^{\oplus 2}$.
\end{proof}

\begin{remark}
Notice that, although $\aa_2(n) \cong \aa_4(n)$ for all $n\ge 3$, we have $\aa_2^\circ(n) \not\cong \aa_4^\circ(n)$ for odd $n$.
\end{remark}

\subsection{Stabilizers, commutants, and centralizers}\label{secstab}

For any Pauli string $A\in\mcP_k$, we will use the notation $P_A=AAA\cdots\in\mcP_n$ truncated to the $n$-th qubit. For example,
\begin{align*}
P_X &= XXX\cdots, & P_{YZ} &= YZYZ\cdots, & P_{ZY} &= ZYZY\cdots, \\
P_{XYZ} &= XYZXYZ\cdots, & P_{YZX} &= YZXYZX\cdots, & P_{ZXY} &= ZXYZXY\cdots, \\
\end{align*}
where these are viewed as elements of $\mcP_n$.
In particular, $P_I=I\cdots I = I^{\otimes n}$.
Recall that $\pm\mcP_n \cup \pm i\mcP_n$ is a group under the matrix product, the Pauli group (see Sect.\ \ref{secpauli1}).

For any set of matrices $\mcA \subseteq \mathbb{C}^{2^n \times 2^n}$, we define its \emph{stabilizer} $\Stab(\mcA) \subseteq \mcP_n$ 
as the set of all Pauli strings that commute with every element of $\mcA$. 
It is clear that $\Stab(\mcA)$ is closed under multiplication, so after allowing appropriate powers of $i$ it is a group.
There are two related, and essentially equivalent, notions called the commutant and centralizer.  
All of these consist of elements commuting with the given set $\mcA$ but differ in where such elements are and what structure they form:
the stabilizer is a subgroup of the Pauli group (up to factors of $\pm1,\pm i$); 
the commutant is a subalgebra of the associative algebra $\mathbb{C}^{2^n \times 2^n}$ of all complex matrices; 
while the centralizer is a subalgebra of the real Lie algebra $\su(2^n)$.
The precise relations are explained in the following two remarks.

\begin{remark}\label{remstab1}
The \emph{commutant} of a set $\mcA \subseteq \mathbb{C}^{2^n \times 2^n}$ is defined as the set $\mcA'$ of all $2^n \times 2^n$ complex matrices that commute with all elements
of $\mcA$. Then $\mcA'$ is closed under addition, multiplication, and multiplication by any complex scalar, i.e., it is an associative algebra over $\mathbb C$.
It is easy to see that $\mcA' = \Span_{\mathbb C} \Stab(\mcA)$ is the complex linear span of the stabilizer $\Stab(\mcA)$.

Note that 
$\mcA\subseteq\mcA'' := (\mcA')'$.
By (a finite-dimensional version of) von Neumann's Double Commutant Theorem (see e.g.\ \cite{procesi2007lie}, Theorem 6.2.5), 
$\mcA''$ is the associative algebra generated by $\mcA$. In particular, for a complex vector space $\mcA$, we have $\mcA'' = \mcA + \mathbb{C} I^{\otimes n}$ if and only if $\mcA$ is closed under multiplication.
\end{remark}

\begin{remark}\label{remstab2}
The \emph{centralizer} of a set $\mcA\subseteq\su(2^n)$ is the set 
$\su(2^n)^\mcA \subseteq \su(2^n)$ of all traceless skew-Hermitian $2^n \times 2^n$ matrices that commute with all elements
of $\mcA$. Then $\su(2^n)^\mcA$ is closed under addition, commutator, and multiplication by any real scalar, i.e., it is a Lie subalgebra of $\su(2^n)$.
It is easy to see that 
$\su(2^n)^\mcA = \Span (\Stab(\mcA)\setminus\{I^{\otimes n}\})$ is the real span of $i(\Stab(\mcA)\setminus\{I^{\otimes n}\})$.
\end{remark}


When $\s\subseteq \su(2^n)$ is a Lie subalgebra, a given Pauli string commutes with all elements of $\s$
if and only if it commutes with all generators of $\s$. In other words, we have
\begin{equation}\label{stabalg}
\Stab(\Lie{\mcA}) = \Stab(\mcA).
\end{equation}
Thus, to determine the stabilizers of our Lie subalgebras of $\su(2^n)$, it suffices to find the stabilizers of their generating sets.
In the case $n=2$, it is easy to find the answer by inspection, which is given 
as follows:
\allowdisplaybreaks
\begin{align*}
\Stab(\aa_0) &= \{II,IX,XI,XX,YY,YZ,ZY,ZZ\}, \\
\Stab(\aa_1) &= \{II,IY,XI,XY,YX,YZ,ZX,ZZ\}, \\
\Stab(\aa_2) &= \{II,XY,YX,ZZ\}, \\
\Stab(\aa_3) &= \Stab(\aa_6) = \{II,XX,YZ,ZY\}, \\
\Stab(\aa_4) &= \Stab(\aa_7) = \{II,XX,YY,ZZ\}, \\
\Stab(\aa_5) &= \Stab(\aa_{10}) = \{II,XY,YZ,ZX\}, \\
\Stab(\aa_8) &= \{II,XI,YY,ZY\}, \\
\Stab(\aa_9) &= \{II,XI,YX,ZX\}, \\
\Stab(\aa_{11}) &= \{II,XY\}, \\
\Stab(\aa_{12}) &= \Stab(\aa_{17}) = \Stab(\aa_{19}) = \{II,YZ\}, \\
\Stab(\aa_{13}) &= \Stab(\aa_{20}) = \{II,XX\}, \\
\Stab(\aa_{14}) &= \{II,ZZ\}, \\
\Stab(\aa_{15}) &= \Stab(\bb_2) = \Stab(\bb_4) = \{II,XI\}, \\
\Stab(\bb_0) &= \Stab(\bb_1) = \{II,XX,XI,IX\}, \\
\Stab(\aa_k) &= \Stab(\bb_3) = \{II\},  \qquad k=16,18,21,22.
\end{align*}
Using that, we can find the stabilizers of the subalgebras of $\su(2^n)$.

\begin{proposition}\label{pstab1}
For $n\ge3$, we have the following stabilizers:
\begin{align*}
\Stab(\aa_0(n)) &= \{I,X\}^{\otimes n} \cup \{Y,Z\}^{\otimes n}, \\
\Stab(\aa_2(n)) &= \{P_I,P_{XY},P_{YX},P_Z\}, \\
\Stab(\aa_3(n)) &= \Stab(\aa_6(n)) = \{P_I,P_X,P_{YZ},P_{ZY}\}, \\
\Stab(\aa_4(n)) &= \Stab(\aa_7(n)) = \{P_I,P_X,P_Y,P_Z\}, \\
\Stab(\aa_5(n)) &= \Stab(\aa_{10}(n)) = \{P_I,P_{XYZ},P_{YZX},P_{ZXY}\}, \\
\Stab(\aa_8(n)) &= \{P_I,P_Y,X_1,ZY\cdots Y\}, \\
\Stab(\aa_9(n)) &= \{P_I,X_1,Y_1X_2,Z_1X_2\}, \\
\Stab(\aa_{13}(n)) &= \Stab(\aa_{20}(n)) = \{P_I,P_X\}, \\
\Stab(\aa_{14}(n)) &= \{P_I,P_Z\}, \\
\Stab(\aa_{15}(n)) &= \Stab(\bb_2(n)) = \Stab(\bb_4(n)) = \{P_I,X_1\}, \\
\Stab(\bb_0(n)) &= \Stab(\bb_1(n)) = \{I,X\}^{\otimes n}, \\
\Stab(\aa_k(n)) &= \Stab(\bb_3(n)) = \{P_I\}, \qquad k=11,12,16{-}19,21,22.
\end{align*}
\end{proposition}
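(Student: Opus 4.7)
My plan is to reduce each stabilizer computation to a combinatorial enumeration of walks in a small directed graph on the vertex set $\{I,X,Y,Z\}$, using the two-qubit stabilizers listed just above the proposition.

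First I would establish the key reduction: for any Pauli string $P = P^1 \otimes \cdots \otimes P^n \in \mcP_n$,
\begin{equation*}
P \in \Stab(\aa_k(n)) \iff P^j P^{j+1} \in \Stab(\aa_k) \ \text{for all}\ 1 \le j \le n-1,
\end{equation*}
and analogously for the $\bb_l$. Indeed, by \eqref{stabalg} it suffices to require commutation with the generating set \eqref{ext0} of $\aa_k(n)$, which consists of translates $I^{\otimes (j-1)} \otimes a \otimes I^{\otimes (n-j-1)}$ with $a$ ranging over a generating set of $\aa_k$. By Lemma \ref{lemp0}, the (anti)commutator of two Pauli strings is determined componentwise, and since such a translate is supported on sites $j, j+1$ only, $P$ commutes with it iff the substring $P^j P^{j+1}$ commutes with $a$. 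A second application of \eqref{stabalg} at the two-site level yields the displayed equivalence.

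The remaining work is then a walk-enumeration problem: for each $k$, form the directed graph $\Gamma_k$ on $\{I,X,Y,Z\}$ with an edge $A \to B$ whenever $AB \in \Stab(\aa_k)$, and read off $\Stab(\aa_k(n))$ as the set of length-$n$ walks in $\Gamma_k$. Each $\Gamma_k$ is sparse (out-degree at most $2$ at every vertex), so the case analysis is short. For example, for $\aa_5$ the graph is the self-loop at $I$ together with the $3$-cycle $X \to Y \to Z \to X$, producing $\{P_I, P_{XYZ}, P_{YZX}, P_{ZXY}\}$; for $\aa_8$ every vertex has a unique outgoing edge ($I,X \to I$ and $Y,Z \to Y$), producing $\{P_I, X_1, P_Y, ZY\cdots Y\}$; and for $\aa_0$ the graph splits into two complete pieces on $\{I,X\}$ and on $\{Y,Z\}$, producing $\{I,X\}^{\otimes n}\cup\{Y,Z\}^{\otimes n}$. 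The same framework handles the $\bb$-type cases, since a translate of a one-letter generator like $XI$ at $(j,j+1)$ imposes exactly the condition $P^j P^{j+1} \in \Stab(\bb_l)$.

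I do not anticipate a serious obstacle: the reduction is an immediate consequence of Lemma \ref{lemp0} and \eqref{stabalg}, and the rest is a finite case-by-case verification. The cases where the proposition asserts $\Stab(\aa_k(n)) = \{P_I\}$ split cleanly. For $k = 16,18,21,22$ and for $\bb_3$ the two-site stabilizer is already $\{P_I\}$, so the walks are forced to be constantly $I$. For $k = 11,12,17,19$ the two-site stabilizer has exactly one further element $AB$ with $A,B \ne I$, contributing an edge $A \to B$ whose endpoint $B$ has no outgoing edge in $\Gamma_k$; hence no walk of length $\ge 3$ uses it, confirming the claim for $n \ge 3$. Once these checks are recorded, the proposition follows.
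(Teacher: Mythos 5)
Your proposal is correct and follows essentially the same route as the paper: the key reduction that $P\in\Stab(\aa_k(n))$ iff every consecutive substring $P^jP^{j+1}$ lies in $\Stab(\aa_k)$ is exactly the observation the paper's proof rests on, and your directed-graph/walk enumeration is just a systematic repackaging of the paper's case-by-case inspection (your sample checks, e.g.\ for $\aa_5$, $\aa_8$, and the $k=11,12,17,19$ cases, all come out right).
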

\begin{proof}
Recall that, for any subalgebra $\aa\subseteq\su(4)$, the subalgebra $\aa(n)\subseteq\su(2^n)$ is generated by all Pauli strings $A_i B_{i+1}$, where $AB \in \aa$, $1\le i\le n-1$ 
(see \eqref{ext6}). Thus, a Pauli string $P^1 \otimes\cdots\otimes P^n \in\mcP_n$ is in $\Stab(\aa(n))$ if and only if $P^i \otimes P^{i+1} \in \Stab(\aa)$ for all $1\le i\le n-1$.
Using this observation and the knowledge of all $\Stab(\aa)$, it is straightforward to determine $\Stab(\aa(n))$. 

Let us consider the case of $\aa_2$ as an illustration. Since $\Stab(\aa_2) = \{II,XY,YX,ZZ\}$, we want to find all Pauli strings, such that for any two consecutive qubits, we have either $II$, $XY$, $YX$, or $ZZ$.
The only possible such strings are $I \cdots I$, $XYXY\cdots$, $YXYX\cdots$, or $Z\cdots Z$.
\end{proof}

The answer for $\Stab(\aa_1(n))$ is given without proof in Remark \ref{remstaba1} below, as it is more complicated but not needed for the rest of the paper.
For future use, we will need the centers of some of the above stabilizers.
Let us recall that the \emph{center} $Z(G)$ of a group $G$ consists of all $z\in G$ that commute with every $g\in G$. 
In particular, a group $G$ is Abelian if and only if $Z(G)=G$.

\begin{lemma}\label{lemstab}
For $n\ge3$, we have the following centers:
\begin{align*}
Z\bigl(\Stab(\aa_2(n))\bigr) &= 
\begin{cases} 
\{P_I,P_{XY},P_{YX},P_Z\}, & n \;\;\mathrm{even}, \\
\{P_I\}, & n \;\;\mathrm{odd},
\end{cases} \\
Z\bigl(\Stab(\aa_3(n))\bigr) &= Z\bigl(\Stab(\aa_6(n))\bigr) = 
\begin{cases} 
\{P_I,P_X,P_{YZ},P_{ZY}\}, & n \;\;\mathrm{even}, \\
\{P_I\}, & n \;\;\mathrm{odd},
\end{cases} \\
Z\bigl(\Stab(\aa_4(n))\bigr) &= Z\bigl(\Stab(\aa_7(n))\bigr) = 
\begin{cases} 
\{P_I,P_X,P_Y,P_Z\}, & n \;\;\mathrm{even}, \\
\{P_I\}, & n \;\;\mathrm{odd},
\end{cases} \\
Z\bigl(\Stab(\aa_5(n))\bigr) &= Z\bigl(\Stab(\aa_{10}(n))\bigr) = 
\begin{cases} 
\{P_I,P_{XYZ},P_{YZX},P_{ZXY}\}, & n \;\;\mathrm{even}, \\
\{P_I\}, & n \;\;\mathrm{odd},
\end{cases} \\
Z\bigl(\Stab(\aa_{13}(n))\bigr) &= Z\bigl(\Stab(\aa_{20}(n))\bigr) = \{P_I,P_X\}, \\
Z\bigl(\Stab(\aa_{14}(n))\bigr) &= \{P_I, P_Z\}, \\
Z\bigl(\Stab(\aa_{15}(n))\bigr) &= \{P_I,X_1\}, \\
Z\bigl(\Stab(\aa_k(n))\bigr) &= \{P_I\}, \qquad k=0,8,9,11,12,16{-}19,21,22.
\end{align*}
\end{lemma}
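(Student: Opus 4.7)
The plan is to compute each center directly from the explicit description of the stabilizer in Proposition \ref{pstab1}, using Lemma \ref{lemp0}: any two Pauli strings in $\mcP_n$ either commute or anticommute, with commutation determined by whether the total number of position-wise anticommutations is even. Checking whether an element is central in the stabilizer then reduces to a finite combinatorial check against each of the other stabilizer elements.

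The cases organize themselves by the size of the stabilizer. First, for $k=11,12,16{-}19,21,22$ and for $\bb_3$, Proposition \ref{pstab1} gives $\Stab = \{P_I\}$, so the center is trivially $\{P_I\}$; while the $2$-element stabilizers of $\aa_{13}, \aa_{14}, \aa_{15}$ are automatically Abelian and equal their own centers. Next, for the $4$-element Klein-four stabilizers of $\aa_2,\aa_3,\aa_4,\aa_5,\aa_6,\aa_7,\aa_{10}$, I would invoke the following dichotomy: in a subgroup $\{P_I, u, v, w\}$ of the Pauli group with $u\cdot v \propto w$, either all three non-identity elements pairwise commute (the group is Abelian, center is itself) or they pairwise anticommute (center is $\{P_I\}$). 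The dichotomy follows because if $u,v$ anticommute, then $u\cdot w = u\cdot(u\cdot v) = v$ while $w\cdot u = (u\cdot v)\cdot u = -v$, and symmetrically for the pair $(v,w)$. In each of these seven cases the three non-identity stabilizer elements are periodic Pauli patterns (e.g.\ $P_{XY}, P_{YX}, P_Z$ for $\aa_2$; $P_X, P_Y, P_Z$ for $\aa_4, \aa_7$) whose entries at every position form a triple of mutually anticommuting single-qubit Paulis, so each pair anticommutes at all $n$ positions and therefore commutes overall iff $n$ is even. This produces the parity-dependent formulas in the lemma.

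For $\aa_8$, with $\Stab = \{P_I, P_Y, X_1, Z_1 Y_2 \cdots Y_n\}$, and $\aa_9$, with $\Stab = \{P_I, X_1, Y_1 X_2, Z_1 X_2\}$, a direct position-by-position inspection shows each pair of non-identity elements anticommutes for every $n$: in $\aa_9$, the triple $X_1, Y_1 X_2, Z_1 X_2$ registers exactly one anticommutation in the first qubit block (and none elsewhere), and analogously for $\aa_8$ the anticommutation between any two of $P_Y, X_1, Z_1 Y_2\cdots Y_n$ is concentrated in the first qubit. Hence their centers reduce to $\{P_I\}$ by the same dichotomy. The main obstacle is the case $\aa_0$, whose stabilizer $\{I,X\}^{\otimes n} \cup \{Y,Z\}^{\otimes n}$ is exponentially large; here candidates must be split according to which of the two blocks they lie in, and commutation tested against elements of the opposite block, with careful bookkeeping of parities of $X$-entries (respectively of $Y,Z$-entries). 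This is where most of the combinatorial work concentrates, and the analysis must be arranged so that any element with an odd number of off-identity entries in its block is ruled out by a single-qubit test element from the other block.
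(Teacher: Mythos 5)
Most of your proposal is sound and follows the only reasonable route (the paper itself gives no written proof of this lemma; it is a direct inspection of the stabilizers from Proposition \ref{pstab1}). Your Klein-four dichotomy is correct, and for the seven $4$-element stabilizers of $\aa_2$--$\aa_7$, $\aa_{10}$ the per-position observation (the three non-identity elements carry mutually anticommuting Paulis in every slot, hence each pair anticommutes in exactly $n$ positions) gives precisely the stated parity split. The $1$- and $2$-element cases are immediate, and your treatment of $\aa_8$ and $\aa_9$ (all anticommutation concentrated on the first one or two qubits, so every pair of non-identity stabilizer elements anticommutes for every $n$) is correct.

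The genuine gap is the case $k=0$, which you defer to ``careful bookkeeping'' and claim can be arranged so that non-central candidates are ruled out. It cannot: your parity test only eliminates elements of $\{I,X\}^{\otimes n}$ with an \emph{odd} number of $X$'s (and all of $\{Y,Z\}^{\otimes n}$, via $X_1$). An element of $\{I,X\}^{\otimes n}$ with an even number of $X$'s, say $a=X_1X_2$, commutes with all of $\{I,X\}^{\otimes n}$ trivially and with every $b\in\{Y,Z\}^{\otimes n}$ because the anticommuting positions are exactly the two slots where $a$ has $X$; hence $a$ is central in $\Stab(\aa_0(n))=\{I,X\}^{\otimes n}\cup\{Y,Z\}^{\otimes n}$. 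Already for $n=3$ the center is $\{III,XXI,XIX,IXX\}$, and in general it is the subgroup of even-weight $X$-type strings, of order $2^{n-1}$, not $\{P_I\}$. So the entry $k=0$ in the lemma is itself erroneous, and your sketch glosses over rather than detects this; the honest fix is to either drop $k=0$ from the last line or record the larger center explicitly. This has no downstream effect in the paper, since $Z\bigl(\Stab(\aa_0(n))\bigr)$ is never used: $\aa_0(n)$ is Abelian and belongs to the linear class, and the quotient construction $\g_k(n)$ is only invoked for $1\le k\le 22$ (indeed only for the exponential cases). All the other entries of the lemma are correctly established by your argument.
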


Next, we determine the stabilizers in the periodic case.

\begin{proposition}\label{pstab2}
For $n\ge3$, we have the following stabilizers:
\begin{align*}
\Stab(\aa_0^\circ(n)) &= \{I,X\}^{\otimes n} \cup \{Y,Z\}^{\otimes n}, \\
\Stab(\aa_2^\circ(n)) &= 
\begin{cases} 
\{P_I,P_{XY},P_{YX},P_Z\}, & n \;\;\mathrm{even}, \\
\{P_I,P_Z\}, & n \;\;\mathrm{odd},
\end{cases} \\
\Stab(\aa_3^\circ(n)) &= \Stab(\aa_6^\circ(n)) = 
\begin{cases} 
\{P_I,P_X,P_{YZ},P_{ZY}\}, & n \;\;\mathrm{even}, \\
\{P_I,P_X\}, & n \;\;\mathrm{odd},
\end{cases} \\
\Stab(\aa_4^\circ(n)) &= \Stab(\aa_7^\circ(n)) = \{P_I,P_X,P_Y,P_Z\}, \\
\Stab(\aa_5^\circ(n)) &= \Stab(\aa_{10}^\circ(n)) = 
\begin{cases} 
\{P_I,P_{XYZ},P_{YZX},P_{ZXY}\}, & n\equiv 0 \mod 3, \\
\{P_I\}, & n\equiv \pm1 \mod 3,
\end{cases} \\
\Stab(\aa_8^\circ(n)) &= \{P_I,P_Y\}, \\
\Stab(\aa_{13}^\circ(n)) &= \Stab(\aa_{20}^\circ(n)) = \{P_I,P_X\}, \\
\Stab(\aa_{14}^\circ(n)) &= \{P_I,P_Z\}, \\
\Stab(\bb_0^\circ(n)) &= \Stab(\bb_1^\circ(n)) = \{I,X\}^{\otimes n}, \\
\Stab(\aa_k^\circ(n)) &= \Stab(\bb_l^\circ(n)) = \{P_I\}, \qquad k=9,11,12,15{-}19,21,22, \quad l=2,3,4.
\end{align*}
\end{proposition}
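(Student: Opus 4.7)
The plan is to combine equation \eqref{stabalg} with the generating set \eqref{ext5} of $\aa^\circ(n)$. Since \eqref{ext5} adds to the open-case generators of \eqref{ext6} precisely the wrap-around term $B_1 A_n$ for each $AB\in\aa$, a Pauli string $P = P^1 \otimes \cdots \otimes P^n$ lies in $\Stab(\aa^\circ(n))$ if and only if \emph{every} consecutive pair of qubits $P^i P^{i+1}$ --- now including the wrap-around pair $P^n P^1$ --- lies in the $2$-site stabilizer $\Stab(\aa)\subset\mcP_2$. In particular, $\Stab(\aa^\circ(n))\subseteq\Stab(\aa(n))$, and the periodic stabilizer is cut out from the open one of Proposition~\ref{pstab1} by the single additional condition $P^n P^1 \in \Stab(\aa)$.

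With this reduction, the proof becomes a case-by-case inspection of Proposition~\ref{pstab1}. For every constant string $P_A$ with $A\in\{I,X,Y,Z\}$, the wrap-around pair is $AA$, which lies in every relevant $2$-site stabilizer containing all diagonal pairs (for instance $\Stab(\aa_4)=\Stab(\aa_7)=\{II,XX,YY,ZZ\}$); hence such strings survive for all $n$. The remaining work is to track the wrap-around for the genuinely periodic patterns $P_{XY},P_{YX},P_{YZ},P_{ZY}$ (period $2$) and $P_{XYZ},P_{YZX},P_{ZXY}$ (period $3$): the wrap-around pair for a string of period $p$ depends on $n \bmod p$. For instance, $P_{XY}$ has wrap-around $YX$ when $n$ is even (which is in $\Stab(\aa_2)$) and $XX$ when $n$ is odd (which is \emph{not} in $\Stab(\aa_2)=\{II,XY,YX,ZZ\}$), producing the stated split for $\aa_2^\circ(n)$. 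The analogous mod-$2$ check handles $\aa_3^\circ(n)=\aa_6^\circ(n)$, while a mod-$3$ check shows that the three $P_{XYZ}$-type strings satisfy $P^n P^1\in\Stab(\aa_5)=\{II,XY,YZ,ZX\}$ if and only if $3\mid n$, giving the stated form of $\aa_5^\circ(n)=\aa_{10}^\circ(n)$.

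Finally, the ``boundary-type'' elements in the open stabilizers --- namely $X_1$ and $ZY\cdots Y$ in $\Stab(\aa_8(n))$, the triple $X_1,Y_1X_2,Z_1X_2$ in $\Stab(\aa_9(n))$, and $X_1$ in $\Stab(\aa_{15}(n))=\Stab(\bb_2(n))=\Stab(\bb_4(n))$ --- each have a wrap-around pair of the form $IA$ or $AI$ with $A\ne I$, which is not contained in the respective $2$-site stabilizer. Consequently all of these elements are killed by the periodic condition, yielding $\Stab(\aa_8^\circ(n))=\{P_I,P_Y\}$ and $\Stab(\aa_k^\circ(n))=\{P_I\}$ for $k=9,15$ and for $\bb_2^\circ,\bb_4^\circ$; the cases in Proposition~\ref{pstab1} whose open stabilizer already equals $\{P_I\}$ trivially remain so. The main obstacle is just organized bookkeeping for the three periodicities involved (constants, period $2$, period $3$) and the handful of boundary-type elements; there is no deep new ingredient beyond the single additional membership test that the wrap-around condition imposes.
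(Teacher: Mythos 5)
Your argument is correct and is essentially the paper's own proof: the paper likewise observes that $\Stab(\aa^\circ(n))$ consists exactly of those $P=P^1\otimes\cdots\otimes P^n\in\Stab(\aa(n))$ satisfying the one extra wrap-around condition $P^n\otimes P^1\in\Stab(\aa)$, and then determines each case by inspecting the open stabilizers of Proposition~\ref{pstab1}, just as you do with the constant, period-$2$, period-$3$, and boundary-type elements. (One harmless slip: the wrap-around pair of $ZY\cdots Y\in\Stab(\aa_8(n))$ is $YZ$, not of the form $IA$ or $AI$, but it is still absent from $\Stab(\aa_8)=\{II,XI,YY,ZY\}$, so your conclusion is unaffected.)
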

\begin{proof}
For any subalgebra $\aa\subseteq\su(4)$, comparing the definitions of $\aa(n)$ and $\aa^\circ(n)$ (see \eqref{ext5}, \eqref{ext6}),
we see that $\Stab(\aa^\circ(n))$ consists of all Pauli strings $P^1 \otimes\cdots\otimes P^n \in \Stab(\aa(n))$ such that $P^n \otimes P^1 \in \Stab(\aa)$.
Thus, we determine $\Stab(\aa^\circ(n))$ by inspecting all elements of $\Stab(\aa(n))$.
\end{proof}

\begin{remark}\label{remstaba1}
One can show that $\Stab(\aa_1(n)) = \Stab(\aa_1^\circ(n))$, and as a group it is generated by the elements $P_Z$, $Y_1X_2$, $Y_2X_3$, $\dots$, $Y_{n-1}X_n$. This means that $\Stab(\aa_1(n))$ consists of all possible matrix products of these generators. For the center, we have $Z\bigl(\Stab(\aa_{1}(n))\bigr) = \{P_I, P_Z\}$.
\end{remark}

We also find the stabilizers in the permutation-invariant case.

\begin{proposition}\label{pstab3}
For $n\ge3$, we have the following stabilizers:
\begin{align*}
\Stab(\aa_0^\pi(n)) &= \{I,X\}^{\otimes n} \cup \{Y,Z\}^{\otimes n}, \\
\Stab(\aa_2^\pi(n)) &= \Stab(\aa_{14}^\pi(n)) = \{P_I,P_Z\}, \\
\Stab(\aa_4^\pi(n)) &= \Stab(\aa_7^\pi(n)) = \{P_I,P_X,P_Y,P_Z\}, \\
\Stab(\aa_6^\pi(n)) &= \Stab(\aa_{20}^\pi(n))= \{P_I,P_X\}, \\
\Stab(\bb_0^\pi(n)) &= \Stab(\bb_1^\pi(n)) = \{I,X\}^{\otimes n}, \\
\Stab(\aa_{16}^\pi(n)) &= \Stab(\bb_3^\pi(n)) = \{P_I\}.
\end{align*}
\end{proposition}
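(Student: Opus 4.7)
The plan is to adapt the strategy from the proofs of Propositions \ref{pstab1} and \ref{pstab2}, with the key structural observation that, since $\aa^\pi(n)$ is generated by all $\{A_i B_j \,|\, AB\in\aa, \; 1\le i\ne j\le n\}$ (see \eqref{ext7}) and stabilizers of Lie algebras are determined by their generators (see \eqref{stabalg}), a Pauli string $P = P^1\otimes\cdots\otimes P^n$ lies in $\Stab(\aa^\pi(n))$ if and only if $P^i \otimes P^j \in \Stab(\aa) \subseteq \mcP_2$ for every pair $i\ne j$. This is stronger than the adjacency condition used in the open and periodic cases, because we now demand the pairwise condition on \emph{all} $\binom{n}{2}$ pairs of positions (in both orderings).

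With this criterion in hand, the proof reduces to a purely combinatorial exercise using the $n=2$ stabilizers listed in Sect.\ \ref{secstab}. I would handle each entry of the table by inspection:
\begin{itemize}
\item For $\aa_0$, $\Stab(\aa_0) = \{II,IX,XI,XX,YY,YZ,ZY,ZZ\}$, so each pair has both coordinates in $\{I,X\}$ or both in $\{Y,Z\}$; the pairwise condition forces $P\in\{I,X\}^{\otimes n}\cup\{Y,Z\}^{\otimes n}$.
\item For $\bb_0,\bb_1$ with $\Stab=\{II,XX,XI,IX\}$, every coordinate must be in $\{I,X\}$ and any such string works.
\item For $\aa_4,\aa_7$ with $\Stab=\{II,XX,YY,ZZ\}$, every pair is of the form $AA$, so all coordinates of $P$ coincide, giving $\{P_I,P_X,P_Y,P_Z\}$.
\item For $\aa_6,\aa_{20}$ with $\Stab(\aa_6)=\{II,XX,YZ,ZY\}$, the pairing structure shows $I$ pairs only with $I$, $X$ only with $X$, while $Y$ pairs only with $Z$ and vice versa; the last possibility is ruled out for $n\ge 3$ because three positions cannot be consistently alternated. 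This leaves $\{P_I,P_X\}$.
\item For $\aa_2,\aa_{14}$ with $\Stab(\aa_2)=\{II,XY,YX,ZZ\}$ and $\Stab(\aa_{14})=\{II,ZZ\}$, analogous reasoning (again using $n\ge3$ to rule out the $X\leftrightarrow Y$ alternation) forces $P\in\{P_I,P_Z\}$.
\item For $\aa_{16},\bb_3$ with $\Stab=\{II\}$, only $P_I$ survives.
\end{itemize}
Finally, for the algebras listed as having $\Stab^\pi(n) = \Stab(\aa_k(n))$ through equalities stated in Sect.\ \ref{secext} (e.g.\ $\aa_k^\pi(n)=\aa_k(n)$ for $k=7,16,20,22$), the claims reduce directly to Proposition \ref{pstab1}.

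The main obstacle will be executing the pair-by-pair case analysis for the stabilizers whose $n=2$ pattern contains ``off-diagonal'' elements like $XY$, $YZ$, $ZY$, which superficially suggest the possibility of mixed permutation-invariant stabilizer strings. The crucial use of $n\ge3$ must be made explicit: any attempt to build a mixed string forces two positions to carry the same non-$\{I\}$, non-$\{Z\}$-type letter, producing a pair that lies outside $\Stab(\aa)$. Once this is handled uniformly, every identity in the proposition follows from a short finite check against the tabulated $n=2$ stabilizers.
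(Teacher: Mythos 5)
Your proof takes essentially the same approach as the paper's: identify that, since $\aa^\pi(n)$ is generated by $\{A_iB_j : AB\in\aa,\ i\ne j\}$ and stabilizers are determined by generators, a Pauli string lies in $\Stab(\aa^\pi(n))$ precisely when every pair of coordinates $P^i\otimes P^j$ $(i\ne j)$ belongs to the $n=2$ stabilizer $\Stab(\aa)$ from Sect.~\ref{secstab}, and then run the finite case analysis. Your explicit casework is correct, and your use of $n\ge3$ to kill the ``alternating'' strings (e.g., in the $\aa_2$ and $\aa_6$ cases, a string built from $\{Y,Z\}$ or $\{X,Y\}$ necessarily repeats a letter for $n\ge3$, producing an invalid pair like $YY$) is exactly the right observation; the paper states only the criterion and leaves this inspection implicit.

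One minor caveat: your closing paragraph (``for the algebras listed as having $\Stab^\pi(n)=\Stab(\aa_k(n))$ through equalities stated in Sect.~\ref{secext}, e.g.\ $\aa_k^\pi(n)=\aa_k(n)$ for $k=7,16,20,22$'') is both redundant and slightly misattributed: Sect.~\ref{secext} only records which $\aa_k^\pi$ coincide with one another under the flip symmetry, not that $\aa_k^\pi(n)=\aa_k(n)$ -- that equality is part of Theorem~\ref{the:classification-s}, which is proved \emph{after} this proposition and in principle uses it. Also $\aa_{22}$ does not appear in the statement. Since your pairwise case analysis already covers every line of the proposition directly, simply drop that final appeal and the argument is complete and non-circular.
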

\begin{proof}
For any subalgebra $\aa\subseteq\su(4)$, from the definition of $\aa^\pi(n)$ (see \eqref{ext7}),
we see that $\Stab(\aa^\pi(n))$ consists of all Pauli strings $P^1 \otimes\cdots\otimes P^n \in \mcP_n$ such that $P^i \otimes P^j \in \Stab(\aa)$
for all $i\ne j$. Moreover, as we explained in Sect.\ \ref{secext}, $\aa$ can be assumed itself invariant under the flip of the two qubits;
so we only need to consider $\aa_k^\pi(n)$ for $k=0,2,4,6,7,14,16,20$ and $\bb_l^\pi(n)$ for $l=0,1,3$.
\end{proof}

We finish this subsection with an important lemma.

\begin{lemma}\label{lemp3}
The Lie algebras $\aa_k(n)$ have trivial centers for $1\le k\le 22$ and $n\ge3$. 
\end{lemma}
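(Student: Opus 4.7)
My first step would be to translate the statement into a combinatorial claim about Pauli strings. The center satisfies
\begin{equation*}
Z(\aa_k(n)) \;=\; \aa_k(n) \,\cap\, \su(2^n)^{\aa_k(n)},
\end{equation*}
and by Remark~\ref{remstab2} the centralizer $\su(2^n)^{\aa_k(n)}$ has the basis $\{iP : P \in \Stab(\aa_k(n)) \setminus \{I^{\otimes n}\}\}$ of Pauli strings times $i$. Lemma~\ref{lemgen} tells us that $\aa_k(n)$ also admits such a basis. Because both are coordinate subspaces of $\uu(2^n)$ in the basis $i\mcP_n$, so is their intersection, giving
\begin{equation*}
Z(\aa_k(n)) \;=\; \Span_{\mathbb{R}}\bigl\{ iP : P \in \Stab(\aa_k(n)) \setminus \{I^{\otimes n}\}, \; iP \in \aa_k(n) \bigr\}.
\end{equation*}
Hence it is enough to show: for every $k \ge 1$ and every non-identity $P \in \Stab(\aa_k(n))$, $iP \notin \aa_k(n)$.

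When $\Stab(\aa_k(n)) = \{I^{\otimes n}\}$ --- i.e., by Proposition~\ref{pstab1}, for $k \in \{11,12,16,17,18,19,21,22\}$ --- this is vacuous. For the remaining $k$, I would go through the stabilizers in Proposition~\ref{pstab1} one by one and test each non-identity $P$ using the symplectic $\mathbb{F}_2^{2n}$-encoding of Pauli strings. A nonzero Lie bracket of two Pauli strings is, up to a scalar, their matrix product (Lemma~\ref{lemp0}), and matrix products of Pauli strings add their symplectic vectors; hence the set of $P$ with $iP \in \aa_k(n)$ has its symplectic encodings lying in the $\mathbb{F}_2$-linear span of the encodings of the generators. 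Many candidates are excluded by this linear test: for example, when $n$ is odd, the uniform stabilizer elements $P_X, P_Y, P_Z, P_{XYZ}, P_{YZ}, P_{ZY}$ listed in Proposition~\ref{pstab1} violate simple $\mathbb{F}_2$-parity conditions that every generator satisfies.

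The main obstacle is the cases where the $\mathbb{F}_2$-test is inconclusive --- in particular, the even-$n$ subcases for $\aa_k$ with $k \in \{3,5,6,7,10\}$ and the families $\aa_8, \aa_9, \aa_{13}, \aa_{14}, \aa_{15}, \aa_{20}$, whose stabilizers contain short Pauli strings such as $X_1$ or $P_X$. For these, I would verify the statement in the base case $n=3$ (and $n=4$ where parity demands it) by direct inspection of the explicit bases of $\aa_k(3)$ listed in SM~\ref{secsu8}, and then induct on $n$ using the relation $\aa_k(n+1) = \Lie{(\aa_k(n) \otimes I) \cup (I \otimes \aa_k(n))}$: a candidate $P$ of the short or uniform form in question would have to be either inherited from $\aa_k(n) \otimes I$ (contradicting the inductive hypothesis) or arise through a bracket crossing the seam between qubits $n$ and $n+1$, which the generators' local structure does not produce. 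The bulk of the remaining work is carrying out this bookkeeping uniformly across the surviving families.
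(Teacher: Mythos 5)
Your reduction is sound: since both $\aa_k(n)$ and its centralizer in $\su(2^n)$ are spanned by the Pauli strings they contain (Lemma \ref{lemgen}, Remark \ref{remstab2}), the center is spanned by the non-identity elements of $\Stab(\aa_k(n))$ that happen to lie in $\aa_k(n)$, so it suffices to rule those out; this parallels the first half of the paper's argument, which reduces to a single central Pauli string. The gap is in how you rule them out. The $\mathbb{F}_2$-symplectic test is only a necessary condition, and it is inconclusive exactly where the lemma has content: for $\aa_{13}(n)$ and $\aa_{20}(n)$ the encoding of $P_X$ lies in the $\mathbb{F}_2$-span of the generators (indeed every single $X_j$ is in the algebra), for $\aa_{15}(n)$ the encoding of $X_1$ does, and for $\aa_3,\aa_5,\aa_6,\aa_7,\aa_{10}$ with even $n$ the uniform strings do. For these cases you offer a base case plus an induction whose key step is an assertion rather than an argument: a Pauli string lies in $\aa_k(n+1)$ precisely when it is proportional to some nested commutator of the generators, of arbitrary depth, and many such commutators do straddle the seam between qubits $n$ and $n+1$; the dichotomy ``inherited from $\aa_k(n)\otimes I$ or produced by a seam-crossing bracket that the local structure does not produce'' names no invariant that actually excludes, say, $X^{\otimes(n+1)}$ from arising. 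As written, ``the bulk of the remaining work'' is exactly the proof.

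The paper closes this uniformly, with no case analysis, by an observation you could graft directly onto your reduction. Any Pauli string $c\in\aa_k(n)$ is proportional to a nested commutator $\ad_{a_1}\cdots\ad_{a_r}(a_{r+1})$ of generators. If $r=0$, then $c$ is a generator, and a central generator would have to lie in $Z\bigl(\Stab(\aa_k(n))\bigr)$, which one checks never happens for $1\le k\le 22$ (a finite inspection using Proposition \ref{pstab1} and Lemma \ref{lemstab}). If $r\ge1$, write $c=[a_1,a]$ with $a=\ad_{a_2}\cdots\ad_{a_r}(a_{r+1})$ again proportional to a Pauli string; by Corollary \ref{lemp1}, $[a_1,c]=[a_1,[a_1,a]]$ is a nonzero multiple of $a$, so $c$ fails to commute with $a_1\in\aa_k(n)$ and cannot be central. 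Since your candidates $P\in\Stab(\aa_k(n))$ commute with every generator, this immediately shows $iP\notin\aa_k(n)$ for $P\ne I^{\otimes n}$, eliminating the $\mathbb{F}_2$ bookkeeping and the induction on $n$ altogether; without this (or some comparable invariant for the inconclusive families), your proposal is incomplete.
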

\begin{proof}
Due to Lemma \ref{lemgen}, $\aa_k(n)$ has a basis $\mathcal{B}\subseteq i\mcP_n \cap \aa_k(n)$ consisting of Pauli strings.
Suppose that $\aa_k(n)$ has a central element $c\ne 0$, and write $c$ as a linear combination of basis vectors:
\begin{equation*}
c = \sum \alpha_j c_j, \qquad \alpha_j \in\mathbb R, \;\; \alpha_j\ne0, \;\; c_j \in\mathcal{B}.
\end{equation*}
We claim that all $c_j$ in this expression are central too. 

Indeed, suppose that $[b,c_j]\ne 0$ for some index $j$ and a basis vector $b\in\mathcal{B}$.
Since $b,c_j\in i\mcP_n$, we have $[b,[b,c_j]]=-4c_j$ whenever $[b,c_j]\ne 0$, by Corollary \ref{lemp1}.
Hence,
\begin{equation*}
0 = [b,[b,c]] = \sum \alpha_j [b,[b,c_j]] = -4 \sum\nolimits' \alpha_j c_j,
\end{equation*}
where $\sum'$ denotes the sum over all indices $j$ such that $[b,c_j]\ne0$.
This contradicts the fact that all $\alpha_j\ne0$ and the vectors $c_j$ are linearly independent.

Therefore, without loss of generality, we can take $c\in\mathcal{B}$ to be itself one of the basis vectors. 
One can verify by inspection that the generators of $\aa_k(n)$ are not central for $1\le k\le 22$ and $n\ge3$; 
for example, by checking that $Z\bigl(\Stab(\aa_k(n))\bigr)$ does not contain any of the generators of $\aa_k(n)$.
Thus, we can write $c$ in the form
\begin{equation*}
c = \ad_{a_1} \ad_{a_2} \cdots \ad_{a_r}(a_{r+1}),
\end{equation*}
for some $r\ge1$ and generators $a_1,\dots,a_{r+1}$.
Since all generators $a_j \in i\mcP_n$, we have $a :=\ad_{a_2} \cdots \ad_{a_r}(a_{r+1}) \in i\mcP_n$.
Then $c=[a_1,a] \ne 0$, and from Corollary \ref{lemp1}, we get $-4a=[a_1,[a_1,a]] = [a_1,c]=0$, 
which implies that $c=0$, a contradiction.
\end{proof}

\subsection{Upper bounds for \texorpdfstring{$\aa_k(n)$}{a\_k(n)}}\label{secup}

In this subsection, we establish upper bounds for the Lie algebras $\aa_k(n)$, i.e., we find certain subalgebras $\g_k(n)^{\theta_k} \subseteq \su(2^n)$ that contain $\aa_k(n)$.
Then, in the next subsection \ref{seclow}, we will prove that these bounds are exact, that is $\aa_k(n) = \g_k(n)^{\theta_k}$.
While $\aa_k(n)$ is defined in terms of its generators, $\g_k(n)^{\theta_k}$ is defined as the set of elements of $\su(2^n)$ that are fixed under certain automorphisms and involutions. This will allow us, in the following subsection \ref{secgkn}, to identify the Lie algebras $\aa_k(n)$ as direct sums of $\su$, $\so$, and $\sp$ Lie algebras.

We start by recalling that any Pauli string $P\in\mcP_n$ defines an automorphism of $\su(2^n)$ by conjugation $a\mapsto P a P$ (recall that $P=P^\dagger=P^{-1}$); see Lemma \ref{leminv2}.
We will denote by $\su(2^n)^P$ the set of \emph{fixed points} under this
automorphism, i.e., the set of all $a\in\su(2^n)$ such that $P a P=a$. The latter is equivalent to
$P a=a P$; hence $\su(2^n)^P$ is equal to the \emph{centralizer} of $P$, i.e., the set of all $a\in\su(2^n)$ that commute with $P$ (see Remark \ref{remstab2}).
More generally, for a set $\Phi$ of automorphisms of a Lie algebra $\g$, we will
denote by $\g^\Phi$ the set of fixed points $a\in\g$ such that $\phi(a)=a$ for all $\phi\in\Phi$.

Given a subalgebra $\s\subseteq\su(2^n)$, recall from Sect.\ \ref{secstab}, that its stabilizer
$\Stab(\s)$ consists of all Pauli strings $P\in\mcP_n$ such that $[a,P]=0$ for every $a\in\s$.
On the other hand, the centralizer $\su(2^n)^{\Stab(\s)}$ of $\Stab(\s)$ in $\su(2^n)$ consists of all $a\in\su(2^n)$ such that $[a,P]=0$ for every $P\in\Stab(\s)$.
Hence, by definition,
\begin{equation}\label{stabcent1}
\s \subseteq \su(2^n)^{\Stab(\s)}.
\end{equation}
This simple observation will be the key to finding upper bounds for our subalgebras $\aa_k(n)$,
because we have already determined their stabilizers in Proposition \ref{pstab1}.
Here is another observation, which in some cases will allow us to further reduce the upper bound.

\begin{lemma}\label{eq:stabcent2}
For any subalgebra $\s\subseteq\su(2^n)$, we have
\begin{equation*}
\Stab(\s) \cap \su(2^n)^{\Stab(\s)} = 
Z\bigl(\Stab(\s)\bigr) \setminus\{I^{\otimes n}\} \subseteq Z\bigl(\su(2^n)^{\Stab(\s)}\bigr),
\end{equation*}
where $Z(G)$ denotes the center of a group or an algebra $G$.
\end{lemma}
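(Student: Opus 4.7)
The plan is to reduce both statements directly to the definitions of $\Stab(\cdot)$, of the centralizer $\su(2^n)^{(\cdot)}$, and of the center of a group or Lie algebra, being careful about the notational convention (cf.\ Remark \ref{remstab2}) that identifies a Hermitian Pauli string $P \in \mcP_n$ with the skew-Hermitian element $iP \in \su(2^n)$ whenever $iP$ is actually traceless, i.e.\ whenever $P \ne I^{\otimes n}$.

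For the equality, I will start from the left-hand side: a Pauli string $P \in \Stab(\s)$ belongs (after this identification) to $\su(2^n)^{\Stab(\s)}$ precisely when $iP$ commutes with every element of $\Stab(\s)$. Multiplying by the scalar $i$ does not change whether two matrices commute, so this is equivalent to requiring that $P$ commute with every $Q \in \Stab(\s)$; that is, $P \in Z\bigl(\Stab(\s)\bigr)$. The only subtlety is that $i I^{\otimes n}$ has nonzero trace and so is not itself an element of $\su(2^n)$, which is exactly why the identity must be removed from the right-hand side. This yields the stated equality.

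For the inclusion $Z\bigl(\Stab(\s)\bigr) \setminus \{I^{\otimes n}\} \subseteq Z\bigl(\su(2^n)^{\Stab(\s)}\bigr)$, I will take $P \in Z\bigl(\Stab(\s)\bigr) \setminus \{I^{\otimes n}\}$; the equality just proved already places $iP$ inside $\su(2^n)^{\Stab(\s)}$, so only the centrality remains to be checked. But by the very definition of the centralizer, any $a \in \su(2^n)^{\Stab(\s)}$ commutes with every element of $\Stab(\s)$, and since $P \in \Stab(\s)$ we conclude $[a,P] = 0$ and hence $[a, iP] = i[a,P] = 0$. Therefore $iP$ commutes with every element of $\su(2^n)^{\Stab(\s)}$, which is exactly the statement that $iP \in Z\bigl(\su(2^n)^{\Stab(\s)}\bigr)$.

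The argument is essentially a triple unwinding of definitions together with the $P \leftrightarrow iP$ identification, so there is no serious obstacle; the only point one must be careful about is the trace condition, which forces $I^{\otimes n}$ to be excluded on the right-hand side of the equality and which is the reason the inclusion is stated only as an inclusion rather than an equality.
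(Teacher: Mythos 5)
Your proof is correct and follows essentially the same route as the paper's: both arguments are a direct unwinding of the definitions of $\Stab(\s)$, the centralizer, and the center, with the sole subtlety being that $iI^{\otimes n}$ fails the trace condition and so $I^{\otimes n}$ must be excluded. Your explicit handling of the $P \leftrightarrow iP$ identification is a welcome clarification but does not change the substance of the argument.
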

\begin{proof}
Elements $z \in \Stab(\s) \cap \su(2^n)^{\Stab(\s)}$ satisfy $[z,a]=0$ for every $a\in\su(2^n)^{\Stab(\s)}$ since $z\in\Stab(\s)$,
and $[z,P]=0$ for every $P\in\Stab(\s)$ since $z\in\su(2^n)^{\Stab(\s)}$.
Hence, such $z$ are central 
in both $\Stab(\s)$ and $\su(2^n)^{\Stab(\s)}$. However, $I^{\otimes n}$ is excluded, because $I^{\otimes n} \not\in \su(2^n)$.
\end{proof}

As the Lie algebras $\s=\aa_k(n)$ for $1\le k\le 22$, $n\ge3$ have trivial centers (Lemma \ref{lemp3}), for them we can reduce the upper bound $\su(2^n)^{\Stab(\s)}$ 
if we quotient by the central elements $Z\bigl(\Stab(\s)\bigr) \setminus\{I^{\otimes n}\}$.
Thus, we introduce the notation
\begin{equation}\label{stabcent3}
\g_k(n) := \su(2^n)^{\Stab(\aa_k(n))} \big/ \Span\bigl( Z\bigl(\Stab(\aa_k(n))\bigr) \setminus\{I^{\otimes n}\} \bigr),
\end{equation}
and from the above discussion, we have
\begin{equation}\label{stabcent4}
\aa_k(n) \subseteq \g_k(n), \qquad 1\le k\le 22, \quad n\ge 3.
\end{equation}
Using Proposition \ref{pstab1} and Lemma \ref{lemstab}, we can write explicitly:
\begin{align}
\label{listgkn1}
\g_3(n) &= \g_6(n) = 
\begin{cases} 
\su(2^n)^{\{P_X,P_{YZ},P_{ZY}\}} / \Span\{P_X,P_{YZ},P_{ZY}\}, & n \;\;\mathrm{even}, \\
\su(2^n)^{\{P_X,P_{YZ},P_{ZY}\}}, & n \;\;\mathrm{odd},
\end{cases} \\
\label{listgkn2} 
\g_5(n) &= \g_{10}(n) = 
\begin{cases} 
\su(2^n)^{\{P_{XYZ},P_{YZX},P_{ZXY}\}} / \Span\{P_{XYZ},P_{YZX},P_{ZXY}\}, & n \;\;\mathrm{even}, \\
\su(2^n)^{\{P_{XYZ},P_{YZX},P_{ZXY}\}}, & n \;\;\mathrm{odd},
\end{cases} \\
\label{listgkn3} 
\g_7(n) &= 
\begin{cases} 
\su(2^n)^{\{P_X,P_Y,P_Z\}} / \Span\{P_X,P_Y,P_Z\}, & n \;\;\mathrm{even}, \\
\su(2^n)^{\{P_X,P_Y,P_Z\}}, & n \;\;\mathrm{odd},
\end{cases} \\
\label{listgkn4} 
\g_9(n) &= \su(2^n)^{\{X_1,Y_1X_2,Z_1X_2\}}, \\
\label{listgkn5} 
\g_{11}(n) &= \g_{16}(n) = \su(2^n), \\ 
\label{listgkn6} 
\g_{13}(n) &= \g_{20}(n) = \su(2^n)^{P_X} / \Span\{P_X\}, \\
\label{listgkn7} 
\g_{15}(n) &= \su(2^n)^{X_1} / \Span\{X_1\}.
\end{align}

It turns out that in some cases the inclusions \eqref{stabcent4} are strict, and we need to reduce the Lie algebras $\g_k(n)$ further to smaller subalgebras. 
We do that by finding suitable involutions and then taking their fixed points (see Sect.\ \ref{secpauli3}).

\begin{theorem}\label{thmtheta}
We have
\begin{equation}\label{stabcent6}
\aa_k(n) = \g_k(n), \qquad k=6,7,10,13,15,20, \quad n\ge 3.
\end{equation}
In the remaining cases,
there exists an involution $\theta_k$ of\/ $\g_k(n)$, such that
\begin{equation}\label{stabcent5}
\aa_k(n) = \g_k(n)^{\theta_k}, \qquad k=3,5,9,11,16, \quad n\ge 3,
\end{equation}
is the set of fixed points under $\theta_k$. 
\end{theorem}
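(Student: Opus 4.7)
The plan is to produce, for each remaining exponential index $k$, an involution of $\g_k(n)$ whose fixed subalgebra accommodates every element of $\aa_k(n)$. I would start from the inclusion $\aa_k(n) \subseteq \g_k(n)$ of \eqref{stabcent4} and then split the work by case.

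For the indices $k \in \{6,7,10,13,15,20\}$, the theorem asserts $\aa_k(n) = \g_k(n)$ outright. The inclusion $\aa_k(n) \subseteq \g_k(n)$ is already \eqref{stabcent4}, so the only content is the reverse inclusion, which I would defer to the lower-bound section \ref{seclow}. Hence no extra work is required here for these indices.

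For $k \in \{3,5,9,11,16\}$, I would invoke Lemma \ref{leminv2} and set $\theta_k(a) = -Q_k a^T Q_k$ for a suitable Pauli string $Q_k \in \mcP_n$. For a single Pauli string $a$, one has $a^T = (-1)^{y(a)} a$, where $y(a)$ denotes the number of $Y$'s in $a$, and $Q_k a Q_k = \pm a$ according to whether $Q_k$ commutes or anticommutes with $a$. Consequently $\theta_k(a) = a$ reduces to a single parity condition combining $y(a)$ and the commutation sign between $Q_k$ and $a$, and it suffices to verify this parity on each generator of $\aa_k(n)$, since involutions respect brackets and real linear combinations. Guided by the example in Sect.~\ref{secexa9n}, I would take
\begin{equation*}
Q_{11} = Q_{16} = I^{\otimes n}, \qquad Q_9 = I \otimes Y \otimes Z^{\otimes (n-2)},
\end{equation*}
and would hunt for $Q_3, Q_5$ by solving the analogous parity constraints on the generators $X_iX_{i+1}, Y_iZ_{i+1}$ and $X_iY_{i+1}, Y_iZ_{i+1}$, respectively. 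For $k = 11, 16$, the generators $XY, YX, YZ, ZY$ of $\aa_{16}$ each contain exactly one $Y$, so $\theta_{16}(a) = -a^T = a$ on each of them, making $\g_{16}(n)^{\theta_{16}} = \so(2^n)$; Lemma~\ref{leminc2} then handles $k = 11$ via $\aa_{11}(n) = \aa_{16}(n)$. For $k = 9$, one checks by direct parity count that $Q_9$ commutes with every $X_iY_{i+1}$ and anticommutes with every $X_iZ_{i+1}$, treating the positions $i = 1, 2$ separately as boundary cases.

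Beyond fixation on generators, I still need each $\theta_k$ to descend to the quotient $\g_k(n) = \su(2^n)^{\Stab(\aa_k(n))}/\Span(Z(\Stab) \setminus \{I^{\otimes n}\})$, which requires that $\theta_k$ preserve both the centralizer and the central subspace being factored out; this in turn reduces to checking that $Q_k$ commutes up to sign with each element of the stabilizer enumerated in Proposition~\ref{pstab1}, a finite and routine verification. The main obstacle will be the indices $k = 3, 5$: Theorem~\ref{the:classification} predicts that the isomorphism type of $\aa_k(n)$ depends on $n$ modulo $8$ and $6$ respectively, so I expect either $Q_3, Q_5$ themselves or at least their symmetry $Q^T = \pm Q$ to exhibit a residue-dependent form. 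Carrying out the parity bookkeeping uniformly across all residue classes, while simultaneously tracking descent to the quotient, will be the central case analysis of the proof.
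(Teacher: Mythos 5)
Your proposal establishes only one half of the theorem. Everything you construct — the involutions $\theta_k(g) = -Q_k g^T Q_k$, the generator-by-generator parity checks, and the verification that $\theta_k$ descends to the quotient $\g_k(n)$ — yields only the containment $\aa_k(n) \subseteq \g_k(n)^{\theta_k}$, and for $k=6,7,10,13,15,20$ you explicitly defer the reverse inclusion to ``the lower-bound section.'' But the theorem asserts equalities, and the reverse inclusion is the substantive half of its proof, not an external resource you can cite: in the paper it occupies Sect.\ \ref{seclow}, where one argues by induction on $n$, using Lemma \ref{lemp2} to reduce an arbitrary Pauli string of $\g_k(n)^{\theta_k}$ to one containing an $I$ in some position, stripping that $I$, applying the inductive hypothesis in $\g_k(n-1)^{\theta_k}$, and re-inserting; for $k=3,5,7$ this additionally requires the explicit decompositions of $\aa_3(4)$, $\aa_5(6)$, $\aa_7(4)$ over their stabilizers (Lemma \ref{lem1}) together with Lemmas \ref{lem2} and \ref{lem3}, and separate case lemmas handle $k=9,13,15,16$. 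None of this appears in your plan, so the equalities $\aa_k(n)=\g_k(n)$ and $\aa_k(n)=\g_k(n)^{\theta_k}$ remain unproved; asserting, e.g., $\g_{16}(n)^{\theta_{16}}=\so(2^n)$ is immediate, but the content of that case is precisely that $\aa_{16}(n)$ exhausts $\so(2^n)$.

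There is also a concrete obstruction to your plan for $k=3,5$ of hunting for Pauli strings $Q_3,Q_5$ in the original frame. Every generator $X_iY_{i+1}$, $Y_iZ_{i+1}$ of $\aa_5(n)$ contains exactly one $Y$, so the fixation condition forces $Q_5$ to commute with all generators, i.e.\ $Q_5\in\Stab(\aa_5(n))=\{P_I,P_{XYZ},P_{YZX},P_{ZXY}\}$; since every element of $\g_5(n)$ commutes with the stabilizer by construction, all such choices induce one and the same involution $g\mapsto -g^T$ on $\g_5(n)$. The residue-dependent $Q_5$ you anticipate therefore does not exist in this frame — the $n\bmod 6$ behaviour only appears when one works out how transpose restricts to the centralizer. (For $k=3$ a period-four string such as $P_{IZXY}$ does satisfy the parity constraints, so that case is fine at the upper-bound level.) This is not fatal to the containment, but it removes the leverage you were counting on for the ``central case analysis'': the paper instead conjugates $\aa_3(n)$ and $\aa_5(n)$ into $\aa_7(n)$ by the non-Pauli unitaries $\varphi_n,\gamma_n$, which equalizes all three stabilizers to $\{P_I,P_X,P_Y,P_Z\}$ and uses the Pauli strings $Q_3=P_{ZIYX}$, $Q_5=P_{IYZ}$ only in the transformed picture; that conjugation is what makes both the exactness proof (Lemma \ref{lem4}) and the subsequent identification of the fixed-point algebras in Sect.\ \ref{secgkn} tractable. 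You would either need to adopt that device or supply new exactness and identification arguments adapted to your involutions.
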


In the remainder of this subsection, we will construct the involution $\theta_k$ explicitly, 
and will check that $\aa_k(n) \subseteq \g_k(n)^{\theta_k}$.
The opposite inclusion will be proved in the next subsection. Then, in Sect.\ \ref{secgkn}, we will identify the Lie algebras $\g_k(n)^{\theta_k}$ with those
from Theorem \ref{the:classification}.
For $k=9,11,16$, the involution $\theta_k$ will have the form (cf.\ Lemma \ref{leminv2}):
\begin{equation}\label{thetaq}
\theta(g) = -Q g^T Q
\end{equation}
for some given Pauli string $Q\in\mcP_n$. 

\begin{lemma}\label{lemtheta1}
For any fixed Pauli string $Q\in\mcP_n$, \eqref{thetaq} defines an involution of $\su(2^n)$, which restricts to an involution of $\g_k(n)$ for all $k$.
\end{lemma}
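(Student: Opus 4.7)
The plan is to verify the two claims in the lemma in sequence: first that $\theta(g) = -Q g^T Q$ is an involution of $\su(2^n)$, and then that it preserves the structure used to define $\g_k(n)$ and so descends to an involution on it. The first claim is immediate from Lemma \ref{leminv2} applied with $U=Q$: any Pauli string $Q \in \mcP_n$ is unitary, satisfies $Q^\dagger = Q$, and has $Q^T = \pm Q$ according to the parity of the number of $Y$'s in $Q$, so the hypotheses of that lemma are met.

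The key observation for the second claim is that $\theta$ sends every basis element $iP$ of $\su(2^n)$ to $\pm iP$. Indeed, for any $P \in \mcP_n$ we have $P^T = \pm P$ (with the sign set by the parity of the number of $Y$'s in $P$), and by Lemma \ref{lemp0}, $Q P Q = \pm P$ (with the sign set by whether $Q$ and $P$ commute or anticommute); combining these gives $\theta(iP) = -iQP^TQ = \pm iP$. In particular, $\theta$ preserves any real subspace of $\su(2^n)$ that admits a basis of $i$-scaled Pauli strings.

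By Remark \ref{remstab2}, the centralizer $\su(2^n)^{\Stab(\aa_k(n))}$ is spanned by those $iP$ with $P \in \mcP_n \setminus \{I^{\otimes n}\}$ commuting with every element of $\Stab(\aa_k(n))$, and the ideal $\Span\bigl(Z(\Stab(\aa_k(n))) \setminus \{I^{\otimes n}\}\bigr)$ appearing in \eqref{stabcent3} likewise has a Pauli-string basis. Hence $\theta$ preserves both, and therefore induces a well-defined Lie algebra involution on the quotient $\g_k(n)$, which is what is meant in the statement.

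The main obstacle is essentially conceptual rather than computational: once one notices that the two sources of signs ($P^T = \pm P$ from transposition and $QPQ = \pm P$ from Pauli-string conjugation) combine so that $\theta$ stabilizes each Pauli-string line individually, no information about the specific algebra $\aa_k(n)$, its stabilizer, or its center is needed. This is precisely why the conclusion holds uniformly in $k$ and for every choice of $Q$.
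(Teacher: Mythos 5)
Your proof is correct and follows essentially the same route as the paper's: both invoke Lemma \ref{leminv2} with $U=Q$ for the first claim, and both rest on the observation that $\theta(P)=\pm P$ for every Pauli string $P$ to get invariance of the centralizer and of $\Span\bigl(Z(\Stab(\aa_k(n)))\setminus\{I^{\otimes n}\}\bigr)$, hence an induced involution on the quotient $\g_k(n)$. The only difference is cosmetic: where the paper checks $[\theta(g),P]=\pm\theta([g,P])=0$ for an arbitrary $g$ in the centralizer, you instead use the Pauli-string basis of the centralizer from Remark \ref{remstab2}, which is a harmless variant that in fact also yields the paper's Corollary \ref{cor:gknth}.
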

\begin{proof}
We already know from Lemma \ref{leminv2} that $\theta$ is an involution of $\su(2^n)$, so we only need to check that $\theta(g) \in \g_k(n)$ for all $g\in \g_k(n)$. 
As before, write $\s=\aa_k(n)$ for short. Consider an element 
$g\in \su(2^n)^{\Stab(\s)}$, which means that $[g,P]=0$ for all $P\in\Stab(\s)$. Then
\begin{equation*}
\theta(P) = -Q P^T Q = \pm P,
\end{equation*}
because $P^T=\pm P$ and $PQ = \pm QP$ for any two Pauli strings $P,Q\in\mcP_n$
(the signs here are not coordinated).
Hence
\begin{equation*}
[\theta(g),P] = \pm [\theta(g),\theta(P)] = \pm \theta([g,P]) = 0,
\end{equation*}
which implies that $\theta(g)\in \su(2^n)^{\Stab(\s)}$.
Furthermore,
\begin{equation*}
\theta(P) = \pm P \in \Span\bigl( Z\bigl(\Stab(\s)\bigr) \setminus\{I^{\otimes n}\} \bigr) \quad\text{for all}\quad 
P \in Z\bigl(\Stab(\s)\bigr) \setminus\{I^{\otimes n}\}.
\end{equation*}
Therefore, $\theta(g) \in \g_k(n)$ for $g\in \g_k(n)$.
\end{proof}

Let us record the following consequence of the proof of Lemma \ref{lemtheta1}, which will be useful later.

\begin{corollary}\label{cor:gknth}
Every element of $\g_k(n)^{\theta_k}$ is a linear combination of Pauli strings that are themselves in $\g_k(n)^{\theta_k}$, i.e.,
\begin{equation*}
\g_k(n)^{\theta_k} = \Span_{\mathbb R} \bigl( i\mcP_n \cap \g_k(n)^{\theta_k} \bigr).
\end{equation*}
\end{corollary}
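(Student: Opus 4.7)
The plan is to extract from the proof of Lemma \ref{lemtheta1} that $\theta_k$ acts on each Pauli-string element of $\g_k(n)$ by a sign $\pm1$, and then conclude by a linear independence argument. First I would note that $\g_k(n)$ admits a basis consisting of (cosets of) Pauli strings times $i$: indeed, by Remark \ref{remstab2}, the centralizer $\su(2^n)^{\Stab(\aa_k(n))}$ is spanned over $\mathbb{R}$ by those $iP$ with $P\in\mcP_n\setminus\{I^{\otimes n}\}$ that commute with every element of $\Stab(\aa_k(n))$; moreover, the subspace $\Span_{\mathbb{R}}\bigl(Z(\Stab(\aa_k(n)))\setminus\{I^{\otimes n}\}\bigr)$ by which we quotient in \eqref{stabcent3} is itself Pauli-string-spanned, so a basis of the quotient $\g_k(n)$ can be selected from the cosets of these Pauli-string generators.

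Next I would reuse the sign computation carried out in the proof of Lemma \ref{lemtheta1}: for any Pauli string $P\in\mcP_n$, we have $P^T=\pm P$ (with sign depending on the parity of the number of $Y$'s in $P$) and $PQ=\pm QP$ by Lemma \ref{lemp0}. Substituting into $\theta_k(iP)=-Q(iP)^T Q$ and using $Q\cdot Q=I^{\otimes n}$ yields $\theta_k(iP)=\epsilon_P\,iP$ for some sign $\epsilon_P\in\{+1,-1\}$ attached to each basis Pauli string. Hence $\theta_k$ acts diagonally on the chosen Pauli-string basis of $\g_k(n)$ with eigenvalues $\pm 1$.

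To conclude, I would expand any $a\in\g_k(n)^{\theta_k}$ as $a=\sum_j c_j(iP_j)$ with $c_j\in\mathbb{R}$ and distinct basis vectors $iP_j$. The equation $\theta_k(a)=a$ then reads $\sum_j c_j(\epsilon_{P_j}-1)(iP_j)=0$, and linear independence forces $c_j=0$ whenever $\epsilon_{P_j}=-1$. Hence $a$ is a real linear combination of those $iP_j$ with $\epsilon_{P_j}=+1$, each of which lies in $i\mcP_n\cap\g_k(n)^{\theta_k}$, proving the equality. The only mild point to track is that the Pauli-string spanning set descends cleanly to a basis of the quotient $\g_k(n)$, but this is automatic because both the ambient centralizer and the subspace modded out are Pauli-string-spanned; no substantive obstacle arises beyond the sign bookkeeping already present in the proof of Lemma \ref{lemtheta1}.
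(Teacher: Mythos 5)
Your proof is correct and follows essentially the same route as the paper: both arguments rest on the observation from the proof of Lemma \ref{lemtheta1} that $\theta_k$ (and conjugation by stabilizer elements) sends each Pauli string to $\pm$ itself, and then conclude by linear independence of Pauli strings that an element fixed by $\theta_k$ can only involve the $+1$-eigenvalue strings. Your explicit handling of the quotient in \eqref{stabcent3} via a Pauli-string basis of $\g_k(n)$ is slightly more careful than the paper's summand-by-summand check in $\su(2^n)$, but it is the same underlying argument, not a different method.
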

\begin{proof}
We saw in the proof of Lemma \ref{lemtheta1} that $\theta_k(P) = \pm P$ for any Pauli string $P\in\mcP_n$. Similarly, for any given $S\in \Stab(\aa_k(n))$,
we have $SPS = \pm P$. Elements $g$ of $\g_k(n)^{\theta_k}$ are determined by the conditions
\begin{equation*}
SgS = g = \theta_k(g) \qquad\text{for all}\quad S\in \Stab(\aa_k(n)).
\end{equation*}
Writing $g\in\su(2^n)$ as $i$ times a real linear combination of Pauli strings, we see that $g$ satisfies these conditions if and only if every summand does.
\end{proof}

Now we go back to the construction of the involutions $\theta_k$.

\begin{lemma}\label{lemtheta2}
For $k=9,11,16$, we define $\theta_k(g) = -Q_k g^T Q_k$, where the Pauli strings $Q_k$ are given as follows:
\begin{align}
Q_9 &= IYZ \cdots Z, \label{thetaq9} \\
Q_{11} &= Q_{16} = I \cdots I \quad\Rightarrow\quad \theta_{11}(g) = \theta_{16}(g) = -g^T. \label{thetaq11}
\end{align}
Then $\aa_k(n) \subseteq \g_k(n)^{\theta_k}$.
\end{lemma}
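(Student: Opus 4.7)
The plan is to exploit the fact that $\g_k(n)^{\theta_k}$ is automatically a Lie subalgebra of $\g_k(n)$, because $\theta_k$ is a Lie algebra automorphism (its fixed-point set is closed under the bracket). Combined with the inclusion $\aa_k(n) \subseteq \g_k(n)$ from \eqref{stabcent4}, this reduces the claim to showing that each Pauli-string generator of $\aa_k(n)$ is fixed by $\theta_k$. Since $\theta_k$ is already known to be a well-defined involution of $\g_k(n)$ by Lemma \ref{lemtheta1}, no further structural work is needed; the whole lemma becomes a finite verification on a short list of generators.

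For $k = 11$ and $k = 16$ we have $Q_k = I^{\otimes n}$, so the condition $\theta_k(ia) = ia$ for a Pauli string $a\in\mcP_n$ reduces to $a^T = -a$, which by the identity $a^T = (-1)^{\# Y}\, a$ from Sect.\ \ref{secpauli1} holds exactly when $a$ contains an odd number of $Y$'s. I would then inspect the generators directly: the $n$-qubit generators of $\aa_{11}(n)$ are $X_iY_{i+1}$, $Y_iX_{i+1}$, $Y_iZ_{i+1}$, and those of $\aa_{16}(n)$ are $X_iY_{i+1}$, $Y_iX_{i+1}$, $Y_iZ_{i+1}$, $Z_iY_{i+1}$, and each contains exactly one $Y$. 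This settles both cases at once.

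The case $k = 9$ requires a direct computation. The generators are $X_iY_{i+1}$ and $X_iZ_{i+1}$ for $1 \le i \le n-1$, and the task is to verify $-Q_9\, a^T Q_9 = a$ for each. Using $a^T = (-1)^{\# Y} a$ together with Lemma \ref{lemp0}, namely that $Q_9 a Q_9 = (-1)^{N(a)} a$ where $N(a)$ is the number of positions at which $a$ anticommutes with $Q_9 = IYZ\cdots Z$, each check reduces to a position-by-position count against the pattern of $Q_9$. The generator $X_iZ_{i+1}$ has no $Y$'s, while $X_iY_{i+1}$ has one, so in each case one needs $N(a)$ to have the matching parity.

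The only (mild) obstacle is that the pattern of $Q_9$ is non-uniform at positions $1$ and $2$, so the verification naturally splits into the sub-cases $i = 1$, $i = 2$, and $i \ge 3$; in each sub-case the contributions from the at most two non-identity positions of $a$ together with the fixed positions $1,2$ of $Q_9$ produce the expected parity. No deeper difficulty is anticipated, and the entire lemma amounts to this small, organised finite check on the generators.
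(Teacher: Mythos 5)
Your proposal is correct and follows essentially the same route as the paper: since $\theta_k$ respects the bracket and $\aa_k(n)\subseteq\g_k(n)$ is already known, it suffices to check that each generator is fixed, which for $k=11,16$ is the odd-$Y$ count and for $k=9$ the combination of the transpose sign with (anti)commutation against $Q_9$. The only cosmetic difference is that the paper verifies the $k=9$ case uniformly in $i$ (each $X_iY_{i+1}$ commutes with $Q_9$ while each $X_iZ_{i+1}$ anticommutes), whereas you split into the sub-cases $i=1,2,\ge3$; the parity bookkeeping comes out the same.
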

\begin{proof}
We already know that $\aa_k(n) \subseteq \g_k(n)$, so we only need to check that $\theta_k(g)=g$ for all $g \in \aa_k(n)$.
It is enough to check this only for the generators $g$ of $\aa_k(n)$, because $\theta_k([a,b]) = [\theta_k(a),\theta_k(b)]$.

For $k=9$, we take $g=X_i Y_{i+1}$ or $g=X_i Z_{i+1}$. In the first case, $g^T=-g$ and $g Q_9 = Q_9 g$; while in the second case,
$g^T=g$ and $g Q_9 = -Q_9 g$. Hence, in both cases we have $\theta_9(g)=g$.

For $k=11$, the generators are $g = X_i Y_{i+1}, Y_i X_{i+1}, Y_i Z_{i+1}$; while for $k=16$, the generators are
$g = X_i Y_{i+1}, Y_i X_{i+1}, Y_i Z_{i+1}, Z_i Y_{i+1}$. All of them satisfy $g^T=-g$.
\end{proof}

The remaining cases $k=3$ and $k=5$ are a little more complicated. The trick is to first embed $\aa_3(n)$ and $\aa_5(n)$ as subalgebras of $\aa_7(n)$.
Recall from Sect.\ \ref{seciso} that $\aa_3(n) \subseteq \aa_6(n)$ and $\aa_6(n) \cong \aa_7(n)$
under the automorphism $\varphi_n$ of $\su(2^n)$ that swaps (up to a sign) $Y \rightleftharpoons Z$ on all even qubits (see \eqref{varphi1}, \eqref{varphi2}).
Then $\tilde\aa_3(n) := \varphi_n\aa_3(n) \subset \aa_7(n)$. Likewise, we have $\aa_5(n) \subseteq \aa_{10}(n)$ and $\aa_{10}(n) \cong \aa_7(n)$
under the automorphism $\gamma_n$ of $\su(2^n)$ that applies on the $j$-th qubit $\gamma^{j}$, where $\gamma$ is the cycle $X \mapsto Z \mapsto Y \mapsto X$ 
(see \eqref{rho1}, \eqref{rho2}).
Then $\tilde\aa_5(n) := \gamma_n\aa_5(n) \subset \aa_7(n)$. 

We consider the involutions
\begin{equation}\label{tildethetaq35}
\tilde\theta_k(g) = -Q_k g^T Q_k, \qquad k=3,5,
\end{equation}
where
\begin{align}
Q_3 &= P_{ZIYX} = (Z_1 Y_3 X_4) (Z_5 Y_7 X_8) (Z_9 Y_{11} X_{12}) \cdots\,, \label{thetaq3} \\
Q_5 &= P_{IYZ} = (Y_2 Z_3) (Y_5 Z_6) (Y_8 Z_9) (Y_{11} Z_{12}) \cdots\,. \label{thetaq5}
\end{align}
Then we define
\begin{equation}\label{thetaq35}
\theta_3 := \varphi_n^{-1}\tilde\theta_3\varphi_n, \qquad 
\theta_5 := \gamma_n^{-1}\tilde\theta_5\gamma_n.
\end{equation}

\begin{lemma}\label{lemtheta3}
For $k=3,5,$ and $\theta_k$ defined as above, we have $\aa_k(n) \subseteq \g_k(n)^{\theta_k}$.
\end{lemma}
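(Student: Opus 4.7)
The plan is to reduce the statement $\aa_k(n)\subseteq \g_k(n)^{\theta_k}$ for $k=3,5$ to a short verification on a handful of generators. Since the inclusion $\aa_k(n)\subseteq \g_k(n)$ has already been established in \eqref{stabcent4}, what remains is to check that $\theta_k(g)=g$ for every $g\in \aa_k(n)$. Because $\theta_k$, being a conjugate by the automorphism $\varphi_n$ (resp.\ $\gamma_n$) of the involution $\tilde\theta_k$ (which was shown in Lemma \ref{lemtheta1} to be an involution of $\su(2^n)$, and hence to restrict to one of $\g_k(n)$), is itself a Lie algebra homomorphism, it is enough to verify the fixed-point condition on generators.

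By the definitions $\theta_3=\varphi_n^{-1}\tilde\theta_3\varphi_n$ and $\theta_5=\gamma_n^{-1}\tilde\theta_5\gamma_n$, the condition $\theta_k(g)=g$ is equivalent to $\tilde\theta_k(\varphi_n(g))=\varphi_n(g)$, respectively $\tilde\theta_k(\gamma_n(g))=\gamma_n(g)$. I would therefore identify the images of the generators of $\aa_3(n)$ and $\aa_5(n)$ under $\varphi_n$ and $\gamma_n$, following the calculations already carried out in the proofs of Lemmas \ref{leminc4} and \ref{leminc5}. For $k=3$, the generators $X_iX_{i+1}$ map to themselves, while $Y_iZ_{i+1}$ map, up to an irrelevant sign, to either $Y_iY_{i+1}$ or $Z_iZ_{i+1}$ according to the residue of $i$ modulo $4$. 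For $k=5$, the generators $X_iY_{i+1}$ and $Y_iZ_{i+1}$ map to one of $X_iX_{i+1}$, $Y_iY_{i+1}$, $Z_iZ_{i+1}$ (with signs), with the choice governed by the residue of $i$ modulo $3$.

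The key observation is that every such target generator $h$ is a symmetric Pauli string, i.e.\ $h^T=h$, so the condition $\tilde\theta_k(h)=-Q_kh^TQ_k=h$ collapses to the statement $\{h,Q_k\}=0$, meaning that the Pauli string $h$ anticommutes with $Q_k$. This anticommutation is determined solely by inspecting the two entries of $Q_k$ in positions $i$ and $i+1$. Since $Q_3=P_{ZIYX}$ was chosen to have period $4$ and $Q_5=P_{IYZ}$ period $3$—matching precisely the periodic structure of the images of the generators—the check reduces to a short case-by-case verification, one residue class at a time, confirming that in every case exactly one of the two relevant positions of $Q_k$ anticommutes with the corresponding Pauli factor of $h$.

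I do not expect any real obstacle: the periods of $Q_3$ and $Q_5$ have been tailored to the structure of $\varphi_n$ (order $4$ on $\{Y,Z\}$) and $\gamma_n$ (order $3$ on $\{X,Y,Z\}$), so the anticommutator test succeeds uniformly. The cleanest presentation is to tabulate, for each residue class of $i$, the pair of Pauli factors of $Q_k$ at positions $i,i+1$ against the pair of Pauli factors of the generator, and to observe in each row that an odd number of anticommutations occurs. This finishes $\aa_k(n)\subseteq \g_k(n)^{\theta_k}$ for $k=3,5$.
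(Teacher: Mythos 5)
Your proposal is correct and follows essentially the same route as the paper's proof: reduce to the generators, conjugate by $\varphi_n$ (resp.\ $\gamma_n$) so the condition becomes $\tilde\theta_k(h)=h$ for the transformed generators, and then observe that each transformed generator $h$ is a symmetric two-site Pauli string anticommuting with $Q_k$, which is checked position-by-position against the period-$4$ (resp.\ period-$3$) pattern of $Q_3=P_{ZIYX}$ (resp.\ $Q_5=P_{IYZ}$). The only minor imprecision is that for $k=3$ the letter pattern of the images ($Y_iY_{i+1}$ vs.\ $Z_iZ_{i+1}$) depends only on the parity of $i$, not on $i$ bmod $4$ (only the irrelevant signs do), but this does not affect the argument.
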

\begin{proof}
As we already know that $\aa_k(n) \subseteq \g_k(n)$, we only need to check that $\theta_k(g)=g$ for all generators $g$ of $\aa_k(n)$.
By conjugation, it is equivalent to check that $\tilde\theta_k(g)=g$ for the generators $g$ of $\tilde\aa_k(n)$.
Applying $\varphi_n$ to the generators of $\aa_3(n)$, we find that the generators of $\tilde\aa_3(n)$ are:
\begin{align*}
&X_1X_2, X_2X_3, X_3X_4, X_4X_5, X_5X_6, X_6X_7, X_7X_8, \dots, \\
&Y_1Y_2, Z_2Z_3, Y_3Y_4, Z_4Z_5, Y_5Y_6, Z_6Z_7, Y_7Y_8, \dots \;.
\end{align*}
Similarly, applying $\gamma_n$ to the generators of $\aa_5(n)$, we find the generators of $\tilde\aa_5(n)$:
\begin{align*}
&Z_1Z_2, Y_2Y_3, X_3X_4, Z_4Z_5, Y_5Y_6, X_6X_7, Z_7Z_8, Y_8Y_9, X_9X_{10}, \dots, \\
&X_1X_2, Z_2Z_3, Y_3Y_4, X_4X_5, Z_5Z_6, Y_6Y_7, X_7X_8, Z_8Z_9, Y_9Y_{10}, \dots \;.
\end{align*}
We observe that all generators $g$ above satisfy
$g^T=g$ and $g Q_k = -Q_k g$; hence, $\tilde\theta_k(g)=g$.
\end{proof}

\subsection{Lower bounds for \texorpdfstring{$\aa_k(n)$}{a\_k(n)}}\label{seclow}

In this subsection, we prove that the upper bounds $\aa_k(n) \subseteq \g_k(n)^{\theta_k}$
established in Sect.\ \ref{secup} are exact. The proof will rely on the next lemma.

\begin{lemma}\label{lemp2}
Let $\s$ be a Lie subalgebra of $\su(2^n)$. If $\ad_{a_1} \cdots \ad_{a_r}(b) \in\s\setminus\{0\}$ for some Pauli strings $a_1,\dots,a_r$ $\in$ $i\mcP_n\cap\s$ and $b\in i\mcP_n$, then $b\in \s$.
\end{lemma}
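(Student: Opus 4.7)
The plan is to proceed by induction on $r$. The base case $r=0$ is trivial, since then the iterated commutator reduces to $b$ itself, which is assumed to lie in $\s\setminus\{0\}$.

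The engine of the induction is the following variant of Corollary \ref{lemp1}, tailored to elements of $i\mcP_n$. Any $a\in i\mcP_n$ satisfies $a^2 = -I^{\otimes n}$, so for any scalar multiple $c$ of a Pauli string one has
\[
[a,[a,c]] \;=\; a^2 c - 2aca + c a^2 \;=\; -2c - 2\,aca.
\]
If $a$ and $c$ anticommute (equivalently, $[a,c]\ne 0$, by Lemma \ref{lemp0}), then $aca = -a^2 c = c$, and therefore
\[
[a,c]\ne 0 \;\Longrightarrow\; [a,[a,c]] = -4c.
\]
In other words, applying $\ad_a$ twice recovers $c$ up to a nonzero scalar, so that $c\in\s$ whenever $[a,c]\in\s$ and $a\in\s$.

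For the inductive step, assume the statement for $r-1$ and suppose $\ad_{a_1}\cdots\ad_{a_r}(b)\in\s\setminus\{0\}$. Set
\[
c \;:=\; \ad_{a_2}\cdots\ad_{a_r}(b),
\]
so that the hypothesis reads $[a_1,c]\in\s\setminus\{0\}$. Since $b\in i\mcP_n$ and each $a_j\in i\mcP_n$, iterated application of Lemma \ref{lemp0} (and the fact that products of Pauli strings are Pauli strings up to a factor in $\{\pm 1,\pm i\}$) shows that $c$ is itself a scalar multiple of a Pauli string; moreover $[a_1,c]\ne 0$ forces $c\ne 0$. The displayed identity with $a=a_1$ then gives
\[
c \;=\; -\tfrac{1}{4}\,[a_1,[a_1,c]] \;\in\; \s,
\]
because $a_1\in\s$ and $[a_1,c]\in\s$, and $\s$ is closed under the Lie bracket. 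Thus $\ad_{a_2}\cdots\ad_{a_r}(b) = c \in \s\setminus\{0\}$, and by the inductive hypothesis $b\in\s$.

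I do not anticipate any serious obstacle; the only care required is to track the factors of $i$ when transferring Corollary \ref{lemp1} (stated for Pauli strings in $\mcP_n$) to the setting of elements of $i\mcP_n$, which is exactly the content of the displayed identity $[a,[a,c]]=-4c$.
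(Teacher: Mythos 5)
Your proof is correct and follows essentially the same route as the paper: induction on $r$, with the double-commutator identity $[a,[a,c]]=-4c$ (the $i\mcP_n$ version of Corollary \ref{lemp1}) recovering the Pauli string at each step. You are in fact slightly more careful than the paper in noting that the intermediate element $c=\ad_{a_2}\cdots\ad_{a_r}(b)$ is only a scalar multiple of a Pauli string and in tracking the factor of $i$, but this is exactly the intended argument.
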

\begin{proof}
Using induction on $r$, it is enough to prove the statement for $r=1$. In this case, it follows from Corollary \ref{lemp1}:
$[a_1,b] \in\s\setminus\{0\}$ implies that $b=-4[a_1,[a_1,b]] \in\s$.
\end{proof}

In order to prove that $\aa_k(n) = \g_k(n)^{\theta_k}$, we want to show that every element $b\in\g_k(n)^{\theta_k}$ is in $\aa_k(n)$.
Since, by Corollary \ref{cor:gknth}, $b$ is a linear combination of Pauli strings that are themselves in $\g_k(n)^{\theta_k}$,
we can assume without loss of generality that $b\in i\mcP_n \cap \g_k(n)^{\theta_k}$.
Then the strategy of the proof is to take suitable commutators of $b$ with elements of $i\mcP_n\cap\aa_k(n)$ to produce a Pauli string $c\in i\mcP_n\cap\g_k(n)^{\theta_k}$ 
that has $I$ in one of its positions. Erasing the $I$ will give an element of $\g_k(n-1)^{\theta_k}$, which by induction will be in $\aa_k(n-1)$.
From here, we will obtain that $c\in\aa_k(n)$, and then we can conclude that $b\in\aa_k(n)$ due to Lemma \ref{lemp2}.

In order to realize the above strategy, we will have to do a detailed case-by-case analysis. We start with the cases $k=3,5,7$, for which we need the following lemmas.

\begin{lemma}\label{lem1}
We have vector space decompositions:
\begin{align*}
\aa_3(4) &= (I \otimes \aa_3(3)) + P_X \cdot (I \otimes \aa_3(3)) + P_{YZ} \cdot (I \otimes \aa_3(3)) + P_{ZY} \cdot (I \otimes \aa_3(3)), \\
\aa_5(6) &= (I \otimes \aa_5(5)) + P_{XYZ} \cdot (I \otimes \aa_5(5)) + P_{YZX} \cdot (I \otimes \aa_5(5)) + P_{ZXY} \cdot (I \otimes \aa_5(5)), \\
\aa_7(4) &= (I \otimes \aa_7(3)) + P_X \cdot (I \otimes \aa_7(3)) + P_{Y} \cdot (I \otimes \aa_7(3)) + P_{Z} \cdot (I \otimes \aa_7(3)),
\end{align*}
where $\cdot$ denotes the componentwise matrix product.
\end{lemma}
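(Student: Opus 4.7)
I prove each of the three identities by showing that the right-hand side is an internal direct sum contained in $\aa_k(n)$, and then closing with a dimension count. Throughout write $m := n-1$, so $(k,n,m) \in \{(3,4,3),(5,6,5),(7,4,3)\}$, and let $S$ range over the three non-trivial elements of $\Stab(\aa_k(n))$ in each case.

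\textbf{Step 1 (Directness).} Every element of $I \otimes \aa_k(m)$ is a real linear combination of Pauli strings whose first letter is $I$, so left multiplication by $P_S$ replaces that $I$ by the first Pauli letter of $P_S$. In all three cases the four stabilizer elements have distinct first letters $I,X,Y,Z$: namely $(P_I,P_X,P_{YZ},P_{ZY})$ for $\aa_3$, $(P_I,P_{XYZ},P_{YZX},P_{ZXY})$ for $\aa_5$, and $(P_I,P_X,P_Y,P_Z)$ for $\aa_7$. Hence the four summands lie in mutually disjoint first-Pauli subspaces of $\su(2^n)$, the sum is internally direct, and its dimension is $4 \dim \aa_k(m)$.

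\textbf{Step 2 (Containment).} The summand $I \otimes \aa_k(m)$ is contained in $\aa_k(n)$ because its generators are those of $\aa_k(n)$ supported on qubits $2, \dots, n$. For each non-trivial $P_S$ I exhibit one Pauli string $r_S = P_S \cdot g_S \in \aa_k(n)$ with $g_S \in I \otimes \aa_k(m)$. For $\aa_3(4)$, concrete choices are
\begin{equation*}
 r_X = X_1 X_2 = P_X \cdot (IIXX), \qquad r_{YZ} = Y_1 Z_2 = P_{YZ} \cdot (IIYZ),
\end{equation*}
which are generators, together with $r_{ZY} = Z_1 X_2 Y_3 = -P_{ZY} \cdot (IZXY)$, lying in $\aa_3(3) \otimes I \subseteq \aa_3(4)$ since $ZXY$ is a basis element of $\aa_3(3)$. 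Analogous choices work for $\aa_7(4)$, e.g.\ $r_X = X_1 X_4 = P_X \cdot (IXXI)$ reached by a short iterated commutator starting from $[X_1 X_2, Y_2 Y_3]$, and similarly for $\aa_5(6)$. Once $r_S \in \aa_k(n)$, the identity
\begin{equation*}
 [h, P_S \cdot g] = P_S \cdot [h, g], \qquad h, g \in I \otimes \aa_k(m),
\end{equation*}
valid because $P_S \in \Stab(\aa_k(n))$ commutes with $h$, implies that iterated brackets of $r_S$ with elements of $I \otimes \aa_k(m) \subseteq \aa_k(n)$ remain inside $\aa_k(n)$ and trace out $P_S$ times the ideal of $I \otimes \aa_k(m)$ generated by $g_S$. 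Since in all three cases $\aa_k(m)$ is simple --- $\aa_3(3) \cong \sp(2)$, $\aa_5(5) \cong \so(16)$, $\aa_7(3) \cong \su(4)$, as can be read off from the explicit bases in Sect.\ \ref{secsu8} --- this ideal is all of $I \otimes \aa_k(m)$, yielding $P_S \cdot (I \otimes \aa_k(m)) \subseteq \aa_k(n)$.

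\textbf{Step 3 (Equality) and main obstacle.} Direct computation of the bases up to $n=8$ gives $\dim \aa_3(4) = 40$, $\dim \aa_5(6) = 480$, and $\dim \aa_7(4) = 60$, matching $4 \dim \aa_k(m)$ in each case; combined with Steps 1 and 2, the inclusion from Step 2 is forced to be an equality. The principal difficulty lies in Step 2 for $\aa_5(6)$: the stabilizer Paulis $P_{XYZ}, P_{YZX}, P_{ZXY}$ have non-identity entries at every site, so $g_S$ must be tuned carefully in order for $P_S \cdot g_S$ to simplify to a Pauli string of low enough weight to be recognized as a short Lie polynomial in the generators $\{X_i Y_{i+1}, Y_i Z_{i+1}\}_{i=1}^{5}$. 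Once a single such $r_S$ is produced per coset, the simplicity argument propagates it uniformly across the rest of the coset.
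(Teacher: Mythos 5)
Your strategy is genuinely different from the paper's proof, which simply lists all Pauli strings of $\aa_3(4)$ and $\aa_7(4)$ by inspection and settles the $\aa_5(6)$ case by computer; your Steps 1--3 (directness via the distinct first letters of the four stabilizer elements, containment via one representative per coset together with the identity $[h,P_S\cdot g]=P_S\cdot[h,g]$ and simplicity of $\aa_k(n-1)$, and equality via a dimension count) form a sound skeleton that would reduce that brute-force verification to a handful of small inputs per case. The $\aa_3(4)$ case is complete as written, and $\aa_7(4)$ is essentially complete, since $Y_1Y_4=P_Y\cdot(IYYI)$ and $Z_1Z_4=P_Z\cdot(IZZI)$ appear in the listed basis of $\aa_7(4)$ ($YIIY$, $ZIIZ$) just as your $X_1X_4$ does.

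There is, however, a genuine gap exactly where the lemma is hardest: for $\aa_5(6)$ you never exhibit the required elements $r_S\in\aa_5(6)\cap P_S\cdot(I\otimes\aa_5(5))$ for $S=P_{XYZ},P_{YZX},P_{ZXY}$, and you explicitly defer this as ``the principal difficulty.'' This is precisely the case the paper could not settle by hand (it is the one verified ``using Excel''), and the paper provides no basis of $\aa_5(5)$ or $\aa_5(6)$ to fall back on, so producing even one Pauli string in each coset requires a nontrivial computation in the Lie algebra generated by $\{X_iY_{i+1},Y_iZ_{i+1}\}$ that your write-up does not perform; until these three representatives are produced (or the nonvanishing of the intersections is otherwise established), the second identity is not proved. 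Two of your remaining inputs also need better sourcing: the simplicity of $\aa_5(5)$ cannot be ``read off from the explicit bases in Sect.\ \ref{secsu8}'' (that section only lists $n=3$ bases) --- the non-circular route within the paper is the by-inspection base case in Lemma \ref{lem4} combined with Lemma \ref{lemgkn5o}, or else an independent computation --- and the dimensions $\dim\aa_3(4)=40$, $\dim\aa_7(4)=60$, $\dim\aa_5(5)=120$, $\dim\aa_5(6)=480$ are themselves computational facts, acceptable as inputs (the paper computes them too) but not consequences of your argument. So your proposal is a more conceptual route than the paper's, but as it stands it proves only the first and third decompositions and leaves the $\aa_5(6)$ containment unestablished.
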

\begin{proof}
By inspection. Here are all Pauli strings in $\aa_3(4)$ (after multiplication by $i$):
\begin{align*}
&IIXX,&&	IIYZ,&&	IXXI,&&	IXZZ,&&	IYIY,&&	IYXZ,&&	IYYX,&&	IYZI,&&	IZIZ,&&	IZXY,	\\
&XIIX,&&	XIYY,&&	XXII,&&	XXZY,&&	XYIZ,&&	XYXY,&&	XZIY,&&	XZXZ,&&	XZYX,&&	XZZI,	\\
&YIYI,&&	YIZX,&&	YXIY,&&	YXXZ,&&	YXYX,&&	YXZI,&&	YYXI,&&	YYZZ,&&	YZII,&&	YZZY,	\\
&ZIIY,&&	ZIXZ,&&	ZIYX,&&	ZIZI,&&	ZXYI,&&	ZXZX,&&	ZYXX,&&	ZYYZ,&&	ZZIX,&&	ZZYY.
\end{align*}
The Pauli strings in $\aa_7(4)$ (after multiplication by $i$) are:
\begin{align*}
&IIXX,&&	IIYY,&&	IIZZ,&&	IXIX,&&	IXXI,&&	IXYZ,&& IXZY,&& IYIY, \\
& IYXZ, && IYYI,&& IYZX,&& IZIZ,&& IZXY,&& IZYX,&& IZZI, \\
&XIIX,&&	XIXI,&&	XIYZ,&& XIZY,&& XXII,&& XXYY,&& XXZZ,&& XYIZ, \\
& XYXY,&& XYYX,&& XYZI,&& XZIY,&& XZXZ,&& XZYI,&& XZZX, \\
&YIIY,&&	YIXZ,&& YIYI	,&& YIZX,&& YXIZ,&& YXXY,&& YXYX,&& YXZI, \\
& YYII, && YYXX,&& YYZZ,&& YZIX,&& YZXI,&& YZYZ,&& YZZY, \\
&ZIIZ,&&	ZIXY,&& ZIYX,&& ZIZI,&& ZXIY,&& ZXXZ,&& ZXYI,&&	ZXZX, \\
& ZYIX,&& ZYXI,&& ZYYZ,&&	ZYZY,&& ZZII,&& ZZXX,&& ZZYY.
\end{align*}
We have: $|\aa_5(5)| = 120$, $|\aa_5(6)| = 480$, and there are $120$ elements in $\aa_5(6)$ starting with each of the letters $I,X,Y$, or $Z$.
The remaining claims were verified using Excel.
\end{proof}

For each pair $(k,n)=(3,4),(5,6),(7,4)$, consider the subalgebra $\s=\aa_k(n) \subset\su(2^n)$. 
Recall from Proposition \ref{pstab1} that the stabilizer $\Stab(\s)$ is given by:
\begin{align*}
\Stab(\aa_3(4)) &= \{P_I,P_X,P_{YZ},P_{ZY}\}, \\
\Stab(\aa_5(6)) &= \{P_I,P_{XYZ},P_{YZX},P_{ZXY}\}, \\
\Stab(\aa_7(4)) &= \{P_I,P_X,P_Y,P_Z\},
\end{align*}
and $\Stab(\s)$ is an Abelian group under the matrix product $\cdot$.
We can state Lemma \ref{lem1} succinctly as
\begin{equation}\label{scsis}
\aa_k(n) = \Stab(\aa_k(n)) \cdot (I \otimes \aa_k(n-1)), \qquad (k,n)=(3,4),(5,6),(7,4).
\end{equation}

\begin{lemma}\label{lem2}
Let $\s=\aa_k(n)$ for $(k,n)=(3,4),(5,6),(7,4)$.
Consider any Pauli string $a\in i\mcP_n\cap\g_k(n)$ not starting with $I$ in the first qubit. Then there exists a basis vector $b\in \s$ such that 
$[a,b] \ne 0$ and $[a,b]\in I\otimes\su(2^{n-1})$.
\end{lemma}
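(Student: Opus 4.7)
The plan is to build $b$ directly from the decomposition in Lemma \ref{lem1}. Writing $a=A\otimes a'$ with $A\in\{X,Y,Z\}$ and $a'\in\mcP_{n-1}$, observe that for each of the three pairs $(k,n)$ the stabilizer $\Stab(\aa_k(n))$ contains a unique element $S=A\otimes S'$ whose first tensor factor equals $A$. By Lemma \ref{lem1}, every basis Pauli string of $\s$ starting with $A$ in the first qubit has the form $b_c := S\cdot(I\otimes c)=A\otimes(S'\cdot c)$ for some basis Pauli string $c$ of $\aa_k(n-1)$. By construction $[a,b_c]$, when nonzero, has $I$ in the first qubit and so lies in $I\otimes\su(2^{n-1})$, as required.

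It remains to choose $c$ so that $[a,b_c]\ne 0$. Since $a\in\g_k(n)$, we have $[a,S]=0$; and because $a$ and $S$ share the first tensor factor $A$, this forces $a'$ to commute with $S'$ as Pauli strings in $\mcP_{n-1}$. Conjugating $S'\cdot c$ by $a'$ and using that conjugation by a Pauli string multiplies each anticommuting Pauli factor by $-1$, we deduce that $a'$ commutes with $S'\cdot c$ if and only if it commutes with $c$. Thus $[a,b_c]\ne 0$ reduces to the existence of a basis Pauli string $c\in\aa_k(n-1)$ that anticommutes with $a'$, i.e.\ to the condition $a'\notin\Stab(\aa_k(n-1))$.

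Finally, I would rule out $a'\in\Stab(\aa_k(n-1))$ under the hypotheses of the lemma. By Proposition \ref{pstab1}, in each of the three cases this stabilizer is a four-element Klein group of explicit Pauli strings. For every candidate $a'$ in that set, the full Pauli string $a=A\otimes a'$ either coincides, up to sign, with one of the central stabilizer elements in $Z(\Stab(\aa_k(n)))\setminus\{I^{\otimes n}\}$ listed in Lemma \ref{lemstab}---in which case $a$ represents the zero class in $\g_k(n)$ and can be excluded---or it fails to commute with one of the remaining nontrivial elements of $\Stab(\aa_k(n))$, contradicting $a\in\g_k(n)$. I expect this finite case-by-case verification to be the main obstacle, but since each stabilizer has only four elements and each check reduces to counting anticommuting positions in short explicit Pauli strings, the argument proceeds uniformly for the three pairs $(k,n)$.
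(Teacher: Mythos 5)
Your argument is correct and follows essentially the same route as the paper's proof: both use the decomposition of Lemma \ref{lem1} to write every element of $\s$ whose first tensor factor matches that of $a$ as $S\cdot(I\otimes c)$ with $c\in\aa_k(n-1)$, reduce $[a,b]\ne 0$ to $a'$ anticommuting with $c$ (via $[a,S]=0$), and derive a contradiction with $a\in\g_k(n)$ when $a'\in\Stab(\aa_k(n-1))$. The only divergence is the endgame: the paper observes abstractly that $a'\in\Stab(\aa_k(n-1))$ forces $a\in\Stab(\s)=Z\bigl(\Stab(\s)\bigr)$, which is factored out of $\g_k(n)$, whereas you propose a finite enumeration of the $12$ candidates $A\otimes a'$ in each of the three cases --- that check does succeed, but the abstract argument makes it unnecessary.
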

\begin{proof}
Let us write all Pauli strings up to a suitable multiple of $i$ that makes them skew-Hermitian.
Consider the case when $a=XA$ starts with $X$; the cases when it starts with $Y$ or $Z$ are similar. If there is $b=XB \in \s$ such that
$[A,B]\ne 0$, then $[a,b]=I[A,B] \ne 0$ and we are done.
By Lemma \ref{lem1}, any $b=XB \in \s$ can be written in the form $b=C \cdot (ID)$, where $D\in \aa_k(n-1)$ and $C \in \Stab(\s)$;
explicitly, $C=P_X,P_{XYZ},P_X$ for $\s=\aa_3(4),\aa_5(6),\aa_7(4)$, respectively. 
Similarly, as $C \cdot C=P_I$, we can write $a=C \cdot (IE)$ for some $E \in\mcP_{n-1}$.
Suppose that $[a,b]=0$.
Then $[C,a]=[C,b]=0$ imply that $[E,D]=0$.
Since this is true for all $D\in \aa_k(n-1)$, it follows that $E\in \Stab(\aa_k(n-1))$, from where $a=C \cdot (IE) \in\Stab(\s)$. 
This is a contradiction, because such elements are factored out from $\g_k(n)$; see \eqref{stabcent3} and Lemma \ref{eq:stabcent2}.
\end{proof}

\begin{lemma}\label{lem3}
Consider $(k,m)=(3,4),(5,6),(7,4)$, and let $n\ge m$. Then for any Pauli string $a\in i\mcP_n\cap\g_k(n)$,
there exist basis vectors $b_1,\dots,b_r\in \aa_k(n)$, $r\ge0$, such that 
$\ad_{b_1} \cdots \ad_{b_r}(a) \in (I^{\otimes (n-m+1)} \otimes\su(2^{m-1})) \setminus\{0\}$
$($with $r=0$ corresponding to $a)$.
\end{lemma}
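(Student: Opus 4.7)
The proof will go by induction on $n\ge m$. In the base case $n=m$, the conclusion reads $\ad_{b_1}\cdots\ad_{b_r}(a)\in (I\otimes\su(2^{m-1}))\setminus\{0\}$, i.e., the resulting Pauli string has $I$ in position $1$. If $a$ already starts with $I$ we take $r=0$; otherwise Lemma \ref{lem2} supplies a basis vector $b\in \aa_k(m)$ with $[a,b]\ne 0$ and $[a,b]\in I\otimes\su(2^{m-1})$, and setting $r=1$, $b_1=b$ finishes the base case.

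For the inductive step, assume the statement for $n-1\ge m$ and let $a\in i\mcP_n\cap \g_k(n)$. The strategy is first to produce an $I$ in position $1$, and then to peel it off and invoke the inductive hypothesis. If $a$ does not start with $I$, we first find a $2$-local generator $b_0 = A_1B_2\in \aa_k(n)$ such that $[b_0,a]\ne 0$ and $[b_0,a]$ starts with $I$. This is an analog of Lemma \ref{lem2} for arbitrary $n$: the same case analysis goes through once one knows a decomposition of the form $\aa_k(n)=\Stab(\aa_k(n))\cdot (I\otimes \aa_k(n-1))$, which is exactly the content of Lemma \ref{lem1} for $(k,n)=(3,4),(5,6),(7,4)$ and can be iterated via the embedding \eqref{ext01} to cover all $n\ge m$. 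After applying $b_0$ we may replace $a$ by $[b_0,a]$ and assume $a=I\otimes a'$ with $a'\in i\mcP_{n-1}$.

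Next we check that $a'\in \g_k(n-1)$. Every element of $\Stab(\aa_k(n))$ is a periodic Pauli pattern of the form $P_A=AAA\cdots$, whose truncation to qubits $2,\dots,n$ is (up to a cyclic rotation handled by the symmetry between the three cases) exactly an element of $\Stab(\aa_k(n-1))$. The identity $[I\otimes a',A\otimes B]=A\otimes [a',B]$ together with $A\ne 0$ forces $[a',B]=0$ whenever $[a,A\otimes B]=0$, so $a'$ satisfies the centralizer conditions defining $\g_k(n-1)$; the quotient by the center in \eqref{stabcent3} is respected because, by Lemma \ref{lemstab}, the central stabilizers for $n$ and $n-1$ match under this truncation. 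By the inductive hypothesis applied to $a'$, there exist $b'_1,\dots,b'_s\in \aa_k(n-1)$ with $\ad_{b'_1}\cdots\ad_{b'_s}(a')\in (I^{\otimes (n-m)}\otimes\su(2^{m-1}))\setminus\{0\}$. Because $I\otimes \aa_k(n-1)\subseteq \aa_k(n)$ by \eqref{ext01}, the elements $b_i:=I\otimes b'_i$ lie in $\aa_k(n)$, and
\[
\ad_{b_1}\cdots\ad_{b_s}\ad_{b_0}(a)
= I\otimes \ad_{b'_1}\cdots\ad_{b'_s}(a')
\in (I^{\otimes(n-m+1)}\otimes\su(2^{m-1}))\setminus\{0\},
\]
completing the induction.

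The main technical obstacle is the generalization of Lemma \ref{lem2} to all $n\ge m$: one must show that, even in the presence of additional stabilizer constraints at large $n$, a $2$-local generator of $\aa_k(n)$ acting on qubits $1,2$ can always be chosen so as to both cancel the letter in position $1$ of $a$ and produce a non-zero commutator. The resolution lies in the periodic structure of $\Stab(\aa_k(n))$: the assumption $a\in \g_k(n)$ together with $a\notin \Stab(\aa_k(n))$ (which is exactly what is retained after the central quotient in \eqref{stabcent3}) guarantees that the failure scenario in the proof of Lemma \ref{lem2} cannot occur, so the case analysis of that proof transfers without substantive change.
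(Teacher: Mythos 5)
Your overall skeleton (induct on $n$, create an $I$ in the first slot, strip it off, and apply the inductive hypothesis after checking that the truncation lies in $\g_k(n-1)$) matches the paper's, but the step where you create the leading $I$ has a genuine gap. You justify your ``analog of Lemma \ref{lem2} for arbitrary $n$'' by asserting that the decomposition $\aa_k(n)=\Stab(\aa_k(n))\cdot(I\otimes\aa_k(n-1))$ of Lemma \ref{lem1} ``can be iterated via the embedding \eqref{ext01} to cover all $n\ge m$.'' That is false: for $k=7$, $n=5$ one has $\dim\aa_7(5)=\dim\su(2^4)=255$, while $\Stab(\aa_7(5))\cdot(I\otimes\aa_7(4))$ spans at most $4\cdot\dim\aa_7(4)=240$ dimensions, so neither the decomposition nor even the weaker fact used in the proof of Lemma \ref{lem2} (every basis string of the algebra is a stabilizer multiple of one starting with $I$) can hold. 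The decomposition is valid only when $n$ is a multiple of the relevant period ($4$, $6$, $2$), and in the paper even that is a corollary of Lemma \ref{lem4}, which itself relies on Lemma \ref{lem3} --- so invoking it inside this proof would be circular. Moreover, your restriction to a $2$-local generator $b_0=A_1B_2$ is stronger than what Lemma \ref{lem2} provides (there $b$ is an arbitrary basis vector of $\aa_k(m)$) and is impossible in general: to have $[b_0,a]\ne0$ start with $I$ you need $A$ equal to the first letter of $a$ and $B$ anticommuting with the second, which fails e.g.\ for $k=7$ whenever $a$ begins with $XX$ (such as $a=XXIYY\in\g_7(5)$), since the only generator with first letter $X$ is $XX$. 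Your closing remark that $a\notin\Stab(\aa_k(n))$ excludes ``the failure scenario'' does not repair this: the obstruction in Lemma \ref{lem2} is local to the first few qubits, whereas $a\notin\Stab(\aa_k(n))$ is a global condition.

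The paper's proof avoids exactly this difficulty by never extending Lemma \ref{lem2} beyond $n=m$: it examines the length-$m$ window $A$ formed by the first $m$ letters of $a$. If $A\notin\Stab(\aa_k(m))$, then Lemma \ref{lem2} at size $m$ (where Lemma \ref{lem1} is available) produces $B\in\aa_k(m)$ with $[B,A]\ne0$ starting with $I$, and $b=B\otimes I^{\otimes(n-m)}\in\aa_k(n)$ does the job; if $A\in\Stab(\aa_k(m))$, one slides the window to positions $2,\dots,m+1$, and so on; if every window lies in $\Stab(\aa_k(m))$, then $a\in\Stab(\aa_k(n))$, contradicting $a\ne0$ in $\g_k(n)$. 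Some sliding-window argument of this kind (or an independent, non-circular proof of your ``Lemma \ref{lem2} for all $n$'') is what your write-up is missing. A smaller inaccuracy: your claim that ``the central stabilizers for $n$ and $n-1$ match under truncation'' contradicts Lemma \ref{lemstab}, where the center is the full stabilizer for even length and trivial for odd length, so the bookkeeping for the quotient in \eqref{stabcent3} needs more care than you give it.
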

\begin{proof}
The proof is by induction on $n$, the base $n=m$ being Lemma \ref{lem2}.
For the step of the induction, suppose that $n>m$ and the statement holds for $\g_k(n-1)$.
Again, let us write all Pauli strings up to a suitable multiple of $i$.
Take any Pauli string $a\in \g_k(n)$, and write it as $a=AD$ where $A$ is the substring consisting of the first $m$ 
Paulis. Then $A \in \su(2^m)^{\Stab(\aa_k(m))}$.

If $A \not\in \Stab(\aa_k(m))$, we can use Lemma \ref{lem2} to find $B \in \aa_k(m)$ such that $[B,A]$ starts with $I$.
Then $[b,a]=IC$ starts with $I$ for $b=BI\cdots I \in \aa_k(n)$.
After that, we can apply the inductive assumption for $C \in \g_k(n-1)$.

If $A \in \Stab(\aa_k(m))$, we repeat the same argument for the substring $E$ of $a$ corresponding to positions $2,\dots,m+1$.
When $E \not\in \Stab(\aa_k(m))$, we can make it to start with $I$, which will force the substring $A \not\in \Stab(\aa_k(m))$.
If both $A,E \in \Stab(\aa_k(m))$, putting them together we get that the first $m+1$ positions of $a$ are in $\Stab(\aa_k(m+1))$.
Continuing this way will give us $a \in \Stab(\aa_k(n))$, which is a contradiction, because such elements are factored out from $\g_k(n)$ (cf.\ \eqref{stabcent3} and Lemma \ref{eq:stabcent2}).
\end{proof}

Recall that, in Sect.\ \ref{seciso}, we constructed certain automorphisms $\varphi_n$ and $\gamma_n$ of $\su(2^n)$ such that
the images $\tilde\aa_3(n) := \varphi_n\aa_3(n)$ and $\tilde\aa_5(n) := \gamma_n\aa_5(n)$ are subalgebras of $\aa_7(n)$.
Note that, after these transformations, their stabilizers become equal:
\begin{equation}\label{stab357}
\Stab(\tilde\aa_3(n)) = \Stab(\tilde\aa_5(n)) = \Stab(\aa_7(n)) = \{P_I,P_X,P_Y,P_Z\}.
\end{equation}
Hence, we have (recall \eqref{tildethetaq35}--\eqref{thetaq35}):
\begin{equation}\label{aknlkn}
\tilde\aa_k(n) \subseteq \g_7(n)^{\tilde\theta_k}, \qquad k=3,5.
\end{equation}

\begin{lemma}\label{lem4}
We have\/ $\aa_7(n)=\g_7(n)$ and equalities in \eqref{aknlkn}.
Consequently, $\aa_k(n) = \g_k(n)^{\theta_k}$ for $k=3,5$.
\end{lemma}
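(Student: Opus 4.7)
The plan is to establish all three equalities by induction on $n$. The key tool is Lemma \ref{lem3}, which reduces an arbitrary Pauli string in $\g_k(n)$ to one of the form $I^{\otimes(n-m+1)} \otimes c$ via iterated commutators with generators of $\aa_k(n)$, where $m = 4$ for $k=3,7$ and $m=6$ for $k=5$.

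For the base cases, I would verify $\aa_7(3) = \g_7(3)$ by inspection of the explicit $15$-dimensional basis in Sect.\ \ref{secsu8}, which coincides with the Pauli-string basis of $\su(8)^{\{P_X,P_Y,P_Z\}}$. The analogous base cases $\aa_3(3) = \g_3(3)^{\theta_3}$ and $\aa_5(n) = \g_5(n)^{\theta_5}$ for $n \in \{3,4,5\}$ are finite-dimensional computations, verifiable directly from the bases in Sect.\ \ref{secsu8}.

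For the inductive step, fix $k \in \{3,5,7\}$ with the associated $m$, and let $n \geq m$. By Corollary \ref{cor:gknth} it suffices to show that every Pauli string $a \in i\mcP_n \cap \g_k(n)^{\theta_k}$ lies in $\aa_k(n)$. Since $\aa_k(n) \subseteq \g_k(n)^{\theta_k}$ (Lemmas \ref{lemtheta2}, \ref{lemtheta3}) and the involution respects commutators, iterated brackets of $a$ with generators of $\aa_k(n)$ remain in $\g_k(n)^{\theta_k}$. Apply Lemma \ref{lem3} to produce $b_1, \dots, b_r \in \aa_k(n)$ such that $b := \ad_{b_1} \cdots \ad_{b_r}(a) = I^{\otimes(n-m+1)} \otimes c$ is a nonzero element of $I^{\otimes(n-m+1)} \otimes \su(2^{m-1})$. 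The conditions $b \in \g_k(n)$ and $\theta_k(b) = b$ restrict to conditions on $c$: commutation with the last-$(m-1)$-position truncations of the stabilizer strings, and invariance under the involution $c \mapsto -Q_k^{(2)} c^T Q_k^{(2)}$, where $Q_k^{(2)}$ denotes the trailing $m-1$ Paulis of $Q_k$. Because the stabilizer strings and $Q_k$ are periodic with period dividing $m$, these residual constraints identify---possibly after a Pauli relabeling of $\su(2^{m-1})$ dictated by $n \bmod$ period---with the defining conditions of $\g_k(m-1)^{\theta_k}$. The base case then gives $c \in \aa_k(m-1)$, so $b \in \aa_k(n)$ via the embedding $\aa_k(m-1) \hookrightarrow \aa_k(n)$; Lemma \ref{lemp2} finally yields $a \in \aa_k(n)$. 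The equality $\aa_k(n) = \g_k(n)^{\theta_k}$ for $k = 3,5$ follows from the corresponding equality $\tilde\aa_k(n) = \g_7(n)^{\tilde\theta_k}$ by applying Lemma \ref{leminv1} to the automorphisms $\varphi_n, \gamma_n$ of Sect.\ \ref{seciso}.

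The main obstacle is verifying that the truncated involution $c \mapsto -Q_k^{(2)} c^T Q_k^{(2)}$ together with the truncated stabilizer constraints cuts out $\aa_k(m-1)$ rather than some inequivalent subspace. For $k = 3$ the string $Q_3^{(2)}$ cycles through $\{IYX, YXZ, XZI, ZIY\}$ as $n$ varies modulo $4$; for $k = 5$ through three options modulo $3$. In each case one must exhibit a Pauli relabeling of $\su(2^{m-1})$ that simultaneously brings the truncated stabilizer strings and $Q_k^{(2)}$ to their canonical $n = m-1$ forms. This mod-$m$ behavior is precisely what drives the mod-$8$ and mod-$6$ isomorphism types of $\aa_3(n)$ and $\aa_5(n)$ recorded in Theorem \ref{the:classification}.
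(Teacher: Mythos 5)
Your skeleton is the same as the paper's: reduce an arbitrary Pauli string of $\g_k(n)^{\theta_k}$ via Lemma \ref{lem3} to a nonzero element supported on the trailing $m-1$ qubits, identify the residual constraints with a small base case, and close with Lemma \ref{lemp2} (with Corollary \ref{cor:gknth} justifying the restriction to Pauli strings). The gap sits exactly at the point you flag as the main obstacle, and your proposed fix does not work as stated. First, a picture-mixing slip: for $k=3,5$ the involution $\theta_k=\varphi_n^{-1}\tilde\theta_3\varphi_n$, resp.\ $\gamma_n^{-1}\tilde\theta_5\gamma_n$, is not $g\mapsto -Q_k g^T Q_k$; only the twisted $\tilde\theta_k$ uses $Q_k$, so "the trailing $m-1$ Paulis of $Q_k$" is not the residual involution for $\aa_k(n)$ itself. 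Second, and more seriously, a Pauli relabeling cannot bring the truncated string to its canonical form for all residues: e.g.\ for $k=3$ the trailing truncations of $Q_3=P_{ZIYX}$ are $IYX$, $YXZ$, $XZI$, $ZIY$, and for $k=5$ those of $Q_5=P_{IYZ}$ are $YZIYZ$, $ZIYZI$, $IYZIY$; these differ from the canonical ones in the number and positions of identity letters, which no relabeling (fixing $I$) can change. Moreover, under conjugation by a relabeling unitary $U$ the involution string transforms as $Q\mapsto UQU^T$ (Lemmas \ref{leminv1}, \ref{leminv2}), not by letter permutation, so even the existence of a suitable local unitary is a computation of the type done in Sect.\ \ref{secgkn}, not something that follows from periodicity alone. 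The actual reason the residue dependence is harmless is different: the trailing truncations differ from the canonical data either by elements of the truncated stabilizer (which act trivially in the involution formula on the centralizer) or by the translation-induced shift of the twist, whose shifted generators on the trailing qubits are again generators of $\tilde\aa_k(n)$; this is what the paper's twisted setup (common stabilizer $\{P_X,P_Y,P_Z\}$, base cases checked for all $2\le n\le m$) exploits.

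Even granting your identification "up to a relabeling", the final deduction is a non sequitur: what you would obtain is that $c$ lies in a relabeled copy of $\aa_k(m-1)$, not in $\aa_k(m-1)$ itself, and then "$b\in\aa_k(n)$ via the embedding $\aa_k(m-1)\hookrightarrow\aa_k(n)$" does not follow, because that embedding places the \emph{canonical} copy on the trailing qubits. What must be shown is that the particular relabeled/shifted copy cut out by the residual constraints, sitting on the last $m-1$ qubits, is contained in $\aa_k(n)$ --- equivalently, that the relabeling matches the shift of the twisted structure of $\aa_k(n)$ on its trailing qubits. Until that verification (one finite check per residue class of $n$ modulo the period, or the stabilizer-absorption argument in the untwisted picture) is supplied, the induction does not close.
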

\begin{proof}
As before, let $(k,m)=(3,4),(5,6),(7,4)$. In order to include the case $k=7$ in \eqref{aknlkn}, we let $\tilde\aa_7(n) = \aa_7(n)$
and $\tilde\theta_7$ be the identity. The statement is true for all $2\le n\le m$ by inspection.
For $n\ge m$, we prove it by induction on $n$. Consider any $a\in \g_7(n)^{\tilde\theta_k}$.
By Lemma \ref{lem3}, we can find $b_1,\dots,b_r\in \tilde\aa_k(n)$ such that 
$\ad_{b_1} \cdots \ad_{b_r}(a) = I\cdots I D$ for some $D\in\su(2^{m-1}) \setminus\{0\}$.
Since $b_i \in \tilde\aa_k(n) \subseteq \g_7(n)^{\tilde\theta_k}$, we get that $D\in \g_7(m)^{\tilde\theta_k} = \tilde\aa_k(m)$.
Therefore, $a \in \tilde\aa_k(n)$ due to Lemma \ref{lemp2}.
\end{proof}

\begin{remark}
It follows from Lemma \ref{lem4} that \eqref{scsis} holds for all $(k,n)$ such that: $k=3$, $n\equiv 0 \mod 4;$ $k=5$, $n\equiv 0 \mod 6;$ $k=7$, $n\equiv 0 \mod 2$.
As a consequence, for such $(k,n)$, we have $\aa_k(n) \cong \aa_k(n-1)^{\oplus 4}$ as a Lie algebra.
\end{remark}

Now that we are done with the cases $k=3,5,7$, we derive the cases $k=6,10$ from $k=7$ and the isomorphisms $\aa_6(n) \cong \aa_{10}(n) \cong \aa_7(n)$
obtained in Lemmas \ref{leminc4} and \ref{leminc5}.

\begin{lemma}\label{lem4a}
We have\/ $\aa_k(n)=\g_k(n)$ for $k=6,10$.
\end{lemma}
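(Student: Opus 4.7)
The plan is to deduce this lemma from the case $k=7$ (already settled in Lemma \ref{lem4}) by transporting it along the isomorphisms constructed in Sect.\ \ref{seciso}. Specifically, Lemma \ref{leminc4} provides an automorphism $\varphi_n$ of $\su(2^n)$, given by conjugation by a suitable unitary $U$, which restricts to a Lie algebra isomorphism $\aa_6(n) \xrightarrow{\sim} \aa_7(n)$; analogously, Lemma \ref{leminc5} yields an automorphism $\gamma_n$ with $\gamma_n(\aa_{10}(n)) = \aa_7(n)$. Since $\g_k(n)$ is defined from $\aa_k(n)$ only through operations intrinsic to the pair $(\su(2^n),\aa_k(n))$, namely passing to the centralizer of the stabilizer and quotienting by the central elements of the stabilizer, I expect both $\varphi_n$ and $\gamma_n$ to descend to isomorphisms between the corresponding $\g_k(n)$.

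The main step to verify is that $\varphi_n\bigl(\g_6(n)\bigr) = \g_7(n)$ (and similarly $\gamma_n\bigl(\g_{10}(n)\bigr) = \g_7(n)$). First I would note that $\varphi_n$ also acts as an automorphism of the associative algebra $\mathbb{C}^{2^n\times 2^n}$, since it is implemented by conjugation by $U$. Hence it sends the commutant of $\aa_6(n)$ onto the commutant of $\varphi_n(\aa_6(n))=\aa_7(n)$. By Remark \ref{remstab1}, these commutants are exactly $\Span_{\mathbb{C}}\Stab(\aa_6(n))$ and $\Span_{\mathbb{C}}\Stab(\aa_7(n))$, so $\varphi_n$ identifies them. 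A centralizer in $\su(2^n)$ depends only on the complex span of the commuting set, so this immediately gives
\begin{equation*}
\varphi_n\bigl(\su(2^n)^{\Stab(\aa_6(n))}\bigr) = \su(2^n)^{\Stab(\aa_7(n))}.
\end{equation*}
The same reasoning, applied to the center of an associative subalgebra (which automorphisms preserve), shows that $\varphi_n$ sends $\Span\bigl(Z(\Stab(\aa_6(n)))\setminus\{I^{\otimes n}\}\bigr)$ onto its analogue for $\aa_7(n)$. Together these facts yield a well-defined Lie algebra isomorphism $\varphi_n\colon\g_6(n)\xrightarrow{\sim}\g_7(n)$.

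With this in hand, the conclusion is immediate: $\varphi_n(\aa_6(n))=\aa_7(n)=\g_7(n)=\varphi_n(\g_6(n))$ by Lemma \ref{lem4}, and applying $\varphi_n^{-1}$ gives $\aa_6(n)=\g_6(n)$. The case $k=10$ is identical, with $\gamma_n$ in place of $\varphi_n$. I do not anticipate a substantive obstacle; the only place requiring care is checking that the quotient in the definition of $\g_k(n)$ respects the transport, but this reduces to the standard fact that algebra automorphisms preserve centers of subalgebras, which is handled uniformly by viewing $\varphi_n$ and $\gamma_n$ as automorphisms of the ambient associative matrix algebra.
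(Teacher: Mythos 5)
Your proof is correct and takes essentially the same route as the paper: both transport the case $k=7$ (Lemma \ref{lem4}) along the isomorphisms $\varphi_n$ and $\gamma_n$ from Lemmas \ref{leminc4} and \ref{leminc5}. The only difference is cosmetic: the paper verifies $\varphi_n(\g_6(n))=\g_7(n)$ by noting explicitly that $\varphi_n$ interchanges the stabilizers $\{P_I,P_X,P_{YZ},P_{ZY}\}$ and $\{P_I,P_X,P_Y,P_Z\}$ (and similarly for $\gamma_n$), whereas you obtain the same identification abstractly via commutants and preservation of centers, which is equally valid.
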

\begin{proof}
Recall from Sect.\ \ref{seciso} that we have an isomorphism $\varphi_n\colon \aa_6(n) \cong \aa_7(n)$ that up to a sign swaps $Y \rightleftharpoons Z$ on every even qubit
(see \eqref{varphi1}, \eqref{varphi2}). Under $\varphi_n$ the stabilizers
\begin{align*}
\Stab(\aa_6(n)) &= \{P_I,P_X,P_{YZ},P_{ZY}\}, \\
\Stab(\aa_7(n)) &= \{P_I,P_X,P_Y,P_Z\}
\end{align*}
are sent to each other; hence $\g_6(n) \cong \g_7(n)$. 
Since $\aa_7(n)=\g_7(n)$ by Lemma \ref{lem4}, it follows that $\aa_6(n)=\g_6(n)$.

Similarly, we have an isomorphism $\gamma_n\colon \aa_{10}(n) \cong \aa_7(n)$ given by applying on the $j$-th qubit ($j=1,\dots,n$) the permutation $\gamma^{j}$, 
where $\gamma$ is the cycle $X \mapsto Z \mapsto Y \mapsto X$ (see \eqref{rho1}, \eqref{rho2}).
Then $\gamma_n$ sends
\begin{equation*}
\Stab(\aa_{10}(n)) = \{P_I,P_{XYZ},P_{YZX},P_{ZXY}\}
\end{equation*}
to $\Stab(\aa_7(n))$; hence $\g_{10}(n) \cong \g_7(n)$ and $\aa_{10}(n)=\g_{10}(n)$.
\end{proof}

Now we consider the subalgebra $\aa_9(n)$. In this case, we have the involution $\theta_9(g) = -Q_9 g^T Q_9$, where $Q_9$ is given by \eqref{thetaq9}.

\begin{lemma}\label{lem5}
We have\/ $\aa_9(n) = \g_9(n)^{\theta_9}$.
\end{lemma}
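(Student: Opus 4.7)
The inclusion $\aa_9(n) \subseteq \g_9(n)^{\theta_9}$ was established in Lemma~\ref{lemtheta2}; I prove the reverse inclusion by induction on $n \ge 3$, following the five-step strategy outlined in Sect.~\ref{secexa9n} and paralleling the argument of Lemma~\ref{lem4}. For the base case $n = 3$, one enumerates the $15$ Pauli strings in $i\mcP_3 \cap \g_9(3)$ (those whose first two tensor factors lie in $\{II, IX, XY, XZ\}$), applies $\theta_9(P) = -Q_9 P^T Q_9$ with $Q_9 = IYZ$, and finds that the fixed strings are exactly the $10$ basis elements of $\aa_9(3)$ listed in Sect.~\ref{secsu8}; combined with Corollary~\ref{cor:gknth}, this gives $\aa_9(3) = \g_9(3)^{\theta_9}$.

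For the inductive step, assume $\aa_9(n-1) = \g_9(n-1)^{\theta_9}$ for some $n \ge 4$ and take $a \in i\mcP_n \cap \g_9(n)^{\theta_9}$ (sufficient by Corollary~\ref{cor:gknth}). The plan is to produce, via a nested commutator $a' = \ad_{g_1}\cdots\ad_{g_r}(a)$ with each $g_s \in \aa_9(n)$ a Pauli string, a nonzero $a' \in i\mcP_n \cap \g_9(n)^{\theta_9}$ carrying $I$ in some position $j \ge 3$. Erasing that $I$ yields $c \in i\mcP_{n-1}$; since $\Stab(\aa_9(n)) = \{I^{\otimes n}, X_1, Y_1X_2, Z_1X_2\}$ acts trivially on positions $\ge 3$, and $Q_9 = IY Z^{\otimes (n-2)}$ carries a $Z$ at each position $\ge 3$ (with $Z \cdot I \cdot Z = I$ rendering the erased position inert under the involution), the conditions defining $\g_9(n)^{\theta_9}$ restrict cleanly to $c \in \g_9(n-1)^{\theta_9}$, which by induction equals $\aa_9(n-1)$. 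Reinserting $I$ at position $j$ sends $c$ to $a'$, and the insertion map $\iota_j$ is a Lie algebra homomorphism whose image on the generators $\{X_iY_{i+1}, X_iZ_{i+1}\}_{1 \le i \le n-2}$ of $\aa_9(n-1)$ lands in $\aa_9(n)$: either $i+1 \ne j$, in which case the image is a generator of $\aa_9(n)$, or $i = j-1$, in which case the image is a crossing element $X_{j-1} I Y_{j+1}$ or $X_{j-1} I Z_{j+1}$, which can be derived inside $\aa_9(n)$ by an explicit short sequence of nested commutators on the four consecutive qubits $j-2, j-1, j, j+1$ (analogous to the derivation of $X_2 I Y_4$ in $\aa_9(4)$, which uses $[X_1 Y_2, [\,[X_1 Z_2, X_2 Y_3],\, X_3 Y_4],\, X_2 Z_3]$ followed by one further bracket with $X_1 Y_2$). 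Hence $a' \in \aa_9(n)$, and Lemma~\ref{lemp2} gives $a \in \aa_9(n)$.

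The principal obstacle is engineering the initial commutator to introduce an $I$ at a position $\ge 3$, under the assumption that $a = A^1 \cdots A^n$ has $A^j \ne I$ throughout positions $\ge 3$. A direct case analysis on the pair $(A^j, A^{j+1})$ with $3 \le j \le n-1$ shows that the commutator of $a$ with $X_j Y_{j+1}$ or $X_j Z_{j+1}$ introduces $I$ at position $j$ or $j+1$ except when $(A^j, A^{j+1}) \in \{(Y, X), (Z, X)\}$. For $n \ge 5$ such a bad pattern cannot persist across every pair, because either bad pair forces the subsequent pair to start with $X$, which is not itself in the bad list. The remaining edge case $n = 4$ with the sole pair $(A^3, A^4) \in \{(Y, X), (Z, X)\}$ is handled by a two-commutator sequence tailored to each $A^1 A^2 \in \{II, IX, XY, XZ\}$ (for example with $a = Y_3 X_4$ one computes first $[X_2 Z_3, a] \propto X_2 X_3 X_4$ and then $[X_3 Y_4, X_2 X_3 X_4] \propto X_2 Z_4$, which has $I$ at position 3); the resulting finite enumeration dispatches all subcases by an analogous calculation.
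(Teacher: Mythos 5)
Your proof is correct and follows essentially the same route as the paper's: induction on $n$, using commutators with the generators to create an $I$ in some position $j\ge 3$, erasing it to invoke the inductive hypothesis $\g_9(n-1)^{\theta_9}=\aa_9(n-1)$, re-inserting via the crossing elements $X_{j-1}IY_{j+1},X_{j-1}IZ_{j+1}\in\aa_9(n)$ (the paper simply cites $IXIY,IXIZ\in\aa_9(4)$), and finishing with Lemma~\ref{lemp2}; your identification of the bad pairs $(Y,X),(Z,X)$ together with the $n=4$ edge case is just a more systematic organization of the paper's list of two-qubit commutator identities. The only blemish is the mis-parenthesized nested bracket claimed to produce $X_2IY_4$, though the intended chain $[[[[X_1Z_2,X_2Y_3],X_3Y_4],X_2Z_3],X_1Y_2]\propto X_2IY_4$ is indeed valid.
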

\begin{proof}
The claim is true for $n=2$ and $3$ by comparing the dimensions. Suppose by induction that the statement is true for $\aa_9(n-1)$, and consider a Pauli string 
$a\in \g_9(n)^{\theta_9}$ for $n\ge4$. Again, we will omit the multiples of $i$ that make Pauli strings skew-Hermitian.

If $a$ ends with $I$, we can write $a=AI$ for some $A \in \g_9(n-1)^{\theta_9}$ and apply the inductive assumption.
Similarly, if $a=AIB$ has an $I$ in the $j$-th position for some $j\ge3$, we can delete it and get an element $AB \in \g_9(n-1)^{\theta_9}$,
which by induction is in $\aa_9(n-1)$. Then $a\in \aa_9(n)$, because $\aa_9(4)$ contains $IXIY$ and $IXIZ$, 
which generate elements of $\aa_9(n)$ with $I$ in the middle.

Suppose that $a$ has no $I$ in positions $3,\dots,n$. If $a=AXXB$ contains $XX$ in positions $j,j+1$, then $[X_j Y_{j+1},a] = -2i AIZB \in \aa_9(n)$. 
Since $X_j Y_{j+1} \in \aa_9(n)$, by Lemma \ref{lemp2}, we get that $a\in \aa_9(n)$.
So, if $a\not\in \aa_9(n)$ contains an $X$, then on the left of it must have a $Y$ or $Z$.
Then we can use $[XZ,XY]=-2iIX$, $[XY,YY]=2iZI$, $[XY,ZY]=-2iYI$ when $a$ contains a $Y$, and
$[XY,XZ]=2iIX$, $[XZ,YZ]=2iZI$, $[XZ,ZZ]=-2iYI$ when $a$ contains a $Z$.
\end{proof}

Finally, let us briefly discuss the remaining easier cases, $k=11,13,15,16,20$. Recall from \eqref{ext3}, \eqref{ext4} that
$\aa_{11}(n) = \aa_{16}(n)$ for $n\ge4$ and $\aa_{13}(n) = \aa_{20}(n)$ for $n\ge 3$.
Moreover, $\g_{11}(n) = \g_{16}(n)$ and $\g_{13}(n) = \g_{20}(n)$, because they have equal stabilizers by Proposition \ref{pstab1}.
Thus, we are left to consider only $k=13,15,16$.

\begin{lemma}\label{lem6}
We have\/ $\aa_{13}(n) = \g_{13}(n) = \su(2^n)^{P_X} / \Span\{P_X\}$ for $n\ge3$.
\end{lemma}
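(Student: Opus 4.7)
The plan is to mirror the argument of Lemma~\ref{lem5} for $\aa_9(n)$, with the considerable simplification that here the involution is trivial, so the claim reduces to $\aa_{13}(n) = \g_{13}(n)$. By the same reasoning as in Corollary~\ref{cor:gknth} (applied with trivial involution), $\g_{13}(n)$ is spanned over $\mathbb{R}$ by the Pauli strings it contains; since the inclusion $\aa_{13}(n) \subseteq \g_{13}(n)$ is already in hand, it suffices to show that every $a \in \mcP_n \setminus \{I^{\otimes n}, P_X\}$ commuting with $P_X$ satisfies $ia \in \aa_{13}(n)$. I would proceed by induction on $n$. The base case $n=3$ is a dimension count: from the explicit basis in Sect.~\ref{secsu8}, $\dim \aa_{13}(3) = 30$, which matches $\dim \g_{13}(3) = |\{P \in \mcP_3 : [P, P_X] = 0\}| - 2 = 32 - 2 = 30$.

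The key technical device is an \emph{insertion lemma}: for each $1 \le j \le n$, the linear map $\iota_j$ that inserts $I$ at position $j$ of an $(n-1)$-qubit Pauli string sends $\aa_{13}(n-1)$ into $\aa_{13}(n)$. Since $I$ commutes with every Pauli, $\iota_j$ is a Lie algebra homomorphism, so it is enough to verify the claim on the generators $A_k B_{k+1}$ of $\aa_{13}(n-1)$ with $AB \in \{XX, YY, YZ\}$. For $j \in \{1, n\}$ this reduces to the embeddings $I \otimes \aa_{13}(n-1) \subseteq \aa_{13}(n)$ and $\aa_{13}(n-1) \otimes I \subseteq \aa_{13}(n)$. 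For middle $j$, a generator $A_k B_{k+1}$ is either preserved ($k+1 < j$) or shifted by one ($k \ge j$), remaining a generator of $\aa_{13}(n)$; the only non-trivial subcase is $k = j-1$, where $A_{j-1} B_j$ maps to the long-range element $A_{j-1} I_j B_{j+1}$. The three generating pairs thus require $XIX$, $YIY$, $YIZ$ embedded at positions $(j-1, j, j+1)$ to lie in $\aa_{13}(n)$, and all three are present in the explicit basis of $\aa_{13}(3)$ listed in Sect.~\ref{secsu8}. Embedding $\aa_{13}(3)$ into $\aa_{13}(n)$ at those three consecutive positions places the required elements in $\aa_{13}(n)$.

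With the insertion lemma in place, the inductive step reduces an arbitrary $ia \in i\mcP_n \cap \g_{13}(n)$ to a Pauli string containing an $I$, which then falls under $\iota_j$. Case~1: if $a$ already contains $I$ at some position $j$, write $a = \iota_j(a')$ with $a' \in i\mcP_{n-1}$; unless $a' = P_X^{(n-1)}$, we have $a' \in \g_{13}(n-1) = \aa_{13}(n-1)$ by induction, hence $a \in \aa_{13}(n)$. Case~2: if $a$ has no $I$ and $a \neq P_X$, then either some $X$ is adjacent to a $Y$ or $Z$---in which case $[X_j X_{j+1}, a]$ produces an $I$ at one of the two positions via $(X \cdot X, X \cdot Y) = (I, iZ)$ or a symmetric identity---or $a \in \{Y, Z\}^n$ (forcing $n$ even), in which case for any adjacent pair either $[Y_j Y_{j+1}, a]$ (if $(a_j, a_{j+1})$ is $YZ$ or $ZY$) or $[Y_j Z_{j+1}, a]$ (if $(a_j, a_{j+1})$ is $YY$ or $ZZ$) yields a nonzero Pauli string with an $I$; Lemma~\ref{lemp2} then propagates membership in $\aa_{13}(n)$ back to $a$. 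Case~3, the exceptional $a' = P_X^{(n-1)}$ left over from Case~1: here $a = X_1 \cdots X_{j-1} I_j X_{j+1} \cdots X_n$, and a commutator $[Y_{j'} Y_{j'+1}, a]$ straddling the $I$ (take $j' = j-1$ if $j \ge 2$, otherwise $j' = 1$) yields a Pauli string $b$ with a $Z$ and a $Y$ adjacent to each other and $X$'s elsewhere; since $b$ now has $X$ adjacent to a non-$X$, a suitable commutator $[X_{j''} X_{j''+1}, b]$ from Case~2 produces $c$ with an $I$ whose truncation contains a $Y$ (hence is not $P_X^{(n-1)}$), and induction combined with two applications of Lemma~\ref{lemp2} gives $a \in \aa_{13}(n)$.

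The hard part is the insertion lemma, and specifically the verification that the three long-range elements $XIX$, $YIY$, $YIZ$ lie in $\aa_{13}(3)$; fortunately this is a finite check dispatched by direct inspection of the basis listed in Sect.~\ref{secsu8}, after which translation invariance handles all larger~$n$. The rest of the argument is a streamlined application of Step~4 of the general strategy outlined in Sect.~\ref{secout}, with the bookkeeping of Case~3 being the only non-trivial wrinkle.
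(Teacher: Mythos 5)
Your proof is correct and follows essentially the same route as the paper's: induction on $n$, using $XIX, YIY, YIZ \in \aa_{13}(3)$ to re-insert a deleted $I$, commutators with the generators $X_jX_{j+1}$, $Y_jY_{j+1}$, $Y_jZ_{j+1}$ to create an $I$ when none is present, and Lemma \ref{lemp2} to propagate membership back to the original string. In fact your Case 3 explicitly treats the exceptional string $X\cdots X I X\cdots X$, whose truncation is $P_X$ and hence falls outside the inductive hypothesis --- a point the paper's terse proof passes over in silence.
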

\begin{proof}
We know that $\Stab(\aa_{13}(n)) = \{P_I,P_X\}$ and $\aa_{13}(n) \subseteq \g_{13}(n)$. The proof of the opposite inclusion is similar to the proof of Lemma \ref{lem5}.
Consider a Pauli string $a\in \g_{13}(n)$ for $n\ge4$. 
If $a=AIB$ has an $I$ in the $j$-th position for some $1\le j\le n$, we can delete it and get an element $AB \in \g_{13}(n-1)$,
which by induction is in $\aa_{13}(n-1)$. Then $a\in \aa_{13}(n)$, because $\aa_{13}(3)$ contains $XIX, YIY, YIZ$, 
and these generate elements of $\aa_{13}(n)$ with $I$ in the middle.
If $a$ has no $I$'s, we can use commutators with the generators of $\aa_{13}(n)$ to produce one.
Then again we can apply Lemma \ref{lemp2}.
\end{proof}

\begin{lemma}\label{lem7}
We have\/ 
$\aa_{16}(n) = \su(2^n)^{\theta_{16}} = \so(2^n)$, where $\theta_{16}(g)=-g^T$.
\end{lemma}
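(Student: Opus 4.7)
The identity $\su(2^n)^{\theta_{16}} = \so(2^n)$ is immediate from \eqref{soN}, since the fixed points of $g \mapsto -g^T$ on $\su(2^n)$ are precisely the skew-Hermitian matrices satisfying $g^T = -g$, i.e., the real skew-symmetric matrices. The inclusion $\aa_{16}(n) \subseteq \so(2^n)$ has already been established in Lemma \ref{lemtheta2}. For the reverse inclusion my approach is induction on $n$, in the spirit of Lemmas \ref{lem5} and \ref{lem6}. The base case is $n=3$: from the explicit basis listed in Sect.\ \ref{secsu8}, $\dim \aa_{16}(3) = 28 = \dim \so(8)$, giving equality. As a preparatory observation, I would check that $Y_j \in \aa_{16}(n)$ for every $j$: for $1 \le j \le n-1$ this follows from $[X_j Y_{j+1}, Z_j Y_{j+1}] = -2i Y_j$, and likewise $[Y_{n-1} X_n, Y_{n-1} Z_n] = -2i Y_n$ gives $Y_n$.

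For the inductive step, fix $n \ge 4$, assume $\aa_{16}(n-1) = \so(2^{n-1})$, and recall that $\so(2^n)$ has a basis consisting of $ia$ for $a \in \mcP_n$ containing an odd number of $Y$ factors. It therefore suffices to show every such Pauli string $a$ lies in $\aa_{16}(n)$. If $a$ has an $I$ at position $1$ or position $n$, then $a$ lies in a translated copy of $\aa_{16}(n-1)$ embedded in $\aa_{16}(n)$ by appending or prepending $I$, so the induction applies directly. If $a$ has an $I$ at some interior position $2 \le j \le n-1$, I would erase it to produce $a' \in i\mcP_{n-1} \cap \aa_{16}(n-1)$, and then reconstruct $a$ from $a'$ via nested commutators of the defining generators together with the auxiliary elements $Y_k$, which furnish the required bridging across the $I$-gap. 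Finally, if $a$ has no $I$ at all, I would find a generator $g \in \{X_j Y_{j+1}, Y_j X_{j+1}, Y_j Z_{j+1}, Z_j Y_{j+1}\}$ anticommuting with $a$; then $[g,a]$ is a nonzero multiple of a Pauli string in $\aa_{16}(n)$ with an $I$ at position $j$ or $j+1$, falling under a previous case, whence $a \in \aa_{16}(n)$ by Lemma \ref{lemp2}.

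The main obstacle is the interior-$I$ case: the naive operation \emph{insert an $I$ at position $j$} does not map generators of $\aa_{16}(n-1)$ to generators of $\aa_{16}(n)$, since strings such as $X_{j-1} I_j Y_{j+1}$ are not among the defining generators. The actual work lies in constructing explicit bridging elements of $\aa_{16}(n)$ that carry an $I$ at a prescribed interior position; this parallels the appeal to $IXIY, IXIZ \in \aa_9(4)$ in the proof of Lemma \ref{lem5}, and should be handled by a short case analysis on the Paulis flanking position $j$, combined with the $Y_k$ identified above.
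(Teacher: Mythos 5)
Your overall strategy coincides with the paper's (induction on $n$, deleting/inserting an $I$, producing an $I$ by commutators when there is none, and finishing with Lemma \ref{lemp2}), and your upper bound and base case are fine ($\dim\aa_{16}(3)=28=\dim\so(8)$). The problem is that you leave the decisive step unproved: in the interior-$I$ case you only promise that the bridging elements with an $I$ at a prescribed position ``should be handled by a short case analysis,'' whereas this verification is essentially the entire content of the paper's proof. The clean way to phrase the step is: insertion of $I$ at position $j$ is a Lie algebra embedding $\iota_j\colon\su(2^{n-1})\hookrightarrow\su(2^n)$, and it sends every generator $A_iB_{i+1}$ of $\aa_{16}(n-1)$ either to a generator of $\aa_{16}(n)$ or to the stretched string $A_{j-1}I_jB_{j+1}$; since $XIY,\,YIX,\,YIZ,\,ZIY\in\aa_{16}(3)$ (this is exactly the fact the paper quotes), the stretched strings lie in $\aa_{16}(n)$ as well, so the whole image $\iota_j(\aa_{16}(n-1))$ --- in particular your $a=\iota_j(a')$ --- lies in $\aa_{16}(n)$. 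What you seem not to have noticed is that your own base case already supplies the missing fact: each of $XIY,YIX,YIZ,ZIY$ contains an odd number of $Y$'s, hence lies in $\so(8)=\aa_{16}(3)$, and translating along the chain gives $A_{j-1}I_jB_{j+1}\in\aa_{16}(n)$ for all $2\le j\le n-1$. So the gap is real as written (``reconstruct $a$ via nested commutators with the $Y_k$'' is a plan, not an argument), but it closes in one line once the insertion step is phrased as above; the $Y_k$ are not needed.

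A smaller imprecision: in the no-$I$ case, a generator $g$ that merely anticommutes with $a$ does not in general produce an $I$ in $[g,a]=\pm2\,g\cdot a$; for instance $g=X_jY_{j+1}$ against $\cdots Z_jX_{j+1}\cdots$ yields $\cdots Y_jZ_{j+1}\cdots$. You need a generator from $\{XY,YX,YZ,ZY\}$ on some pair $(j,j+1)$ that agrees with $a$ in exactly one of the two positions and anticommutes in the other; a quick check over all pairs of non-identity Paulis shows such a generator always exists, after which the commutator has an $I$, falls under the earlier cases, and Lemma \ref{lemp2} applies as you say.
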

\begin{proof}
The same as the proof of Lemma \ref{lem6}, using that $AIB \in\aa_{16}(3)$ for every generator $AB$ of $\aa_{16}$.
Indeed, one checks that $\aa_{16} = \Lie{XY, YX, YZ, ZY}$ and $XIY, YIX, YIZ, ZIY \in \aa_{16}(3)$; see Sect.\ \ref{secsu8}.
\end{proof}

\begin{lemma}\label{lem15}
We have\/ $\aa_{15}(n) = \g_{15}(n) = \su(2^n)^{X_1} / \Span\{X_1\}$.
\end{lemma}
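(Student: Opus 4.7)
The plan is to mirror the proof of Lemma \ref{lem6}, establishing the reverse inclusion $\g_{15}(n)\subseteq\aa_{15}(n)$ by induction on $n$ (the inclusion $\aa_{15}(n)\subseteq\g_{15}(n)$ was already shown in Sect.\ \ref{secup}). The base case $n=3$ is a dimension count: among the $2\cdot 4^{2}=32$ length-$3$ Pauli strings that commute with $X_1$, discarding $I^{\otimes 3}$ and $X_1$ leaves exactly $30$, matching the basis of $\aa_{15}(3)$ listed in Sect.\ \ref{secsu8}. For the inductive step, Corollary \ref{cor:gknth} reduces the task to showing $iP\in\aa_{15}(n)$ for every Pauli string $P\in\mcP_n$ that begins with $I$ or $X$ and satisfies $P\notin\{I^{\otimes n},X_1\}$.

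If $P$ already has an $I$ at some position $j\in\{2,\dots,n\}$, I would erase it; the resulting Pauli string $P'$ on $n-1$ qubits still begins with $I$ or $X$ and remains outside $\{I^{\otimes(n-1)},X_1\}$, so $iP'\in\aa_{15}(n-1)$ by induction. To reinsert the $I$ at position $j$ and remain in $\aa_{15}(n)$, I would use that $XIX,XIY,XIZ\in\aa_{15}(3)$ (Sect.\ \ref{secsu8}): the only generators of $\aa_{15}(n-1)$ whose support straddles the insertion site are those of the form $X_{j-1}A_j$, and their padded analogues $X_{j-1}\,I_j\,A_{j+1}$ belong to $\aa_{15}(n)$ via the embedding of $\aa_{15}(3)$ at positions $(j-1,j,j+1)$, iterated through $I\otimes\aa_{15}(n-1)$ when $j\ge 3$; the boundary case $j=n$ is covered by the canonical embedding $\aa_{15}(n-1)\otimes I\subset\aa_{15}(n)$.

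When $P$ carries no $I$ at positions $2,\dots,n$, my strategy is to manufacture an interior $I$ by one or two short brackets against elements of $\aa_{15}(n)$ and then invoke Lemma \ref{lemp2}. If $P=IB^2\cdots B^n$ has some $B^k\in\{Y,Z\}$ with $k\in\{2,\dots,n-1\}$, bracketing with the two-site generator $X_kA_{k+1}$, where $A=B^{k+1}$, collapses position $k+1$ to $I$; the residual all-$X$ tail configurations $IX^{\otimes(n-1)}$ and $IX\cdots XC^n$ with $C^n\in\{Y,Z\}$ are handled by the explicit short brackets $[X_2Y_3,\cdot]$ and $[X_{n-1}C'_n,\cdot]$ with $C'\ne C^n$, each of which produces an interior $I$. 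If $P=XB^2\cdots B^n$ with $B^2\ne X$, the bracket $[X_1X_2,P]$ converts the leading letter to $I$ and reduces to the previous subcase; and if $P$ begins with a run $X\cdots X$ of length $\ell\ge 2$ terminated by some $B^{j_0}\ne X$ at position $j_0\ge 3$, bracketing with $X_{j_0-1}X_{j_0}$ produces an interior $I$ at position $j_0-1\ge 2$.

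The main obstacle is the fully homogeneous string $P=X^{\otimes n}$, which fits none of the preceding patterns (it carries no $I$, no non-$X$ letter, and $[X_1X_2,X^{\otimes n}]=0$). My resolution is to pick any $i\in\{2,\dots,n-1\}$ and observe that
\begin{equation*}
[X_iY_{i+1},\,X^{\otimes n}] \;\propto\; X^{\otimes(i-1)}\otimes I\otimes Z\otimes X^{\otimes(n-i-1)},
\end{equation*}
whose right-hand side carries an $I$ at position $i\ge 2$ and therefore lies in $\aa_{15}(n)$ by the first case of the induction; one final application of Lemma \ref{lemp2} then yields $X^{\otimes n}\in\aa_{15}(n)$, closing the argument.
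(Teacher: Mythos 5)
Your proposal is correct and takes essentially the same approach as the paper's proof: induction on $n$, erasing and reinserting an interior $I$ using $XIX,XIY,XIZ\in\aa_{15}(3)$, and, when no interior $I$ is present, manufacturing one by bracketing with generators and invoking Lemma \ref{lemp2}. You have merely filled in the explicit case analysis (including the string $X^{\otimes n}$) that the paper leaves implicit by referring back to the proof of Lemma \ref{lem6}.
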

\begin{proof}
Note that $\su(2^n)^{X_1}$ is the span of all Pauli strings $\ne I^{\otimes n}$ that start with $I$ or $X$. As in the proof of Lemma \ref{lem6}, 
pick any Pauli string $a\in \g_{15}(n)$ for $n\ge3$. If $a=AIB$ has an $I$ in the $j$-th position for some $2\le j\le n$, we can delete it and get an element $AB \in \g_{15}(n-1)$,
which by induction is in $\aa_{15}(n-1)$. The rest of the proof is the same,
using that $\aa_{15} = \Lie{XX, XY, XZ}$ and $XIX,XIY,XIZ \in \aa_{15}(3)$; see Sect.\ \ref{secsu8}.
\end{proof}

Combining the results of Sect.\ \ref{secup} and \ref{seclow} completes the proof of Theorem \ref{thmtheta}.

\subsection{Identifying the Lie algebras \texorpdfstring{$\g_k(n)^{\theta_k}$}{g\_k(n) theta\_k}}\label{secgkn}

In this subsection, we finish the proof of Theorem \ref{the:classification}, by identifying the Lie algebras $\g_k(n)^{\theta_k}$ from Theorem \ref{thmtheta} with the Lie algebras appearing in the right-hand sides in Theorem \ref{the:classification}. As in Theorem \ref{thmtheta}, we only consider the cases $k=3,5,6,7,9,10,11,13,15,16,20$. Moreover, due to the isomorphisms $\aa_6(n) \cong \aa_7(n) \cong \aa_{10}(n)$ and the equalities $\aa_{11}(n) = \aa_{16}(n)$ and $\aa_{13}(n) = \aa_{20}(n)$ (see Lemmas \ref{leminc2}, \ref{leminc4}, \ref{leminc5}), we can omit the cases $k=6,10,11,20$.

The case $k=16$ is obvious, because $\g_{16}(n) = \su(2^n)$ and $\theta_{16}(g)=-g^T$, leading to $\aa_{16}(n) = \so(2^n)$.
Two other easy cases, $k=15$ and $k=13$, are treated in the next lemma.

\begin{lemma}\label{lemgkn1}
We have:
\begin{align*}
\aa_{15}(n) &= \g_{15}(n) = \su(2^n)^{X_1} / \Span\{X_1\} \cong \su(2^{n-1}) \oplus \su(2^{n-1}), \\
\aa_{13}(n) &= \g_{13}(n) = \su(2^n)^{P_X} / \Span\{P_X\} \cong \su(2^{n-1}) \oplus \su(2^{n-1}).
\end{align*}
\end{lemma}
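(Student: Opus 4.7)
The equalities $\aa_{15}(n) = \g_{15}(n)$ and $\aa_{13}(n) = \g_{13}(n)$ are already established in Lemmas \ref{lem15} and \ref{lem6}, so the only remaining task is to identify these quotient Lie algebras with $\su(2^{n-1}) \oplus \su(2^{n-1})$. The plan is to handle $\g_{15}(n)$ first by direct calculation in a convenient basis, and then to reduce $\g_{13}(n)$ to $\g_{15}(n)$ by a unitary transformation along the lines of Sect.\ \ref{secpauli3}.

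For $\g_{15}(n)$, I would conjugate by $H \otimes I^{\otimes(n-1)}$ (where $H$ is the Hadamard) to turn $X_1$ into $Z_1 = \mathrm{diag}(I_{2^{n-1}}, -I_{2^{n-1}})$. An element of $\su(2^n)$ commutes with $Z_1$ if and only if, in this basis, it is block-diagonal with two blocks of size $2^{n-1}$. Writing such an element as $a_+ \oplus a_-$ with $a_\pm \in \uu(2^{n-1})$, the condition $a \in \su(2^n)$ becomes $\Tr(a_+) + \Tr(a_-) = 0$. Splitting each $\uu(2^{n-1}) = \su(2^{n-1}) \oplus \mathbb{R}\,(iI_{2^{n-1}})$, the trace constraint reduces the two $\uu(1)$ summands to a single one, which is precisely $\Span\{iZ_1\} = \Span\{X_1\}$ in the original basis. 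Thus
\begin{equation*}
\su(2^n)^{X_1} \cong \su(2^{n-1}) \oplus \su(2^{n-1}) \oplus \Span\{X_1\},
\end{equation*}
and quotienting by $\Span\{X_1\}$ gives $\g_{15}(n) \cong \su(2^{n-1}) \oplus \su(2^{n-1})$, as claimed.

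For $\g_{13}(n)$, the key observation is that $P_X = X^{\otimes n}$ and $X_1$ are both Hermitian matrices with eigenvalues $\pm1$, each of multiplicity $2^{n-1}$; hence there exists a unitary $U$ with $U P_X U^\dagger = X_1$. (Explicitly, $U$ can be built from a tensor product of Hadamards followed by CNOTs propagating the first qubit, but any such $U$ suffices.) The map $a \mapsto U a U^\dagger$ is an automorphism of $\su(2^n)$ (Lemma \ref{leminv2}) which sends $\su(2^n)^{P_X}$ to $\su(2^n)^{X_1}$ and $\Span\{P_X\}$ to $\Span\{X_1\}$. Therefore it descends to a Lie algebra isomorphism of the quotients, yielding $\g_{13}(n) \cong \g_{15}(n) \cong \su(2^{n-1}) \oplus \su(2^{n-1})$.

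There is no real obstacle here: the only place requiring some care is bookkeeping in the block decomposition to verify that the $\uu(1)$ summand being modded out really is $\Span\{X_1\}$ rather than some other central direction, but this follows immediately from the identification $iZ_1 = (iI_{2^{n-1}}) \oplus (-iI_{2^{n-1}})$.
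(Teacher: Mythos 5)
Your proof is correct and follows essentially the same route as the paper: the equalities come from Lemmas \ref{lem15} and \ref{lem6}, the centralizer of $X_1$ is split into two $\su(2^{n-1})$ blocks plus the central direction $\Span\{X_1\}$ (your Hadamard/block-matrix bookkeeping is just the paper's spectral projections $P_\pm=(I\pm X)/2$ in disguise), and $\g_{13}(n)$ is reduced to $\g_{15}(n)$ by a unitary carrying $P_X$ to $X_1$. The only cosmetic difference is that you invoke the abstract existence of such a unitary from the matching $\pm1$ spectra, whereas the paper writes one down explicitly as $e^{-i\frac{\pi}{4}Y_1}e^{i\frac{\pi}{4}Y\otimes X^{\otimes(n-1)}}$; both are fine.
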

\begin{proof}
Note that 
\begin{equation*}
\su(2^n)^{X_1} / \Span\{X_1\} \cong \Span_{\mathbb R}\{I,X\} \otimes \su(2^{n-1})
\end{equation*}
has a basis consisting of all Pauli strings $\ne I^{\otimes n}, X_1$ that start with $I$ or $X$. Consider the projections $P_\pm$ onto the eigenspaces of $X$, given by $P_\pm := (I\pm X)/2$. They satisfy the identities:
\begin{equation*}
P_\pm \cdot P_\pm = P_\pm, \qquad P_+ \cdot P_- = 0, \qquad P_+ + P_- = I.
\end{equation*}
Then the map
\begin{equation*}
(a,b) \mapsto P_+ \otimes a + P_- \otimes b
\end{equation*}
is a Lie algebra isomorphism from $\su(2^{n-1}) \oplus \su(2^{n-1})$ to $\Span_{\mathbb R}\{I,X\} \otimes \su(2^{n-1})$.
This proves the claim about $\aa_{15}(n)$.

For the case $\aa_{13}(n)$, we can replace $X_1$ with $P_X$ because there exists a unitary transformation $U$ such that $P_X = U X_1 U^\dagger$. For example, we can take
\begin{equation*}
U = e^{-i\frac{\pi}{4} Y_1} e^{i\frac{\pi}{4} Y \otimes X^{\otimes(n-1)}}; 
\end{equation*}
then using \eqref{equab} we check that indeed $U X_1 U^\dagger = X\otimes X^{\otimes(n-1)} = P_X$.
The automorphism $a\mapsto UaU^\dagger$ of $\su(2^n)$ sends $\su(2^n)^{X_1}$ onto $\su(2^n)^{P_X}$, and $\aa_{15}(n)$ onto $\aa_{13}(n)$.
Therefore, $\aa_{13}(n) \cong \aa_{15}(n)$.
\end{proof}

We are left with the cases $k=3,5,7,9$, and we consider $k=9$ next.

\begin{lemma}\label{lemgkn2}
We have\/ $\aa_9(n) = \g_9(n)^{\theta_9} \cong \sp(2^{n-2})$.
\end{lemma}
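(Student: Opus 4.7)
The plan is to transport the problem across the automorphism constructed in Sect.\ \ref{secexa9n} and then apply Corollary \ref{corinv4}. Since we already know by Lemma \ref{lem5} that $\aa_9(n) = \g_9(n)^{\theta_9}$, the only task left is to identify this fixed-point subalgebra with $\sp(2^{n-2})$.

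First, I would apply the unitary conjugation $\varphi(a) = U a U^\dagger$ with $U = e^{i\frac{\pi}{4} X_1 X_2}$ and show that it sends the stabilizer $\Stab(\aa_9(n)) = \{P_I, X_1, Y_1 X_2, Z_1 X_2\}$ to (up to signs) $\{P_I, X_1, Y_1, Z_1\}$. The key computation uses Corollary \ref{corp1}: $X_1$ commutes with $X_1 X_2$ so it is fixed, while $Y_1 X_2$ and $Z_1 X_2$ anticommute with $X_1 X_2$, so conjugation by $U$ multiplies them by $i (X_1 X_2)$, giving $\mp Z_1$ and $\pm Y_1$. Since the center of $\Stab(\aa_9(n))$ is trivial (Lemma \ref{lemstab}), there is no quotient to take, and the centralizer of $\{X_1, Y_1, Z_1\}$ in $\su(2^n)$ is clearly $I \otimes \su(2^{n-1}) \cong \su(2^{n-1})$. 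Hence $\varphi(\g_9(n)) \cong \su(2^{n-1})$.

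Next, by Lemma \ref{leminv1} the involution $\theta_9$ is transported to $\varphi\theta_9\varphi^{-1}$, which by Lemma \ref{leminv2} has the form $a \mapsto -\tilde{Q}_9 a^T \tilde{Q}_9^\dagger$ with $\tilde{Q}_9 = U Q_9 U^T$. Here is the one calculation that needs care: since $X^T = X$, we have $U^T = U$, and since $Y$ anticommutes with $X$ while $I$ and $Z$ commute trivially with any position of $X_1 X_2$, the string $Q_9 = IY Z^{\otimes(n-2)}$ anticommutes with $X_1 X_2$. Applying \eqref{equab1} gives $U Q_9 U = Q_9 \cdot e^{-i\frac{\pi}{4} X_1 X_2} \cdot U = Q_9$, so $\tilde{Q}_9 = Q_9$.

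Finally, restricting the involution $a \mapsto -Q_9 a^T Q_9$ to $I \otimes \su(2^{n-1})$ gives the involution of $\su(2^{n-1})$ defined by $a \mapsto -Q a^T Q$ with $Q = Y Z^{\otimes(n-2)}$. Since $Y^T = -Y$ and $Z^T = Z$, we have $Q^T = -Q$, so Corollary \ref{corinv4} identifies its fixed-point subalgebra with $\sp(2^{n-2})$. Putting the three steps together yields $\aa_9(n) \cong \sp(2^{n-2})$, as desired. The main obstacle I anticipate is bookkeeping the signs in the conjugation of the stabilizer and verifying that $\tilde{Q}_9 = Q_9$; everything else is formal once the transport diagram is set up.
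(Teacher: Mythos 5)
Your proposal is correct and follows essentially the same route as the paper's proof: conjugating by $U = e^{i\frac{\pi}{4}X_1X_2}$ to bring the stabilizer to $\{P_I,X_1,Y_1,Z_1\}$, verifying $UQ_9U^T = Q_9$ via the anticommutation of $Q_9$ with $X_1X_2$, and applying Corollary \ref{corinv4} with $Q = YZ\cdots Z$, $Q^T=-Q$, to identify the fixed points with $\sp(2^{n-2})$. The sign bookkeeping you flagged works out exactly as you describe (in fact $UY_1X_2U^\dagger=-Z_1$ and $UZ_1X_2U^\dagger=Y_1$), and leaving those signs ambiguous is harmless since the centralizer is unchanged.
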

\begin{proof}
Recall that $\g_9(n) = \su(2^n)^{\{X_1,Y_1X_2,Z_1X_2\}}$. Since $\Span\{X_1,Y_1X_2,Z_1X_2\} \cong \su(2)$, we can find a unitary transformation that takes
this Lie algebra to $\Span\{X_1,Y_1,Z_1\}$. Explicitly, similarly to the proof of Lemma \ref{lemgkn1}, let
$U = e^{i\frac{\pi}{4} X_1 X_2}$.
Then using \eqref{equab}, one easily checks that
\begin{equation*}
U X_1 U^\dagger = X_1, \qquad U Y_1 X_2 U^\dagger = -Z_1, \qquad U Z_1 X_2 U^\dagger = Y_1.
\end{equation*}
Therefore, the map $a\mapsto U a U^\dagger$ restricts to a Lie algebra isomorphism from $\g_9(n)$ to 
\begin{equation*}
\su(2^n)^{\{X_1,Y_1,Z_1\}} = I \otimes \su(2^{n-1}) \cong \su(2^{n-1}).
\end{equation*}

According to Lemmas \ref{leminv1} and \ref{leminv2}, under the transformation $a\mapsto U a U^\dagger$,
the fixed-point subalgebra $\g_9(n)^{\theta_9}$ is sent to the fixed points of the following involution:
\begin{equation*}
a \mapsto - (U Q_9 U^T) a^T (U Q_9 U^T)^\dagger.
\end{equation*}
Recalling that $Q_9=Y_2Z_3 \cdots Z_n$ (see \eqref{thetaq9}), we find from $U^T=U$ and $e^{i\frac{\pi}{4} X} Y e^{i\frac{\pi}{4} X} = Y$ that
\begin{equation*}
U Q_9 U^T = Q_9.
\end{equation*}
Hence, the image of $\g_9(n)^{\theta_9}$ under $a\mapsto UaU^\dagger$ consists of all $b \in I \otimes \su(2^{n-1})$ that are fixed by $\theta_9$.
Writing $b=I \otimes c$ with $c\in\su(2^{n-1})$, the condition $b=\theta_9(b)$ is equivalent to $c=-Q c^T Q$, where $Q=YZ\cdots Z \in\mcP_{n-1}$. 
Since $Q^T=-Q$, this determines the Lie algebra $\sp(2^{n-2})$, due to Corollary \ref{corinv4}.
\end{proof}

Next we consider the case $k=7$.

\begin{lemma}\label{lemgkn3}
We have\/ $\aa_7(n) = \g_7(n) \cong \begin{cases} 
\su(2^{n-1}), & n \;\;\mathrm{odd}, \\
\su(2^{n-2})^{\oplus 4}, & n \ge 4 \;\;\mathrm{even}.
\end{cases}$
\end{lemma}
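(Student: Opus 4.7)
The plan follows the template of Lemmas \ref{lemgkn1} and \ref{lemgkn2}: since $\aa_7(n)=\g_7(n)$ is already furnished by Lemma \ref{lem4}, it only remains to identify $\g_7(n)$ up to isomorphism. I would apply a Clifford conjugation that localizes the stabilizer $\Stab(\aa_7(n))=\{P_I,P_X,P_Y,P_Z\}$ onto one or two qubits and then compute the resulting centralizer modulo its center, as dictated by \eqref{listgkn3}. First I would record that at each site $X\cdot Y=iZ$ and $A\cdot B=-B\cdot A$ for distinct $A,B$, yielding
\[
P_X\cdot P_Y = i^n P_Z, \qquad P_A\cdot P_B = (-1)^n P_B\cdot P_A \quad (A\ne B),
\]
with $P_A^2=I^{\otimes n}$. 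For odd $n$ the three elements pairwise anticommute and satisfy $P_X\cdot P_Y=\pm i\, P_Z$, reproducing a single-qubit Pauli triple; for even $n$ they pairwise commute with $P_Z=\pm\, P_X\cdot P_Y$.

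For odd $n$, I would construct a Clifford unitary $U$ with
\[
U P_X U^\dagger=X_1, \qquad U P_Y U^\dagger=Y_1, \qquad U P_Z U^\dagger=Z_1,
\]
whose existence follows from the matching symplectic structure and whose explicit form can be assembled stepwise by the $e^{i\frac{\pi}{4}a}$-conjugation technique of Lemma \ref{leminv3}: first localize $P_X$ onto qubit $1$ via a ladder of two-qubit rotations, then correct $P_Y$ and $P_Z$ in succession. A Pauli string commutes with all of $X_1,Y_1,Z_1$ if and only if its first tensor factor is $I$, hence
\[
\su(2^n)^{\{X_1,Y_1,Z_1\}} = I\otimes \su(2^{n-1}) \cong \su(2^{n-1}).
\]
Since $\Stab(\aa_7(n))$ has trivial center for odd $n$ by Lemma \ref{lemstab}, no quotient is taken in \eqref{listgkn3}, yielding $\g_7(n)\cong \su(2^{n-1})$.

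For even $n$, I would instead build $U$ so that $U P_X U^\dagger=X_1$ and $U P_Y U^\dagger=X_2$, which forces $U P_Z U^\dagger=\pm X_1 X_2$ by the commuting algebra. Using the four orthogonal joint spectral projections $\Pi_{\epsilon_1\epsilon_2}:=\tfrac14(I+\epsilon_1 X_1)(I+\epsilon_2 X_2)$ for $\epsilon_1,\epsilon_2\in\{\pm\}$ (each of rank $2^{n-2}$ and summing to $I^{\otimes n}$), the centralizer decomposes as
\[
\su(2^n)^{\{X_1,X_2\}} \cong \Bigl\{(a_{++},a_{+-},a_{-+},a_{--})\in\uu(2^{n-2})^{\oplus 4}\;\Big|\;\sum_{\epsilon_1,\epsilon_2}\Tr(a_{\epsilon_1\epsilon_2})=0\Bigr\}.
\]
Decomposing each block as a traceless part in $\su(2^{n-2})$ plus a scalar $c_{\epsilon_1\epsilon_2}I$ with $c_{\epsilon_1\epsilon_2}\in i\mathbb{R}$ and $\sum c_{\epsilon_1\epsilon_2}=0$, the scalar subspace is the $3$-dimensional abelian algebra $\Span\{X_1,X_2,X_1 X_2\}$. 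Quotienting by it per \eqref{listgkn3} gives $\g_7(n)\cong\su(2^{n-2})^{\oplus 4}$.

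The main obstacle is producing the explicit Clifford $U$: its existence is algebraically automatic from the agreement of symplectic structures on the two triples of Paulis, but writing it as a concrete product of two-qubit $e^{i\frac{\pi}{4}a}$ conjugations requires case-by-case bookkeeping analogous to Lemma \ref{leminv3}.
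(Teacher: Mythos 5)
Your proposal is correct and follows essentially the same route as the paper's proof: conjugate the stabilizer $\{P_X,P_Y,P_Z\}$ onto one or two qubits by a unitary, compute the centralizer there ($I\otimes\su(2^{n-1})$ for odd $n$, block-diagonal form for even $n$), quotient by the central span $\Span\{P_X,P_Y,P_Z\}$ in the even case, and identify the result via the four spectral projections. The only differences are cosmetic: the paper constructs $U$ explicitly in \eqref{eqgkn} and targets $\{Z_1,X_1\}$ resp.\ $\{Z_1,Z_2\}$, whereas you invoke Clifford transitivity and defer the bookkeeping (also note that once $P_X\mapsto X_1$, $P_Y\mapsto Y_1$, the image of $P_Z$ is $i^{1-n}Z_1=-Z_1$ when $n\equiv 3 \bmod 4$, a sign that is harmless for the centralizer).
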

\begin{proof}
Recall that $\Stab(\aa_7(n)) = \{P_I,P_X,P_Y,P_Z\}$. Since $P_X \cdot P_Y = i^n P_Z$, elements that commute with $P_X$ and $P_Y$ will commute with $P_Z$ as well.
Hence, $\su(2^n)^{\Stab(\aa_7(n))} = \su(2^n)^{\{P_X,P_Y\}}$. Recall also that $[P_X,P_Y]=0$ if and only if $n$ is even; in that case, $\su(2^n)^{\{P_X,P_Y\}}$ has a center spanned by $P_X,P_Y,P_Z$ and we need to quotient by it to obtain $\g_7(n)$ (cf.\ \eqref{listgkn3}).

In order to determine the fixed points under $P_X$ and $P_Y$, we transform them as in the proof of Lemma \ref{lemgkn1}. Consider the unitary operator
\begin{equation}\label{eqgkn}
U = \begin{cases}
e^{i\frac{\pi}{4}Z\otimes Y^{\otimes(n-1)}} e^{i\frac{\pi}{4}Y\otimes X^{\otimes(n-1)}}, & n \;\;\mathrm{odd}, \\
e^{i\frac{\pi}{4}X_2} e^{i\frac{\pi}{4} I\otimes X\otimes Z^{\otimes(n-2)}} e^{i\frac{\pi}{4}Y\otimes X^{\otimes(n-1)}}, & n \;\;\mathrm{even}.
\end{cases}
\end{equation}
Using \eqref{equab}, one checks that
\begin{align}
U P_X U^\dagger &= Z_1, & U P_Y U^\dagger &= X_1 & \text{for } &n \;\;\mathrm{odd},  \label{eqgkn1} \\
U P_X U^\dagger &= Z_1, & U P_Y U^\dagger &= (-1)^{(n+2)/2} Z_2 & \text{for } &n \;\;\mathrm{even}. \label{eqgkn2}
\end{align}
Indeed, we have
\begin{align*}
e^{i\frac{\pi}{4} Y\otimes X^{\otimes(n-1)}} P_X e^{-i\frac{\pi}{4}Y\otimes X^{\otimes(n-1)}}
= i (Y\otimes X^{\otimes(n-1)}) \cdot P_X = Z_1.
\end{align*}
Since the other factors of $U$ commute with $Z_1$, we obtain that $U P_X U^\dagger = Z_1$. The calculation of $U P_Y U^\dagger$ is similar.
When $n$ is odd, $Y\otimes X^{\otimes(n-1)}$ commutes with $P_Y$, and we get from \eqref{equab}:
\begin{align*}
U P_Y U^\dagger = e^{i\frac{\pi}{4}Z\otimes Y^{\otimes(n-1)}} P_Y e^{-i\frac{\pi}{4}Z\otimes Y^{\otimes(n-1)}}
= i (Z\otimes Y^{\otimes(n-1)}) \cdot P_Y = X_1.
\end{align*}
When $n$ is even, after applying \eqref{equab} three times, we obtain:
\begin{align*}
U P_Y U^\dagger &= i^3 X_2 \cdot (I\otimes X\otimes Z^{\otimes(n-2)}) \cdot (Y\otimes X^{\otimes(n-1)}) \cdot P_Y \\
&= i^{n+2} X_2 \cdot (I\otimes X\otimes Z^{\otimes(n-2)}) \cdot (I\otimes Z^{\otimes(n-1)}) \\
&= i^{n+1} X_2 \cdot Y_2 = i^{n+2} Z_2.
\end{align*}
This proves \eqref{eqgkn1} and \eqref{eqgkn2}.

It follows from \eqref{eqgkn1} that, for $n$ odd, the map $a\mapsto U a U^\dagger$ gives a Lie algebra isomorphism
\begin{equation*}
\g_7(n) = \su(2^n)^{\{P_X,P_Y\}} \to \su(2^n)^{\{X_1,Z_1\}} = I \otimes \su(2^{n-1}) \cong \su(2^{n-1}).
\end{equation*}
Now suppose that $n$ is even. Then, by \eqref{eqgkn2}, the map $a\mapsto U a U^\dagger$ gives an isomorphism
\begin{equation*}
\su(2^n)^{\{P_X,P_Y\}} \to \su(2^n)^{\{Z_1,Z_2\}} 
= \bigl( \Span_{\mathbb R}\{I,Z\} \otimes \Span_{\mathbb R}\{I,Z\} \otimes \su(2^{n-2}) \bigr) \oplus \Span\{Z_1,Z_2,Z_1Z_2\}.
\end{equation*}
After we quotient by the center $\Span\{Z_1,Z_2,Z_1Z_2\}$, we obtain
\begin{equation*}
\g_7(n) = \su(2^n)^{\{P_X,P_Y\}} / \Span\{P_X,P_Y,P_Z\} \cong \Span_{\mathbb R}\{I,Z\} \otimes \Span_{\mathbb R}\{I,Z\} \otimes \su(2^{n-2}).
\end{equation*}
Again as in the proof of Lemma \ref{lemgkn1}, let $P_\pm := (I\pm Z)/2$, and consider the four projections
\begin{equation*}
P_1:= P_+ \otimes P_+, \qquad P_2:= P_+ \otimes P_-, \qquad P_3:= P_- \otimes P_+, \qquad P_4:= P_- \otimes P_-,
\end{equation*}
which satisfy
\begin{equation*}
P_i \cdot P_i = P_i, \qquad P_i \cdot P_j = 0 \quad (i\ne j), \qquad \sum_{i=1}^4 P_i = I \otimes I.
\end{equation*}
Then the linear map
\begin{equation}\label{eqgkn3}
(a_1,a_2,a_3,a_4) \mapsto \sum_{j=1}^4 P_j \otimes a_j
\end{equation}
is an isomorphism from $\su(2^{n-2})^{\oplus 4}$ to $\Span_{\mathbb R}\{I,Z\} \otimes \Span_{\mathbb R}\{I,Z\} \otimes \su(2^{n-2})$.
\end{proof}

In the remaining two cases $k=3,5$, as before we embed $\aa_3(n)$ and $\aa_5(n)$ as subalgebras of $\aa_7(n)$. 
We continue to use the notation from Sect.\ \ref{seclow} and, as in Lemma \ref{lemgkn3}, we consider separately the cases when $n$ is odd or even.

\begin{lemma}\label{lemgkn4o}
We have\/
$\aa_3(n) \cong \tilde\aa_3(n) = \g_7(n)^{\tilde\theta_3} \cong 
\begin{cases} 
        \so(2^{n-1}), & n\equiv \pm1 \;\mathrm{mod}\; 8, \\
        \sp(2^{n-2}), & n\equiv \pm3 \;\mathrm{mod}\; 8.
\end{cases}$
\end{lemma}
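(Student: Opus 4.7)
The plan begins by combining the two structural facts established earlier: the isomorphism $\aa_3(n)\cong\tilde\aa_3(n)$ of Sect.~\ref{seciso} and the identity $\tilde\aa_3(n)=\g_7(n)^{\tilde\theta_3}$ of Lemma~\ref{lem4}. It therefore suffices to transport $\g_7(n)^{\tilde\theta_3}$ through the explicit isomorphism $\g_7(n)\cong I\otimes\su(2^{n-1})$ produced in the odd-$n$ case of Lemma~\ref{lemgkn3} and then classify the image by means of Corollary~\ref{corinv4}.

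Recall $U=e^{i\frac{\pi}{4}B_1}e^{i\frac{\pi}{4}B_2}$ with $B_1=Z\otimes Y^{\otimes(n-1)}$ and $B_2=Y\otimes X^{\otimes(n-1)}$, so $U^T=e^{-i\frac{\pi}{4}B_2}e^{i\frac{\pi}{4}B_1}$ for $n$ odd (using $B_1^T=B_1$ and $B_2^T=-B_2$). Under $a\mapsto UaU^\dagger$, the involution $\tilde\theta_3$ becomes $a\mapsto -\tilde Q_3\,a^T\tilde Q_3^\dagger$ with $\tilde Q_3:=UQ_3U^T$, by Lemmas~\ref{leminv1} and~\ref{leminv2}. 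The first task is to pin down $\tilde Q_3$ up to phase. Using Corollary~\ref{corp1} together with a position-by-position count of anticommutations between $B_1,B_2$ and the repeating pattern $Q_3=P_{ZIYX}$, one checks that for $n\equiv 3\pmod 4$ both $B_1$ and $B_2$ commute with $Q_3$, whereas for $n\equiv 1\pmod 4$ the element $B_2$ anticommutes with $Q_3$ and $B_1$ anticommutes with $B_2Q_3$. The two nested conjugations then collapse to
\begin{equation*}
\tilde Q_3=\begin{cases}iQ_3\,B_1, & n\equiv 3\pmod 4,\\ iB_2\,Q_3, & n\equiv 1\pmod 4,\end{cases}
\end{equation*}
whose first tensor factor equals $Z\cdot Z=I$ in the first case and $Y\cdot Z=iX$ in the second. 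In either case this factor is a symmetric single-qubit Pauli, so $\tilde Q_3=\omega\,A\otimes R$ for some phase $\omega$, some $A\in\{I,X\}$ with $A^T=A$, and some $R\in\mcP_{n-1}$.

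It then follows that the transformed involution acts on $I\otimes\su(2^{n-1})$ by $I\otimes c\mapsto I\otimes(-Rc^TR)$, so by Corollary~\ref{corinv4} its fixed-point subalgebra is $\so(2^{n-1})$ when $R^T=R$ and $\sp(2^{n-2})$ when $R^T=-R$. Because $A^T=A$, the symmetry of $R$ coincides with that of $\tilde Q_3$, and the latter equals the symmetry of $Q_3$ since $(UQ_3U^T)^T=UQ_3^TU^T$. Thus the whole classification reduces to counting $Y$'s in $Q_3=P_{ZIYX}$. These occur precisely at the positions $j\le n$ with $j\equiv 3\pmod 4$, giving $\lfloor(n+1)/4\rfloor$ copies: for $n=8m\pm 1$ this count is $2m$ (even), forcing $R^T=R$ and hence $\so(2^{n-1})$; for $n=8m\pm 3$ it is $2m+1$ (odd), forcing $R^T=-R$ and hence $\sp(2^{n-2})$, which is the claimed dichotomy.

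The main obstacle is the explicit identification of the first tensor factor of $\tilde Q_3$. Because $U^T\ne U^\dagger$, the map $Q_3\mapsto UQ_3U^T$ is not the familiar conjugation action of a Clifford unitary on the Pauli group, so there is no purely symbolic shortcut; the anticommutation bookkeeping above has to be carried out carefully, separately in the two residue classes $n\pmod 4$.
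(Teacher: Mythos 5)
Your argument is correct, and it follows the same skeleton as the paper's proof: conjugate by the unitary $U$ from the odd-$n$ case of Lemma \ref{lemgkn3}, transport $\tilde\theta_3$ to the involution $a\mapsto -\tilde Q_3 a^T\tilde Q_3^\dagger$ with $\tilde Q_3=UQ_3U^T$, restrict to $I\otimes\su(2^{n-1})$, and invoke Corollary \ref{corinv4}. Where you diverge is the endgame. The paper computes $\tilde Q_3$ completely explicitly in the two classes mod $4$ (obtaining $-X_1(X_2Z_3Y_5)\cdots$ and $iY_2(Z_4X_5Y_6)\cdots$), reads off the induced $(n-1)$-qubit strings $P_{XZIY}$ and $P_{YIZX}$, and then checks their transpose-symmetry separately in each of the four residue classes mod $8$. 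You instead note that $(UQ_3U^T)^T=UQ_3^TU^T$ preserves the symmetry type, that the first tensor factor of $\tilde Q_3$ is $I$ or (up to phase) $X$ and hence symmetric, and therefore that the symmetry of the residual string $R$ equals that of $Q_3$ itself; the whole mod-$8$ dichotomy then reduces to the parity of the number of $Y$'s in $P_{ZIYX}$ truncated at $n$. This buys a shorter case analysis (only the mod-$4$ commutation bookkeeping needed to pin down the first factor, rather than four explicit string computations), at the cost of the slightly more abstract symmetry-transport observation; your anticommutation counts in both residue classes check out against the paper's. One harmless slip: for $n=8m-3$ the $Y$-count is $2m-1$, not $2m+1$ (and similarly the $\pm1$ cases both give $2m$ only after writing $8m-1=8(m-1)+7$), but the parity, which is all the argument uses, is as you state.
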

\begin{proof}
We apply the transformation $a\mapsto U a U^\dagger$ from the proof of Lemma \ref{lemgkn3} that gives a Lie algebra isomorphism
$\g_7(n) \to I \otimes \su(2^{n-1}) \cong \su(2^{n-1})$, where $U$ is defined by \eqref{eqgkn} for odd $n$.
Then, by Lemmas \ref{leminv1}, \ref{leminv2},
the fixed points of $\tilde\theta_k$ (see \eqref{tildethetaq35}) are sent to the fixed points of the involution
\begin{equation}\label{eqbuau3}
g \mapsto -(U Q_k U^T) g^T (U Q_k U^T)^\dagger, \qquad k=3,5.
\end{equation}
Recall that $Q_3 = P_{ZIYX}$ is given by \eqref{thetaq3}, and compute
\begin{equation*}
\tilde Q_3 := U Q_3 U^T
= e^{i\frac{\pi}{4}Z\otimes Y^{\otimes(n-1)}} e^{i\frac{\pi}{4}Y\otimes X^{\otimes(n-1)}}
P_{ZIYX} e^{-i\frac{\pi}{4}Y\otimes X^{\otimes(n-1)}} e^{i\frac{\pi}{4}Z\otimes Y^{\otimes(n-1)}}.
\end{equation*}

Note that, when $n\equiv 1$ mod $4$, $P_{ZIYX}$ anticommutes with $Y\otimes X^{\otimes(n-1)}$.
By \eqref{equab}, we have:
\begin{align*}
e^{i\frac{\pi}{4}Y\otimes X^{\otimes(n-1)}} & P_{ZIYX} e^{-i\frac{\pi}{4}Y\otimes X^{\otimes(n-1)}} \\
&= i (Y_1 X_2 X_3 X_4 \cdots X_{n-1} X_n) \cdot (Z_1 Y_3 X_4 Z_5 Y_7 X_8 \cdots X_{n-1} Z_n) \\
&= -X_1 (X_2 Z_3 Y_5) (X_6 Z_7 Y_9) \cdots (X_{n-3} Z_{n-2} Y_n).
\end{align*}
As this anticommutes with $e^{i\frac{\pi}{4}Z\otimes Y^{\otimes(n-1)}}$, we obtain
\begin{equation*}
\tilde Q_3 = -X_1 (X_2 Z_3 Y_5) (X_6 Z_7 Y_9) \cdots (X_{n-3} Z_{n-2} Y_n),
\qquad n\equiv 1 \;\mathrm{mod}\; 4.
\end{equation*}
Hence, restricted to elements $g=I\otimes c$ with $c\in \su(2^{n-1})$, the involution \eqref{eqbuau3} becomes:
\begin{equation*}
c \mapsto -P_{XZIY} c^T P_{XZIY}, \qquad\text{for}\quad n\equiv 1 \;\mathrm{mod}\; 4.
\end{equation*}
For the fixed-point subalgebra, we obtain from Corollary \ref{corinv4}:
\begin{equation*}
(P_{XZIY})^T = \begin{cases} 
P_{XZIY}, & n\equiv 1 \;\mathrm{mod}\; 8, \\
-P_{XZIY}, & n\equiv 5 \;\mathrm{mod}\; 8 \\ 
\end{cases}
\quad\Rightarrow\quad
\tilde\aa_3(n) \cong 
\begin{cases} 
        \so(2^{n-1}), & n\equiv 1 \;\mathrm{mod}\; 8, \\
        \sp(2^{n-2}), & n\equiv 5 \;\mathrm{mod}\; 8.
\end{cases}
\end{equation*}

Alternatively, when $n\equiv 3$ mod $4$, $P_{ZIYX}$ commutes with both $Y\otimes X^{\otimes(n-1)}$ and $Z\otimes Y^{\otimes(n-1)}$. Hence, in this case,
\begin{align*}
\tilde Q_3 &= e^{i\frac{\pi}{4}Z\otimes Y^{\otimes(n-1)}} P_{ZIYX} e^{i\frac{\pi}{4}Z\otimes Y^{\otimes(n-1)}} \\
&= i (Z_1 Y_2 Y_3 Y_4 \cdots Y_{n-1} Y_n) \cdot (Z_1 Y_3 X_4 Z_5 Y_7 X_8 \cdots X_{n-3} Z_{n-2} Y_n) \\
&= i Y_2 (Z_4 X_5 Y_6) (Z_8 X_9 Y_{10}) \cdots (Z_{n-3} X_{n-2} Y_{n-1}).
\end{align*}
Thus, restricted to elements $g=I\otimes c$ with $c\in \su(2^{n-1})$, the involution \eqref{eqbuau3} simplifies to:
\begin{equation*}
c \mapsto -P_{YIZX} c^T P_{YIZX}, \qquad\text{for}\quad n\equiv 3 \;\mathrm{mod}\; 4.
\end{equation*}
Corollary \ref{corinv4} gives for the fixed-point subalgebra:
\begin{equation*}
(P_{YIZX})^T = \begin{cases} 
P_{YIZX}, & n\equiv 7 \;\mathrm{mod}\; 8, \\
-P_{YIZX}, & n\equiv 3 \;\mathrm{mod}\; 8 \\ 
\end{cases}
\quad\Rightarrow\quad
\tilde\aa_3(n) \cong 
\begin{cases} 
        \so(2^{n-1}), & n\equiv 7 \;\mathrm{mod}\; 8, \\
        \sp(2^{n-2}), & n\equiv 3 \;\mathrm{mod}\; 8.
\end{cases}
\end{equation*}
This completes the proof of the lemma.
\end{proof}

\begin{lemma}\label{lemgkn5o}
We have\/
$\aa_5(n) \cong \tilde\aa_5(n) = \g_7(n)^{\tilde\theta_5} \cong 
\begin{cases} 
        \so(2^{n-1}), & n\equiv \pm1 \;\mathrm{mod}\; 6, \\
        \sp(2^{n-2}), & n\equiv 3 \;\mathrm{mod}\; 6.
\end{cases}$
\end{lemma}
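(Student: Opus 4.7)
The plan is to mirror the strategy of Lemma \ref{lemgkn4o} for the odd case. Since $\tilde\aa_5(n)\subseteq\aa_7(n)$ and $\aa_7(n)=\g_7(n)$ with $\tilde\aa_5(n)=\g_7(n)^{\tilde\theta_5}$ by Lemma \ref{lem4}, we apply the unitary transformation $a\mapsto UaU^\dagger$ from the proof of Lemma \ref{lemgkn3}, using the same $U=e^{i\frac{\pi}{4}Z\otimes Y^{\otimes(n-1)}}e^{i\frac{\pi}{4}Y\otimes X^{\otimes(n-1)}}$ valid for odd $n$. This gives an isomorphism $\g_7(n)\to I\otimes\su(2^{n-1})\cong\su(2^{n-1})$. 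By Lemmas \ref{leminv1} and \ref{leminv2}, the fixed-point subalgebra $\g_7(n)^{\tilde\theta_5}$ is mapped to the fixed points of the involution $g\mapsto -\tilde Q_5\, g^T\,\tilde Q_5^\dagger$, where $\tilde Q_5 := UQ_5U^T$ with $Q_5 = P_{IYZ}$ from \eqref{thetaq5}.

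The main computation is to determine $\tilde Q_5$ explicitly, which depends on $n\bmod 6$ since $Q_5$ has period $3$ while the exponents in $U$ anticommute or commute with $Q_5$ depending on the parity of the number of $Y$'s and $Z$'s in $Q_5$ restricted to positions $2,\dots,n$. A direct count shows: for $n\equiv 1\pmod 6$ and $n\equiv 3\pmod 6$ the operator $Q_5$ commutes with $Y\otimes X^{\otimes(n-1)}$, while for $n\equiv 5\pmod 6$ it anticommutes; the commutation with $Z\otimes Y^{\otimes(n-1)}$ follows a complementary pattern. In each case, two applications of Euler's formula \eqref{equab}, together with the identity $e^{i\frac{\pi}{4}a}\cdot b = b\cdot e^{-i\frac{\pi}{4}a}$ from \eqref{equab1} when $a,b$ anticommute, reduce $\tilde Q_5$ to a matrix product of the form $(\text{phase})\cdot R_1\cdot R_2\cdot Q_5$ where $R_j$ is a Pauli string; simplifying yields a Pauli string $\tilde Q_5$ that is the identity in the first qubit (up to a scalar), so that the induced involution on $\su(2^{n-1})$ reads $c\mapsto -Q\, c^T\, Q$ for an explicit $Q\in\mcP_{n-1}$ obtained by deleting the first position.

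Once $Q$ is identified, the isomorphism class of the fixed-point subalgebra is read off from Corollary \ref{corinv4}: $\so(2^{n-1})$ when $Q^T=Q$ and $\sp(2^{n-2})$ when $Q^T=-Q$. The sign of $Q^T$ is governed by the parity of the number of $Y$'s in $Q$, which depends only on $n\bmod 6$. Carrying out this bookkeeping shows that $Q^T=Q$ for $n\equiv\pm 1\pmod 6$ and $Q^T=-Q$ for $n\equiv 3\pmod 6$, which gives the claimed isomorphisms.

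The main obstacle will be the case analysis on $n\bmod 6$, and in particular the case $n\equiv 5\pmod 6$, in which $Q_5$ anticommutes with the inner factor $e^{i\frac{\pi}{4}Y\otimes X^{\otimes(n-1)}}$; there the outer factor $e^{i\frac{\pi}{4}Z\otimes Y^{\otimes(n-1)}}$ then interacts non-trivially with the resulting product, so one must apply \eqref{equab1}/\eqref{equab} carefully and track the total phase and the parity of $Y$'s in $Q$ to confirm $Q^T=Q$ rather than $Q^T=-Q$. The cases $n\equiv 1\pmod 6$ and $n\equiv 3\pmod 6$ are more straightforward because $Q_5$ commutes with $Y\otimes X^{\otimes(n-1)}$, reducing $\tilde Q_5$ to a single conjugation by $e^{i\frac{\pi}{4}Z\otimes Y^{\otimes(n-1)}}$.
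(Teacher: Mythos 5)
Your plan is essentially the paper's proof: for odd $n$ apply the unitary $U$ from Lemma \ref{lemgkn3}, transport $\tilde\theta_5$ via Lemmas \ref{leminv1} and \ref{leminv2} to the involution built from $\tilde Q_5 = UQ_5U^T$, compute $\tilde Q_5$ case by case mod $6$, restrict to $I\otimes\su(2^{n-1})$, and read off $\so$ versus $\sp$ from Corollary \ref{corinv4}; the final sign pattern you claim ($Q^T=Q$ for $n\equiv\pm1$, $Q^T=-Q$ for $n\equiv3$) agrees with the paper. Two small inaccuracies in your sketch are worth fixing, though neither derails the argument. First, $\tilde Q_5$ is \emph{not} the identity in the first qubit except when $n\equiv3$ mod $6$: the actual computation gives $\tilde Q_5 = i^{(n+2)/3}Z_1(X_3Y_4)(X_6Y_7)\cdots$ for $n\equiv1$ and $\tilde Q_5 = -iX_1X_2(Z_4X_5)\cdots$ for $n\equiv5$, so the first slot carries $Z$ or $X$. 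The deletion of the first position is still legitimate, but for a different reason than the one you give: the elements being acted on have the form $I\otimes c$, and conjugating $I$ by any single Pauli returns $I$ (exactly as in Lemma \ref{lemgkn4o}, where $\tilde Q_3$ begins with $X_1$). Second, the commutation bookkeeping with $Z\otimes Y^{\otimes(n-1)}$ is not ``complementary'' to that with $Y\otimes X^{\otimes(n-1)}$: one finds that $Q_5$ commutes with both for $n\equiv1$, commutes with the first and anticommutes with the second for $n\equiv3$, and anticommutes with both for $n\equiv5$. With these corrections the computation goes through exactly as in the paper, yielding the induced involutions $c\mapsto -P_{IXY}c^TP_{IXY}$, $-P_{YZI}c^TP_{YZI}$, $-P_{XIZ}c^TP_{XIZ}$ for $n\equiv1,3,5$ mod $6$ respectively, and hence the stated isomorphism classes.
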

\begin{proof}
The proof is analogous to that of Lemma \ref{lemgkn4o}. Recall that $U$ and $Q_5 = P_{IYZ}$ are given by \eqref{eqgkn}, \eqref{thetaq5}, and compute
\begin{equation*}
\tilde Q_5 := U Q_5 U^T
= e^{i\frac{\pi}{4}Z\otimes Y^{\otimes(n-1)}} e^{i\frac{\pi}{4}Y\otimes X^{\otimes(n-1)}}
P_{IYZ} e^{-i\frac{\pi}{4}Y\otimes X^{\otimes(n-1)}} e^{i\frac{\pi}{4}Z\otimes Y^{\otimes(n-1)}}.
\end{equation*}

When $n\equiv 3$ mod $6$, $P_{IYZ}= Y_2 Z_3 Y_5 Z_6 Y_8 Z_9 \cdots Y_{n-1} Z_{n}$ commutes with $Y\otimes X^{\otimes(n-1)}$ and anticommutes with $Z\otimes Y^{\otimes(n-1)}$. 
Hence, by \eqref{equab1}, $\tilde Q_5 = P_{IYZ}$. Restricted to elements $g=I\otimes c$ with $c\in \su(2^{n-1})$, the involution \eqref{eqbuau3} simplifies to:
\begin{equation*}
c \mapsto -P_{YZI} c^T P_{YZI}, \qquad\text{for}\quad n\equiv 3 \;\mathrm{mod}\; 6.
\end{equation*}
Since $(P_{YZI})^T = -P_{YZI}$, the fixed-point subalgebra is isomorphic to $\sp(2^{n-2})$, by Corollary \ref{corinv4}.

For $n\equiv 1$ mod $6$, $P_{IYZ}= Y_2 Z_3 Y_5 Z_6 \cdots Y_{n-2} Z_{n-1}$ commutes with both $Y\otimes X^{\otimes(n-1)}$ and $Z\otimes Y^{\otimes(n-1)}$. 
Hence, in this case,
\begin{align*}
\tilde Q_5 &= i (Z_1 Y_2 Y_3 Y_4 \cdots Y_{n-1} Y_n) \cdot (Y_2 Z_3 Y_5 Z_6 \cdots Y_{n-2} Z_{n-1}) \\
&= i^{(n+2)/3} Z_1 (X_3 Y_4) (X_6 Y_7) \cdots (X_{n-1} Y_n).
\end{align*}
The involution induced by \eqref{eqbuau3} on $c\in \su(2^{n-1})$ is given by
\begin{equation*}
c \mapsto -P_{IXY} c^T P_{IXY}, \qquad\text{for}\quad n\equiv 1 \;\mathrm{mod}\; 6,
\end{equation*}
and the fixed-point subalgebra is isomorphic to $\so(2^{n-1})$, because $(P_{IXY})^T = P_{IXY}$.

Finally, for $n\equiv 5$ mod $6$, using \eqref{equab}, we find
\begin{align*}
e^{i\frac{\pi}{4}Y\otimes X^{\otimes(n-1)}} & P_{IYZ} e^{-i\frac{\pi}{4}Y\otimes X^{\otimes(n-1)}} \\
&= i (Y_1 X_2 X_3 X_4 X_5 \cdots X_n) (Y_2 Z_3 Y_5 \cdots Z_{n-2} Y_{n}) \\
&= -Y_1 Z_2 Y_3 X_4 Z_5 \cdots Y_{n-2} X_{n-1} Z_{n}.
\end{align*}
Then applying \eqref{equab1}, we get
\begin{align*}
\tilde Q_5 &= -i (Z_1 Y_2 Y_3 Y_4 Y_5 \cdots Y_n) \cdot (Y_1 Z_2 Y_3 X_4 Z_5 \cdots Y_{n-2} X_{n-1} Z_{n}) \\
&= -i X_1 X_2 (Z_4 X_5) \cdots (Z_{n-1} X_n).
\end{align*}
This induces the involution on $\su(2^{n-1})$ given by
\begin{equation*}
c \mapsto -P_{XIZ} c^T P_{XIZ}, \qquad\text{for}\quad n\equiv 5 \;\mathrm{mod}\; 6,
\end{equation*}
and the fixed-point subalgebra is isomorphic again to $\so(2^{n-1})$.
\end{proof}

\begin{lemma}\label{lemgkn4e}
We have\/
$\aa_3(n) \cong \tilde\aa_3(n) = \g_7(n)^{\tilde\theta_3} \cong 
\begin{cases} 
        \so(2^{n-2})^{\oplus 4}, & n\equiv 0 \;\mathrm{mod}\; 8, \\
        \su(2^{n-2})^{\oplus2}, & n\equiv \pm2 \;\mathrm{mod}\; 8, \\
        \sp(2^{n-3})^{\oplus4}, & n\equiv 4 \;\mathrm{mod}\; 8.    
\end{cases}$
\end{lemma}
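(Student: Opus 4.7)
The plan is to mirror the strategy of Lemma \ref{lemgkn4o}, but now using the even-$n$ part of Lemma \ref{lemgkn3}. The unitary $U$ defined in \eqref{eqgkn} for $n$ even gives an isomorphism $a \mapsto UaU^\dagger$ from $\g_7(n)$ onto $\Span_{\mathbb R}\{I,Z\}^{\otimes 2} \otimes \su(2^{n-2})$ (modulo the central part $\Span\{Z_1,Z_2,Z_1 Z_2\}$), and by Lemmas \ref{leminv1} and \ref{leminv2}, the involution $\tilde\theta_3$ from \eqref{tildethetaq35}--\eqref{thetaq3} transforms into
\[
g \;\longmapsto\; -\tilde Q_3\, g^T\, \tilde Q_3, \qquad \tilde Q_3 := U Q_3 U^T.
\]
So the first step is to compute $\tilde Q_3$ explicitly in each case $n \bmod 8$, using \eqref{equab} and the fact that each of the three factors of $U$ either commutes with the Pauli string it is conjugating (acting trivially) or anticommutes with it (multiplying by $i$ times the product).

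Next, I would use the projector decomposition \eqref{eqgkn3} to write any element of the codomain as $g = \sum_{j=1}^{4} P_j \otimes a_j$ with $a_j \in \su(2^{n-2})$. Writing $\tilde Q_3 = R \otimes S$, where $R \in \mcP_2$ acts on the first two qubits and $S \in \mcP_{n-2}$ on the remaining ones, and using that $P_j^T = P_j$ since the $P_j$ are real diagonal, the transformed involution becomes
\[
(a_1, a_2, a_3, a_4) \;\longmapsto\; \bigl(-\epsilon_j\, S\, a_{\sigma(j)}^T\, S \bigr)_{j=1}^{4},
\]
where $\sigma$ is the permutation of $\{1,2,3,4\}$ induced by $R P_j R$ and the signs $\epsilon_j \in \{\pm1\}$ come from that conjugation. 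When $R$ is diagonal, i.e.\ $R \in \{II, IZ, ZI, ZZ\}$, $\sigma$ is the identity and the involution preserves each block, so the fixed points in each of the four summands form either $\so(2^{n-2})$ or $\sp(2^{n-3})$, depending on whether $S^T = S$ or $S^T = -S$ (Corollary \ref{corinv4}). When $R$ has an $X$ or $Y$ in position $1$ or $2$, it anticommutes with $Z_1$ or $Z_2$ and so $\sigma$ swaps two disjoint pairs of blocks; the fixed points are then parameterized freely by two of the blocks, with the other two determined by $a_j = -\epsilon\, S\, a_{\sigma(j)}^T\, S$, yielding $\su(2^{n-2})^{\oplus 2}$.

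The main obstacle will be the bookkeeping of $\tilde Q_3$ as a function of $n \bmod 8$. Since $Q_3 = P_{ZIYX}$ has period four, the innermost conjugation by $e^{i\frac{\pi}{4} Y\otimes X^{\otimes(n-1)}}$ is trivial when $n \equiv 0 \bmod 4$ (an even number of anticommuting positions) and nontrivial when $n \equiv 2 \bmod 4$, introducing $X$'s in positions $1$ and $2$ in the latter case. The subsequent factors of $U$ act only on qubits $2, \ldots, n$, and I would track: (i) the Pauli content of $\tilde Q_3$ on qubits $1$ and $2$, which determines $R$ and hence $\sigma$; (ii) the parity of the number of $Y$'s in $\tilde Q_3$, which together with the overall sign dictates whether $(\tilde Q_3)^T = \pm \tilde Q_3$; and (iii) whether, after splitting off $R$, the remaining factor $S$ satisfies $S^T = S$ or $S^T = -S$. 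I expect this analysis to yield the four cases $R$ diagonal with $S^T = S$ ($n\equiv 0 \bmod 8$), $R$ diagonal with $S^T = -S$ ($n\equiv 4 \bmod 8$), and $R$ non-diagonal ($n \equiv \pm 2 \bmod 8$), matching the claimed isomorphism types $\so(2^{n-2})^{\oplus 4}$, $\sp(2^{n-3})^{\oplus 4}$, and $\su(2^{n-2})^{\oplus 2}$, respectively.
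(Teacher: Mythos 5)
Your proposal follows essentially the same route as the paper's proof: conjugate by the even-$n$ unitary $U$ of Lemma \ref{lemgkn3}, compute $\tilde Q_3 = U Q_3 U^T$ case by case mod $8$, and analyze the induced involution through the projector decomposition \eqref{eqgkn3}, obtaining $\so(2^{n-2})$ or $\sp(2^{n-3})$ in each of the four blocks when the first-two-qubit part of $\tilde Q_3$ is diagonal (with the sign of $S^T=\pm S$ separating $n\equiv 0$ from $n\equiv 4$ mod $8$), and $\su(2^{n-2})^{\oplus 2}$ when it swaps the blocks in pairs. The only difference is that for $n\equiv\pm2 \bmod 8$ the paper first applies an extra unitary $V$ to reduce $\tilde Q_3$ to $iX_1X_2$ (resp.\ $-X_1Y_2$) before making the block-swap argument, whereas your direct argument with a general symmetric or antisymmetric $S$ renders that step unnecessary.
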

\begin{proof}
As in the proof of Lemma \ref{lemgkn4o}, we need to compute
\begin{equation*}
\tilde Q_3 := U Q_3 U^T
= e^{i\frac{\pi}{4}X_2} e^{i\frac{\pi}{4} I\otimes X\otimes Z^{\otimes(n-2)}} e^{i\frac{\pi}{4}Y\otimes X^{\otimes(n-1)}}
P_{ZIYX} e^{-i\frac{\pi}{4}Y\otimes X^{\otimes(n-1)}} e^{i\frac{\pi}{4} I\otimes X\otimes Z^{\otimes(n-2)}} e^{i\frac{\pi}{4}X_2}.
\end{equation*}
Using \eqref{equab1}, \eqref{equab}, we find for $n\equiv 0$ mod $4$:
\begin{equation*}
\tilde Q_3 = -Z_1(X_3 Y_4 Z_6)(X_7 Y_8 Z_{10}) \cdots X_{n-1} Y_n.
\end{equation*}
Via the isomorphism $\g_7(n) \cong \su(2^{n-2})^{\oplus 4}$ from the proof of Lemma \ref{lemgkn3} (see \eqref{eqgkn3}),
the involution induced from \eqref{eqbuau3} on each copy of $\su(2^{n-2})$ is given by
\begin{equation*}
a_j \mapsto -P_{XYIZ} a_j^T P_{XYIZ}, \qquad\text{for}\quad 1\le j\le 4, \quad n\equiv 0 \;\mathrm{mod}\; 4.
\end{equation*}
For the fixed-point subalgebra, we get from Corollary \ref{corinv4}:
\begin{equation*}
(P_{XYIZ})^T = \begin{cases} 
P_{XYIZ}, & n\equiv 0 \;\mathrm{mod}\; 8, \\
-P_{XYIZ}, & n\equiv 4 \;\mathrm{mod}\; 8 \\ 
\end{cases}
\quad\Rightarrow\quad
\tilde\aa_3(n) \cong 
\begin{cases} 
        \so(2^{n-2})^{\oplus 4}, & n\equiv 0 \;\mathrm{mod}\; 8, \\
        \sp(2^{n-3})^{\oplus4}, & n\equiv 4 \;\mathrm{mod}\; 8.     
\end{cases}
\end{equation*}

Using \eqref{equab1}, \eqref{equab}, we find for $n\equiv 0$ mod $4$:
\begin{equation*}
\tilde Q_3 = X_1 X_2 (Z_4 X_5 Y_6)(Z_8 X_9 Y_{10}) \cdots (Z_{n-2} X_{n-1} Y_n).
\end{equation*}
Consider the unitary operator
\begin{equation*}
V = \begin{cases} 
e^{i\frac{\pi}{4} Z_4 X_5 Y_6 \cdots Z_{n-2} X_{n-1} Y_n}, & n\equiv 2 \;\mathrm{mod}\; 8, \\
e^{i\frac{\pi}{4} Z_2 Z_4 X_5 Y_6 \cdots Z_{n-2} X_{n-1} Y_n}, & n\equiv 6 \;\mathrm{mod}\; 8,
\end{cases}
\end{equation*}
and perform the transformation $a\mapsto V a V^\dagger$ on $\g_7(n)$. Since $V$ commutes with $Z_1$ and $Z_2$, this transformation preserves the decomposition
$\g_7(n) \cong \su(2^{n-2})^{\oplus 4}$ given by \eqref{eqgkn3}. For $n\equiv 2$ mod $8$, we have $V^T=V$ and $V$ commutes with $\tilde Q_3$. Thus, $\tilde Q_3$
gets transformed to
\begin{equation*}
V \tilde Q_3 V^T = V^2 \tilde Q_3 = i (Z_4 X_5 Y_6 \cdots Z_{n-2} X_{n-1} Y_n) \cdot \tilde Q_3 = i X_1 X_2.
\end{equation*}
The involution induced on $\g_7(n)$ is given by
\begin{equation*}
a \mapsto - X_1 X_2 a^T X_1 X_2.
\end{equation*}
Writing $a$ as in \eqref{eqgkn3}, we note that
\begin{equation*}
X_1 X_2 \cdot P_1 \cdot X_1 X_2 = P_4, \qquad X_1 X_2 \cdot P_2 \cdot X_1 X_2 = P_3.
\end{equation*}
Hence, $a=(a_1,a_2,a_3,a_4)$ is a fixed point of the above involution if and only if $a_4=-a_1^T$, $a_3=-a_2^T$. Sending such $a$ to $(a_1,a_2)$ gives an isomorphism
from the fixed-point subalgebra to $\su(2^{n-2})^{\oplus2}$.

When $n\equiv 6$ mod $8$, we have $V^T=V^{-1}$ and $V$ anticommutes with $\tilde Q_3$. Thus, $\tilde Q_3$
gets transformed to
\begin{equation*}
V \tilde Q_3 V^T = V \tilde Q_3 V^{-1} = V^2 \tilde Q_3 = i (Z_2 Z_4 X_5 Y_6 \cdots Z_{n-2} X_{n-1} Y_n) \cdot \tilde Q_3 = -X_1 Y_2.
\end{equation*}
The rest of the proof is similar to the case $n\equiv 2$ mod $8$ above.
\end{proof}

\begin{lemma}\label{lemgkn5e}
We have\/ $\aa_5(n) \cong \tilde\aa_5(n) = \g_7(n)^{\tilde\theta_5} \cong 
\begin{cases} 
        \so(2^{n-2})^{\oplus 4}, & n\equiv 0 \;\mathrm{mod}\; 6, \\
        \su(2^{n-2})^{\oplus2}, & n\equiv \pm2 \;\mathrm{mod}\; 6.
\end{cases}$
\end{lemma}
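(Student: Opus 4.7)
The plan is to mirror the strategy of Lemma \ref{lemgkn4e}. I would start by applying the unitary transformation $a \mapsto U a U^\dagger$ with $U$ as defined by \eqref{eqgkn} for even $n$; by the proof of Lemma \ref{lemgkn3}, this realizes the isomorphism $\g_7(n) \cong \su(2^{n-2})^{\oplus 4}$ via the projector decomposition \eqref{eqgkn3}. Under this conjugation, the involution $\tilde\theta_5(g) = -Q_5 g^T Q_5$ is sent, by Lemmas \ref{leminv1} and \ref{leminv2}, to the involution $g \mapsto -\tilde Q_5\, g^T\, \tilde Q_5^\dagger$ on $\su(2^n)$, where $\tilde Q_5 := U Q_5 U^T$ and $Q_5 = P_{IYZ}$ is given by \eqref{thetaq5}. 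The problem thus reduces to computing $\tilde Q_5$ explicitly and analyzing the induced involution on $\su(2^{n-2})^{\oplus 4}$.

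The next step is the case-by-case computation of $\tilde Q_5$ for $n \equiv 0, 2, 4 \pmod 6$, using Corollary \ref{corp1} and the identity \eqref{equab}. The length-$3$ repeating unit $IYZ$ of $P_{IYZ}$ interacts with each of the three exponential factors of $U$, namely $e^{i\pi Y\otimes X^{\otimes(n-1)}/4}$, $e^{i\pi I\otimes X\otimes Z^{\otimes(n-2)}/4}$, and $e^{i\pi X_2/4}$, producing commutation/anticommutation signs whose combined pattern depends on $n \bmod 6$.

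For $n \equiv 0 \pmod 6$, I expect $\tilde Q_5$ to emerge as a symmetric Pauli string ($\tilde Q_5^T = \tilde Q_5$) that commutes with both $Z_1$ and $Z_2$. Commuting with $Z_1, Z_2$ guarantees that the involution preserves the decomposition \eqref{eqgkn3}, so it restricts to each summand as $a_j \mapsto -R a_j^T R$ for a fixed Pauli string $R \in \mcP_{n-2}$ with $R^T = R$; by Corollary \ref{corinv4}, each block then fixes a copy of $\so(2^{n-2})$, yielding $\so(2^{n-2})^{\oplus 4}$. For $n \equiv \pm 2 \pmod 6$, I anticipate that $\tilde Q_5$ factors as $X_1 X_2 \cdot R$ (or $X_1 Y_2 \cdot R$) with $R$ a Pauli string on qubits $3, \dots, n$. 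Following the trick in Lemma \ref{lemgkn4e}, I would introduce an auxiliary unitary $V = e^{i\frac{\pi}{4} R}$, possibly decorated by $Z_2$ to fix the sign, that commutes with both $Z_1$ and $Z_2$ and hence preserves the four-block decomposition. Conjugation by $V$ absorbs the $R$ factor and leaves $\tilde Q_5$ replaced by a scalar multiple of $X_1 X_2$ (or $X_1 Y_2$). Since $X_1 X_2$ exchanges the projectors of \eqref{eqgkn3} as $P_1 \leftrightarrow P_4$ and $P_2 \leftrightarrow P_3$, the fixed-point condition on $(a_1, a_2, a_3, a_4)$ reads $a_4 = -a_1^T$, $a_3 = -a_2^T$, and the assignment $(a_1, a_2, a_3, a_4) \mapsto (a_1, a_2)$ provides the desired isomorphism with $\su(2^{n-2})^{\oplus 2}$.

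The main obstacle will be the sign-tracking in the computation of $\tilde Q_5$: one must carefully count, modulo appropriate powers of $i$, the commutation data between $P_{IYZ}$ and each factor of $U$, and verify that $R$ has the right symmetry for the auxiliary $V$ to commute with $Z_1, Z_2$. I expect no qualitatively new difficulties beyond those already handled in Lemma \ref{lemgkn4e}; in particular, the absence of an $\sp$-type summand here (in contrast to the mod-$8$ dichotomy for $\aa_3$) is consistent with the period being governed by the length-$3$ repeating unit $IYZ$ rather than $ZIYX$, so that the relevant sign $(-1)^{n/2}$ that would flip $\so$ to $\sp$ in the $n \equiv 0 \pmod 6$ case comes out uniformly positive.
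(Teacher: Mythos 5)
Your plan is essentially the paper's proof: the paper likewise computes $\tilde Q_5 = U Q_5 U^T$ case by case modulo $6$, observes that for $n\equiv 0$ the result commutes with $Z_1,Z_2$ and restricts to each block of \eqref{eqgkn3} as $a_j\mapsto -P_{IZX}\,a_j^T P_{IZX}$ with a symmetric Pauli string (hence $\so(2^{n-2})^{\oplus 4}$), and for $n\equiv\pm2$ conjugates by an auxiliary $V=e^{i\frac{\pi}{4}(\cdots)}$ commuting with $Z_1,Z_2$, exactly as in Lemma \ref{lemgkn4e}. The only discrepancy is your predicted form of $\tilde Q_5$ in the $n\equiv\pm2$ cases: the actual computation gives $-Y_1Z_2(Y_3X_4Z_5)(Y_6X_7Z_8)\cdots Z_n$ for $n\equiv2$ and $\tilde Q_5 = P_{IYZ}$ itself for $n\equiv4$, so after absorbing the tail with $V$ the involution becomes $a\mapsto -Y_1a^TY_1$ (resp.\ $a\mapsto -Y_2a^TY_2$) rather than conjugation by $X_1X_2$ or $X_1Y_2$; since $Y_1$ (resp.\ $Y_2$) anticommutes with $Z_1$ (resp.\ $Z_2$), it swaps the projectors as $P_1\leftrightarrow P_3$, $P_2\leftrightarrow P_4$ (resp.\ $P_1\leftrightarrow P_2$, $P_3\leftrightarrow P_4$) instead of $P_1\leftrightarrow P_4$, $P_2\leftrightarrow P_3$, but the pairing argument is identical and still yields $\su(2^{n-2})^{\oplus2}$. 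So your method and conclusion are sound; the guessed intermediate forms just need to be replaced by the actual sign-tracked computation, which is the real content of the lemma.
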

\begin{proof}
The proof is very similar to Lemma \ref{lemgkn4e}, so we only indicate the differences. We compute
\begin{equation*}
\tilde Q_5 := U Q_5 U^T
= e^{i\frac{\pi}{4}X_2} e^{i\frac{\pi}{4} I\otimes X\otimes Z^{\otimes(n-2)}} e^{i\frac{\pi}{4}Y\otimes X^{\otimes(n-1)}}
P_{IYZ} e^{-i\frac{\pi}{4}Y\otimes X^{\otimes(n-1)}} e^{i\frac{\pi}{4} I\otimes X\otimes Z^{\otimes(n-2)}} e^{i\frac{\pi}{4}X_2},
\end{equation*}
and find that
\begin{equation*}
\tilde Q_5 = \begin{cases} 
i^{-n/3} Z_2 (Z_4 X_5) (Z_7 X_8) \cdots X_{n-1}, & n\equiv 0 \;\mathrm{mod}\; 6, \\
-Y_1 Z_2 (Y_3 X_4 Z_5) (Y_6 X_7 Z_8) \cdots Z_n, & n\equiv 2 \;\mathrm{mod}\; 6, \\
P_{IYZ} = (Y_2 Z_3) (Y_5 Z_6) (Y_8 Z_9) \cdots Z_{n-1}, & n\equiv 4 \;\mathrm{mod}\; 6.
\end{cases}
\end{equation*}
For $n\equiv 0$ mod $6$, the involution induced from \eqref{eqbuau3} on each copy of $\su(2^{n-2})$ from the decomposition \eqref{eqgkn3} is given by
\begin{equation*}
a_j \mapsto -P_{IZX} a_j^T P_{IZX}, \qquad\text{for}\quad 1\le j\le 4, \quad n\equiv 0 \;\mathrm{mod}\; 6.
\end{equation*}
For $n\equiv \pm2$ mod $6$, we use the transformation $a\mapsto V a V^\dagger$, where
\begin{equation*}
V = \begin{cases} 
e^{i\frac{\pi}{4} Z_2 Y_3 X_4 Z_5 Y_6 X_7 Z_8 \cdots Z_n}, & n\equiv 2 \;\mathrm{mod}\; 6, \\
e^{i\frac{\pi}{4} Z_3 Y_5 Z_6 Y_8 Z_9 \cdots Z_{n-1}}, & n\equiv 4 \;\mathrm{mod}\; 6,
\end{cases}
\end{equation*}
which allows us to replace $\tilde Q_5$ with $V \tilde Q_5 V^T$. This gives the involutions
$a \mapsto - Y_1 a^T Y_1$ and $a \mapsto - Y_2 a^T Y_2$ for $n\equiv 2$ and $n\equiv 4$ mod $6$, respectively.
\end{proof}

\subsection{Periodic boundary conditions}\label{secper}

Recall that the subalgebras $\aa_k^\circ(n), \bb_l^\circ(n) \subseteq\su(2^n)$ ($0\le k\le 22$, $0\le l\le 4$) are defined by \eqref{ext5}.
In this subsection, we prove Theorem \ref{the:classification-p}, which we reproduce here for convenience:
\allowdisplaybreaks
\begin{align*}
\aa_0^\circ(n) &\cong \uu(1)^{\oplus n}, \\
\aa_1^\circ(n) &\cong \so(n)^{\oplus 2}, \\
\aa_2^\circ(n) &\cong \so(n)^{\oplus 4}, \\
\aa_3^\circ(n) &= \begin{cases} 
\aa_{13}(n), & n \;\;\mathrm{odd}, \\
\aa_3(n), & n\equiv 0 \mod 4, \\
\aa_6(n), & n\equiv 2 \mod 4,
\end{cases} \\
\aa_4^\circ(n) &\cong \begin{cases}
\so(2n), \quad\;\; n \;\;\mathrm{odd}, \\
\so(n)^{\oplus 4}, \quad n \;\mathrm{even},
\end{cases} \\
\aa_{5}^\circ(n) &= \begin{cases}
\aa_{16}(n), & n\equiv \pm1 \mod 3, \\ 
\aa_{5}(n), & n\equiv 0 \mod 3, 
\end{cases} \\
\aa_6^\circ(n) &= \begin{cases}
\aa_{13}(n), & n \;\;\mathrm{odd}, \\
\aa_6(n), & n \;\;\mathrm{even},
\end{cases} \\
\aa_k^\circ(n) &= \aa_k(n), \quad k=7,13,16,20, \\
\aa_8^\circ(n) &\cong \so(2n)^{\oplus 2}, \\
\aa_{9}^\circ(n) &= \bb_2^\circ(n) \cong \so(2^n), \quad n \ge 4, \\
\aa_{10}^\circ(n) &= \begin{cases}
\su(2^n), & n\equiv \pm1 \mod 3, \\ 
\aa_{10}(n), & n\equiv 0 \mod 3,
\end{cases} \\
\aa_{11}^\circ(n) &= \so(2^n), \quad n\ge 4, \\
\aa_k^\circ(n) &= \bb_4^\circ(n) = \su(2^n), \;\; k=12,15,17,18,19,21,22, \\
\aa_{14}^\circ(n) &\cong \so(2n)^{\oplus 2}, \\
\bb_0^\circ(n) &= \bb_0(n) \cong \uu(1)^{\oplus n}, \\
\bb_1^\circ(n) &\cong \uu(1)^{\oplus 2n}, \\
\bb_3^\circ(n) &= \bb_3(n) \cong \su(2)^{\oplus n}.
\end{align*}

We start the \textbf{proof} by observing that
due to \eqref{ext1}, \eqref{ext2} and from $\dim \aa_{12}^\circ(3) = \dim \aa_{17}^\circ(3) = 63$, we have:
\begin{equation*}
\aa_k^\circ(n) = \su(2^n), \qquad k=12,17,18,19,21,22, \;\; n\ge 3.
\end{equation*}
Moreover,
\begin{equation*}
\aa_{15}^\circ(n) = \su(2^n), \qquad n\ge 3,
\end{equation*}
because $\aa_{15}(n)$ contains ($i$ times) all Pauli strings that start with $X$ or $I$, except $I^{\otimes n}$.
Then, applying the cyclic shift $\tau_n$ defined in \eqref{taun}, we can generate all Pauli strings $\ne I^{\otimes n}$.

We also note that 
\begin{equation*}
\aa_k^\circ(n) = \aa_k(n), \qquad k=7,13,16,20, \;\; n\ge 3; \qquad
\aa_{11}^\circ(n) = \aa_{11}(n), \qquad n\ge 4,
\end{equation*}
due to \eqref{ext3}, \eqref{ext4} and Lemmas \ref{lem4}, \ref{lem6}, \ref{lem7}, because in this case $\tau_n \aa_k(n) \subseteq \aa_k(n)$.

In Sect.\ \ref{secfrus}, using frustration graphs, we determined the Lie algebras $\aa_k^\circ(n)$ for $k=1,2,4,8,14$ (see Lemmas \ref{lemfrus2}, \ref{lemfrus3}, \ref{lemfrus4}).
It is also obvious that
\begin{align*}
\bb_0^\circ(n) &= \bb_0(n),  \qquad\qquad \bb_2^\circ(n) = \aa_9^\circ(n), \\
\bb_3^\circ(n) &= \bb_3(n),  \qquad\qquad \bb_4^\circ(n) = \aa_{15}^\circ(n), \\
\bb_1^\circ(n) &= \Span\{X_i, X_1 X_n, X_j X_{j+1}\}_{1\le i\le n, \; 1\le j\le n-1} \cong \uu(1)^{\oplus 2n}, \\
\aa_0^\circ(n) &= \Span\{X_1 X_n, X_j X_{j+1}\}_{1\le j\le n-1} \cong \uu(1)^{\oplus n}.
\end{align*}

We discuss the remaining cases $\aa_k^\circ(n)$ ($k=3,5,6,9,10$) in a sequence of lemmas.

\begin{lemma}\label{lem8}
We have\/
$\aa_{10}^\circ(n) = \begin{cases}
\su(2^n), & n\equiv \pm1 \mod 3, \\ 
\aa_{10}(n), & n\equiv 0 \mod 3.
\end{cases}$
\end{lemma}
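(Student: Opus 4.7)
The approach would combine the identity $\aa_{10}^\circ(n) = \Lie{\aa_{10}(n) \cup \tau_n\aa_{10}(n)}$ from \eqref{taun1} with the explicit isomorphism $\gamma_n\colon\aa_{10}(n)\to\aa_7(n)$ from Lemma \ref{leminc5}. Since $\tau_n$ sends the three left-end generators $X_1Y_2, Y_1Z_2, Z_1X_2$ to the wrap-around terms $Y_1X_n, Z_1Y_n, X_1Z_n$ while shifting every other generator of $\aa_{10}(n)$ to one already present, the problem reduces to analyzing the effect of adjoining these three wrap-around terms to $\aa_{10}(n)$.

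For $n\equiv 0\pmod 3$, we exploit $\gamma^n=\mathrm{id}$, which forces $\gamma_n(Y_1X_n)=X_1X_n$, $\gamma_n(Z_1Y_n)=Y_1Y_n$, $\gamma_n(X_1Z_n)=Z_1Z_n$. These are precisely the wrap-around generators of $\aa_7^\circ(n)$. Since $\aa_7^\circ(n)=\aa_7(n)$ (a preliminary deduction for Theorem \ref{the:classification-p} following from $\tau_n\aa_7(n)\subseteq\aa_7(n)$ together with Lemma \ref{lem4}), they belong to $\gamma_n(\aa_{10}(n))=\aa_7(n)$. Applying $\gamma_n^{-1}$ shows the original wrap-around terms already lie in $\aa_{10}(n)$, yielding $\aa_{10}^\circ(n)=\aa_{10}(n)$.

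For $n\equiv\pm 1\pmod 3$, we know $\Stab(\aa_{10}^\circ(n))=\{P_I\}$ by Proposition \ref{pstab2}, and under $\gamma_n$ the three wrap-around generators become Pauli strings of $\aa_{10}$-type involving positions $1$ and $n$ (for instance, $\gamma_n(Y_1X_n)=X_1Z_n$ when $n\equiv 1$), which are not contained in $\aa_7(n)$. I would next apply the unitary $U$ from \eqref{eqgkn} that puts $\aa_7(n)$ into the standard form $I\otimes\su(2^{n-1})$ (for $n$ odd) or the block-diagonal form of Lemma \ref{lemgkn3} (for $n$ even); a direct check shows that under $U$ the transformed wrap-around generators still begin with a non-identity Pauli at position~$1$. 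Taking iterated brackets of these three elements against $I\otimes\su(2^{n-1})$ and against each other produces Pauli strings filling out the complementary subspaces to $I\otimes\su(2^{n-1})$ inside $\su(2^n)$, and the dimension count $(4^{n-1}-1)+3\cdot 4^{n-1}=4^n-1$ forces $\aa_{10}^\circ(n)=\su(2^n)$.

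The main obstacle lies in the $n\equiv\pm 1\pmod 3$ case: verifying that the iterated brackets supply \emph{all} Pauli strings with $X_1$, $Y_1$, or $Z_1$ at position $1$ requires careful bookkeeping. A single bracket $[I\otimes c, e_j]$ with one of the transformed wrap-around generators $e_j$ produces only the half of the block that anticommutes with a fixed Pauli at position $n$; the remaining half must be generated via cross-brackets $[e_j,e_k]$ followed by further brackets with $I\otimes\su(2^{n-1})$, and the precise form of the $e_j$'s (and hence of which Pauli sits at position $n$) depends on whether $n\equiv 1$ or $n\equiv 2\pmod 3$, forcing a short subcase analysis.
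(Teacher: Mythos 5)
Your $n\equiv 0\pmod 3$ argument is correct and is essentially a repackaging of the paper's: transporting the wrap-around terms through $\gamma_n$ and using $\aa_7^\circ(n)=\aa_7(n)$ amounts to the same observation the paper makes directly, namely that $\tau_n$ permutes the stabilizer strings $P_{XYZ},P_{YZX},P_{ZXY}$ of $\aa_{10}(n)=\g_{10}(n)$ (cf.\ \eqref{listgkn2} and Theorem \ref{thmtheta}), so $\tau_n\aa_{10}(n)\subseteq\aa_{10}(n)$. Either way one relies on the centralizer description of $\aa_{10}(n)$ (equivalently of $\aa_7(n)$ via Lemma \ref{leminc5}), and the conclusion $\aa_{10}^\circ(n)=\aa_{10}(n)$ follows.

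The gap is in the case $n\equiv\pm1\pmod 3$. There your argument stops exactly where the work begins: the claim that iterated brackets of the three transformed wrap-around elements with $I\otimes\su(2^{n-1})$ (or, for even $n$, with the block-diagonal algebra of Lemma \ref{lemgkn3}) ``fill out the complementary subspaces'' is precisely what has to be proved, and you explicitly defer the bookkeeping; in the even-$n$ subcase you would additionally have to handle the quotient by $\Span\{P_X,P_Y,P_Z\}$ and a four-block decomposition, none of which is addressed. Also note that $\Stab(\aa_{10}^\circ(n))=\{P_I\}$ (Proposition \ref{pstab2}) by itself proves nothing here, since e.g.\ $\so(2^n)$ has trivial stabilizer as well. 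The paper avoids all of this with a short device: since $\aa_{10}(n)$ equals the centralizer of $\{P_{XYZ},P_{YZX},P_{ZXY}\}$, the shifted algebra $\tau_n^{-1}\aa_{10}(n)\subset\aa_{10}^\circ(n)$ is the centralizer of $X\otimes P_{XYZ}$, $Y\otimes P_{YZX}$, $Z\otimes P_{ZXY}$, which visibly contains $X_1X_2,Y_1Y_2,Z_1Z_2$. Together with the generators $X_1Y_2,Y_1Z_2,Z_1X_2\in\aa_{10}(n)$ these generate all of $\su(4)\otimes I^{\otimes(n-2)}$, and cyclic shifts then give $\su(4)$ on every adjacent pair, hence $\aa_{10}^\circ(n)=\su(2^n)$ (as for $\aa_{22}(n)$). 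If you want to salvage your route, you should replace the open-ended ``iterated brackets'' step by an argument of this kind, i.e.\ exhibit concrete low-weight elements of $\tau_n^{\pm1}\aa_{10}(n)$ that, combined with the original generators, produce a full local $\su(4)$.
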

\begin{proof}
Recall from Theorem \ref{thmtheta} that $\aa_{10}(n) = \g_{10}(n)$ where $\g_{10}(n)$ is given by \eqref{listgkn2}.
When $n\equiv 0$ mod $3$, we have:
\begin{equation*}
\tau_n P_{XYZ} = P_{YZX}, \qquad \tau_n P_{YZX} = P_{ZXY}, \qquad \tau_n P_{ZXY} = P_{XYZ},
\end{equation*}
which imply that $\tau_n \aa_{10}(n) \subseteq \aa_{10}(n)$, and hence $\aa_{10}^\circ(n) = \aa_{10}(n)$.

On the other hand, for $n\equiv 1$ mod $3$, we have:
\begin{align*}
\tau_n^{-1} P_{XYZ} &= X \otimes P_{XYZ} = XXYZXYZ \cdots XYZ, \\ 
\tau_n^{-1} P_{YZX} &= Y \otimes P_{YZX} = YYZXYZX \cdots YZX, \\ 
\tau_n^{-1} P_{ZXY} &= Z \otimes P_{ZXY} = ZZXYZXY \cdots ZXY.
\end{align*}
In particular, their centralizer contains the elements
\begin{equation*}
X_1X_2,Y_1Y_2,Z_1Z_2 \in\tau_n^{-1} \aa_{10}(n) \subset \aa_{10}^\circ(n).
\end{equation*}
From these elements and 
\begin{equation*}
X_1Y_2, Y_1Z_2, Z_1X_2 \in \aa_{10}(n) \subset \aa_{10}^\circ(n),
\end{equation*}
we can generate all $2$-qubit gates: $\su(4) \otimes I^{\otimes (n-2)} \subset \aa_{10}^\circ(n)$. 
Therefore, $\aa_{10}^\circ(n) = \su(2^n)$. 

The case $n\equiv -1$ mod $3$ is similar.
\end{proof}

\begin{lemma}\label{lem9}
We have\/
$\aa_{5}^\circ(n) = \begin{cases}
\aa_{16}(n), & n\equiv \pm1 \mod 3, \\ 
\aa_{5}(n), & n\equiv 0 \mod 3.
\end{cases}$
\end{lemma}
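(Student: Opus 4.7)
The plan is to split on the residue of $n$ modulo $3$ and treat the two cases separately. For $n \equiv 0 \mod 3$, I will show $\aa_5^\circ(n) = \aa_5(n)$ by verifying that the two extra wraparound generators are already fixed by the involution $\theta_5$, and hence lie in $\aa_5(n)$. For $n \equiv \pm 1 \mod 3$, I will show $\aa_5^\circ(n) = \aa_{16}(n) = \so(2^n)$, with the upper bound coming from $\aa_5 \subset \aa_{16}$ and the lower bound from an explicit boundary-commutator construction that uses the wraparound generators to produce the missing generators of $\aa_{16}$.

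In the case $n \equiv 0 \mod 3$, the inclusion $\aa_5(n) \subseteq \aa_5^\circ(n)$ is trivial. For the reverse, Propositions \ref{pstab1} and \ref{pstab2} give $\Stab(\aa_5^\circ(n)) = \Stab(\aa_5(n))$, so $\aa_5^\circ(n) \subseteq \g_5(n)$. Since $\aa_5(n) = \g_5(n)^{\theta_5}$ by Theorem \ref{thmtheta}, it will suffice to check $\theta_5$-invariance of the two wraparound generators $Y_1 X_n$ and $Z_1 Y_n$. Applying $\gamma_n$ and using $\gamma^n = \mathrm{id}$ on $\{X,Y,Z\}$ when $n \equiv 0 \mod 3$, these two generators are sent to $X_1 X_n$ and $Y_1 Y_n$, respectively, both of which are symmetric under transpose. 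The string $Q_5 = P_{IYZ}$ has $I$ in position $1$ and $Z$ in position $n$ (since $n$ is a multiple of $3$), so both $X_1 X_n$ and $Y_1 Y_n$ anticommute with $Q_5$. Consequently, $\tilde\theta_5$ fixes them, and the desired $\theta_5$-invariance follows.

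For $n \equiv \pm 1 \mod 3$, the upper bound $\aa_5^\circ(n) \subseteq \aa_{16}^\circ(n) = \aa_{16}(n)$ follows from $\aa_5 \subset \aa_{16}$ via Lemma \ref{leminc1}, together with Theorem \ref{the:classification-p} and Lemma \ref{lem7}. For the reverse, since $\aa_{16}$ is obtained from $\aa_5$ by adjoining the generators $YX$ and $ZY$, and since $\aa_5^\circ(n)$ is $\tau_n$-invariant, it will suffice to exhibit a single $Y_1 X_2$ and a single $Z_1 Y_2$ in $\aa_5^\circ(n)$; then cyclic shifts produce all the missing generators of $\aa_{16}^\circ(n)$. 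The plan is a telescoping chain of commutators anchored at the boundary: starting from the wraparound generator $Y_1 X_n$, the successive brackets $[X_{n-1}Y_n, Y_1 X_n] \propto Y_1 X_{n-1} Z_n$ and $[Y_1 X_{n-1} Z_n, Y_{n-1} Z_n] \propto Y_1 Z_{n-1}$ produce a two-body boundary element $Y_1 Z_{n-1} \in \aa_5^\circ(n)$ that lies outside $\aa_5(n)$. For $n = 4$ one then extracts $Y_2 X_3$ by the further brackets $[Y_1 Z_3, X_2 Y_3] \propto Y_1 X_2 X_3$ and $[Y_1 X_2 X_3, Y_1 Z_2] \propto Y_2 X_3$, and obtains $Y_1 X_2$ by applying $\tau_4^{-3}$. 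An entirely analogous chain initiated from $Z_1 Y_n$ produces $Z_1 Y_2$.

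The main obstacle will be verifying that the inward propagation from $Y_1 Z_{n-1}$ to a sufficiently short-range element closes uniformly for all four residues $n \equiv 1, 2, 4, 5 \mod 6$, since $\aa_5(n)$ itself has four distinct isomorphism types across these residues (Lemmas \ref{lemgkn5o} and \ref{lemgkn5e}). This is best handled either by an induction on $n$ stepping by $6$ within each residue class, or by exploiting the observation that $\tau_n^{-1} \Stab(\aa_5(n)) \ne \Stab(\aa_5(n))$ when $n \not\equiv 0 \mod 3$, so that $\tau_n^{-1} \aa_5(n) \subseteq \aa_5^\circ(n)$ contributes elements outside the centralizer of $\Stab(\aa_5(n))$, providing the algebraic leverage needed to manufacture the required short-range boundary elements.
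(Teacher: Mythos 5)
Your $n\equiv 0\bmod 3$ case is correct and essentially complete: checking that the two wraparound generators $Y_1X_n$ and $Z_1Y_n$ lie in $\g_5(n)^{\theta_5}=\aa_5(n)$ (via $\gamma_n$, the transpose-symmetry of $X_1X_n$, $Y_1Y_n$, and their anticommutation with $Q_5=P_{IYZ}$, whose $n$-th letter is $Z$ precisely because $3\mid n$) is a slightly more economical route than the paper's, which instead proves the full invariance $\tau_n\aa_5(n)\subseteq\aa_5(n)$ by showing that $\gamma_n\tau_n\gamma_n^{-1}$ preserves $\g_7(n)$ and commutes with $\tilde\theta_5$. Both arguments rest on Theorem \ref{thmtheta}, so this half is fine.

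The $n\equiv\pm1\bmod 3$ lower bound, however, has a genuine gap. Your boundary chain $Y_1X_n\to Y_1X_{n-1}Z_n\to Y_1Z_{n-1}$ is valid, but it is valid for \emph{every} $n$, including $n\equiv 0\bmod 3$, where the desired conclusion $Y_1X_2\in\aa_5^\circ(n)$ is false (there $\aa_5^\circ(n)=\aa_5(n)\subsetneq\aa_{16}(n)$). So all of the content must lie in the ``inward propagation'' from $Y_1Z_{n-1}$ to $Y_1X_2$ or $Z_1Y_2$, which is exactly the step you only carry out for $n=4$ and otherwise defer to two unexecuted strategies (induction stepping by $6$, or ``algebraic leverage'' from the stabilizer mismatch); neither is worked out, and it is not visible where the residue condition would enter your commutator chain. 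The paper's mechanism makes this dependence explicit: it uses the characterization $\gamma_n\aa_5(n)=\g_7(n)^{\tilde\theta_5}$ to certify that specific \emph{long-range} Pauli strings already lie in $\aa_5(n)$ itself --- for instance $X_2X_n\in\g_7(n)^{\tilde\theta_5}$, hence $Z_2Y_n\in\aa_5(n)$, when $n\equiv1\bmod3$, and the preimages of $X_1X_n$, $Z_1Z_n$ when $n\equiv-1\bmod3$ --- with the residue entering through the letter of $Q_5$ at position $n$; a single cyclic shift then produces short-range elements ($Y_1Z_3$, resp.\ $Z_1Y_2$ and $Y_1X_2$) of $\aa_5^\circ(n)$, from which $Y_i$ or the missing generators of $\aa_{16}(n)$ follow. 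To repair your argument you need some such residue-sensitive input identifying a long-range element of $\aa_5(n)$ (not merely of $\aa_5^\circ(n)$) that a shift converts into a new short-range generator; without it the telescoping-commutator plan cannot close in general. Your upper bound $\aa_5^\circ(n)\subseteq\aa_{16}^\circ(n)=\aa_{16}(n)$ is fine and not circular, since $\aa_{16}^\circ(n)=\aa_{16}(n)$ is established before this lemma.
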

\begin{proof}
Recall the automorphism $\gamma_n$ of $\su(2^n)$ defined by \eqref{rho1}, \eqref{rho2}.
Then, by Lemma \ref{lem4},
$\gamma_n \aa_5(n) = \g_7(n)^{\tilde\theta_5}$,
where $\tilde\theta_5$ is given by \eqref{tildethetaq35}, \eqref{thetaq5}, and $\g_7(n)$ is given by \eqref{listgkn3}.
From this, we get
\begin{equation*}
\gamma_n \tau_n \aa_5(n) = (\gamma_n \tau_n \gamma_n^{-1}) \gamma_n \aa_5(n) = (\gamma_n \tau_n \gamma_n^{-1}) \g_7(n)^{\tilde\theta_5}.
\end{equation*}

When $n\equiv 0$ mod $3$, we have
\begin{equation*}
(\gamma_n \tau_n \gamma_n^{-1})(P_X) = P_Y, \qquad
(\gamma_n \tau_n \gamma_n^{-1})(P_Y) = P_Z, \qquad
(\gamma_n \tau_n \gamma_n^{-1})(P_Z) = P_X, 
\end{equation*}
which imply that 
$(\gamma_n \tau_n \gamma_n^{-1}) \g_7(n) \subseteq \g_7(n)$. 
Next, we compute (cf.\ \eqref{thetaq5}):
\begin{align*}
(\gamma_n \tau_n \gamma_n^{-1}) Q_5 
&= (\gamma_n \tau_n \gamma_n^{-1}) (P_{IYZ}) \\ 
&= Z_1 X_2 Z_4 X_5 \cdots Z_{n-2} X_{n-1} \\
&= i^{n/3} P_Z \cdot Q_5 = i^{-n/3} Q_5 \cdot P_Z.
\end{align*}
From this, we deduce that $\tilde\theta_5$ commutes with $\gamma_n \tau_n \gamma_n^{-1}$. Indeed, as it commutes with the trace, we find
for $g \in \g_7(n)$:
\begin{equation*}
(\gamma_n \tau_n \gamma_n^{-1}) \tilde\theta_5(g) = -(\gamma_n \tau_n \gamma_n^{-1} Q_5) h^T (\gamma_n \tau_n \gamma_n^{-1} Q_5)
= - Q_5 \cdot P_Z h^T P_Z \cdot Q_5 = - Q h^T Q = \tilde\theta_5(h),
\end{equation*}
where we set $h:=(\gamma_n \tau_n \gamma_n^{-1})g$ and use that $h,h^T \in \g_7(n)$.
Therefore, $\gamma_n\tau_n \aa_{5}(n) \subseteq \gamma_n \aa_{5}(n)$, and hence $\aa_{5}^\circ(n) = \aa_{5}(n)$
for $n\equiv 0$ mod $3$.

Suppose now that $n\equiv 1$ mod $3$. Observe that $\aa_5^\circ(n) \subseteq \so(2^n) = \aa_{16}(n)$ for all $n\ge 2$, 
because all generators of $\aa_5^\circ(n)$ have an odd number of $Y$'s.
On the other hand, we have 
\begin{equation*}
X_2 X_n \in \g_7(n)^{\tilde\theta_5} = \gamma_n \aa_5(n) 
\;\Rightarrow\; \gamma_n^{-1}(X_2 X_n) = Z_2 Y_n \in \aa_5(n)
\;\Rightarrow\; \tau_n^{-1}(Z_2 Y_n) = Y_1 Z_3 \in \aa_5^\circ(n).
\end{equation*}
Since $Y_1 X_3 \in \aa_5(3)$ (see Sect.\ \ref{secsu8}),
we get that $Y_1 X_3 \in \aa_5^\circ(n)$.
Hence, $[Y_1 Z_3,Y_1 X_3] = 2iY_3 \in \aa_5^\circ(n)$, and cyclic shifts give $Y_1,Y_2\in \aa_5^\circ(n)$.
Together with $\aa_5=\Lie{XY, YZ}$, the elements $YI,IY$ can generate $\aa_{16} = \Lie{XY, YX, YZ, ZY}$.
Therefore, $\aa_5^\circ(n) \supseteq \aa_{16}(n)$, which proves that $\aa_5^\circ(n) = \aa_{16}(n)$.

Similarly, in the case $n\equiv -1$ mod $3$, we have:
\begin{align*}
X_1 X_n \in \g_7(n)^{\tilde\theta_5} = \gamma_n \aa_5(n) 
\;&\Rightarrow\; \gamma_n^{-1}(X_1 X_n) = Y_1 Z_n \in \aa_5(n)
\;\Rightarrow\; \tau_n^{-1}(Y_1 Z_n) = Z_1 Y_2 \in \aa_5^\circ(n), \\
Z_1 Z_n \in \g_7(n)^{\tilde\theta_5} = \gamma_n \aa_5(n) 
\;&\Rightarrow\; \gamma_n^{-1}(Z_1 Z_n) = X_1 Y_n \in \aa_5(n)
\;\Rightarrow\; \tau_n^{-1}(X_1 Y_n) = Y_1 X_2 \in \aa_5^\circ(n).
\end{align*}
Hence, $\aa_5^\circ(n)$ contains $\aa_{16}(n)$, so it must be equal to it.
\end{proof}

\begin{lemma}\label{lem9a}
We have\/
$\aa_{9}^\circ(n) \cong \so(2^n)$ for $n \ge 4$.
\end{lemma}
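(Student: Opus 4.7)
I would establish the upper bound by exhibiting an involution of $\su(2^n)$ whose fixed-point subalgebra is $\cong \so(2^n)$ and contains $\aa_9^\circ(n)$. Let $P_Z := Z^{\otimes n}$ and define $\theta^\circ(a) := -P_Z\, a^T\, P_Z$. Since $P_Z^T = P_Z$, Corollary \ref{corinv4} yields $\su(2^n)^{\theta^\circ} \cong \so(2^n)$. For any generator $g = X_i Y_{i+1}$ of $\aa_9^\circ(n)$ (cyclic indices), one has $g^T = -g$ and $P_Z g P_Z = g$ (two anticommutations with the $Z$'s at positions $i$ and $i+1$ cancel); for any generator $g = X_i Z_{i+1}$, one has $g^T = g$ and $P_Z g P_Z = -g$ (single anticommutation at position $i$). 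In either case $\theta^\circ(g) = g$, so $\aa_9^\circ(n) \subseteq \su(2^n)^{\theta^\circ} \cong \so(2^n)$. By Corollary \ref{cor:gknth}, $\su(2^n)^{\theta^\circ}$ is spanned by the Pauli strings it contains, and a short calculation reveals these to be exactly the Pauli strings with an odd number of $X$'s, which I will call \emph{admissible}.

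For the reverse inclusion, I would prove by induction on $n \ge 4$ that every admissible Pauli string lies in $\aa_9^\circ(n)$. A useful starting fact is that $X_k \in \aa_9^\circ(n)$ for every $k$, since $[X_i Y_{i+1}, X_i Z_{i+1}] = 2i X_{i+1}$ together with translation invariance under the cyclic shift $\tau_n$. The base case $n = 4$ can be handled by a direct computation (or machine enumeration along the lines of Sect.\ \ref{secsu8}), showing that $\dim \aa_9^\circ(4) = 120 = \dim \so(16)$ and hence forcing equality with the upper bound. For the inductive step, I would follow the strategy of Lemmas \ref{lem5} and \ref{lem7}: given an admissible Pauli string $a$, use nested commutators with the cyclic generators and the derived $X_k$'s to produce an admissible Pauli string with an identity at some position. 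A cyclic rotation (an automorphism of $\aa_9^\circ(n)$) then places this identity at position $n$, yielding $b' \otimes I$ with $b' \in i\mcP_{n-1}$ admissible, and by the induction hypothesis $b' \in \aa_9^\circ(n-1)$.

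The final step is to lift $b' \in \aa_9^\circ(n-1)$ to the element $b' \otimes I \in \aa_9^\circ(n)$. The ``bulk'' generators $X_i Y_{i+1} \otimes I$ and $X_i Z_{i+1} \otimes I$ for $1 \le i \le n-2$ are already generators of $\aa_9(n) \subseteq \aa_9^\circ(n)$; the wrap-around generators $X_{n-1} Y_1 \otimes I$ and $X_{n-1} Z_1 \otimes I$ of $\aa_9^\circ(n-1)$ are non-adjacent in the $n$-qubit chain and must be constructed by explicit commutators within $\aa_9^\circ(n)$, chaining the periodic generators $X_n Y_1$, $X_n Z_1$, $X_{n-1} Y_n$, $X_{n-1} Z_n$ together with the derived $X_n$ so as to eliminate the Pauli at position $n$. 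Once these ``wrap-around-plus-$I$'' elements are secured, applying Lemma \ref{lemp2} to the commutators used in the reduction step shows $a \in \aa_9^\circ(n)$. The main obstacle is precisely this lifting: unlike the open case, the periodic algebra $\aa_9^\circ(n-1)$ is not naturally contained in $\aa_9^\circ(n)$, because cyclic adjacency changes when a site is removed, and producing the extended wrap-around elements requires a careful but essentially combinatorial Pauli-string construction — made tractable by the fact that every step can be restricted to Pauli strings (Corollary \ref{cor:gknth}) and the parity condition $\#X \equiv 1 \pmod 2$ is automatically preserved under nonzero commutators of admissible strings.
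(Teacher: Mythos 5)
Your upper bound is correct and is essentially the paper's: fixing the generators under $\theta^\circ(a)=-P_Z a^T P_Z$ is equivalent to the paper's relabeling $X\rightleftharpoons Y$ followed by the observation that all generators then have an odd number of $Y$'s, so $\aa_9^\circ(n)\subseteq\so(2^n)$. The problem is the lower bound. The paper deliberately does \emph{not} run an induction on $n$ in the periodic case, precisely because $\aa_k^\circ(n-1)\otimes I$ is not naturally contained in $\aa_k^\circ(n)$; your proposal runs into exactly this wall and does not get over it. Concretely, your lifting step requires the wrap-around elements $X_{n-1}Y_1\otimes I_n$ and $X_{n-1}Z_1\otimes I_n$ to lie in $\aa_9^\circ(n)$, and you propose to obtain them by ``chaining the periodic generators $X_nY_1$, $X_nZ_1$, $X_{n-1}Y_n$, $X_{n-1}Z_n$ together with the derived $X_n$.'' This cannot work: those five elements are supported on the three sites $(n-1,n,1)$ and generate a (relabeled) copy of $\aa_9(3)$, whose basis $\{IIX, IXI, IXY, IXZ, XYI, XYY, XYZ, XZI, XZY, XZZ\}$ (Sect.\ \ref{secsu8}) contains no string of the form $XIY$ or $XIZ$. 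So the element you need is provably not reachable by any commutators confined to those sites; producing it requires generators from deeper in the chain, i.e., genuinely nonlocal input. Since this lifting is asserted rather than proved, and the proposed construction fails, the induction does not close.

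The paper's route avoids the issue entirely and is worth comparing: it uses the already-established open-chain description $\aa_9(n)=\g_9(n)^{\theta_9}$ (Lemma \ref{lem5}) to check in one line that the \emph{reversed} adjacent pairs $Y_3X_4,\,Z_3X_4$ lie in $\aa_9(n)$ (they commute with $X_1,\,Y_1X_2,\,Z_1X_2$ and are fixed by $\theta_9$); cyclic shifts then place these reversed pairs on every bond of the ring, so $\aa_9^\circ(n)$ contains all generators $X_iY_{i+1}, Y_iX_{i+1}, Y_iZ_{i+1}, Z_iY_{i+1}$ (after the relabeling) of $\aa_{16}(n)=\so(2^n)$, and the parity upper bound forces equality. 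If you want to salvage your induction, the same kind of input is what you need for the lifting: for instance, one can verify with $\g_9(n)^{\theta_9}$ that $X_iY_{i+2}, X_iZ_{i+2}\in\aa_9(n)$ for $3\le i\le n-2$, whose cyclic shifts supply exactly the missing wrap-around elements for $n\ge5$, with $n=4$ handled as your base case. But at that point you are already invoking the structural description of $\aa_9(n)$, and the paper's direct sandwich $\aa_{16}(n)\subseteq\aa_9^\circ(n)\subseteq\so(2^n)$ is both shorter and cleaner.
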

\begin{proof}
First, recall from Lemma \ref{lem5} that $\aa_9(n) = \g_9(n)^{\theta_9}$, where $\g_9(n)$ is given by \eqref{listgkn4},
$\theta_9(g) = -Q_9 g^T Q_9$, and $Q_9=IYZ \cdots Z$ is given by \eqref{thetaq9}.
For example, $g=Y_3 X_4 \in\aa_9(n)$, as $g^T=-g$ and $g$ commutes with $X_1$, $Y_1X_2$, $Z_1X_2$ and $Q_9$
(or one can check directly that $IIYX\in\aa_9(4)$). Similarly, we check that $Z_3 X_4 \in\aa_9(n)$.

Now let us relabel $X \rightleftharpoons Y$, so that $\aa_9 = \Lie{YX, YZ}$.
Then $\aa_9^\circ(n) \subseteq \so(2^n) = \aa_{16}(n)$, because all generators of $\aa_9^\circ(n)$ contain an odd number of $Y$'s. 
From above after relabeling, we have $X_3 Y_4, Z_3 Y_4 \in\aa_9(n)$, which after a cyclic shift gives $X_1Y_2,Z_1Y_2\in \aa_9^\circ(n)$.
Since $\aa_{16} = \Lie{XY, YX, YZ, ZY}$, we obtain that $\aa_9^\circ(n) \supseteq \aa_{16}(n)$.
\end{proof}

\begin{lemma}\label{lem10}
We have:
\begin{align*}
\aa_6^\circ(n) &= \aa_6(n), \quad n \;\;\mathrm{even}, &
\aa_3^\circ(n) &= \aa_6^\circ(n) = \aa_{13}(n), \quad n \;\;\mathrm{odd}, \\
\aa_3^\circ(n) &= \aa_6(n), \quad n\equiv 2 \mod 4, &
\aa_3^\circ(n) &= \aa_3(n), \quad n\equiv 0 \mod 4.
\end{align*}
\end{lemma}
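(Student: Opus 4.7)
The plan is to treat the four cases using $\aa_k^\circ(n) = \Lie{\aa_k(n) \cup \tau_n \aa_k(n)}$ from \eqref{taun1}, together with the explicit forms $\aa_6(n) = \g_6(n)$ (Lemma \ref{lem4a}), $\aa_3(n) = \g_3(n)^{\theta_3}$ (Lemma \ref{lem4}), and the stabilizers computed in Proposition \ref{pstab2}. The case $\aa_6^\circ(n) = \aa_6(n)$ for $n$ even is immediate: $\tau_n$ fixes $P_X$ and swaps $P_{YZ} \leftrightarrow P_{ZY}$, so it preserves $\Stab(\aa_6(n))$ together with its central part $\Span\{P_X, P_{YZ}, P_{ZY}\}$, hence $\tau_n\g_6(n) = \g_6(n)$.

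For $\aa_3^\circ(n)$ with $n$ even, I would compute the action of $\theta_3$ on the two cyclic generators $X_n X_1$ and $Y_n Z_1$ using $Q_3 = P_{ZIYX}$ from \eqref{thetaq3} and $\varphi_n$ from \eqref{varphi2}. The answer depends on the letter of $Q_3$ in position $n$, which is $X$ when $n\equiv 0\mod 4$ and $I$ when $n\equiv 2\mod 4$. In the first case a short calculation shows that both cyclic generators are fixed by $\theta_3$ and therefore already lie in $\aa_3(n) = \g_3(n)^{\theta_3}$, giving $\aa_3^\circ(n) = \aa_3(n)$. In the second case one finds $\theta_3(X_n X_1) = X_n X_1$ but $\theta_3(Y_n Z_1) = -Y_n Z_1$, so $Y_n Z_1$ sits in the $(-1)$-eigenspace of $\theta_3$ inside $\g_3(n) = \aa_6(n)$, forcing $\aa_3(n) \subsetneq \aa_3^\circ(n) \subseteq \aa_6(n)$. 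Equality with $\aa_6(n)$ then follows from the Cartan-style decomposition $\g_3(n) = \aa_3(n) \oplus \g_3(n)^{-\theta_3}$ satisfying $[\aa_3(n), \g_3(n)^{-\theta_3}] \subseteq \g_3(n)^{-\theta_3}$, provided one shows that $\aa_3(n)$ acts irreducibly on $\g_3(n)^{-\theta_3}$: then the $\aa_3(n)$-orbit of $Y_n Z_1$ fills the whole $(-1)$-eigenspace.

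For odd $n$, both $\aa_3^\circ(n)$ and $\aa_6^\circ(n)$ have stabilizer $\{P_I, P_X\}$ by Proposition \ref{pstab2}, yielding the upper bound $\aa_3^\circ(n) \subseteq \aa_6^\circ(n) \subseteq \su(2^n)^{P_X}/\Span\{P_X\} = \aa_{13}(n)$ via Lemma \ref{lemgkn1}. For the matching lower bound, I would introduce the $\mathbb{Z}_2$-grading on $\aa_{13}(n)$ induced by conjugation by $P_{YZ}$; its even part coincides with $\aa_6(n)$ and its odd part $\aa_6(n)^{(1)}$ has the same dimension $\dim\su(2^{n-1})$. For odd $n$ both positions $1$ and $n$ of $P_{YZ}$ carry $Y$, so the cyclic generator $Y_n Z_1$ commutes with $P_{YZ}$ at position $n$ (matching $Y$'s) and anticommutes at position $1$ ($Z$ against $Y$), hence lies in $\aa_6(n)^{(1)}$. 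Granting irreducibility of the $\aa_6(n)$-module $\aa_6(n)^{(1)}$ gives $\aa_6^\circ(n) = \aa_{13}(n)$. To promote this to $\aa_3^\circ(n) = \aa_{13}(n)$, I would exhibit an explicit bracket chain in $\aa_3^\circ(n)$ starting from the cyclic generators and producing, through a short sequence beginning with $[X_1 X_2, Y_n Z_1]$ and successive brackets against standard open-chain generators, the element $X_1 \in \aa_3^\circ(n)$, after which $[X_1, Y_n Z_1] = -2i Y_1 Y_n$; cyclic shifts of $Y_1 Y_n$ then deliver all short-range $Y_i Y_{i+1}$ missing from $\aa_3$, so that $\aa_{13} = \Lie{XX, YY, YZ}$ is fully generated inside $\aa_3^\circ(n)$.

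The main obstacle is establishing the two irreducibility claims (for the $\aa_3(n)$-module $\g_3(n)^{-\theta_3}$ when $n \equiv 2 \mod 4$ and for the $\aa_6(n)$-module $\aa_6(n)^{(1)}$ when $n$ is odd), together with rigorously carrying out the bracket chain producing $X_1 \in \aa_3^\circ(n)$ for odd $n \ge 5$. The cleanest route to the irreducibility statements is to transport the stabilizer to $\{I^{\otimes n}, X_1, Y_1, Z_1\}$ by a unitary conjugation in the spirit of Lemma \ref{lemgkn2}; then the ambient algebra becomes $I \otimes \su(2^{n-1})$ and the relevant modules are identified with adjoint representations of simple Lie algebras, which are irreducible. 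As a pragmatic fallback, one verifies the base cases $n = 3$ and $n = 4$ directly from the lists in Sect.\ \ref{secsu8} and runs an induction on $n$ via the standard Pauli-string template of Sect.\ \ref{seclow}: given any Pauli string in the upper-bound algebra, commute with generators of $\aa_3^\circ(n)$ or $\aa_6^\circ(n)$ to produce an $I$ in some position, delete it, and apply the inductive hypothesis together with Lemma \ref{lemp2}.
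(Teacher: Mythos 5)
Your treatment of the easy parts is sound and close to the paper: the even-$n$ statement $\aa_6^\circ(n)=\aa_6(n)$ via $\tau_n$-invariance of $\{P_X,P_{YZ},P_{ZY}\}$ is exactly the paper's argument, and for $n\equiv 0\mod 4$ your direct check that both wrap generators $X_nX_1$ and $Z_1Y_n$ are fixed by $\theta_3$ (the letter of $Q_3$ in position $n$ being $X$) is correct and does suffice, since $\aa_3^\circ(n)=\Lie{\aa_3(n)\cup\{X_nX_1,Z_1Y_n\}}$; this is a legitimate shortcut compared to the paper, which instead shows that $\varphi_n\tau_n\varphi_n^{-1}$ preserves $\g_7(n)$ and commutes with $\tilde\theta_3$, so that the whole shifted algebra lands inside $\aa_3(n)$.

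The two hard cases, however, have genuine gaps. For $n\equiv 2\mod 4$ your key claim is false: here $\aa_3(n)\cong\su(2^{n-2})^{\oplus 2}$ sits inside $\g_3(n)=\aa_6(n)\cong\su(2^{n-2})^{\oplus 4}$, so the $(-1)$-eigenspace $\g_3(n)^{-\theta_3}$ has dimension $2(4^{n-2}-1)$ and decomposes as a direct sum of two adjoint modules, one for each simple summand of $\aa_3(n)$ (this is visible in the proof of Lemma \ref{lemgkn4e}, where the involution pairs the four blocks as $a_4=\mp a_1^T$, $a_3=\mp a_2^T$). It is therefore \emph{not} irreducible, and "the orbit of $Z_1Y_n$ fills the $(-1)$-eigenspace" does not follow; you would at least have to show this single Pauli string has nonzero components in both summands. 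For odd $n$, the irreducibility of the odd part of the $P_{YZ}$-grading is plausible (it should be the antidiagonal copy of the adjoint of $\su(2^{n-1})$), but it is not established, and your fallback does not repair it: the delete-an-$I$-and-induct template of Sect.\ \ref{seclow} is an open-chain device, and, as stressed in the paper, $\aa_k^\circ(n)$ is \emph{not} generated inductively from $\aa_k^\circ(n-1)$, so erasing an identity does not produce an element of the closed-chain algebra one size down. Finally, the bracket chain producing $X_1\in\aa_3^\circ(n)$ for odd $n$ is only asserted; starting from $[X_1X_2,Z_1Y_n]=-2iY_1X_2Y_n$ the strings do not obviously shorten, so this is not a routine verification.

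The paper's device, which replaces all irreducibility considerations, is worth internalizing: use the explicit description $\varphi_n\aa_3(n)=\g_7(n)^{\tilde\theta_3}$ to exhibit a long-range element of $\aa_3(n)$ joining qubits $1$ and $n$ (namely $Y_1Y_n$, $Z_1Z_{n-1}$, or $Z_1Z_n$ in $\g_7(n)^{\tilde\theta_3}$ according to $n$ mod $4$), pull it back by $\varphi_n^{-1}$ and apply $\tau_n^{-1}$ or $\tau_n^{-2}$ to land a \emph{new} short element in $\aa_3^\circ(n)$ ($Z_1Y_2$ for $n\equiv2\mod4$; $Y_1Z_3$ or $Z_1Z_2$ for $n$ odd). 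For $n\equiv2\mod4$ this already gives, by $\tau_n$-invariance of $\aa_3^\circ(n)$, all generators $Z_iY_{i+1}$ of $\aa_6(n)$, hence equality with the upper bound $\aa_6^\circ(n)=\aa_6(n)$; for odd $n$ one more bracket yields $X_1$, then all $X_i$, then $Y_iY_{i+1}$, and hence all of $\aa_{13}(n)$, which simultaneously settles $\aa_3^\circ(n)=\aa_6^\circ(n)=\aa_{13}(n)$ against the stabilizer upper bound you correctly identified. If you want to keep your module-theoretic framing, you must both prove the odd-$n$ irreducibility (e.g.\ by transporting the stabilizer as in Lemma \ref{lemgkn2}) and replace the false claim in the $n\equiv2\mod4$ case by an argument tracking both adjoint summands.
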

\begin{proof}
First of all, note that $\aa_3^\circ(n) \subseteq \aa_6^\circ(n)$ for all $n$, because $\aa_3\subset\aa_6$.
By Lemma \ref{lem4a}, we have for even $n$:
\begin{equation*}
\aa_6(n) = \g_6(n) = \su(2^n)^{\{P_X,P_{YZ},P_{ZY}\}} / \Span\{P_X,P_{YZ},P_{ZY}\}.
\end{equation*}
In this case,
\begin{equation*}
\tau_n(P_{X}) = P_{X}, \qquad \tau_n(P_{YZ}) = P_{ZY}, \qquad \tau_n(P_{ZY}) = P_{YZ},
\end{equation*}
which implies that $\tau_n \aa_6(n) \subseteq \aa_6(n)$, and hence $\aa_6^\circ(n) = \aa_6(n)$.

Recall the automorphism $\varphi_n$ of $\su(2^n)$ that up to a sign swaps $Y$ and $Z$ on all even qubits; see \eqref{varphi1}, \eqref{varphi2}.
By Lemma \ref{lem4}, we have $\varphi_n \aa_3(n) = \g_7(n)^{\tilde\theta_3}$,
where $\tilde\theta_3$ is given by \eqref{tildethetaq35}, \eqref{thetaq3}, and $\g_7(n)$ is given by \eqref{listgkn3}.
Hence,
\begin{equation*}
\varphi_n \tau_n \aa_3(n) = (\varphi_n \tau_n \varphi_n^{-1}) \varphi_n \aa_3(n) = (\varphi_n \tau_n \varphi_n^{-1}) \g_7(n)^{\tilde\theta_3}.
\end{equation*}
When $n$ is even, we have:
\begin{equation*}
(\varphi_n \tau_n \varphi_n^{-1})(P_X) = P_X, \qquad
(\varphi_n \tau_n \varphi_n^{-1})(P_Y) = P_Z, \qquad
(\varphi_n \tau_n \varphi_n^{-1})(P_Z) = P_Y, 
\end{equation*}
which implies that
$(\varphi_n \tau_n \varphi_n^{-1}) \g_7(n) \subseteq \g_7(n)$.
For $n\equiv 0$ mod $4$, we find
\begin{align*}
(\varphi_n \tau_n \varphi_n^{-1}) Q_3 
= (\varphi_n \tau_n \varphi_n^{-1})(P_{ZIYX})
= P_{IZXY} = P_Z \cdot Q_3 = Q_3 \cdot P_Z.
\end{align*}
Then, as in the proof of Lemma \ref{lem9}, we conclude that in this case $\aa_3^\circ(n) = \aa_3(n)$.

Next, in the case $n\equiv 2$ mod $4$, one checks that
\begin{equation*}
Y_1 Y_n \in \g_7(n)^{\tilde\theta_3} = \varphi_n \aa_3(n) 
\;\Rightarrow\; \varphi_n^{-1}(Y_1 Y_n) = Y_1 Z_n \in \aa_3(n)
\;\Rightarrow\; \tau_n^{-1}(Y_1 Z_n) = Z_1 Y_2 \in \aa_3^\circ(n).
\end{equation*}
Hence, $\aa_3^\circ(n)$ contains all generators of $\aa_6(n)$, proving that $\aa_3^\circ(n) = \aa_6(n)$.

Finally, consider the case when $n$ is odd. Recall that, by Lemma \ref{lem6},
\begin{equation*}
\aa_3^\circ(n) \subseteq \aa_6^\circ(n) \subseteq \aa_{13}(n) = \su(2^n)^{P_X} / \Span\{P_X\}.
\end{equation*}
In order to prove that these are equalities, it is enough to show that $\aa_3^\circ(n)$ contains the generators of $\aa_{13}(n)$.
When $n\equiv 1$ mod $4$, we have
\begin{equation*}
Z_1 Z_{n-1} \in \g_7(n)^{\tilde\theta_3} = \varphi_n \aa_3(n) 
\;\Rightarrow\; \varphi_n^{-1}(Z_1 Z_{n-1}) = -Z_1 Y_{n-1} \in \aa_3(n)
\;\Rightarrow\; \tau_n^{-2}(Z_1 Y_{n-1}) = Y_1 Z_3 \in \aa_3^\circ(n).
\end{equation*}
Since $ZIZ \in \aa_3(3)$, we get that $Z_1 Z_3 \in \aa_3^\circ(n)$ and hence $X_1 = -\frac{i}{2} [Y_1 Z_3, Z_1 Z_3] \in \aa_3^\circ(n)$.
Similarly, when $n\equiv 3$ mod $4$, we have
\begin{equation*}
Z_1 Z_n \in \g_7(n)^{\tilde\theta_3} = \varphi_n \aa_3(n) 
\;\Rightarrow\; \varphi_n^{-1}(Z_1 Z_n) = Z_1 Z_n \in \aa_3(n)
\;\Rightarrow\; \tau_n^{-1}(Z_1 Z_n) = Z_1 Z_2 \in \aa_3^\circ(n).
\end{equation*}
Then from $Y_1 Z_2 \in \aa_3^\circ(n)$, we get again that $X_1\in \aa_3^\circ(n)$.
Therefore, all $X_i\in \aa_3^\circ(n)$, and we can generate $\aa_{13}(n)$ from them and the generators
$X_i X_{i+1}$, $Y_i Z_{i+1}$ of $\aa_3(n)$.
\end{proof}

The above lemmas complete the proof of Theorem \ref{the:classification-p}.

\subsection{Permutation-invariant subalgebras}\label{app:sym}

In this subsection, we classify all permutation-invariant subalgebras of $\su(2^n)$ that are generated by single Paulis and products of two Paulis, 
thus proving Theorem \ref{the:classification-s}.
Recall that, starting from a subalgebra $\aa\subseteq\su(4)$, we generate the subalgebra $\aa^\pi(n)\subseteq\su(2^n)$, given by \eqref{ext7}.
Moreover, in Sect.\ \ref{secext}, we explained that $\aa$ can be assumed itself invariant under the flip of the two qubits;
so we only need to consider $\aa_k^\pi(n)$ for $k=0,2,4,6,7,14,16,20$ and $\bb_l^\pi(n)$ for $l=0,1,3$.
The complete list of such Lie algebras is presented in Theorem \ref{the:classification-s} and reproduced here as follows:
\begin{align*}
\aa_k^\pi(n) &= \aa_k(n), \qquad k=7,16,20,22, \\
\aa_0^\pi(n) &\cong \uu(1)^{\oplus n(n-1)/2}, \\
\aa_2^\pi(n) &= \so(2^n)^{P_Z} \cong \so(2^{n-1})^{\oplus2}, \\
\aa_4^\pi(n) &= \aa_7(n), \\
\aa_{14}^\pi(n) &\cong \aa_6^\pi(n) = \aa_{20}(n), \\
\bb_l^\pi(n) &= \bb_l(n), \qquad l=0,3, \\
\bb_1^\pi(n) &\cong \uu(1)^{\oplus n(n+1)/2}.
\end{align*}

To start the \textbf{proof} of the theorem, we first
observe that the following subalgebras of $\su(2^n)$ are permutation invariant, due to their explicit descriptions (cf.\ Theorem \ref{thmtheta}):
\begin{align*}
\aa_7(n) &= 
\begin{cases} 
\su(2^n)^{\{P_X,P_Y,P_Z\}} / \Span\{P_X,P_Y,P_Z\}, & n \;\;\mathrm{even}, \\
\su(2^n)^{\{P_X,P_Y,P_Z\}}, & n \;\;\mathrm{odd},
\end{cases} \\
\aa_{16}(n) &= \so(2^n), \\
\aa_{20}(n) &= \su(2^n)^{P_X} / \Span\{P_X\}, \\
\aa_{22}(n) &= \su(2^n), \\
\bb_0(n) &= \Span\{X_i\}_{1\le i\le n}, \\
\bb_3(n) &= \Span\{X_i, Y_i, Z_i\}_{1\le i\le n}.
\end{align*}
It is also easy to see that
\begin{align*}
\aa_0^\pi(n) &= \Span\{X_i X_j\}_{1\le i<j\le n}, \\
\bb_1^\pi(n) &= \Span\{X_k, X_i X_j\}_{1\le i<j\le n, \; 1\le k\le n}.
\end{align*}
Thus, we are left to determine $\aa_k^\pi(n)$ for $k=2,4,6,14$.
These cases are treated in the next three lemmas.

\begin{lemma}\label{lem11}
We have\/
$\aa_2^\pi(n) = \so(2^n)^{P_Z}$ for all $n\ge2$.
\end{lemma}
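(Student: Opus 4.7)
The plan is to prove both inclusions $\aa_2^\pi(n) \subseteq \so(2^n)^{P_Z}$ and $\so(2^n)^{P_Z} \subseteq \aa_2^\pi(n)$, with the second one requiring induction on $n$. For the easy direction, each generator $iX_iY_j$ of $\aa_2^\pi(n)$ has exactly one $Y$ and thus lies in $\so(2^n)$, while $P_Z = Z^{\otimes n}$ commutes with $X_iY_j$ because $Z$ anticommutes with both $X$ and $Y$, giving two sign changes that cancel. Since $\so(2^n)^{P_Z}$ is a Lie subalgebra that contains every generator of $\aa_2^\pi(n)$, it contains the whole algebra. For the reverse inclusion, it is convenient to note that a basis of $\so(2^n)^{P_Z}$ over $\mathbb{R}$ is given by all $iP$ with $P \in \mcP_n$ having both an odd number of $X$'s and an odd number of $Y$'s.

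I would then induct on $n$. The base case $n=2$ reduces to $\so(4)^{ZZ} = \Span\{iXY, iYX\} = \aa_2$, which is immediate by inspecting $\mcP_2$. For the inductive step at $n \ge 3$, take a basis vector $iP$ of $\so(2^n)^{P_Z}$. If $P$ has an $I$ at some position $j$, then the length-$(n-1)$ Pauli string $P'$ obtained by deleting that $I$ has the same parities of $X$'s and $Y$'s; by the inductive hypothesis $iP' \in \aa_2^\pi(n-1)$, and the natural embedding of $\aa_2^\pi(n-1)$ into $\aa_2^\pi(n)$ by inserting $I$ back at position $j$ yields $iP \in \aa_2^\pi(n)$. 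Otherwise $P \in \{X,Y,Z\}^{\otimes n}$, and I would select $k \ne l$ so that $[X_kY_l, P]$ is a nonzero Pauli string containing an $I$: if $P$ has a $Z$ at some position $l$ and (as it must) an $X$ at some position $k$, the commutator acquires an $I$ at $k$; if $P$ uses only $X$'s and $Y$'s, the parity constraints combined with $n \ge 3$ force some letter to appear at least three times, so choosing $k \ne l$ with $P^k = P^l$ equal to that letter puts an $I$ at position $l$ in the commutator (the corner case $n=3$ is automatically covered by the first subcase, since $\#X + \#Y + \#Z = 3$ with $\#X,\#Y$ odd forces $\#Z \ge 1$). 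In either subcase the commutator lies in $\so(2^n)^{P_Z}$ and contains an $I$, hence belongs to $\aa_2^\pi(n)$ by the preceding case; Lemma \ref{lemp2} applied to $a_1 = iX_kY_l$ and $b = iP$ then gives $iP \in \aa_2^\pi(n)$.

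The main obstacle, as in the analogous open-boundary arguments (Lemmas \ref{lem5}, \ref{lem6}, \ref{lem15}), is the case analysis when $P$ contains no $I$. The crucial counting observation is that $\#X$ and $\#Y$ both odd, combined with $n \ge 3$, always forces either a $Z$ to appear in $P$ or some letter to appear with multiplicity at least three, which are exactly the two scenarios that admit a commutator with an $X_kY_l$ generator producing an $I$. The isomorphism $\so(2^n)^{P_Z} \cong \so(2^{n-1})^{\oplus 2}$ stated in Theorem \ref{the:classification-s} then follows from the standard block-diagonalization of $\so(2N)$ centralized by an involution with two equal-dimensional eigenspaces, applied to the $\pm 1$ eigenspaces of $P_Z$.
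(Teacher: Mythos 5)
Your proof is correct and follows essentially the same strategy as the paper's: the easy inclusion read off from the generators, and the reverse inclusion by induction on $n$, producing a Pauli string with an $I$ by commutating with an element of $\aa_2^\pi(n)$ and then invoking Lemma \ref{lemp2}; the only difference is that you manage with the weight-two generators $X_kY_l$ alone (via the parity count forcing either a $Z$ or a repeated letter), where the paper also uses $X_1Z_2Y_3$ justified by $XZY\in\aa_2(3)$ --- a harmless simplification. One cosmetic slip: when the repeated letter is $X$, the commutator $[X_kY_l,P]$ acquires its $I$ at position $k$ rather than $l$ (it sits at $l$ only when the repeated letter is $Y$), but an $I$ appears in either case, so the argument is unaffected.
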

\begin{proof}
Note that all generators $X_i Y_j$ ($i\ne j$) of $\aa_2^\pi(n)$ commute with $P_Z$ and are skew-symmetric, i.e., satisfy $a^T=-a$.
Hence, $\aa_2^\pi(n) \subseteq \so(2^n)^{P_Z}$. For the opposite inclusion, we use the same strategy as in the proof of Lemma \ref{lem6}.
Pick an arbitrary Pauli string $a \in \so(2^n)^{P_Z}$ not containing any $I$'s; then we want to find a Pauli string $b\in \aa_2^\pi(n)$ such that $[a,b]\ne0$ and $[a,b]$ has an $I$ in some position. Note that $a$ has an odd number of $X$'s and an odd number of $Y$'s. In particular, after a permutation, $a$ must start with $XYZ$, $XXY$, or $XYY$. Then we let
$b=X_1 Y_3$, $X_1 Z_2 Y_3$, or $X_1 Z_2 Y_3$, respectively. Here $b\in \aa_2^\pi(n)$ because $XZY \in \aa_2(3)$; cf.\ Sect.\ \ref{secsu8}.
\end{proof}

\begin{lemma}\label{lem12}
We have\/
$\aa_{14}^\pi(n) \cong \aa_6^\pi(n) = \aa_{20}(n)$
for $n\ge3$.
\end{lemma}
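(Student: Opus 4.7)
The plan is to prove the lemma in two stages: first the equality $\aa_6^\pi(n)=\aa_{20}(n)$ by producing the generators of $\aa_{13}(n)$ directly inside $\aa_6^\pi(n)$, and then the isomorphism $\aa_{14}^\pi(n)\cong\aa_6^\pi(n)$ via a Hadamard conjugation.

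For the first equality, the containment $\aa_6^\pi(n)\subseteq\aa_{20}(n)$ is immediate from Proposition \ref{pstab3} (which gives $\Stab(\aa_6^\pi(n))=\{P_I,P_X\}$) combined with Lemma \ref{lem6} and the equality $\aa_{20}(n)=\aa_{13}(n)$ from \eqref{ext4}, yielding $\aa_{20}(n)=\su(2^n)^{P_X}/\Span\{P_X\}$. For the reverse inclusion, it suffices to show that $\aa_6^\pi(n)$ contains the generators $X_iX_{i+1}$, $Y_iY_{i+1}$, and $Y_iZ_{i+1}$ of $\aa_{13}(n)$. The first and third are already generators of $\aa_6^\pi(n)$, so the only nontrivial step is producing $Y_jY_k$ for adjacent (and in fact arbitrary) $j\ne k$. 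This will follow from the short commutator chain $[Y_iZ_j,X_jX_k]=-2iY_iY_jX_k$ and $[Y_iY_jX_k,Y_iZ_k]=-2iY_jY_k$, using any auxiliary index $i$ distinct from $j,k$ (possible since $n\ge 3$). Combined, this gives $\aa_{13}(n)\subseteq\aa_6^\pi(n)\subseteq\aa_{20}(n)=\aa_{13}(n)$, hence equality.

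For the second part, consider the Hadamard conjugation $\Phi(g):=UgU^\dagger$ with $U:=H^{\otimes n}$ and $H=(X+Z)/\sqrt{2}$. On each qubit this automorphism swaps $X\leftrightarrow Z$ and sends $Y\mapsto -Y$; in particular, $\Phi(P_Z)=P_X$. Applied to the generators $X_iX_j,Y_iY_j,X_iY_j,Y_iX_j,Z_i$ of $\aa_{14}^\pi(n)$, $\Phi$ produces (up to signs) $Z_iZ_j,Y_iY_j,Z_iY_j,Y_iZ_j,X_i$. Since $\Stab(\aa_{14}^\pi(n))=\{P_I,P_Z\}$ by Proposition \ref{pstab3}, we get $\Phi(\aa_{14}^\pi(n))\subseteq\su(2^n)^{P_X}/\Span\{P_X\}=\aa_6^\pi(n)$. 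For the opposite inclusion, $Y_iZ_j$ and $Z_iY_j$ are already among the transformed generators of $\Phi(\aa_{14}^\pi(n))$, so only $X_iX_j$ remains to be produced; this will follow from the commutator chain $[Y_iY_j,Z_iY_k]=2iX_iY_jY_k$, then $[X_iY_jY_k,X_j]=-2iX_iZ_jY_k$, and finally $[X_iZ_jY_k,Z_jZ_k]=2iX_iX_k$ for any distinct $i,j,k$ (again possible for $n\ge 3$).

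The main obstacle is identifying the correct commutator chains needed in both stages; once the right intermediate Pauli strings ($Y_iY_jX_k$ for stage one, and $X_iY_jY_k$, $X_iZ_jY_k$ for stage two) are located, each step reduces to a routine tensor-product calculation using $X\cdot Y=iZ$, etc. The key conceptual observation enabling this approach is that the abundance of generators provided by permutation invariance lets $\aa_6^\pi(n)$ attain the full upper bound $\aa_{20}(n)$ directly, bypassing the more intricate induction on $n$ used in the analogous proofs for open and periodic boundary conditions.
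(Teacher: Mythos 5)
Your proposal is correct, but it reaches the lower bounds by a different route than the paper. The paper first relabels $X \rightleftharpoons Z$ in $\aa_{14}$ (your Hadamard conjugation $\Phi$ is exactly this relabeling, up to harmless signs on $Y$), gets the upper bounds from $\aa_{14},\aa_6\subset\aa_{20}$ together with Lemma \ref{leminc1} and the manifest permutation invariance of $\aa_{20}(n)=\su(2^n)^{P_X}/\Span\{P_X\}$, and then reduces everything to $n=3$: since $\aa_{20}(n)$ is generated by shifted copies of $\aa_{20}(3)$, it suffices to know $\aa_6^\pi(3)=\aa_{14}^\pi(3)=\aa_{20}(3)$, which is read off from the dimension count $\dim\aa_6^\circ(3)=\dim\aa_{14}^\circ(3)=\dim\aa_{20}(3)=30$ of Sect.\ \ref{secsu8}. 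You instead exhibit the generators of $\aa_{13}(n)=\aa_{20}(n)$ directly inside $\aa_6^\pi(n)$ (producing $Y_jY_k$ from $[Y_iZ_j,X_jX_k]=2iY_iY_jX_k$ and $[Y_iY_jX_k,Y_iZ_k]=-2iY_jY_k$; your stated sign in the first bracket is off, which is immaterial), and likewise produce the missing $X_iX_k$ inside $\Phi(\aa_{14}^\pi(n))$ by a three-step chain. This is a self-contained, uniform-in-$n$ argument that avoids relying on the tabulated $n=3$ dimensions, at the cost of a few explicit Pauli computations; the paper's version is shorter given that data. All your commutator identities check out.

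One small point to tighten: knowing $\Stab(\aa_6^\pi(n))=\{P_I,P_X\}$ only gives $\aa_6^\pi(n)\subseteq\su(2^n)^{P_X}$, which is $\aa_{20}(n)\oplus\Span\{P_X\}$; to conclude $\aa_6^\pi(n)\subseteq\aa_{20}(n)$ you must also rule out $P_X$ itself (and similarly $P_Z\notin\aa_{14}^\pi(n)$ in stage two). This is easy to patch — either argue as the paper does, via $\aa_6\subset\aa_{20}$, Lemma \ref{leminc1}, and $\aa_{20}^\pi(n)=\aa_{20}(n)$, or note that $\aa_6^\pi(n)$ is spanned by Pauli strings and a bracket of two Pauli strings commuting with $P_X$ and distinct from it can never equal a multiple of $P_X$ (if $a\cdot b=\pm P_X$ with $a,b$ commuting with $P_X$, then $a$ and $b$ commute, so $[a,b]=0$) — but as written the step is incomplete.
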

\begin{proof}
Let us relabel $X \rightleftharpoons Z$ in $\aa_{14}$. Then $\aa_{14}\subset\aa_{20}$, which implies $\aa_{14}^\pi(n) \subseteq \aa_{20}^\pi(n) = \aa_{20}(n)$ for all $n$.
Similarly, from $\aa_6\subset\aa_{20}$, we get $\aa_6^\pi(n) \subseteq \aa_{20}(n)$. To finish the proof of the lemma, it is enough to show that 
$\aa_6^\pi(3) = \aa_{14}^\pi(3) = \aa_{20}(3)$, because $\aa_{20}(n)$ is generated from $\aa_{20}(3)$ using a process similar to \eqref{ext0}.
The claim now follows from $\aa_k^\circ(3) \subseteq \aa_k^\pi(3)$ and
\begin{equation*}
\dim \aa_6^\circ(3) = \dim \aa_{14}^\circ(3) = \dim \aa_{20}(3) = 30;
\end{equation*}
see Sect.\ \ref{secsu8}.
\end{proof}

\begin{lemma}\label{lem13}
We have\/ $\aa_4^\pi(n) = \aa_7(n)$ for $n\ge3$.
\end{lemma}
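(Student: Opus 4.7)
The plan is to prove both inclusions $\aa_4^\pi(n) \subseteq \aa_7(n)$ and $\aa_7(n) \subseteq \aa_4^\pi(n)$, since $\aa_7(n)$ is already known explicitly from Lemmas~\ref{lem4} and \ref{lemgkn3}. For the first inclusion, the idea is to observe that $\aa_4 \subset \aa_7$ in $\su(4)$, and that $\aa_7(n)$ is itself permutation invariant. The latter can be seen either from the explicit description $\aa_7(n) \cong \su(2^n)^{\{P_X,P_Y,P_Z\}}/\Span\{P_X,P_Y,P_Z\}$, whose defining stabilizer is manifestly $S_n$-invariant, or more concretely by noting that $\aa_7(n)$ contains $X_iX_j, Y_iY_j, Z_iZ_j$ for all $i\ne j$ (the non-adjacent cases follow from short commutator manipulations with the adjacent generators). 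Either way, $\aa_7^\pi(n) = \aa_7(n)$, and combining with $\aa_4 \subset \aa_7$ yields $\aa_4^\pi(n) \subseteq \aa_7^\pi(n) = \aa_7(n)$.

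The substance of the proof is the opposite inclusion. Since $\aa_7(n)$ is generated by $\{X_iX_{i+1}, Y_iY_{i+1}, Z_iZ_{i+1}\}_{1\le i\le n-1}$, and the first two families already sit in $\aa_4^\pi(n)$ by definition, it suffices to produce the $Z_iZ_{i+1}$ generators inside $\aa_4^\pi(n)$. I will do this with a short double commutator that exploits the \emph{non-adjacent} generator $X_1X_3$, which is available in $\aa_4^\pi(n)$ precisely because of permutation invariance. Explicitly, a direct computation gives
\begin{equation*}
[X_1X_2, Y_2Y_3] = 2i\,X_1Z_2Y_3, \qquad [X_1X_3,\, X_1Z_2Y_3] = 2i\,Z_2Z_3,
\end{equation*}
so that $Z_2Z_3 = -\tfrac{1}{4}\,[X_1X_3,[X_1X_2,Y_2Y_3]] \in \aa_4^\pi(n)$. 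Applying any permutation $\sigma\in S_n$ to all indices above gives $Z_{\sigma(2)}Z_{\sigma(3)} \in \aa_4^\pi(n)$ (because $\aa_4^\pi(n)$ is permutation invariant by construction), and in particular $Z_iZ_{i+1}\in\aa_4^\pi(n)$ for every $1\le i\le n-1$. Hence all generators of $\aa_7(n)$ lie in $\aa_4^\pi(n)$, completing the inclusion.

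The proof is essentially a two-line calculation, so there is no real obstacle; the only subtle point worth flagging is \emph{why} permutation invariance is doing the work. Without it, one would be restricted to $\aa_4(n) \cong \so(n)\oplus\so(n)$, whose dimension is $O(n^2)$ and which therefore cannot contain the exponentially large $\aa_7(n)$. The non-adjacent term $X_1X_3$ (and more generally $X_iX_j$ for $|i-j|\ge 2$) is the key new ingredient supplied by the permutation-invariant topology, and it is exactly what lets a small double commutator convert the $\{XX,YY\}$ generators into a $ZZ$ term. The hypothesis $n\ge 3$ enters only to ensure three distinct sites $1,2,3$ are available for this commutator.
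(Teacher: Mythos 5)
Your proof is correct and follows essentially the same route as the paper: the easy inclusion via $\aa_4\subset\aa_7$ and the permutation invariance of $\aa_7(n)$, and the reverse inclusion by a short commutator computation that uses permutation invariance to manufacture the missing $Z_iZ_{i+1}$ generators (your identity $[X_1X_3,[X_1X_2,Y_2Y_3]]=-4\,Z_2Z_3$ checks out). The only minor difference is that you work directly at general $n$ with the non-adjacent generator $X_1X_3$, whereas the paper reduces to $n=3$, permutes the intermediate element $YZX\in\aa_4(3)$, and then invokes the fact that $\aa_7(n)$ is generated from copies of $\aa_7(3)$.
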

\begin{proof}
Since $\aa_4\subset\aa_7$, we have $\aa_4^\pi(n) \subseteq \aa_7^\pi(n) = \aa_7(n)$ for all $n\ge3$. To prove the opposite inclusion,
it is enough to show that $\aa_4^\pi(3) = \aa_7(3)$, because $\aa_7(n)$ is generated from $\aa_7(3)$ using a process similar to \eqref{ext0}.
From $IXX,YZX \in \aa_4(3)$, we get $ZYX \in \aa_4^\pi(3)$ and $[IXX,ZYX] = 2i ZZI \in \aa_4^\pi(3)$. 
Then, by permutation invariance, also $IZZ\in \aa_4^\pi(3)$. Hence, $\aa_4^\pi(3)$ contains all generators of $\aa_7(3)$,
so it must be equal to it.
\end{proof}

The only thing left to finish the proof of Theorem \ref{the:classification-s} is to show that $\so(2^n)^{P_Z} \cong \so(2^{n-1})^{\oplus2}$.
This follows from the isomorphism $\su(2^n)^{P_Z}/\Span\{P_Z\} \cong \su(2^{n-1})^{\oplus2}$ (see Lemma \ref{lemgkn1}),
which is compatible with taking matrix transpose.

\end{document}